\newtheorem{theorem}{Theorem}
\newtheorem{proposition}{Proposition}
\newtheorem{lemma}{Lemma}
\newtheorem{corollary}{Corollary}
\newtheorem{definition}{Definition}
\newtheorem{fact}{Fact}
\newtheorem{claim}{Claim}
\newcommand{\tR}{\widetilde{R}}
\newcommand{\val}{\mathrm{val}}
\DeclareMathOperator{\Exp}{\mathbb{E}}
\DeclareMathOperator{\poly}{\mathrm{poly}}
\newcommand{\tO}{\widetilde{O}}
\newcommand{\complex}{\mathbb{C}}
\newcommand{\real}{\mathbb{R}}
\renewcommand{\natural}{\mathbb{N}}
\newcommand{\mincut}{\textsc{min\;cut}}
\newcommand{\minstcut}{\textsc{min} $st$-\textsc{cut}}
\newcommand{\maxcut}{\textsc{max\;cut}}
\newcommand{\spcut}{\textsc{sparsest\;cut}}
\newcommand{\balsep}{\textsc{balanced\;separator}}
\newcommand{\minfind}{\textsc{minfind}}
\newcommand{\qspars}{{\normalfont\textbf{Quantum\,Sparsify}}}
\newcommand{\halfspars}{{\normalfont\textbf{Half\,Sparsify}}}
\newcommand{\spanner}{{\normalfont\textbf{Spanner}}}
\newcommand{\cost}{\mathrm{cost}}
\newcommand{\dist}{\mathrm{dist}}
\newcommand{\spt}{{\normalfont\textbf{SPT}}}
\newcommand{\tcov}{\tau_{\mathrm{cov}}}
\newcommand{\findb}{\mathrm{FindBits}}
\newcommand{\ORf}{\mathrm{OR}}
\begin{document}
 
\title{Quantum Speedup for Graph Sparsification, Cut Approximation and Laplacian Solving}
\author{Simon Apers\thanks{CNRS, IRIF, Paris. Work done while affiliated to Universit\'e Libre de Bruxelles, Belgium, and CWI, Amsterdam, the Netherlands. {\tt smgapers@gmail.com}} \and Ronald de Wolf\thanks{QuSoft, CWI and University of Amsterdam, the Netherlands. Partially supported by the Dutch Research Council (NWO) through Gravitation-grant Quantum Software Consortium 024.003.037, and through QuantERA project QuantAlgo 680-91-034. {\tt rdewolf@cwi.nl}}
}
 
\pagenumbering{Alph}
\begin{titlepage}
\clearpage\maketitle
\thispagestyle{empty}
 
\abstract{
Graph sparsification underlies a large number of algorithms, ranging from approximation algorithms for cut problems to solvers for linear systems in the graph Laplacian.
In its strongest form, ``spectral sparsification'' reduces the number of edges to near-linear in the number of nodes, while approximately preserving the cut and spectral structure of the graph.
In this work we demonstrate a polynomial quantum speedup for spectral sparsification and many of its applications.
In particular, we give a quantum algorithm that, given a weighted graph with $n$ nodes and $m$ edges, outputs a classical description of an $\epsilon$-spectral sparsifier in sublinear time $\widetilde O(\sqrt{mn}/\epsilon)$. This contrasts with the optimal classical complexity~$\widetilde O(m)$.
We also prove that our quantum algorithm is optimal up to polylog-factors.
The algorithm builds on a string of existing results on sparsification, graph spanners, quantum algorithms for shortest paths, and efficient constructions for $k$-wise independent random strings.
Our algorithm implies a quantum speedup for solving Laplacian systems and for approximating a range of cut problems such as min cut and sparsest cut.
}
 
\end{titlepage}
\pagenumbering{arabic}

\section{Introduction and Summary}
The complexity of many graph problems naturally scales with the number of edges in the graph.
Graph sparsification aims to reduce this number of edges, while preserving certain quantities of interest.
When considering for instance the approximation of cut problems such as \mincut{} or \spcut{}, the aim is to sparsify the graph while approximately preserving its cut values.
This was first shown to be possible in the pioneering work of Karger~\cite{karger1994using} and later Bencz\'ur and Karger~\cite{benczur1996approximating}.
They introduced the concept of \emph{cut sparsifiers}, which are reweighted subgraphs that $\epsilon$-approximate all cuts in the graph.
We can then solve cut problems in the hopefully sparser subgraph, yielding an approximate solution to the original problem.
Quite surprisingly, they showed that for any undirected graph with $n$ nodes and $m$ edges, there always exists a cut sparsifier with as few as $\tO(n/\epsilon^2)$ edges, and moreover this sparsifier can be constructed in time $\tO(m)$.
This result lies at the basis of $\tO(m)$-time approximation algorithms for amongst others \mincut{}~\cite{karger1994using}, \minstcut{}~\cite{sherman2013nearly,kelner2014almost,peng2016approximate}, \spcut{} and \balsep{}~\cite{arora2009expander,sherman2013nearly}. We refer the interested reader to~\cite{picard1982selected,shmoys1997cut} for surveys on the many applications of cut approximation.
 
\begin{figure}[H]
\centering
\includegraphics[width=.95\textwidth]{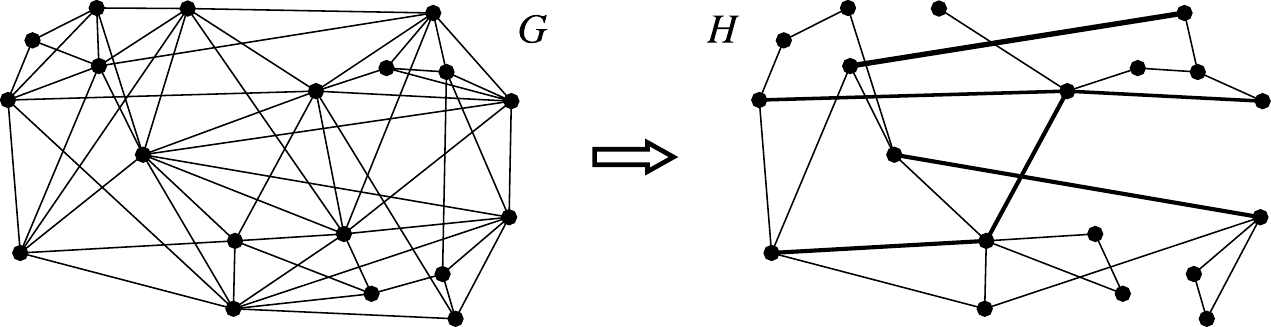}
\caption{A \emph{sparsifier} $H$ of a graph $G$ is a sparse, reweighted subgraph that preserves certain quantities such as the cut values (cut sparsifier) or quadratic forms in the graph Laplacian (spectral sparsifier).}
\label{fig:sparsifier}
\end{figure}

In their breakthrough work on Laplacian solvers, Spielman and Teng~\cite{spielman2011spectral} strengthened the notion of cut sparsifiers to so-called \emph{spectral sparsifiers}.
Rather than preserving the cut structure, these reweighted subgraphs preserve the spectral structure or \emph{quadratic form} of the Laplacian associated to the graph.
More specifically, $H$ is an $\epsilon$-spectral sparsifier of $G$ if
\[
(1-\epsilon) L_G
\preceq L_H
\preceq (1+\epsilon) L_G,
\]
with $L_H$ and $L_G$ the Laplacian matrices associated to $H$ resp.~$G$.
Since the value of any cut can be expressed as a quadratic form in the Laplacian, any spectral sparsifier is necessarily a cut sparsifier.
More importantly it implies that Laplacian systems, which are linear systems in the graph Laplacian, can be approximately solved using the Laplacian of the sparsified graph.
Similar to the case for cut sparsifiers, Spielman and Teng showed the existence and $\tO(m)$-time construction of $\epsilon$-spectral sparsifiers with $\tO(n/\epsilon^2)$ edges.
This formed a critical cornerstone of their $\tO(m)$-time solver for Laplacian systems (and faster solvers in later papers, see e.g.~\cite{jambulapati2021ultrasparse}), and the string of results and algorithms that followed it, commonly referred to as the ``Laplacian paradigm''~\cite{teng2010laplacian}.
Some examples among these are faster algorithms for learning~\cite{zhu2003semi,zhou2005learning}, computer vision and image processing~\cite{koutis2011combinatorial}, spectral clustering~\cite{vishnoi2013lx,orecchia2012approximating}, computing random walk properties~\cite{cohen2016faster}, and most recently the breakthrough almost-linear time algorithm for maximum flow and other flow problems \cite{chen2022maximum}.
The sparsification results of Spielman and Teng were later refined most notably by Spielman and Srivastava~\cite{spielman2011graph} and Batson, Spielman and Srivastava~\cite{batson2012twice}.
In \cite{batson2012twice}, the existence of spectral sparsifiers with only $O(n/\epsilon^2)$ edges was proved, which later inspired the resolution of the famous Kadison-Singer problem by Marcus, Spielman and Srivastava~\cite{marcus2015interlacing}.

\subsection{Main Result and Applications}
In this work we give a \emph{quantum} algorithm for spectral sparsification, leading to the theorem below.

\begin{theorem}[Quantum algorithm for sparsification] 
Fix $n$, $m$ and $\epsilon \geq \sqrt{n/m}$.
There exists a quantum algorithm that, given adjacency-list access to a weighted and undirected $n$-node graph $G$ with $m$ edges, outputs with high probability the explicit description of an $\epsilon$-spectral sparsifier of $G$ with $\tO(n/\epsilon^2)$ edges, in time $\tO(\sqrt{mn}/\epsilon)$.
\end{theorem}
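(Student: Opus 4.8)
The plan is to quantize the classical ``spectral sparsification via random spanners'' scheme, whose only real bottleneck is the repeated spanner construction --- an object that can be produced faster on a quantum computer using quantum shortest-path/BFS subroutines --- together with a careful use of limited-independence randomness so that the sampling can be done without ever reading all $m$ edges.

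\textbf{Step 1 (weighted to near-uniform, and the output format).} First I would reduce to near-uniform weights: after rescaling so the minimum weight is $1$ and discarding weights too small to affect any $\epsilon$-approximate quadratic form, bucket the edges into $O(\log(n/\epsilon))$ classes spanning a factor-$2$ weight range each. It suffices to build an $\epsilon$-spectral sparsifier of each class (treating its weights as uniform, which costs only constant factors in the sampling) and output the union, since Laplacians add and the errors compose in Loewner order. By Cauchy--Schwarz, $\sum_{\text{classes}}\sqrt{m_i n}\le\tO(\sqrt{mn})$, so it is enough to handle one class --- a graph we may treat as unweighted with $m'\le m$ edges --- in time $\tO(\sqrt{m'n}/\epsilon)$. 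Throughout I assume the standard quantum graph model (adjacency-list queries returning the $j$-th neighbour of a vertex, its weight, and the degree).

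\textbf{Step 2 (the classical scheme to be quantized).} For one class, the algorithm outputs the reweighted union of $N=\tO(1/\epsilon^2)$ independent \emph{random spanners}: for $i=1,\dots,N$, let $G_i\subseteq G$ keep each edge independently with probability $q$, let $T_i$ be a classical sparse $(2t{-}1)$-spanner of $G_i$ with $\tO(n)$ edges and $t=\polylog$, and put every edge of $T_i$ into the output with weight multiplied by $q^{-1}$ (edges appearing in several $T_i$ have these contributions summed). One picks $q$ and $N$ so that (i) the output has $N\cdot\tO(n)=\tO(n/\epsilon^2)$ edges and (ii) each edge lands in the output with probability at least a $\polylog$-factor times (its effective-resistance leverage score) $\times\, q$: this is exactly the Spielman--Srivastava sampling requirement once averaged over the $N$ rounds, so a matrix-Chernoff / Loewner-order argument (à la Kapralov--Panigrahy) yields $(1-\epsilon)L_G\preceq L_{\text{out}}\preceq(1+\epsilon)L_G$ with high probability. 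Matching the running-time target forces $q=\Theta(\epsilon^2)$ up to polylogs (so $|E(G_i)|=\tO(\epsilon^2 m')$ in expectation); a final classical sparsification pass --- now on a graph with only $\tO(n/\epsilon^2)$ edges --- can trim the constant in the edge count if desired.

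\textbf{Step 3 (quantum implementation and timing).} Two ingredients make this sublinear. First, limited independence: define $G_i$ via a $k$-wise independent hash with $k=\polylog$ from (edge-id, round) to $\{\text{keep},\text{drop}\}$, so ``is $e\in G_i$?'' is a $\polylog$-time predicate, no $\Omega(m')$ random bits are stored, and adjacency-list access to $G_i$ is simulated on the fly from that of $G$; one must check the concentration analysis still goes through with $\polylog$-wise independence, which it does since it needs only bounded-moment concentration. Second, quantum spanners: a sparse spanner of a $\mu$-edge, $n$-node graph accessed by adjacency lists can be built in quantum time $\tO(\sqrt{\mu n})$ using quantum BFS/shortest-path subroutines. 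Hence round $i$ costs $\tO(\sqrt{\epsilon^2 m' n})=\tO(\epsilon\sqrt{m'n})$, the $N=\tO(1/\epsilon^2)$ rounds cost $\tO(\sqrt{m'n}/\epsilon)$, and collecting and writing out the at most $\tO(n/\epsilon^2)$ surviving edges --- located via amplitude amplification over edge-ids --- is no more expensive. Summing over weight classes and reinstating polylogs gives $\tO(\sqrt{mn}/\epsilon)$; the hypothesis $\epsilon\ge\sqrt{n/m}$ is precisely what makes $n/\epsilon^2\le m$ and $\sqrt{mn}/\epsilon\le m$, i.e.\ the output genuinely sparser and the algorithm genuinely sublinear.

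\textbf{Main obstacle.} I expect the real work to lie in Steps~2--3: re-establishing the spectral (Loewner-order) guarantee for the random-spanner scheme (a) under only $\polylog$-wise independent edge sampling and (b) with the aggressive $q=\Theta(\epsilon^2)$ subsampling and the correspondingly large $q^{-1}$ reweighting, so that a mere $\tO(1/\epsilon^2)$ rounds still suffice for concentration; and realizing a quantum spanner routine that attains $\tO(\sqrt{\mu n})$ while only ever touching the implicitly defined subgraph $G_i$ through adjacency-list queries. The remaining pieces --- the weight-class reduction, amplitude amplification to harvest the output edges, and the arithmetic tying $q$, $N$, and the edge budget to the $\tO(\sqrt{mn}/\epsilon)$ bound --- are routine by comparison.
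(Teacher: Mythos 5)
There are two genuine gaps, both in the parts you yourself flag as ``the real work.''

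First, the correctness of your Step~2 sampling scheme is not established and, as literally stated, it is biased. Take a bridge edge $e$, for which $w_eR_e=1$ and Spielman--Srivastava forces $p_e=1$ with no reweighting. In your scheme $e$ lands in $G_i$ with probability $q=\Theta(\epsilon^2)$ and, being a bridge, is then forced into $T_i$; over $N=\tO(1/\epsilon^2)$ rounds it appears $\approx Nq=\tO(1)\cdot\log n$ times, each contributing weight $q^{-1}w_e$, so its output weight is $\tO(w_e/\epsilon^2)$ --- a $\tO(1/\epsilon^2)$ blow-up that destroys the upper Loewner bound on the cut it defines. Even after inserting the missing $1/N$ normalization, the per-round inclusion probability of an edge is $q$ times a spanner-dependent conditional probability that varies from edge to edge, so the estimator is not unbiased and the ``matrix-Chernoff \`a la Kapralov--Panigrahy'' step is not available off the shelf: Kapralov--Panigrahy's union-of-spanners-of-subsamples only yields a polylogarithmic approximation, and both they and Koutis--Xu need a genuinely different mechanism (iterative halving with spanner \emph{packings} of the full remaining graph, or explicit leverage-score sampling) to reach $(1\pm\epsilon)$. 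The paper sidesteps this entirely: it runs the proven Koutis--Xu halving to get a \emph{constant}-error sparsifier, builds a Spielman--Srivastava approximate resistance oracle on that sparsifier, and then does one round of effective-resistance sampling via Grover --- that last step is where the $1/\epsilon$ (rather than $1/\epsilon^2$) comes from.

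Second, the timing of Step~3 does not go through in the adjacency-list model. Your per-round cost $\tO(\sqrt{|E(G_i)|\,n})=\tO(\epsilon\sqrt{mn})$ assumes the quantum spanner's complexity scales with the number of \emph{surviving} edges of the implicitly defined subsample $G_i$. But with only adjacency-list access to $G$ and a membership predicate for $G_i$, the shortest-path-tree/minimum-finding subroutines still range over the full adjacency lists (dropped edges must be examined to be recognized as dropped), so each spanner costs $\tO(\sqrt{mn})$ and your $\tO(1/\epsilon^2)$ rounds cost $\tO(\sqrt{mn}/\epsilon^2)$ --- exactly the bound the paper obtains for its first-phase algorithm, and short of the theorem. (Extracting $G_i$ explicitly first costs $\tO(\epsilon m)$ per round by Grover, i.e.\ $\tO(m/\epsilon)$ in total, which is also too much.) Two smaller issues: the weight-bucketing reduction in Step~1 is unsound as stated (you cannot discard low-weight edges --- a cut of value $w_{\min}$ must still be $\epsilon$-approximated --- and the number of buckets is governed by $\log(w_{\max}/w_{\min})$, not $\log(n/\epsilon)$); and the paper does not use polylog-wise independence with a redone concentration argument, but rather $2q$-wise independence for $q$ the total query count, which makes the algorithm's output distribution \emph{exactly} that of the uniformly random string by the polynomial method, so no concentration analysis needs to be revisited.
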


\noindent
The algorithm outputs an explicit  classical  description, in the form of the list of $\tO(n/\epsilon^2)$ edges of the sparsifier together with their new weights.
Note the assumption $\epsilon \geq \sqrt{n/m}$. This is because sparsification is only useful when the number of edges of the sparsifier (roughly $n/\epsilon^2$) is at most the number of edges $m$ of the original graph~$G$.
Note also that $\tO(\sqrt{mn}/\epsilon) \in \tO(m)$ whenever $\epsilon \geq \sqrt{n/m}$, and hence our quantum algorithm provides a speedup over classical algorithms, whose $\tO(m)$ runtime can be shown to be optimal.\footnote{Because there is an $\Omega(m)$ query lower bound for deciding whether a graph is connected or not (see for instance \cite[Theorem~4.9, $k=1$]{eden&rosenbaum:LB} for a stronger statement), we have the same linear lower bound for finding a cut sparsifier for a given graph, as well as for applications like approximating \mincut{}.}
For dense graphs, where $m \in \Omega(n^2)$, this improves the time complexity from $\tO(n^2)$ classically to $\tO(n^{3/2})$ quantumly.

Our algorithm assumes coherent access to the input graph in the form of quantum queries to the adjacency lists. This assumption is very standard and allows us to talk about the query complexity of our algorithm. In order to also be able to talk about ``time'' complexity, we also assume a QRAM (coherent RAM) memory of $\tO(\sqrt{mn}/\epsilon)$ classical bits to which we can do classical writes, and whose bits we can query in superposition. Our algorithm uses just $O(\log n)$ ``actual'' qubits. The ``time'' (complexity) in the above theorem then measures the number of elementary gates, input queries, and QRAM writes and queries. See Section~\ref{sec:prelim} for more details about our computational model.

The algorithm builds on a range of quantum and classical results, the most important of which are classical sparsification algorithms by Spielman and Srivastava~\cite{spielman2011graph} and Koutis and Xu~\cite{koutis2014approaching}, a spanner algorithm by Thorup and Zwick~\cite{thorup2005approximate}, a quantum algorithm for single-source shortest-path trees by D\"urr, Heiligman, H{\o}yer and Mhalla~\cite{durr2006quantum} and an efficient $k$-independent hash function by Christiani, Pagh and Thorup~\cite{christiani2015independence}.

We prove a matching lower bound, showing that the runtime of our quantum algorithm is optimal up to polylog-factors.
In fact, we show that even outputting a weaker \emph{cut} sparsifier requires the same number of queries.
 
\begin{theorem}[Quantum lower bound for sparsification] \label{thm:lower-bound-intro}
Fix $n$, $m$ and $\epsilon \geq \sqrt{n/m}$.
Any quantum algorithm that, given adjacency-list access to a weighted and undirected $n$-node graph $G$ with $m$ edges, explicitly constructs with high probability an $\epsilon$-cut sparsifier of $G$ has query complexity $\widetilde{\Omega}(\sqrt{mn}/\epsilon)$.
\end{theorem}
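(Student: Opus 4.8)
\emph{Proof plan.}\ The plan is to reduce from a ``direct product'' of $k := \Theta(n/\epsilon^2)$ independent hard query problems and to argue that \emph{any} explicit $\epsilon$-cut sparsifier reveals the answer to each of them. Concretely, I would fix $k = \Theta(n/\epsilon^2)$ and $N := \Theta(m/k) = \Theta(m\epsilon^2/n)$, and design a distribution over weighted $n$-node, $\Theta(m)$-edge graphs $G_x$ indexed by $x = (x^{(1)},\dots,x^{(k)}) \in (\{0,1\}^N)^k$, where each $G_x$ consists of a fixed, publicly known ``skeleton'' (so that no queries are wasted learning it) together with $k$ edge-disjoint gadgets, the $i$-th gadget encoding the $N$-bit block $x^{(i)}$ as the presence/absence of $N$ candidate edges. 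Since the gadgets are edge-disjoint and there are $k$ of them, these $kN = \Theta(m)$ candidate edges account for the bulk of the $m$ edges, and they can be laid out inside the adjacency lists so that a single query exposes only $O(1)$ of the hidden bits. The crucial design constraint is that the $i$-th gadget be attached to the rest of $G_x$ through a \emph{light} cut $C_i$, of total weight $\Theta(1/\epsilon)$, with $\val_{G_x}(C_i)$ taking one value if $x^{(i)} = 0^N$ and a value at least $1$ larger otherwise; the gap across $C_i$ is then relative $\Theta(\epsilon)$, so an $\epsilon$-cut sparsifier $H$, which preserves $\val(C_i)$ up to a factor $(1\pm\epsilon)$, must have $\val_H(C_i)$ on the correct side of the gap. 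Reading $\val_H(C_i)$ off the explicit edge list of $H$ thus recovers $\ORf(x^{(i)})$ for every $i$, so any algorithm outputting an $\epsilon$-cut sparsifier of $G_x$ also computes $\big(\ORf(x^{(1)}),\dots,\ORf(x^{(k)})\big)$.

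It then suffices to lower bound the quantum query complexity of this $k$-fold $\ORf_N$ problem (in the graph-query model, using the layout property above to pass to ordinary bit queries up to constant factors). The single-copy adversary value is $\mathrm{Adv}(\ORf_N)=\Theta(\sqrt N)$, and a strong direct product theorem for quantum query complexity then forces $\widetilde{\Omega}(k\sqrt N)$ queries even to succeed with probability $2/3$, i.e.\ $\widetilde{\Omega}\!\big((n/\epsilon^2)\sqrt{m\epsilon^2/n}\big) = \widetilde{\Omega}(\sqrt{mn}/\epsilon)$, which is the claimed bound.

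The step I expect to be the real obstacle is the geometric construction realizing all $k = \Theta(n/\epsilon^2)$ light cuts simultaneously inside a graph on only $n$ vertices: the $n$ singleton cuts $\{v\}$ supply only $\Theta(n)$ independent $\epsilon$-sensitive constraints, so one needs a more economical gadget — roughly, a hierarchy of $\Theta(1/\epsilon^2)$ nested light cuts ``around'' each of $\Theta(n\epsilon^2)$ hub vertices, with hubs sharing their candidate-edge blocks so that the node count stays $O(n)$, while every $C_i$ remains light \emph{relative to the part of $G_x$ in which it is embedded} and the $k$ induced subproblems stay independent enough for the direct product theorem. A secondary difficulty is the dense regime $m \gtrsim n/\epsilon$, where the graph has minimum degree $\gtrsim 1/\epsilon$ so that no cut of value $o(1/\epsilon)$ exists and a bit cannot be encoded by the presence of a single unit-weight edge; there one is forced to use weighted ``heavy needle'' gadgets (one hidden edge of weight $\Theta(\epsilon\cdot\val(C_i))$ among many light ones) and a correspondingly more delicate choice of the $C_i$. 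Finally one should check the reduction is tight for every $(n,m,\epsilon)$ with $\epsilon \geq \sqrt{n/m}$ and that all polylogarithmic losses are confined to the direct product theorem, so as to match the $\widetilde{\Omega}$ in the statement.
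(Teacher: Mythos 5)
Your high-level accounting is the same as the paper's (hide $\Theta(m)$ bits as $k=\Theta(n/\epsilon^2)$ blocks of $N=\Theta(m\epsilon^2/n)$ bits each, force the sparsifier to reveal the block ORs, pay $\widetilde\Omega(k\sqrt N)=\widetilde\Omega(\sqrt{mn}/\epsilon)$), and you correctly identify where the difficulty lies. But the step you defer — constructing $\Theta(n/\epsilon^2)$ simultaneous ``light cuts,'' one per hidden block — is the entire content of the lower bound, and the mechanism you propose for it (one designated $\epsilon$-sensitive cut per block, read off from the explicit sparsifier) is very unlikely to deliver the full $1/\epsilon^2$ factor. Your own nested-shell/hub sketch illustrates the problem: each shell consumes at least one vertex, so a family of pairwise ``independent'' designated cuts built this way yields only $O(n)$ blocks, not $\Theta(n/\epsilon^2)$, and the bound degrades to $\widetilde\Omega(\sqrt{mn})$. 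To recover $\Theta(n/\epsilon^2)$ bits from an $n$-vertex graph, each vertex must effectively participate in $\Theta(1/\epsilon^2)$ sensitive constraints, and no per-bit cut certificate is known to achieve this.

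The paper circumvents this with a fundamentally different mechanism. It does not exhibit any explicit cuts at all; instead it invokes the Andoni et al.\ construction of bipartite graphs $B_\epsilon$ on $2/\epsilon^2$ nodes with $1/(2\epsilon^4)$ edges for which a \emph{counting/communication-complexity} argument shows that any $\epsilon$-cut sparsifier must retain a constant fraction of all edges (Theorem~\ref{thm:andoni-sketch}). Taking $\epsilon^2 n/2$ disjoint copies and embedding each potential edge as an $\ORf_N$ gadget (with the degree-hiding ``flip'' trick of Lemma~\ref{lem:embedded-sparsifier}) reduces sparsification to the \emph{relational} problem of finding a constant fraction of the 1-bits, composed with $\ORf_N$. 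Because the outer problem is relational and only a constant fraction of the answers is guaranteed, a standard strong direct product theorem does not apply; the paper instead uses an information-theoretic (Holevo) bound for $\findb_{r,c}$ together with the Belovs--Lee composition theorem for relations composed with functions. So the gap in your proposal is not a matter of filling in details: the per-cut reduction should be replaced by a global unsparsifiability argument, and the direct product theorem by a composition theorem that tolerates a relational outer problem with partially correct outputs.
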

 
Our algorithm provides a direct speedup for many of the aforementioned applications.
In Table~\ref{table:cut-appr} we illustrate this speedup for a number of cut approximation problems.
All bounds follow by combining our sparsification algorithm with the best classical algorithms, applied to the sparsifier.
As far as we know, this is the first quantum speedup for these cut approximation problems.
 
\begin{table}[H]
\centering
\def\arraystretch{1.5}%
\begin{tabular}{|c|c|c|}
\hline
& Classical & Quantum (this work) \\
\hline
.878-\maxcut & $\tO(m)$~\cite{arora2016combinatorial} & $\tO(\sqrt{mn})$ \\
\hline
$\epsilon$-\mincut & $\tO(m)$~\cite{karger2000minimum} & $\tO(\sqrt{mn}/\epsilon)$ \\
\hline
$\epsilon$-\minstcut & $\tO(m + n/\epsilon^5)$~\cite{peng2016approximate} & $\tO(\sqrt{mn}/\epsilon + n/\epsilon^5)$ \\
\hline
$O(\sqrt{\log n})$-\spcut/\textsc{bal.sep.} & $\tO(m + n^{1+\delta})$~\cite{sherman2009breaking} & $\tO(\sqrt{mn} + n^{1+\delta})$ \\
\hline
\end{tabular}
\caption{Classical and quantum time complexity of cut approximation problems. All quantum bounds follow from combining our quantum sparsification algorithm with the corresponding classical algorithm.
Parameter $\delta$ is an arbitrarily small but positive constant.} \label{table:cut-appr}
\end{table}

We can also use a classical Laplacian solver on the sparsifier to find a speedup for Laplacian solving, i.e., solving the linear system $Lx = b$ where $L$ is the Laplacian of the original graph.

\begin{theorem}[Quantum Laplacian Solver]
Fix $n$, $m$ and $\epsilon \geq \sqrt{n/m}$.
There exists a quantum algorithm that, given adjacency-list access to a weighted and undirected $n$-node graph $G$ with $m$ edges and Laplacian~$L$, outputs with high probability an approximate solution $\tilde{x}\in\mathbb{R}^n$ to the linear system $L x = b$ such that $\| \tilde{x} - x \|_L \leq \epsilon \|x\|_L$ in time $\tO(\sqrt{mn}/\epsilon)$.
\end{theorem}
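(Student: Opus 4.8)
\noindent
\emph{Proof proposal.} The plan is to reduce the problem to solving a Laplacian system on a spectral sparsifier, which can be done entirely classically once the sparsifier has been produced. First I would invoke our quantum sparsification algorithm (the first theorem above) with accuracy parameter $\epsilon' = c\epsilon$ for a small absolute constant $c$, obtaining with high probability, and in time $\tO(\sqrt{mn}/\epsilon') = \tO(\sqrt{mn}/\epsilon)$, the explicit edge list of an $\epsilon'$-spectral sparsifier $H$ of $G$ with $m_H = \tO(n/\epsilon'^2) = \tO(n/\epsilon^2)$ edges. Write $L = L_G$ and $L_H$ for the two Laplacians; the guarantee $(1-\epsilon')L \preceq L_H \preceq (1+\epsilon')L$ implies in particular $\ker L_H = \ker L$, so $\mathrm{im}(L) = \mathrm{im}(L_H)$ and we may assume $b$ lies in this common image (e.g.\ $G$ connected and $b \perp \mathbf{1}$, projecting otherwise). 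Set $x = L^+ b$ (the true solution) and $x_H = L_H^+ b$ (the ``sparsified'' solution).

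Next I would run a classical near-linear-time Laplacian solver (Spielman--Teng or any of its successors) on the system $L_H \tilde x = b$ to relative accuracy $\delta = c'\epsilon$, producing $\tilde x$ with $\| \tilde x - x_H \|_{L_H} \le \delta \| x_H \|_{L_H}$ in time $\tO(m_H) = \tO(n/\epsilon^2)$ (the $\log(1/\delta)$ factor is absorbed since $\epsilon \ge \sqrt{n/m} \ge 1/n$). This is exactly where the hypothesis $\epsilon \ge \sqrt{n/m}$ is used: it is equivalent to $n/\epsilon^2 \le \sqrt{mn}/\epsilon$, so the classical solve fits inside the quantum runtime budget; reading $b$ and writing $\tilde x$ costs only $O(n)$, likewise within budget.

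It then remains to show $\| \tilde x - x \|_L \le \epsilon \| x \|_L$ for suitable $c,c'$, which I would do by the triangle inequality $\| \tilde x - x \|_L \le \| \tilde x - x_H \|_L + \| x_H - x \|_L$ and bounding the two terms. For the second term, conjugating the sandwich relation by $L^{+/2}$ gives $(1-\epsilon')\Id \preceq A \preceq (1+\epsilon')\Id$ on $\mathrm{im}(L)$, where $A = L^{+/2} L_H L^{+/2}$; since $x_H = L_H^+ L x$ one has $L^{1/2} x_H = A^{-1} L^{1/2} x$, hence $\| x_H - x \|_L = \| (A^{-1}-\Id) L^{1/2} x \|_2 \le \tfrac{\epsilon'}{1-\epsilon'}\| x \|_L \le 2\epsilon'\| x \|_L$. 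For the first term, $\| \tilde x - x_H \|_L^2 \le \tfrac{1}{1-\epsilon'}\| \tilde x - x_H \|_{L_H}^2 \le \tfrac{\delta^2}{1-\epsilon'}\| x_H \|_{L_H}^2$, and $\| x_H \|_{L_H} \le \sqrt{1+\epsilon'}\,\| x_H \|_L \le \sqrt{1+\epsilon'}\,(1+2\epsilon')\| x \|_L$ by the previous bound, so $\| \tilde x - x_H \|_L = O(\delta)\| x \|_L$. Altogether $\| \tilde x - x \|_L \le (2\epsilon' + O(\delta))\| x \|_L$, which is $\le \epsilon\|x\|_L$ once $c$ and $c'$ are chosen small enough.

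There is no genuine obstacle here: the argument is a routine composition of the quantum sparsification theorem with a black-box classical Laplacian solver. The only points needing care are bookkeeping ones --- verifying $\ker L_H = \ker L$ so the pseudoinverses interact as claimed, reducing to $b \in \mathrm{im}(L)$, and (most importantly) observing that the classical solve on $H$ does not exceed the quantum runtime, which is precisely what the standing assumption $\epsilon \ge \sqrt{n/m}$ guarantees.
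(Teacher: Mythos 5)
Your proposal is correct and follows essentially the same route as the paper: run the quantum sparsification algorithm to get an $\epsilon'$-spectral sparsifier in time $\tO(\sqrt{mn}/\epsilon)$, solve the sparse system classically in time $\tO(n/\epsilon^2)\subseteq\tO(\sqrt{mn}/\epsilon)$, and bound $\|L_H^+b - x\|_{L}$ via the operator-norm estimate $\|L^{1/2}L_H^+L^{1/2}-I\|\le 2\epsilon'$, which is exactly the paper's Claim~\ref{claim:approx-Laplacian}. The only difference is that you additionally track the classical solver's own $\delta$-error via a triangle inequality, which the paper suppresses (justifiably, since that solver's accuracy dependence is only $\log(1/\delta)$).
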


\noindent
In contrast to the well-known HHL algorithm~\cite{harrow2009quantum}, our algorithm outputs an explicit  classical description of $\tilde{x}$ (i.e., a vector of $n$ real entries), not an $n$-dimensional quantum state.
Here $\|v\|_L$ denotes the $L$-induced norm $\|v\|_L = \sqrt{v^\dag L v} = \|L^{1/2} v\|$, with $v^\dag$ the complex transpose of vector $v$.
This is the typical norm considered for Laplacian solving.
This speeds up the dependency on $m$ with respect to classical solvers~\cite{spielman2014nearly}, whose runtime is $\tO(m \log(1/\epsilon))$.
Similar to the classical case, we show how this also yields a quantum speedup for solving the more general class of \textit{symmetric, weakly diagonally-dominant} (SDD) linear systems.

We also find quantum speedups for approximating effective resistances and random walk commute times, creating an approximate ``resistance oracle'' which allows to query for the effective resistance of \emph{any} node pair in time $\tO(1)$, for approximating the random walk cover time, and for approximating the bottom eigenvalues of the Laplacian.
Finally we discuss how a spectral sparsifier allows to implement spectral $k$-means clustering more efficiently, so that our quantum sparsification algorithm also leads to a speedup for this task.
We summarize our speedups in Table~\ref{table:laplacian}, and discuss prior work on quantum algorithms for some of these problems in Section~\ref{sec:prior-work}.
More details are provided in Section~\ref{sec:applications}
 
\begin{table}
\centering
\def\arraystretch{1.5}%
\begin{tabular}{|c|c|c|}
\hline
& Classical & Quantum (this work) \\
\hline
$\epsilon$-Laplacian/SDD Solving & $\tO(m)$~\cite{spielman2014nearly} & $\tO(\sqrt{mn}/\epsilon)$ \\
\hline
$\epsilon$-Effective Resistance (single) & $\tO(m)$ & $\tO(\sqrt{mn}/\epsilon)$ \\
\hline
$\epsilon$-Effective Resistances (all) & $\tO(m + n/\epsilon^4)$~\cite{spielman2011graph} & $\tO(\sqrt{mn}/\epsilon + n/\epsilon^4)$ \\
\hline
$O(1)$-Cover Time & $\tO(m)$~\cite{ding2011cover} & $\tO(\sqrt{mn})$ \\
\hline
$k$ bottom eigenvalues & $\tO(m + kn/\epsilon^2)$ & $\tO(\sqrt{mn}/\epsilon + k n/\epsilon^2)$ \\
\hline
Spectral ($k$-means) Clustering & $\tO(m + n \poly(k))$ & $\tO(\sqrt{mn} + n \poly(k))$ \\
\hline
\end{tabular}
\caption{Classical and quantum time complexity of Laplacian solving and some of its applications. All classical bounds without reference follow from \cite{spielman2014nearly}. The quantum bounds essentially follow from combining our quantum sparsification algorithm with the corresponding classical algorithm.} \label{table:laplacian}
\end{table}
 
\subsection{Quantum Algorithm}
Our quantum sparsification algorithm starts from the iterative sparsification algorithm by Koutis and Xu~\cite{koutis2016simple}.
Their algorithm provides a simple combinatorial counterpart to the usual, algebraic treatment of spectral sparsification.
It crucially relies on the growth of so-called \emph{spanners} of the graph, which are sparse subgraphs that approximately preserve all pairwise distances between nodes.
After growing a small number of disjoint spanners in the graph, and keeping these edges, they downsample the remaining edge set by keeping every edge independently with some fixed constant probability, and discarding the rest.
This results in a sparsifier with approximately half the number of edges of the original graph.
Repeating this procedure a logarithmic number of times results in an $\epsilon$-spectral sparsifier with $\tO(n/\epsilon^2)$ edges.
 
The gist of our quantum speedup comes from a faster quantum algorithm for constructing spanners.
This algorithm follows essentially by pairing a classical spanner algorithm by Thorup and Zwick~\cite{thorup2005approximate} with the shortest-paths quantum algorithm by D\"urr, Heiligman, H{\o}yer and Mhalla~\cite{durr2006quantum}.
More specifically we prove the theorem below, where we call a graph $H$ a spanner of $G$ if it is a subgraph with $O(n \log n)$ edges, and the distance between any pair of nodes in $H$ is at most $\log n$ times their original distance in $G$.
Our algorithm speeds up the classical $\tO(m)$-time algorithm by Thorup and Zwick, whose runtime is optimal.

\begin{theorem}
Fix $n$, $m$.
There exists a quantum algorithm that, given adjacency-list access to a weighted and undirected $n$-node graph $G$ with $m$ edges, outputs with high probability a spanner of $G$ in time $\tO(\sqrt{mn})$.
\end{theorem}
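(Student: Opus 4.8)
The plan is to realise the sketch promised in the text: instantiate the Thorup--Zwick spanner construction~\cite{thorup2005approximate} with a level parameter $k=\Theta(\log n)$ (chosen small enough that the stretch $2k-1$ is at most $\log n$; since then $n^{1/k}=O(1)$, the spanner has $O(k\,n^{1+1/k})=O(n\log n)$ edges), and replace every shortest-path / Dijkstra computation it performs by a quantum one in the style of D\"urr--Heiligman--H\o yer--Mhalla~\cite{durr2006quantum}. Recall the construction. Sample a hierarchy $V=A_0\supseteq A_1\supseteq\dots\supseteq A_{k-1}\supseteq A_k=\emptyset$, where each $A_i$ keeps every vertex of $A_{i-1}$ independently with a fixed probability $n^{-1/k}$. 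For a vertex $v$ write $\delta(v,A_i)=\min_{w\in A_i}d_G(v,w)$ with pivot $p_i(v)\in A_i$ the minimiser, and $\delta(v,A_k)=\infty$. For each level $i$ and each $w\in A_i\setminus A_{i+1}$ put $C(w)=\{v: d_G(w,v)<\delta(v,A_{i+1})\}$ and let $T(w)$ be a shortest-path tree of $C(w)$ rooted at $w$; the output is $H=\bigcup_i\bigcup_{w\in A_i\setminus A_{i+1}}T(w)$. The textbook analysis shows that with high probability $H$ has $O(n\log n)$ edges and stretch $O(\log n)$, so $H$ is a spanner in the required sense; we quote this and only re-examine the running time. (If $m\le n$ the graph is a forest, hence its own stretch-$1$ spanner, readable with $O(m)\le O(\sqrt{mn})$ queries, so assume $m\ge n$.)

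We build $H$ in three stages. (1)~Sample the hierarchy: classical, $\tO(n)$ time. (2)~For each level $i=0,\dots,k-1$, run the DHHM single-source shortest-path-tree algorithm on $G$ augmented with a super-source $s_i$ joined by weight-$0$ edges to all of $A_i$; this graph has $n+1$ vertices and $m+|A_i|=O(m)$ edges, so one call costs $\tO(\sqrt{mn})$ and yields $\delta(v,A_i)=d(s_i,v)$, the pivot $p_i(v)$, and a shortest-path forest rooted at $A_i$. Over the $O(\log n)$ levels this is $\tO(\sqrt{mn})$ in total, and the arrays $\delta(\cdot,A_i)$ are written to QRAM. (3)~For each level $i$ and each $w\in A_i\setminus A_{i+1}$, grow $C(w)$ together with $T(w)$: run a Dijkstra from $w$ that, the moment it would settle a vertex $v$ with $d_G(w,v)\ge\delta(v,A_{i+1})$, discards $v$ and does not relax its edges --- so the search only ever touches $C(w)$ and the edges incident to it. Run this as a pruned version of the DHHM routine (quantum minimum finding to select the next settled vertex, the QRAM-stored thresholds to test membership); on a region with $c_w=|C(w)|$ vertices and $b_w\le\sum_{v\in C(w)}\deg(v)$ incident edges it costs $\tO(\sqrt{b_w c_w})$.

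It remains to bound stage (3). By Cauchy--Schwarz over all clusters $w$ (all levels), $\sum_w\sqrt{b_w c_w}\le\bigl(\sum_w b_w\bigr)^{1/2}\bigl(\sum_w c_w\bigr)^{1/2}$. Now $\sum_w c_w=\sum_v|\{w:v\in C(w)\}|=\sum_v|B(v)|$, where $B(v)$ is the Thorup--Zwick bunch, and the textbook analysis gives $\Exp\sum_v|B(v)|=O(n\log n)$; likewise $\sum_w b_w\le\sum_v\deg(v)\,|B(v)|$ with $\Exp\sum_v\deg(v)|B(v)|=O(m\log n)$. Hence, conditioning on the constant-probability event (Markov) that both sums are within a constant factor of their expectations, stage (3) costs $\tO(\sqrt{mn})$, and the three stages together run in $\tO(\sqrt{mn})$.

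The crux --- and the only genuinely new ingredient --- is making stage (3) rigorous, which has three moving parts. First, one must actually prove the ``pruned DHHM costs $\tO(\sqrt{b_w c_w})$'' claim, i.e.\ that the DHHM shortest-path-tree analysis localises to the explored region and that confining the exploration to $C(w)$ is enforceable without paying for the many out-of-cluster edges dangling off $C(w)$ beyond counting them once; this is the place where the classical and quantum arguments genuinely diverge. Second, the bounds on $\sum_w|C(w)|$ and $\sum_v\deg(v)|B(v)|$ (and on the size and stretch of $H$) come from the textbook argument in expectation / with high probability; to obtain a high-probability \emph{time} bound one wraps the whole procedure in a budget --- run it, and if the cumulative work ever exceeds $C\sqrt{mn}\,\polylog(n)$, abort and restart with a fresh hierarchy --- so that by Markov $O(1)$ restarts suffice in expectation and $O(\log n)$ with high probability, keeping the first run that finishes within budget and passes the size/stretch checks. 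Third, one should verify that DHHM, quantum minimum finding and Grover search all run in the paper's query-plus-QRAM model with the auxiliary data ($\delta(\cdot,A_i)$, cluster membership, tree pointers) held in QRAM; this is routine bookkeeping.
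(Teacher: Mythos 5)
Your proposal follows essentially the same route as the paper: instantiate Thorup--Zwick with $k=\Theta(\log n)$, precompute $\delta(\cdot,A_i)$ for each level by one super-source quantum SPT computation, grow each cluster tree $T(w)$ with a quantum SPT restricted to $C(w)$, and bound the total by Cauchy--Schwarz together with the Thorup--Zwick expectations $\Exp\sum_w|C(w)|\in O(kn^{1+1/k})$ and $\Exp\sum_w|E(C(w))|\in O(kmn^{1/k})$ plus Markov. (Two cosmetic points: the paper's convention is that the cost of an edge is $1/w_e$, so the super-source edges should have \emph{infinite} weight, not weight $0$ --- weight $0$ is how the paper encodes \emph{forbidden} edges; and your abort-and-restart wrapper is a reasonable, if anything slightly more careful, way to get the high-probability time bound than the paper's bare Markov argument.)

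The one substantive gap is exactly the step you flag yourself: the claim that a pruned DHHM-style SPT on a cluster with $c_w$ vertices and $b_w$ incident edges costs $\tO(\sqrt{b_w c_w})$. This is not a black-box citation of \cite{durr2006quantum} --- the paper devotes Appendix~\ref{app:SPT} to it, and in fact had to \emph{correct} the original DHHM analysis (the cost function there was $1/w(u,v)$ where it should have been $\delta_G(v_0,u)+1/w(u,v)$), which forces a more elaborate correctness proof. Concretely, the localized bound is obtained by maintaining a partition of the current tree into blocks $P_1,\dots,P_L$ of power-of-two sizes with $|P_k|>\sum_{i>k}|P_i|$, running $\minfind(|P_L|,f,g)$ on $E(P_L)$ to cache up to $|P_L|$ least-cost border edges with distinct endpoints per block, and arguing (i)~that the cached sets always still contain a globally least-cost border edge of the tree despite being computed against stale vertex sets, and (ii)~that the merge discipline makes $\sum_{P_L}\sqrt{|E(P_L)||P_L|}\in\tO(\sqrt{c_w b_w})$. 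Pruning to $C(w)$ is then handled by implicitly assigning weight $0$ (infinite cost) to edges whose endpoint $w'$ fails the test $\delta(v,w')<\delta(w',A_{i+1})$, which is checkable at the moment the edge is proposed because $\delta(v,w')$ is then known. Without an argument of this shape, the $\tO(\sqrt{b_w c_w})$ per-cluster bound --- and hence the whole $\tO(\sqrt{mn})$ total --- is asserted rather than proved.
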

\noindent
We can now try to plug this faster spanner construction in the Koutis-Xu sparsification algorithm.
The problem, however, is that we cannot output the ``intermediate'' sparsifiers, since after a constant number of iterations these still have $\Omega(m)$ edges, whereas we aim for a runtime that scales with $\sqrt{mn}$.
We overcome this issue using two observations, which will allow us to describe the intermediate graphs only implicitly.
 
First we show that \emph{if} we were given query access to a uniformly random string of $\tO(m)$ bits, then we could implicitly mark the discarded edges, and grow spanners in the remaining, unmarked graph without significantly affecting the runtime.
Second, we get rid of this long random string by using that any $(k/2)$-query quantum algorithm cannot distinguish a uniformly random string from a $k$-wise independent string, which only behaves uniformly random for subsets of at most $k$ elements.
This is a known result and can be proven for instance using the polynomial method~\cite{beals2001quantum}.
Hence it suffices that we have access to a $k$-wise independent random string, allowing us to use the rich literature on \emph{$k$-independent hash functions} that aim to simulate access to such random strings.
Specifically we require the recent result by Christiani, Pagh and Thorup~\cite{christiani2015independence}, which shows that in $\tO(k)$ time we can construct a data structure that can simulate queries to a $k$-wise independent string, requiring only $\tO(1)$ time per query.
Prior to their work, all algorithms required preprocessing time $\tO(k^{1+\delta})$, for $\delta>0$.
Using their construction we can efficiently simulate the random string, which leads to the following claim.
\begin{claim}
Consider any quantum algorithm with runtime $q$ that uses a uniformly random string.
Then we can construct a quantum algorithm without random string that has the same output distribution and has a runtime $\tO(q)$.
\end{claim}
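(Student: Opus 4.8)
The plan is to carry out the two observations sketched above: replace the uniformly random string by a $k$-wise independent one with $k = \Theta(q)$, and generate that string on the fly with the hash data structure of Christiani, Pagh and Thorup~\cite{christiani2015independence}.

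\textbf{Construction.} Let $A$ be the given quantum algorithm; it makes at most $q$ queries, some to its input oracle and some to a uniformly random string $r = (r_1,\dots,r_N) \in \{0,1\}^N$ (indices into $r$ take $O(\log N)$ bits, absorbed into $\tO(\cdot)$). Set $k := 2q$. By the result of~\cite{christiani2015independence}, in $\tO(k) = \tO(q)$ preprocessing time we can build, from a seed of $\tO(k)$ random bits, a data structure representing a hash function $h\colon [N] \to \{0,1\}$ whose induced string $(h(1),\dots,h(N))$ is $k$-wise independent with uniform marginals, and for which each evaluation $h(i)$ costs $\tO(1)$ time. The new algorithm $A'$ first samples this $\tO(q)$-bit seed itself by Hadamarding $\tO(q)$ ancilla qubits (which is why $A'$ needs no externally supplied random string), then builds the data structure and stores its tables in QRAM. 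It then runs $A$ verbatim, except that each query to $r$ is replaced by a coherent evaluation of $h$: compute $h(i)$ into a fresh register, XOR it into the target qubit, and uncompute the scratch. Since each evaluation is $\tO(1)$ elementary gates and QRAM queries and is made reversible by the uncomputation, it can be performed on a superposition of indices, and the total runtime of $A'$ is $\tO(q)$ (preprocessing plus $q$ steps each with $\tO(1)$ overhead).

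\textbf{Correctness.} By construction, $A'$ run with a fixed seed $s$ is literally $A$ given oracle access to the materialized string $r(s)$ with $r(s)_i = h_s(i)$, since the coherent evaluation implements exactly the oracle map $|i\rangle|b\rangle \mapsto |i\rangle|b\oplus r(s)_i\rangle$ (identity on the clean scratch). Hence the output distribution of $A'$ equals that of $A^r$ for $r \sim \mathcal D$, where $\mathcal D$ is the $k$-wise independent distribution produced by $h$. Now fix the input. For a fixed string $r$, running $A^r$ is a quantum algorithm making $\le q$ queries to the bits $r_1,\dots,r_N$ (treated as oracle bits), so by the polynomial method~\cite{beals2001quantum}, for each possible output value $z$ the quantity $\Pr[A^r \text{ outputs } z]$ is a multilinear polynomial $P_z(r_1,\dots,r_N)$ of degree at most $2q$. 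Therefore
\[
\Pr[A \text{ outputs } z] = \Exp_{r \sim U}[P_z(r)], \qquad \Pr[A' \text{ outputs } z] = \Exp_{r \sim \mathcal D}[P_z(r)],
\]
with $U$ uniform on $\{0,1\}^N$. Every monomial of $P_z$ involves at most $2q = k$ distinct variables $r_{i_1}\cdots r_{i_\ell}$, and $k$-wise independence with uniform marginals gives $\Exp_{\mathcal D}[r_{i_1}\cdots r_{i_\ell}] = 2^{-\ell} = \Exp_{U}[r_{i_1}\cdots r_{i_\ell}]$, so the two expectations agree monomial by monomial. Thus $\Pr[A' \text{ outputs } z] = \Pr[A \text{ outputs } z]$ for every $z$, i.e.\ $A$ and $A'$ have the same output distribution.

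\textbf{Main obstacle.} The core argument is short; the care lies in the computational model. One must verify that (i) the degree-$2q$ polynomial bound genuinely applies with the random-string bits as oracle bits and for the entire output distribution rather than a single accept/reject bit, which is why we run the argument separately for each value $z$; and (ii) a coherent query to the hash data structure costs only $\tO(1)$, which relies on the paper's QRAM assumption and on making the $\tO(1)$-time evaluation of~\cite{christiani2015independence} reversible by uncomputing its scratch space. Granting those points, the choice $k = 2q$ together with the standard $k$-wise-independence moment matching completes the proof.
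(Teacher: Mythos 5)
Your proposal is correct and follows essentially the same route as the paper: replace the uniform string by a $2q$-wise independent one (justified via the polynomial method, which the paper invokes as Fact~\ref{fact:q-wise} citing~\cite{beals2001quantum,zhandry2015secure}), and simulate queries to it with the Christiani--Pagh--Thorup data structure, incurring $\tO(q)$ preprocessing and $\tO(1)$ per query. The only difference is that you spell out the monomial-matching argument behind Fact~\ref{fact:q-wise} and the reversible-query implementation details, which the paper leaves implicit.
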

\noindent
Combining these observations remedies the issue of having to store the intermediate graphs, and leads to a speedup of the Koutis-Xu algorithm runtime to $\tO(\sqrt{mn}/\epsilon^2)$ quantum time.
 
We then further improve the runtime down to $\tO(\sqrt{mn}/\epsilon)$ by combining this quantum sparsification algorithm with the sparsification toolbox of Spielman and Srivastava~\cite{spielman2011graph}.
In that work, they show that a graph can be sparsified very elegantly by sampling edges with weights roughly proportional to their effective resistances.
Complementing this, they propose a near-linear time constructible ``resistance oracle'', which allows to query for effective resistances in logarithmic time.
We use our quantum sparsification algorithm to construct an initial, rough sparsifier with a constant error, in time $\tO(\sqrt{mn})$.
We then construct an approximate resistance oracle for this sparsifier, which effectively yields an approximate resistance oracle for the original graph.
Surprisingly, such rough approximation suffices for constructing an $\epsilon$-spectral sparsifier using the Spielman-Srivastava sampling scheme.
This finally allows us to sample the $\tO(n/\epsilon^2)$ edges of the sparsifier in time $\tO(\sqrt{mn}/\epsilon)$, using Grover's algorithm.
This idea of using a ``poor'' spectral approximation to compute sampling probabilities to obtain a better spectral approximation is also used in~\cite{li2013iterative,cohen2015uniform}.
 
\subsection{Matching Lower Bound} \label{sec:intro-lower-bnd}
We prove that the $\tO(\sqrt{mn}/\epsilon)$-runtime of our quantum algorithm is optimal, up to polylog-factors, even when we wish to construct a weaker \emph{cut} sparsifier.
The intuition behind this is that an $\epsilon$-cut sparsifier of a general graph must contain $\Omega(n/\epsilon^2)$ edges (and this is tight \cite{batson2012twice}).
If we can appropriately ``hide'' these edges among the $m$ edges of a graph, then a quantum search algorithm requires $\Theta(\sqrt{mn/\epsilon^2}) = \Theta(\sqrt{mn}/\epsilon)$ queries to retrieve them.
 
Turning this intuition into a concrete lower bound, however, turns out to be rather complicated.
We start with a random graph construction by Andoni, Chen, Krauthgamer, Qin, Woodruff and Zhang~\cite{andoni2016sketching}.
This construction describes graphs on $n$ nodes and $\tO(n/\epsilon^2)$ edges, so that any $\epsilon$-cut sparsifier must contain a constant fraction of the edges.
As such, the constructed graphs are in fact already sparsifiers.
We then carefully ``hide'' these sparsifiers in a larger, denser graph, in such a way that a sparsifier of this graph must retrieve all of the original, hidden sparsifiers.
To prove a quantum lower bound for this search problem, we describe it as the composition of the problem of finding a constant fraction of the nonzero bits in a Boolean matrix with the OR-function.
Finally we combine lower bounds for the individual problems using a composition theorem for adversary bounds, applicable to the composition of a relational problem with a function.
This composition theorem was very recently proven by Belovs and Lee \cite{belovs2019relational}, prompted by our question to them.
 
\subsection{Prior Work} \label{sec:prior-work}
We are not aware of any prior work on quantum speedups for graph sparsification.
In a very different line of work though, spectral sparsification has been studied in a quantum context with the goal of sparsifying Hamiltonian matrices, which are used to describe many-body systems.
Aharonov and Zhou~\cite{aharonov2019hamiltonian} asked whether the \emph{interaction graph} of a many-body system can be sparsified while preserving its spectrum, showing that this is not possible in general.
More recently, Herbert and Subramanian~\cite{herbert2019spectral} considered the weaker notion of sparsifying the Hamiltonian matrix, and suggested that sparsification could indeed help in Hamiltonian simulation.
They do not consider quantum algorithms for effectively constructing such a sparsifier.
 
Research on quantum algorithms for cut approximation is also limited.
There is recent work by Hamoudi, Rebentrost, Rosmanis and Santha~\cite{hamoudi2019quantum} on quantum approximate minimization of submodular functions, which can be used for cut approximation.
However, their work was more recently superseded by better classical algorithms~\cite{axelrod2020near}.
Other recent work by Brand{\~a}o, Kueng and Stilck Fran\c{c}a~\cite{brandao2019faster} used quantum SDP-solvers to approximate quadratic binary optimization problems, of which \maxcut{} is the most notable instance.
They do not succeed in finding a speedup for \maxcut{} though, mainly because their algorithm does not benefit from the special structure of this instance.
 
Concerning our speedup for Laplacian solving, we mention a range of papers on quantum speedups for general linear system solving.
Most famous is the work by Harrow, Hassidim and Lloyd~\cite{harrow2009quantum}, which was later refined in work by Ambainis~\cite{ambainis2012variable} and Childs, Kothari and Somma~\cite{childs2017quantum}.
They describe a quantum algorithm for solving general linear systems $Ax = b$ in time $\tO(d_M \kappa \log(1/\epsilon))$, with $d_M$ the row sparsity and $\kappa$ the condition number of $A$.
These algorithms are particularly relevant for sparse and well-conditioned systems (in general, however, $\kappa$ can be as large as $O(n^3 w_{\max}/w_{\min})$ for graph Laplacians \cite[Lemma 6.1]{spielman2014nearly}).
Crucially, they only output a quantum state that encodes the solution, rather than an explicit description as we do.
 
Quantum speedups for the problems of estimating effective resistances and spectral gaps have also been studied in other work.
Very recently and independently from our work, Piddock \cite{piddock2019quantum} constructed a quantum walk algorithm that $\epsilon$-approximates the effective resistance $R_{s,t}$ using $\tO(\sqrt{mR_{s,t}}/\epsilon^2) \in \tO(\sqrt{mn}/\epsilon^2)$ quantum walk steps.
He also argued how this could possibly be further improved to $\tO(\sqrt{mR_{s,t}}/\epsilon) \in \tO(\sqrt{mn}/\epsilon)$.
While the quantum walk model is different from our model, this tentative bound would agree with our runtime.
In addition, however, we can effectively approximate \emph{all} effective resistances simultaneously in the graph in the same complexity.
The problem was also studied in slightly different settings in \cite{wang2017efficient,chakraborty2018power,ito2019approximate}.
A quantum walk algorithm for estimating the second bottom eigenvalue $\lambda_2$ of the Laplacian in the adjacency-matrix model was studied by Jarret, Jeffery, Kimmel and Piedrafita~\cite{jarret2018quantum}.
They give a multiplicative $\epsilon$-approximation of $\lambda_2$ in time $\tO(n/(\sqrt{\lambda_2}\epsilon))$, which is $\tO(n^2/\epsilon)$ in the worst case.
We improve the worst-case complexity to $\tO(\sqrt{mn}/\epsilon) \in \tO(n^{3/2}/\epsilon)$.
 
We also mention some past and concurrent work on quantum speedups for clustering.
One paper by Daskin~\cite{daskin2017quantum} describes a quantum algorithm for spectral clustering but no direct speedup is found with respect to classical algorithms.
Concurrent to our work is a paper by Kerenidis and Landman \cite{kerenidis2020quantum} which describes a quantum algorithm for outputting the centroids of a $k$-means spectral clustering.
In contrast to our work, they start from quantum access to a data set, which they then use to query an associated Laplacian.
They find a quantum speedup under certain assumptions (e.g., that the input data is appropriately clustered), and as such is incomparable to our quantum algorithm.
Less directly related, there exists a number of papers~\cite{aimeur2007quantum,lloyd2013quantum,wiebe2015quantum,kerenidis2019q} on quantum speedups for $k$-means clustering and the construction of a neighborhood graph.
These tasks are complementary to our work on finding a spectral embedding, given a similarity graph of the data.
It does seem interesting to try and use these algorithms to further speed up our spectral clustering algorithm.
 
Finally, we mention some classical work on sublinear algorithms for Laplacian solving and spectral sparsification.
First, the work by Andoni, Krauthgamer and Pogrow~\cite{andoni2018solving} describes a sublinear algorithm for Laplacian solving, with the aim of approximating a single coordinate of the output.
Their algorithm is inspired by quantum algorithms for linear system solving, and similarly only finds a speedup for sparse and well-conditioned systems.
The second work is by Lee~\cite{lee2013probabilistic}, who proposes a classical algorithm for spectral sparsification of unweighted graphs which is sublinear in $m$.
He succeeds in bypassing the $\Omega(m)$ lower bound on classical sparsification by only achieving a weaker, additive error in the approximation.
As such this work is incomparable to ours.

\subsection{Follow-up work}
After the initial appearance of our results, a number of works have appeared that build on our results.

Apers and Lee \cite{apers2020edgeconn} used our quantum algorithm for graph sparsification to design a quantum algorithm for finding an \emph{exact} minimum cut in a graph.
They obtain quantum speedups when the ratio of the maximum weight over the minimum weight is bounded, and show that the quantum algorithm is optimal under such a condition.
Using similar techniques and under similar constraints, Apers, Auza and Lee \cite{apers2021sublinear} proved a quantum speedup for finding the exact minimum $s$-$t$ cut of a graph.

In a rather different setting, van Apeldoorn, Gribling, Li, Nieuwboer, Walter and de Wolf \cite{vanapeldoorn2021scaling} recently proved a quantum speedup for approximate matrix scaling.
Their algorithm builds on quantum approximate counting, and returns an $\epsilon$-approximate matrix scaling of an $n \times n$ matrix with $m$ nonzero entries in time $\tO(\sqrt{mn}/\epsilon^4)$.
For entrywise positive matrices, the upper bound was later improved to $\tO(n^{3/2}/\epsilon^3)$ by Gribling and Nieuwboer \cite{gribling2021improved}.
Their algorithm combines a second-order method for matrix scaling \cite{cohen2017matrix} with our quantum algorithm for approximately solving Laplacian systems.

In another work, Cade, Labib and Niesen \cite{cade2021quantum} described a quantum algorithm for \emph{motif clustering}, which is a graph clustering method that is based on higher-order patterns or motifs.
For dense instances, their algorithm builds on our quantum sparsification algorithm.

Finally, a recent work by Chen and de Wolf \cite{chen2021quantum} describes quantum algorithms and quantum lower bounds for linear regression with norm constraints.
The lower bounds rely on a similar strategy as ours in that they use the composition property of the adversary bound, which was recently proved by Lee and Belovs \cite{belovs2019relational}.
 
\subsection{Open Questions}
Our work raises a number of interesting questions and future directions, some of which we summarize below.
 
\begin{itemize}
\item
We prove a matching $\widetilde{\Omega}(\sqrt{mn}/\epsilon)$ lower bound on the quantum query complexity of spectral sparsification.
Can we extend this to a tight lower bound for any of the resulting applications, e.g., for $\epsilon$-approximating the min cut or effective resistance?
Since these problems can be reduced to constructing an $\epsilon$-spectral sparsifier, this would yield a stronger lower bound.
\item
The runtime of our quantum algorithm is tight, up to polylogarithmic factors, for sparsification of weighted graphs.
Can we potentially improve the runtime for \emph{unweighted} graphs?
\item
Graph sparsification is a key technique in efficient algorithms for calculating minimum cuts \cite{karger1994using,karger2000minimum}, and the use of fast Laplacian solvers played a key in the recent resolution of the long-standing question of computing maximum flows in graphs in almost-linear time~\cite{christiano2011electrical,sherman2013nearly,kelner2014almost,peng2016approximate,chen2022maximum}.
In follow-up work to this one, Apers and Lee \cite{apers2020edgeconn} used our quantum algorithm for graph sparsification to obtain a quantum speedup for finding a minimum cut in a graph.
We leave it as an open question whether a quantum speedup for finding a maximum flow can be obtained as well.
\item
Spectral sparsification of graphs and Laplacians has been extended in different directions such as sparsification of hypergraphs \cite{soma2019spectral,bansal2019new}, sparsification of sums of positive semi-definite matrices~\cite{soma2019spectral,silva2016sparse}, sparsification in a streaming setting~\cite{kelner2013spectral,kapralov2017single}.
It is also closely related to concepts such as spectral sketching~\cite{andoni2016sketching} and linear data regression using leverage scores~\cite{drineas2006sampling}.
It seems likely that we can also find quantum speedups for these related problems.
Similarly we might hope to solve more ``quantum'' tasks, such as sparsifying density operators or POVMs.
\end{itemize}

\subsection{Roadmap} \label{sec:roadmap}
To round up the introduction, we give a roadmap of the main parts of our paper.
Section \ref{sec:prelim} introduces some necessary preliminaries on graphs, the computational model, and the quantum algorithmic routines that we use.
It also describes the classical sparsification algorithm that our quantum algorithm is based on.
In Section \ref{sec:quantum-sparsification} we describe a first ``rough'' quantum sparsification algorithm, and we show how to simulate its access to a random string using $k$-wise independent hash functions.
We improve the error dependency of the algorithm in Section \ref{sec:refined-quantum-sparsification}.
In Section \ref{sec:quantum-spanner} we describe a quantum algorithm for constructing graph spanners, which our sparsification algorithms are based on.
In Section \ref{sec:lower-bound} we prove a lower bound that matches the performance of our quantum algorithm, and finally in Section \ref{sec:applications} we elaborate on the applications of our quantum algorithm for sparsification.

\section{Preliminaries} \label{sec:prelim}
 
Throughout the paper we say that something holds ``with high probability'' if it holds with probability at least $1 - O(1/n)$.
 
\subsection{Computational Model and Quantum Algorithms} \label{sec:comp-model}

We assume as our computational model a quantum-accessible classical control system that
\begin{enumerate}
\item
can run quantum subroutines on at most $O(\log N)$ qubits, where $N$ is the size of the problem instance;
\item
can make quantum queries to the input; and
\item
has access to a quantum-read/classical-write RAM (QRAM)\footnote{Another name for this type of memory is ``coherent RAM.'' We feel a QRAM containing a classical $k$-bit string~$z$ and allowing efficient queries of the form $\ket{i,b}\mapsto\ket{i,b\oplus z_i}$ (where $i\in[k]$ and $b\in\{0,1\}$) is a reasonable generalization of classical RAM: if one believes in classical RAM and in quantum superposition, then QRAM is quite natural. Like a classical RAM, the physical hardware of such a QRAM necessarily requires size at least  proportional to~$k$ because it contains $k$ bits of information, but answering a query would have a cost proportional to $\log k$, even when querying multiple stored bits in superposition. This could be realized for instance by laying out the $k$ bits as the leaves of a binary tree of depth $\log k$; the $\log k$ bits of the binary representation of an address $i\in[k]$ would chart a path from the root to the addressed bit~$z_i$, allowing for efficient lookup of the addressed bit. Note that running this on a superposition of different different addresses $i$ would involve going down different paths in superposition, but would still only use a superposition of $O(\log k)$ qubits. 

Assuming such QRAM is very common in quantum algorithms for graph problems. It should be noted, though, that the notion of QRAM is a bit controversial, (1) because the term is sometimes used in the literature for a different and stronger kind of memory (allowing for efficient conversion of a classically-stored unit vector of $k$ numbers into the corresponding state of $\log k$ qubits), and (2) since implementing it on noisy hardware might require $O(k)$ work to error-correct a quantum query, rather than $O(\log k)$ work.} of $\tO(\sqrt{mn}/\epsilon)$ classical bits, where a single QRAM operation corresponds to either classically writing a bit to the QRAM, or making a quantum query (a read operation) to bits stored in QRAM, possibly in superposition.
\end{enumerate}
In this model, an algorithm has \emph{time complexity $T$} if it uses at most $T$ elementary classical and quantum gates, quantum queries to the input, and QRAM operations.
The \emph{query} complexity of an algorithm only measures the number of queries to the input.
If we only care about query complexity, the assumption of having QRAM may be dropped at the expense of a polynomial increase in the number of gates.

An important quantum subroutine in our work is Grover's algorithm~\cite{grover1996fast} for searching sets of marked elements, which is summarized in the claim below.
\begin{claim}[Repeated Grover Search] \label{claim:rep-grover}
Let $f:[N] \to \{0,1\}$ be a function that marks a set of elements $S = \{i \in [N] \mid f(i) = 1\}$.
Then there is a quantum algorithm that finds $S$ with probability at least $2/3$ in $\tO(\sqrt{N \, |S|})$ elementary operations and queries to $f$, and uses $O(\log N)$ qubits and a QRAM of $\tO(|S|)$ bits.
\end{claim}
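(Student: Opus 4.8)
The plan is to find $S$ by repeatedly running a Grover-type search for a single marked element, each time ``removing'' the elements already found, and to bound the total cost using the fact that quantum search on a universe of size $N$ with $k \ge 1$ marked elements costs only $\tO(\sqrt{N/k})$. First I would set up, in QRAM, a standard dictionary $L$ holding the marked elements discovered so far (initially empty), supporting $\tO(1)$-time insertions and membership queries and using $\tO(|S|)$ bits overall; the ``pruned oracle'' $g(i) := [\,f(i)=1 \text{ and } i \notin L\,]$ is then evaluated with one query to $f$, one QRAM lookup, and $O(\log N)$ auxiliary gates. The algorithm loops: in iteration $t$ it runs the standard variant of Grover search for an unknown number of marked elements (Boyer, Brassard, H{\o}yer and Tapp) on the oracle $g$; if this returns some $i$ with $g(i)=1$ (checked by one extra query to $f$) it inserts $i$ into $L$ and proceeds, and if it reports that no marked element remains, the algorithm halts and outputs $L$.

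For the resource bound, recall that this search subroutine, when the current number $k := |S \setminus L|$ of marked elements is at least $1$, finds one in $O(\sqrt{N/k})$ expected queries, and, truncated once its internal parameter grows past $O(\sqrt N)$, uses $O(\sqrt N)$ queries in the worst case and then correctly reports ``empty'' when $k = 0$. As the loop runs, $k$ ranges over $|S|, |S|-1, \dots, 1$ in turn, so the total expected number of queries spent finding elements is $\sum_{k=1}^{|S|} O(\sqrt{N/k}) = O(\sqrt N \sum_{k=1}^{|S|} k^{-1/2}) = O(\sqrt{N |S|})$, to which we add a single final ``empty'' detection costing $\tO(\sqrt N)$. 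Since each query to $g$ also costs one QRAM read plus $O(\log N)$ gates, each iteration performs only $\tO(1)$ QRAM writes, and the search runs on a $\log N$-qubit register with $O(1)$ further ancillas, the whole algorithm stays within $\tO(\sqrt{N|S|})$ elementary operations and queries to $f$, $O(\log N)$ qubits, and $\tO(|S|)$ bits of QRAM (using $|S| \le \sqrt{N|S|}$). A Markov-inequality argument then turns the bound on the \emph{expected} number of queries into a worst-case $\tO(\sqrt{N|S|})$ bound holding with probability, say, $\ge 5/6$.

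The one genuinely delicate point — and the step I expect to be the main obstacle — is controlling the failure probability, since the algorithm stops the \emph{first} time a search reports ``empty'': a single search that wrongly reports ``empty'' while $k \ge 1$ makes the output a strict subset of $S$. A single run of the search subroutine only succeeds with constant probability when $k \ge 1$, so each of the (up to $|S|+1$) searches must be amplified. I would therefore run, in iteration $t$, some $O(\log(1/\delta_t))$ independent truncated runs of the subroutine and declare ``empty'' only if all of them fail; a geometric-series argument shows this keeps the \emph{expected} cost of the iteration at $O(\sqrt{N/k})$ for $k \ge 1$ (the amplification factor only inflates the rare $k=0$ detection to $\tO(\sqrt N)$), while driving that iteration's error below $\delta_t$. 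Crucially, choosing $\delta_t := 1/(10 t^2)$ requires no advance knowledge of $|S|$ yet gives $\sum_{t \ge 1} \delta_t < 1/6$, so a union bound shows all searches behave correctly with probability $\ge 5/6$; combined with the Markov bound above, the algorithm succeeds within the claimed complexity with probability $\ge 2/3$. (When $|S| = 0$ the algorithm still pays $\tO(\sqrt N)$, so the bound is more precisely $\tO(\sqrt{N \max(1,|S|)})$, which is harmless for our applications.)
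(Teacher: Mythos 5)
Your proof is correct and is precisely the standard ``repeat Grover with an exclusion list'' argument that the paper invokes without proof (it only cites this claim as a standard result following \cite{grover1996fast,nielsen2002quantum}). You correctly identify and handle the two genuinely delicate points --- amplifying each ``empty'' verdict with error budget $1/(10t^2)$ so the union bound works without knowing $|S|$ in advance, and converting the $O(\sqrt{N|S|})$ expected cost into a worst-case bound via Markov/truncation --- so nothing is missing.
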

\noindent
We also use a quantum algorithm for finding shortest-path trees by D\"urr, Heiligman, H{\o}yer and Mhalla~\cite{durr2006quantum}, but the quantum routines in this algorithm can be reduced to Grover search.
 
\subsection{Graphs, Queries and Spanners} \label{sec:graphs}
We consider weighted and undirected graphs $G = (V,E,w)$ with $|V|=n$ nodes and $|E|=m$ edges, and edge weights $w: E \to \real_{\geq 0}$.
We are given \emph{adjacency-list access} to $G$, as is considered in e.g.~\cite{durr2006quantum,goldreich2002property}.
This allows to query for the degree of a node, its $k$-th neighbor (according to some unknown but fixed ordering), or the weight of an edge.
This model is more restrictive than both the ``general graph model'', which in addition allows for adjacency matrix queries~\cite{goldreich2010introduction}, and the ``sparse-access model'', in which the neighbors are ordered lexicographically, as is commonly assumed in quantum algorithms for linear system solving and Hamiltonian simulation~\cite{harrow2009quantum,childs2017quantum}.
 
We define the \emph{distance} $\delta_G(u,v)$ between nodes $u$ and $v$ with respect to $G$ as
\[
\delta_G(u,v)
= \min_{u-v \;\mathrm{path}\; P} \sum_{e \in P} \frac{1}{w_e}.
\]
This definition is in accordance with the interpretation of $G$ as an electrical network, in which an edge $e$ corresponds to a link of conductance $w_e$ (and hence resistance or ``cost'' $1/w_e$), as is common in the literature on spectral sparsification.
A spanner of $G$ is a sparse subgraph $H$ that approximately preserves all pairwise distances.
Specifically, we will call $H$ a \emph{$t$-spanner} of $G$ if for any pair $u,v \in V$ it holds that
\[
\delta_G(u,v)
\leq \delta_H(u,v)
\leq t \delta_G(u,v).
\]
Note that the first inequality is trivially satisfied since $H$ is a subgraph.
It is well known that every weighted graph has a $(2k-1)$-spanner with $O(n^{1+1/k})$ edges \cite{althofer1993sparse}.
Throughout the paper we will use the shorthand \emph{spanner} to denote a $t$-spanner with $t = 2 \log n$ and $\tO(n)$ edges.
An \emph{$r$-packing of spanners} of $G$ is an ordered set $H = (H_1,H_2,\dots,H_r)$ of $r$ edge-disjoint spanners such that $H_j$ is a spanner for $G - \cup_{i<j} H_i$, which is the remaining graph after removal of the edges of all previous spanners.
Note that such an $r$-packing always exists for every~$r$, though once $G - \cup_{i<j} H_i$ has no edges left anymore, the subsequent spanners $H_j,H_{j+1},\ldots,H_r$ in the packing will all be empty.
 
The \emph{Laplacian} $L$ of a weighted graph $G$ is given by $L = D - A$, with $A$ the weighted adjacency matrix $(A_{ij}) = w_{ij}$ and $D$ the diagonal weighted degree matrix $(D_{ii}) = \sum_j w_{ij}$.
Alternatively, we can rewrite the Laplacian as
\[
L
= \sum_{e\in E} w_e \chi_e \chi_e^T,
\]
where we let $\chi_e = \chi_u - \chi_v$ denote a vector associated to the edge $e = (u,v)$, with $\chi_u,\chi_v$ indicator vectors of the nodes $u,v$ (we fix an arbitrary orientation of the edges).
If $G$ is connected then $L_G$ has a trivial kernel consisting only of the all-ones vector.
Moreover, $L_G$ is a real, symmetric, diagonally dominant matrix with nonnegative diagonal entries, and is hence positive semi-definite.
 
\subsection{Spectral Sparsification using Spanner Packings}
A \emph{cut sparsifier} $H$ of a graph $G$ is a sparse, reweighted subgraph that preserves the value of all cuts.
Specifically, $H$ is called an $\epsilon$-cut sparsifier if for any $S \subseteq V$ it holds that
\begin{equation} \label{eq:cut-spars}
(1-\epsilon) \val_G(S)
\leq \val_H(S)
\leq (1+\epsilon) \val_G(S),
\end{equation}
where $\val_G(S) = \sum_{i\in S,j\notin S} w_{(i,j)}$ denotes the total weight of the edges leaving $S$.
 
A \emph{spectral sparsifier} $H$ of a graph $G$ is a sparse, reweighted subgraph that preserves the quadratic form $x^T L_G x$ associated to the Laplacian $L_G$ of $G$, for any vector $x \in \complex^n$.
Specifically, $H$ is called an $\epsilon$-spectral sparsifier if for any $x \in \complex^n$ it holds that
\begin{equation} \label{eq:sp-spars}
(1-\epsilon) x^T L_G x
\leq x^T L_H x
\leq (1+\epsilon) x^T L_G x.
\end{equation}
Alternatively, we can rewrite this as $(1-\epsilon) L_G \preceq L_H \preceq (1+\epsilon) L_G$, where $A \preceq B$ denotes that $B - A$ is positive semi-definite.
This condition implies for instance that all eigenvalues of $H$ $\epsilon$-approximate the eigenvalues of $G$~\cite{batson2013spectral}, and all cuts in $H$ $\epsilon$-approximate those in $G$.
To see the latter, consider a subset $S \subseteq V$ and let $\chi_S$ denote the indicator on $S$, then
\[
\chi_S^T L_G \chi_S
= \sum_{(u,v)=e\in E} w_e (\chi_S(u) - \chi_S(v))^2
= \val_G(S).
\]
This shows that the cut value can be described by a quadratic form in the Laplacian, and hence \eqref{eq:sp-spars} implies that $(1-\epsilon) \val_G(S) \leq \val_H(S) \leq (1+\epsilon) \val_G(S)$, for all $S \subseteq V$.
Any $\epsilon$-spectral sparsifier is therefore also an $\epsilon$-cut sparsifier.
 
Spectral sparsifiers can be constructed by using spanners to identify the ``important'' edges in the graph.
This was first noticed by Kapralov and Panigrahy~\cite{kapralov2012spectral}, and further refined by Koutis and Xu~\cite{koutis2016simple}.
We will build on the latter work, which describes a very elegant approach for constructing spectral sparsifiers from spanner packings.
Their algorithm iteratively invokes the routine described below, which creates a spectral sparsifier with approximately half the number of edges of the original graph.
 
\begin{algorithm}[H]
\caption{$H = \halfspars(G,\epsilon)$} \label{alg:koutis-xu}
\begin{algorithmic}[1]
\State
construct an $O(\log^2(n)/\epsilon^2)$-packing of spanners of $G$
\State
let $P$ be their union and set $H = P$
\For{each edge $e \notin P$}
\State
with probability $1/4$, add $e$ to $H$ with weight $4w_e$
\EndFor
\State
return $H$
\end{algorithmic}
\end{algorithm}
 
\begin{theorem}[{\cite[Theorem 3.2]{koutis2016simple}}] \label{thm:half-sparsify}
The graph $H = \halfspars(G,\epsilon)$ is, with probability at least $1-1/n^2$, an $\epsilon$-spectral sparsifier of $G$ with at most $m/2 + \tO(n/\epsilon^2)$ edges.
\end{theorem}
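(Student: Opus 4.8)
The plan is to verify the two assertions separately, each with failure probability $O(1/n^2)$: that $H$ has at most $m/2 + \tO(n/\epsilon^2)$ edges, and that $(1-\epsilon)L_G \preceq L_H \preceq (1+\epsilon)L_G$. I would first reduce to the case where $G$ is connected (otherwise argue componentwise), so that $\ker L_G$ is spanned by the all-ones vector; since the first spanner in the packing is already a connected spanning subgraph of $G$, the subgraph $H \supseteq P$ is connected and $\ker L_H = \ker L_G$. For the edge count: the union $P$ of $O(\log^2 n/\epsilon^2)$ spanners, each with $\tO(n)$ edges, contributes $\tO(n/\epsilon^2)$ edges, while the number of retained non-packing edges is a sum of at most $m$ independent $\mathrm{Bernoulli}(1/4)$ variables with mean at most $m/4$; by a Chernoff bound this exceeds $m/2$ only with probability $e^{-\Omega(m)}$, which is below $1/n^2$ once $m = \Omega(\log n)$ — and if $m = O(\log n)$ then $|E(H)| \le m \le \tO(n/\epsilon^2)$ trivially. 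Combining gives $|E(H)| \le m/2 + \tO(n/\epsilon^2)$ with the required probability.

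For the spectral guarantee I would set $R = G - P$ with Laplacian $L_R$, so $L_G = L_P + L_R$ with $L_P \succeq 0$, hence $L_R \preceq L_G$. Writing $X_e \sim \mathrm{Bernoulli}(1/4)$ for the independent indicators, $L_H = L_P + \sum_{e\notin P} 4X_e w_e \chi_e\chi_e^T$, so $\Exp[L_H] = L_P + \sum_{e\notin P} w_e\chi_e\chi_e^T = L_G$: the algorithm outputs an unbiased estimator of $L_G$. Setting $y_e = \sqrt{w_e}\,L_G^{+/2}\chi_e$, the leverage score of $e$ is $\tau_e := \|y_e\|^2 = w_e\,\chi_e^T L_G^{+}\chi_e$, and $\sum_{e\notin P} y_e y_e^T = L_G^{+/2} L_R L_G^{+/2} \preceq \Pi$, the orthogonal projector onto $\mathrm{im}(L_G)$. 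Conjugating by $L_G^{+/2}$, the target inequality $(1-\epsilon)L_G \preceq L_H \preceq (1+\epsilon)L_G$ is then equivalent to $\| \sum_{e\notin P}(4X_e-1)\,y_e y_e^T \| \le \epsilon$, i.e., a deviation bound for a sum of independent, mean-zero Hermitian matrices.

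The crucial combinatorial input, which I would isolate as a lemma, is that every non-packing edge has leverage score $O(\epsilon^2/\log n)$. If $e=(u,v)\notin P$ and $H_1,\dots,H_t$ are the $t = O(\log^2 n/\epsilon^2)$ edge-disjoint spanners, then $e$ survives into every residual graph $G_i = G - \bigcup_{j<i} H_j$, so $\delta_{G_i}(u,v)\le 1/w_e$, and the stretch guarantee of the spanner yields a $u$-$v$ path $Q_i \subseteq H_i$ of total resistance $\sum_{f\in Q_i} 1/w_f \le (2\log n)/w_e$. The paths $Q_1,\dots,Q_t$ are pairwise edge-disjoint, so treating them as $t$ parallel wires and invoking Rayleigh monotonicity gives $R_{\mathrm{eff}}^{G}(u,v) \le (2\log n)/(t\,w_e)$, whence $\tau_e = w_e\, R_{\mathrm{eff}}^{G}(u,v) \le (2\log n)/t = O(\epsilon^2/\log n)$. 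With this in hand, each summand obeys $\|(4X_e-1)y_ey_e^T\| \le 3\tau_e = O(\epsilon^2/\log n)$, and the matrix variance $\| \sum_{e\notin P} 3\tau_e\, y_ey_e^T \| \le 3(\max_e\tau_e)\,\|\Pi\| = O(\epsilon^2/\log n)$. A matrix Bernstein/Chernoff-type inequality then bounds $\Pr[\,\| \sum_{e\notin P}(4X_e-1)y_ey_e^T \| \ge \epsilon\,]$ by $2n\exp(-\Omega(\log n))$; taking the constant hidden in $t = O(\log^2 n/\epsilon^2)$ sufficiently large makes this at most $1/n^2$, which finishes the proof.

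The real content — and the step I expect to be the main obstacle — is the leverage-score bound, in particular the electrical-network argument converting ``$t$ edge-disjoint $u$-$v$ paths of resistance $\le 2\log n/w_e$'' into ``$R_{\mathrm{eff}}^G(u,v) \le 2\log n/(t\,w_e)$'' via Rayleigh monotonicity; one also has to match the constant in the packing size to what the matrix concentration inequality demands for an $n^{-2}$ tail. Everything else is bookkeeping. Alternatively, one could shortcut the last two paragraphs by citing a standard leverage-score-sampling theorem (Spielman--Srivastava, Koutis--Levin--Peng): sampling each edge $e$ independently with probability $p_e \ge \min(1, c\,\tau_e\log n/\epsilon^2)$ and rescaling its weight by $1/p_e$ yields an $\epsilon$-spectral sparsifier with probability $\ge 1-1/n^2$. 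Here $\tau_e \le 2\log n/t \le \epsilon^2/(4c\log n)$ for non-packing edges, so the uniform rate $p_e = 1/4$ suffices (over-sampling only improves concentration), while packing edges are kept with $p_e = 1$ and original weight $w_e$; the cited theorem then yields the claim directly.
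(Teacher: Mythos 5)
Your proof is correct and follows essentially the same route as the source that the paper cites for this theorem (Koutis--Xu \cite[Theorem~3.2]{koutis2016simple}) and does not itself reprove: the heart of the argument in both is the leverage-score bound $w_e R_{\mathrm{eff}}(u,v)\le 2\log n/t$ for every non-packing edge, obtained from the $t$ edge-disjoint spanner paths of resistance at most $2\log n/w_e$ via Rayleigh monotonicity (equivalently, the energy of the unit flow split equally over the $t$ paths), followed by leverage-score oversampling and a matrix Chernoff/Bernstein bound. I see no gaps.
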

\noindent
Now consider a fixed $\epsilon>0$.
If we iterate $T \in O(\log(m/n))$ times the routine $\halfspars(G,\epsilon')$, with $\epsilon' \in O(\epsilon/T)$, then we retrieve with high probability an $\epsilon$-spectral sparsifier with $\tO(n/\epsilon^2)$ edges.
By \cite{andoni2016sketching} this is optimal up to log-factors.
Classically the complexity is dominated by the construction of $\tO(1/\epsilon^2)$ spanners, each of which requires time $\tO(m)$ \cite{thorup2005approximate}, giving a total time complexity $\tO(m/\epsilon^2)$.

\section{Quantum Sparsification Algorithm} \label{sec:quantum-sparsification}
In this section we describe our quantum algorithm for constructing spectral sparsifiers.
The algorithm is based on the scheme by Koutis and Xu.
We use as a black-box a quantum algorithm for constructing a spanner in time $\tO(\sqrt{mn})$, whose description we postpone to Section~\ref{sec:quantum-spanner}.
 
As we already mentioned in the introduction, we cannot simply plug this quantum spanner algorithm in the Koutis and Xu algorithm.
Indeed, after a single iteration of their algorithm this would require to output a graph with up to $m/2$ edges, which is much too costly since we aim at a runtime that scales as $\sqrt{mn}$.
We resolve this issue in two stages.
First, we assume that we have access to a random string of length $\tO(m)$.
We use this string to mark edges that have been discarded at some iteration by 0-bits, which we later use to implicitly set their weight equal to zero.
By its construction, the spanner algorithm can then construct a spanner in the remaining graph.
At the end we use Grover search to explicitly retrieve the remaining $\tO(n/\epsilon^2)$ edges, whose union forms the spectral sparsifier.
We then get rid of the random string.
To this end we use efficient $k$-independent hash functions that allow to simulate queries to a $k$-wise independent random string.
This suffices since by standard results a $k$-query quantum algorithm cannot distinguish a $2k$-wise independent strings from a uniformly random one.
 
\subsection{Using a Random String}
We first assume access to a family of independent, random strings $r_i \in \{0,1\}^m$, with indices $i \in [\log(m/n)]$, such that all bits are independent and equal to $1$ with probability $1/4$.
For different indices $i$, the strings $r_i$ will function as consecutive ``sieves'' of the edge set.
 
Algorithm~\ref{alg:qspars} describes the sparsification algorithm using such random strings.
A critical remark is that steps 4 and 5 of the algorithm are only performed implicitly, as mentioned before.
Rather than keeping an explicit list of updated edge weights, we maintain an implicit ``weight oracle''.
Only when an edge weight is queried, does this weight oracle calculate its weight by consulting the necessary random strings.
We show how to do this efficiently in the proof of Theorem~\ref{thm:random-string}.
 
\begin{algorithm}[H]
\caption{$H = \qspars(G,\epsilon)$} \label{alg:qspars}
\begin{algorithmic}[1]
\State
let $\{w'_e = w_e\}$ and $\ell = \lceil \log(m/n) \rceil$
\For{$i = 1,2,\dots,\ell$}
\State
create an $O(\log^2(n)/\epsilon^2)$-packing of spanners of $G' = (V,E,w')$, let $P_i$ denote its union
\For{each edge $e \notin P_i$} 
\State
\textbf{if}\, $r_i(e) = 1$ \,\textbf{then}\, set $w'_e = 4w'_e$ \,\textbf{else}\, set $w'_e = 0$
\EndFor
\EndFor
\State
use repeated Grover search to find $H = \{e \in E \mid w'_e > 0\}$
\end{algorithmic}
\end{algorithm}
 
\begin{theorem} \label{thm:random-string}
Given access to independent, uniformly random strings $r_i \in \{0,1\}^m$ for $i \in [\log(m/n)]$, algorithm $\qspars(G,\epsilon)$ returns with probability $1-O(\log(n)/n^2)$ an $\epsilon$-spectral sparsifier of $G$ with $\tO(n/\epsilon^2)$ edges.
There is a quantum algorithm that implements it in time $\tO(\sqrt{mn}/\epsilon^2)$ and using a QRAM of $\tO(\sqrt{mn}/\epsilon^2)$ bits.
\end{theorem}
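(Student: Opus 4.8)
The plan is to show that $\qspars$ is nothing more than an implicit, query-efficient realization of $\ell = \lceil\log(m/n)\rceil$ successive applications of the Koutis--Xu routine $\halfspars$, and then to read off correctness from Theorem~\ref{thm:half-sparsify} together with a composition argument, and to control the running time via the black-box quantum spanner algorithm of Section~\ref{sec:quantum-spanner}. Concretely, write $G^{(0)}=G$ and let $G^{(i)}$ be the graph on $V$ whose edge weights are the values of $w'$ at the end of the $i$-th pass of the outer loop, declaring an edge ``present'' exactly when its weight is positive. The first step is to check that one pass of the outer loop implements the map $G^{(i)}=\halfspars(G^{(i-1)},\epsilon')$ for the parameter $\epsilon'$ governing the packing size: the spanner subroutine reads weights only through distances, a weight-$0$ edge has infinite cost and hence never lies on a shortest path, so the packing $P_i$ it returns is a genuine spanner packing of the support graph of $G^{(i-1)}$ (and a $0$-weight edge is inert in steps~4--5 either way); steps~4--5 then keep each edge of $P_i$ with its current weight and each $e\notin P_i$ with weight $4w'_e$ when $r_i(e)=1$ (probability $1/4$) and drop it otherwise, which is precisely $\halfspars$. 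Taking $\epsilon'=\Theta(\epsilon/\ell)$ makes each $P_i$ an $\tO(1/\epsilon^2)$-packing. Conditioning iteration by iteration, Theorem~\ref{thm:half-sparsify} gives that $G^{(i)}$ is an $\epsilon'$-spectral sparsifier of $G^{(i-1)}$ with $|E(G^{(i)})|\le\tfrac12|E(G^{(i-1)})|+\tO(n/\epsilon^2)$ edges, with failure probability $\le 1/n^2$ each; since $\ell=O(\log n)$, a union bound gives overall failure $O(\log(n)/n^2)$. On the good event, composing the sandwiching inequalities yields $(1-\epsilon')^{\ell}L_G\preceq L_{G^{(\ell)}}\preceq(1+\epsilon')^{\ell}L_G$, and $\epsilon'=\Theta(\epsilon/\ell)$ forces $(1\pm\epsilon')^{\ell}\in[1-\epsilon,1+\epsilon]$; unrolling the edge recursion and using $2^{\ell}\ge m/n$ gives $|E(G^{(\ell)})|\le n+\tO(n/\epsilon^2)=\tO(n/\epsilon^2)$. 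Since the final repeated Grover search over the $m$ edges with marking predicate $w'_e>0$ outputs $H=G^{(\ell)}$ exactly (using the high-probability version of Claim~\ref{claim:rep-grover}), this settles correctness.

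The second, more delicate, step is the implicit implementation of the weight oracle together with the running-time count. I would store in QRAM the edge sets of all spanners built so far --- the completed packings $P_1,\dots,P_{i-1}$ and the spanners already grown towards $P_i$ within the current iteration --- as hash tables keyed by edge; these contain $\tO(n/\epsilon^2)$ edges in total, i.e.\ $\tO(n/\epsilon^2)$ bits. A query ``$w'_e$?'' is then answered in $\tO(1)$ time by walking $j=1,\dots,i-1$ while maintaining a multiplier: leave it unchanged if $e\in P_j$, else look up $r_j(e)$ and multiply by $4$ if $r_j(e)=1$ or return $0$ if $r_j(e)=0$ (a dropped edge stays dropped); the answer is the multiplier times $w_e$, and while feeding the next spanner call in iteration $i$ we additionally report edges already placed in $P_i$ as absent. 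Hence each weight-inspecting step of the spanner subroutine incurs only $\tO(1)$ overhead, so by the $\tO(\sqrt{mn})$-time spanner algorithm of Section~\ref{sec:quantum-spanner} each of the $\tO(1/\epsilon^2)$ spanners in a packing costs $\tO(\sqrt{mn})$ time, a packing costs $\tO(\sqrt{mn}/\epsilon^2)$, and the $\ell=O(\log m)$ iterations cost $\tO(\sqrt{mn}/\epsilon^2)$ in all; writing the $\tO(n/\epsilon^2)$ spanner edges into QRAM is subsumed. The closing repeated Grover search has $|H|=\tO(n/\epsilon^2)$ marked edges among $m$ with an $\tO(1)$-time predicate, hence costs $\tO(\sqrt{m|H|})=\tO(\sqrt{mn}/\epsilon)$ by Claim~\ref{claim:rep-grover}, which is dominated; recording the final weight of each found edge is one extra oracle call apiece. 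The algorithm uses $O(\log n)$ qubits throughout (those of the spanner subroutine and of Grover search), and its QRAM footprint is bounded by its time complexity $\tO(\sqrt{mn}/\epsilon^2)$.

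I expect the main obstacle to be making the ``implicit $\halfspars$'' argument fully rigorous: one must verify that running the classical Thorup--Zwick and D\"urr--Heiligman--H{\o}yer--Mhalla spanner construction on $G$'s adjacency lists but with the doctored weight oracle really does produce a valid spanner --- and a valid \emph{packing}, which additionally requires hiding the edges of earlier spanners of the same packing --- of the current support graph, without the runtime degrading below $\tO(\sqrt{mn})$, and that every subsequent weight query can still be served in $\tO(1)$ amortized time despite the nested layers of removed edges. The probabilistic composition and the edge-count bookkeeping are routine once the internal parameter is fixed to $\epsilon'=\Theta(\epsilon/\ell)$.
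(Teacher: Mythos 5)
Your proposal is correct and follows essentially the same route as the paper: iterate Theorem~\ref{thm:half-sparsify} with internal accuracy $\Theta(\epsilon/\log(m/n))$, implement steps 4--5 via an implicit weight oracle that consults the stored spanner packings and the bits $r_j(e)$, and charge the runtime to $\tO(1/\epsilon^2)$ spanner constructions per iteration plus a final repeated Grover search. The only cosmetic difference is in the weight oracle: you walk through all iterations $j<i$ checking membership in $P_j$ and the bit $r_j(e)$ at each step, whereas the paper shortcuts to the last packing containing $e$ (using that membership in $P_j$ already certifies survival up to iteration $j$); both give $\tO(1)$ per query.
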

\begin{proof}
Correctness easily follows from Theorem~\ref{thm:half-sparsify}: in every iteration we ``half-sparsify'' the remaining graph (induced by all edges of weight $w_e > 0$).
The probability that all $\log(m/n)$ iterations succeed is $1 - O(\log(n)/n^2)$.
Below we discuss how steps 4 and 5 can be implemented efficiently, so that the runtime of the for-loop is dominated by the construction of $\tO(1/\epsilon^2)$ spanners.
By Theorem~\ref{thm:quantum-spanner} this takes time $\tO(\sqrt{mn}/\epsilon^2)$.
By standard results~\cite{nielsen2002quantum}, the repeated Grover search routine in the final step takes time $\tO(\sqrt{mn}/\epsilon)$, which is the time needed to find $n/\epsilon^2$ edges among $m$ edges.
 
What remains to prove is that there exists an efficient oracle that keeps track of the weight updates in steps 4 and 5.
Consider the $i$-th iteration.
Given an edge $e$, let $k$ denote the number of spanners before this iteration in which $e$ occurs so far.
If $k=0$, return $w'_e = 4^iw_e$ if $(r_i \, r_{i-1} \dots r_1)(e) = 1$, and $w'_e = 0$ if $(r_i \, r_{i-1} \dots r_1)(e) = 0$.
If $k>0$, let $j < i$ denote the last spanner packing in which it occurs.
Now return $w'_e = 4^{i-k}w_e$ if $(r_i \, r_{i-1} \dots r_{j+1})(e) = 1$, and $w'_e = 0$ otherwise.
This takes $\tO(1)$ searches through the set of spanners (which we may assume is sorted), and at most $O(i) \in \tO(1)$ evaluations of the random oracle.
\end{proof}
 
The space complexity of the algorithm requires $O(\log n)$ qubits and a QRAM of $\tO(n/\epsilon^2)$ bits.
The number of qubits follows from the space complexity of the quantum spanner algorithm and the Grover search routine.
The classical space complexity is dominated by the output size.
 
\subsection{Using $k$-independent Hash Functions}
In order to get rid of the random strings $\{r_i\}$, we build on the following fact, which is an easy consequence of the polynomial method~\cite{beals2001quantum}.
It seems that this was first used in the proof of~\cite[Theorem 19]{buhrman2008quantum}, and is stated explicitly in for instance~\cite[Theorem 3.1]{zhandry2015secure}.
\begin{fact} \label{fact:q-wise}
The output distribution of a quantum algorithm making $q$ queries to a uniformly random string is identical to the same algorithm making $q$ queries to a $2q$-wise independent string.
\end{fact}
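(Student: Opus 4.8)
The plan is to prove this with the polynomial method of Beals, Buhrman, Cleve, Mosca and de Wolf~\cite{beals2001quantum}. The point is that the amplitudes produced by a $q$-query quantum algorithm are polynomials of degree at most $q$ in the bits of the queried string, so every measurement-outcome probability is a polynomial of degree at most $2q$ in those bits, and the expectation of such a polynomial over a random string depends only on the joint distribution of subsets of at most $2q$ bits — which is identical for a uniformly random string and for a $2q$-wise independent one.

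Concretely, first I would write the algorithm as a fixed circuit $U_q\, O_r\, U_{q-1}\cdots U_1\, O_r\, U_0$ applied to a fixed initial state, where each $U_t$ is a unitary independent of the string $r\in\{0,1\}^N$ and $O_r:\ket{i,b,w}\mapsto\ket{i,b\oplus r_i,w}$ is the standard query oracle (the phase-oracle convention behaves identically). The core step is an induction on $t$: after the first $t$ queries, every amplitude of the state is a polynomial in $r_1,\dots,r_N$ of degree at most $t$. The base case $t=0$ holds because the amplitudes are constants; applying a string-independent $U_t$ replaces each amplitude by a fixed linear combination of amplitudes and cannot raise the degree; and applying $O_r$ replaces the amplitude of $\ket{i,b,w}$ by $(1-r_i)\alpha_{i,b,w}(r)+r_i\,\alpha_{i,b\oplus 1,w}(r)$, which raises the degree by at most one. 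Hence after all $q$ queries each amplitude $\alpha(r)$ has degree at most $q$.

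Next, for any fixed outcome $z$ of the final measurement, the probability $p_z(r)$ of obtaining $z$ is a sum of terms $\alpha(r)\overline{\alpha(r)}$ and so is a real polynomial in $r$ of degree at most $2q$; using $r_i^2=r_i$ I would reduce it to an equivalent multilinear polynomial $p_z(r)=\sum_{S:\,|S|\le 2q}c_S\prod_{i\in S}r_i$. Then $\Exp_{r\sim\mathcal D}[p_z(r)]=\sum_{S:\,|S|\le 2q}c_S\Pr_{r\sim\mathcal D}[\,r_i=1\text{ for all }i\in S\,]$, which depends on the distribution $\mathcal D$ only through the events $\{r_i=1\ \forall i\in S\}$ with $|S|\le 2q$. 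A $2q$-wise independent distribution assigns each such event the same probability $2^{-|S|}$ as the uniform distribution, so $\Exp[p_z(r)]$ — which is exactly the probability that the randomized-string algorithm outputs $z$ — is the same in both cases. Since this holds for every $z$, the two output distributions coincide.

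The only real subtlety I anticipate is reducing a general algorithm — one that may interleave its $q$ queries with intermediate measurements, classical control, and classical-write/quantum-read QRAM operations — to the clean unitary form used above. For this I would invoke the principle of deferred measurement to postpone all measurements to the end and replace classically-controlled QRAM writes by coherent writes, treating the QRAM cells as part of the workspace register $w$; this transformation preserves the output distribution exactly and yields a single circuit in which $r$ enters only through the $q$ copies of $O_r$. Once this is in place, the degree-counting induction and the moment-matching argument above are routine.
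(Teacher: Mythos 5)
Your proposal is correct and is exactly the argument the paper has in mind: the paper does not spell out a proof but attributes the fact to the polynomial method of~\cite{beals2001quantum} (citing \cite{buhrman2008quantum,zhandry2015secure} for earlier uses), and your write-up — amplitudes of degree at most $q$ in the string bits, hence outcome probabilities of degree at most $2q$, whose expectations depend only on joint moments of at most $2q$ bits — is the standard instantiation of that method, with the deferred-measurement reduction correctly handling the bookkeeping.
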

As a consequence, we can replace the uniformly random strings of length $m$ by a $k$-wise independent string with $k \in \tO(\sqrt{mn}/\epsilon^2)$.
Surely we also cannot explicitly construct a $k$-wise independent string of length $\tO(m)$ in time $\tO(\sqrt{mn}/\epsilon^2)$, but we can use hash functions to simulate queries to such a string.
A family of hash functions $F = \{h:[u] \to [r]\}$ for $u, r \in \natural$ is called \emph{$k$-independent} if, for any subset $S \subseteq [u]$ of size $|S| \leq k$ and a uniformly random function $h$ in the family, the image of $h$ on $S$ behaves uniformly random in $[r]^{|S|}$.
This implies that the image of a random member of $F$, which we will refer to as a \emph{$k$-independent hash function}, describes a $k$-wise independent string over $[r]^u$.
Elegant constructions of such functions have long been known, the most famous example being random degree-$k$ polynomials, as proposed by Carter and Wegman~\cite{carter1979universal}.
Crucial to our cause, however, is that we can evaluate the hash function in $\tO(1)$ time, potentially allowing $\tO(k)$ preprocessing time.
Fortunately, such a result was established very recently by Christiani, Pagh and Thorup~\cite{christiani2015independence}, who proved the theorem below.
We note that this is a purely classical construction.
\begin{theorem}[\cite{christiani2015independence}]
It is possible to construct in time $\tO(k)$ a data structure of size $\tO(k)$ that allows to simulate queries to a $k$-independent hash function in $\tO(1)$ time per query.
\end{theorem}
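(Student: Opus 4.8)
Since this is a purely classical statement, the plan is to follow the expander-based route to high independence pioneered by Siegel and refined in the cited work. The key structural fact is the following. Fix a bipartite graph $\mathcal{G}$ whose left vertex set is the universe $[u]$, whose right vertex set is $[s]$ with $s = \tO(k)$, and whose left-degree is $d = \tO(1)$, and store a single array $T$ of $s$ independent uniformly random values in a (power-of-two) range $[r]$. Define the hash function $h(x) = \bigoplus_{y \in N_{\mathcal{G}}(x)} T[y]$, the bitwise XOR of the $d$ table entries indexed by the neighbours of $x$. A short linear-algebra argument shows that $h$ is $k$-independent as soon as $\mathcal{G}$ has the \emph{$k$-unique-neighbour} (or ``peelability'') property: for every left-subset $X$ with $|X| \le k$, the $\mathbb{F}_2$-incidence vectors $\{\mathbf{1}_{N_{\mathcal{G}}(x)}\}_{x \in X}$ are linearly independent (equivalently, the subgraph induced by $X$ always contains a right vertex of degree exactly one). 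Indeed, under this property the map $T \mapsto (h(x))_{x \in X}$ is a surjective $\mathbb{F}_2$-linear map, so it pushes the uniform distribution on $T$ forward to the uniform distribution on $[r]^{|X|}$.

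Given this, the theorem reduces to three tasks: (i) bring the possibly huge universe $[u]$ down to size $\poly(k)$ while preserving $k$-independence; (ii) construct in time $\tO(k)$ an \emph{explicit} bipartite graph on $[\poly(k)]$ that has the $k$-unique-neighbour property with right side of size $\tO(k)$ and left-degree $\tO(1)$, and whose neighbour list $N_{\mathcal{G}}(x)$ can be evaluated in $\tO(1)$ time; and (iii) populate the table $T$ of $\tO(k)$ entries. Task (iii) is essentially free, since $|T| = \tO(k)$ and we may write down that many (pseudo)random values; if one insists on a fully seed-based construction one recursively uses a $k$-independent hash on the small domain $[s]$ to generate $T$. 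Task (i) is handled by a two-level scheme in the spirit of Pagh--Pagh: a single universal hash $g : [u] \to [\poly(k)]$ is injective on any \emph{fixed} $k$-set with probability $1 - O(1/k)$ but not simultaneously on all $k$-sets, so one composes $g$ with a small auxiliary hash table recording the $O(1)$-per-key ``collisions'' and correcting the output, yielding a function that is genuinely $k$-independent on every $k$-set; the $\log u$ factors from describing $g$ are absorbed into the $\tO(\cdot)$.

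The real content is task (ii). Random wirings have the $k$-unique-neighbour property with high probability but are neither explicit nor fast to query, and the known explicit unbalanced expanders give polynomially worse parameters --- which is exactly why earlier hash families needed $\poly(k)$ preprocessing. The idea in the cited work is an ``independence $\Leftrightarrow$ expansion'' correspondence: rather than an explicit graph, one generates the edges of $\mathcal{G}$ using a hash function of only \emph{moderate} independence (which is cheap to build and to evaluate), and proves that this randomized wiring still satisfies the $k$-unique-neighbour property with high probability. This is bootstrapped across $O(\log k/\log\log k)$ levels of small expander gadgets: a cheap, low-independence hash at one scale supplies the wiring needed to build a higher-independence hash at the next scale. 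Each level costs only an $\tO(1)$ factor in query time and an $\tO(k)$ term in space and preprocessing, which is what yields the stated bounds.

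I expect the main obstacle to be precisely this last point: exhibiting fast-to-build, fast-to-query, near-optimal unique-neighbour expanders --- or, equivalently, proving that the hash-generated wiring is $k$-peelable with the right quantitative parameters, together with making the recursion's accounting close. Everything else (Siegel's linear-algebra lemma, the Pagh--Pagh universe reduction, and filling the random table) is comparatively routine bookkeeping once the expander ingredient is in place.
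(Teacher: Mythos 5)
First, a point of comparison: the paper does not prove this statement at all --- it is imported verbatim as a black-box citation of Christiani, Pagh and Thorup \cite{christiani2015independence}, so there is no in-paper proof to measure your proposal against. Judged as a reconstruction of the actual argument in that reference, your sketch identifies the right architecture: Siegel's tabulation scheme $h(x) = \bigoplus_{y \in N(x)} T[y]$, the reduction of $k$-independence to the unique-neighbour (peelability) property of the wiring graph via the surjectivity of the induced $\mathbb{F}_2$-linear map, a universe reduction to $\poly(k)$, and --- the actual contribution of \cite{christiani2015independence} --- the bootstrapping in which a hash function of moderate independence at one scale generates the wiring of the expander at the next scale, so that the preprocessing cost drops from $k^{1+\delta}$ to $\tO(k)$.

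That said, as a proof the proposal has a genuine gap, and it is the one you yourself flag: step (ii), proving that the hash-generated wiring is $k$-peelable with the right quantitative parameters and that the recursion's space/time accounting closes, is the entire content of the theorem and is left undone. Everything you do carry out (the $\mathbb{F}_2$ argument, filling $T$) is the part that was already known from Siegel's work. A secondary inaccuracy: your universe-reduction step claims to ``correct'' collisions of the outer hash $g$ with an auxiliary table so as to obtain a function that is \emph{genuinely} $k$-independent on every $k$-set. That cannot work as stated, since the collisions of $g$ on an arbitrary $k$-subset of a huge universe cannot be enumerated or patched at preprocessing time. The actual constructions (Siegel's and CPT's alike) instead tolerate a small failure probability: the data structure is $k$-independent only with probability $1-1/\poly(k)$ over its own randomness. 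This weaker guarantee is what the cited theorem really provides, and it is harmless for the paper's application since the sparsification algorithm is randomized anyway, but your proposal should not promise more than that.
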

\noindent
With $k = 2q$ and $[r] = \{0,1\}$, we can combine this with Fact~\ref{fact:q-wise} to give the corollary below.
\begin{corollary} \label{cor:no-random-string}
Consider any quantum algorithm with runtime $q$ that makes queries to a uniformly random string.
We can simulate this algorithm with a quantum algorithm with runtime $\tO(q)$ without random string, using an additional QRAM of $\tO(q)$ bits.
\end{corollary}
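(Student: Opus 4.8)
The plan is to instantiate the Christiani--Pagh--Thorup construction with independence parameter $k = 2q$ and then invoke Fact~\ref{fact:q-wise} to conclude that the simulation is exactly faithful. First I would, as a preprocessing step, build the CPT data structure for a $(2q)$-independent hash family $h:[u]\to\{0,1\}$, where $u$ is the length of the random string queried by the original algorithm; the $\tO(q)$ random bits needed to select $h$ from the family can be generated internally (e.g.\ by measuring Hadamard-basis qubits), so no external random string is used. By the cited theorem this costs $\tO(k)=\tO(q)$ time and yields a purely classical data structure of $\tO(q)$ bits, which I would write into the additional QRAM (a classical write, consistent with our model). If the original algorithm uses several mutually independent random strings, as in Algorithm~\ref{alg:qspars}, I would concatenate them into one string over a larger domain (or seed several independent structures); this changes the bounds only by constant factors. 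A biased string, such as the $1/4$-biased strings $r_i$ in Algorithm~\ref{alg:qspars}, is obtained by taking the range to be $[4]$ instead of $\{0,1\}$ and thresholding, which does not affect the argument.

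Next I would replace every query the original algorithm makes to the random string by a coherent evaluation of $h$. This is the step I expect to be the main obstacle: the original algorithm may query the random string \emph{in superposition}, so I need a clean unitary $\ket{i}\ket{b}\mapsto\ket{i}\ket{b\oplus h(i)}$, not merely a classical lookup procedure. The CPT evaluation routine is a \emph{deterministic} classical RAM algorithm that reads $\tO(1)$ words of the data structure and performs $\tO(1)$ word operations, and crucially it needs no randomness at query time. Hence I can quantize it by the standard compute--copy--uncompute technique: run a reversible version of the evaluation on an address register held in superposition, realizing each of its $\tO(1)$ memory accesses as a quantum read to the QRAM-stored data structure; copy the output bit $h(i)$ into the target register; then uncompute all intermediate garbage. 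This produces the desired unitary with only $\tO(1)$ overhead in gates and QRAM operations per query, and uses only $O(\log u)$ extra qubits.

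Finally I would assemble the pieces. Running the original algorithm with its random-string oracle replaced by this hash oracle makes at most $q$ queries, and by the definition of $k$-independence the string it effectively accesses is $(2q)$-wise independent; therefore by Fact~\ref{fact:q-wise} the output distribution is identical to that of the original algorithm with a genuine uniformly random string. The total runtime is the $\tO(q)$ preprocessing plus $q$ steps, each of which is either an original gate/query or a $\tO(1)$-cost coherent hash evaluation, hence $\tO(q)$ overall; the only additional memory is the $\tO(q)$-bit QRAM holding the data structure. This establishes the corollary. As indicated, the genuinely delicate point is the reversible quantization of the evaluation routine on few qubits; the rest is bookkeeping about resource bounds.
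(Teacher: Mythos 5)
Your proposal is correct and follows essentially the same route as the paper, which derives the corollary immediately by combining Fact~\ref{fact:q-wise} (with $k=2q$) with the Christiani--Pagh--Thorup construction stored in QRAM. The extra care you take about coherent (reversible) evaluation of the hash function and about biased/multiple strings fills in details the paper leaves implicit, but does not change the argument.
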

 
This shows that we can efficiently simulate the random string in Algorithm~\ref{alg:qspars}, leading to at most a polylogarithmic overhead in the runtime.
The classical space complexity of the algorithm does increase from $\tO(n/\epsilon^2)$ to $\tO(\sqrt{mn}/\epsilon^2)$.
The following theorem is immediate by combining Theorem \ref{thm:random-string} with Corollary \ref{cor:no-random-string}.
\begin{theorem} \label{thm:quantum-sparsification-1}
There exists a quantum algorithm that, given adjacency-list access to a weighted and undirected graph $G$, constructs with high probability an $\epsilon$-spectral sparsifier of $G$ with $\tO(n/\epsilon^2)$ edges in time $\tO(\sqrt{mn}/\epsilon^2)$.
The algorithm uses $O(\log n)$ qubits and a QRAM of $\tO(\sqrt{mn}/\epsilon^2)$ bits.
\end{theorem}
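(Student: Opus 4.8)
The plan is to obtain Theorem~\ref{thm:quantum-sparsification-1} as a direct composition of Theorem~\ref{thm:random-string} with Corollary~\ref{cor:no-random-string}. Fix $\epsilon$ and start from the quantum algorithm underlying Theorem~\ref{thm:random-string}: given adjacency-list access to $G$ together with query access to the $\ell=\lceil\log(m/n)\rceil$ strings $r_1,\dots,r_\ell\in\{0,1\}^m$, it runs $\qspars(G,\epsilon)$ in time $q=\tO(\sqrt{mn}/\epsilon^2)$ and, with probability $1-O(\log(n)/n^2)$, outputs the explicit edge list (with weights) of an $\epsilon$-spectral sparsifier of $G$ having $\tO(n/\epsilon^2)$ edges. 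I would view the $\ell$ strings as consecutive slices of a single uniformly random string of length $m\ell\in\tO(m)$, indexed by a pair (iteration index, edge index). Since each query to a string $r_i$ counts as one elementary step of the algorithm, the number of random-string queries made is at most the runtime $q$.

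Next I would invoke the simulation of Corollary~\ref{cor:no-random-string} with this value of $q$. It replaces access to the random string by a $k$-independent hash function with $k=2q\in\tO(\sqrt{mn}/\epsilon^2)$, built via the Christiani--Pagh--Thorup data structure in $\tO(k)$ preprocessing time, stored in $\tO(k)$ bits of QRAM, and queried in $\tO(1)$ time per call. By Fact~\ref{fact:q-wise}, a $q$-query quantum algorithm has identical output distribution whether it queries a uniformly random string or a $2q$-wise independent one, so the simulated algorithm has exactly the output distribution of $\qspars$ run with true randomness; in particular it still succeeds with probability $1-O(\log(n)/n^2)\subseteq 1-O(1/n)$, i.e.\ with high probability. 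The per-query overhead is only polylogarithmic, so the total runtime stays $\tO(\sqrt{mn}/\epsilon^2)$; the qubit count remains $O(\log n)$ (inherited from the quantum spanner subroutine and the Grover steps), and the QRAM grows to $\tO(n/\epsilon^2)+\tO(k)=\tO(\sqrt{mn}/\epsilon^2)$ bits.

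I do not expect a genuinely hard step here: the substantive content lives in the proofs already given --- the implicit weight oracle and spanner-packing correctness in Theorem~\ref{thm:random-string} (itself resting on Theorem~\ref{thm:half-sparsify}), and the polynomial-method argument behind Fact~\ref{fact:q-wise}. The only points that require a moment of care are bookkeeping ones: checking that the query count of $\qspars$ is bounded by its runtime, so that taking $k=2q$ makes Fact~\ref{fact:q-wise} applicable simultaneously across all $\ell$ iterations, and verifying that the $\tO(k)$ preprocessing and storage of the hash-function data structure do not dominate the $\tO(\sqrt{mn}/\epsilon^2)$ time and QRAM budgets. Both are immediate once $q=\tO(\sqrt{mn}/\epsilon^2)$ is fixed.
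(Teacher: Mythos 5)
Your proposal is correct and is exactly the paper's route: Theorem~\ref{thm:quantum-sparsification-1} is obtained by combining Theorem~\ref{thm:random-string} with Corollary~\ref{cor:no-random-string}, with the only non-trivial observation being the increase of the QRAM from $\tO(n/\epsilon^2)$ to $\tO(\sqrt{mn}/\epsilon^2)$ bits to store the hash-function data structure. The extra bookkeeping you spell out (concatenating the $\ell$ strings, bounding the query count by the runtime) is a fine and harmless elaboration of what the paper leaves implicit.
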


\section{Refined Quantum Sparsification Algorithm} \label{sec:refined-quantum-sparsification}
 
In the last section we proposed a quantum algorithm for constructing an $\epsilon$-spectral sparsifier in time $\tO(\sqrt{mn}/\epsilon^2)$.
Here we show how to improve the runtime of this algorithm to $\tO(\sqrt{mn}/\epsilon)$, which we will later show is optimal up to polylog-factors.
The improvement essentially follows from combining our previous algorithm with the seminal results on spectral sparsification by Spielman and Srivastava~\cite{spielman2011graph}.
In that work, they first showed that sampling edges with probabilities approximately proportional to their effective resistances results in a spectral sparsifier (the Koutis-Xu algorithm is derived from their result).
Then they showed how Laplacian solvers could be used to efficiently estimate these effective resistances.
We will use our quantum sparsification algorithm to first construct a ``rough'' $\epsilon$-sparsifier, for some constant $\epsilon$, which we only use to approximate the effective resistances in the original graph.
Surprisingly such approximation suffices to implement the Spielman-Srivastava sampling scheme on the original graph.
We then use a quantum sampling routine to efficiently implement this sampling scheme, finally leading to an $\epsilon$-spectral sparsifier for arbitrary $\epsilon>0$ in time $\tO(\sqrt{mn}/\epsilon)$.
This idea of using a ``poor'' spectral sparsifier for computing sampling probabilities to obtain a better spectral sparsifier is also present in for instance~\cite{li2013iterative,cohen2015uniform}.
 
\subsection{Spielman-Srivastava Toolbox and Quantum Sampling}
Here we formally introduce the main tools that we use.
These are an efficiently constructible ``resistance oracle'' and a sparsification algorithm based on this oracle from~\cite{spielman2011graph}, and a quantum sampling routine for implementing this sparsification algorithm.
 
\subsubsection{Approximate Resistance Oracle} \label{sec:resistance-oracle}
The \emph{effective resistance} in a graph $G$ between a pair of nodes $s$ and $t$ is defined as the effective resistance between $s$ and $t$ after replacing every edge $e$ by a resistor of value $1/w_e$.
It can be expressed algebraically as $R_{s,t} = (\chi_s - \chi_t)^T L_G^+ (\chi_s - \chi_t)$, where $L_G^+$ is the pseudoinverse of $L_G$ (i.e., the inverse on its image).
A Laplacian solver hence allows to efficiently compute $R_{s,t}$.
Spielman and Srivastava proved that in some sense one can efficiently compute \emph{all} effective resistances in roughly the same time.
More specifically, they showed that it is possible to construct in near-linear time a data structure of size $\tO(n/\epsilon^2)$ that allows to efficiently approximate $R_{s,t}$ for any $s,t$.
\begin{theorem}[\cite{spielman2011graph}] \label{thm:resistance-oracle}
Consider a weighted and undirected graph $G$.
There is an $\tO(m/\epsilon^2)$-time algorithm which computes a $(24 \log(n)/\epsilon^2) \times n$ matrix $Z$ such that with probability at least $1-1/n$, for every pair $s,t \in V$, it holds that
\[
(1-\epsilon) R_{s,t}
\leq \| Z (\chi_s - \chi_t) \|^2
\leq (1+\epsilon) R_{s,t}.
\]
\end{theorem}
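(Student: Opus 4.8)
The plan is to follow the original argument of Spielman and Srivastava: combine an algebraic identity for effective resistance with a Johnson--Lindenstrauss projection and a near-linear-time Laplacian solver. Write $L = B^T W B$, where $B$ is the $m\times n$ signed edge--vertex incidence matrix whose row for $e=(u,v)$ is $\chi_e^T = (\chi_u - \chi_v)^T$, and $W$ is the $m\times m$ diagonal matrix of weights $w_e$. Since $L^+ L L^+ = L^+$ and $\chi_s - \chi_t$ is orthogonal to $\ker L = \mathrm{span}(\mathbf 1)$, we get the identity
\[
R_{s,t} = (\chi_s-\chi_t)^T L^+ (\chi_s-\chi_t) = (\chi_s-\chi_t)^T L^+ B^T W B L^+ (\chi_s-\chi_t) = \bigl\| W^{1/2} B L^+ (\chi_s-\chi_t) \bigr\|^2 .
\]
Thus, setting $M := W^{1/2} B L^+ \in \real^{m\times n}$, every effective resistance equals the squared Euclidean norm of $M(\chi_s-\chi_t)$.

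Next I would apply the Johnson--Lindenstrauss lemma to the fewer than $n^2$ vectors $\{M(\chi_s-\chi_t)\}_{s,t\in V}$. Take $Q$ to be a random $k\times m$ matrix with independent $\pm 1/\sqrt k$ entries and $k = 24\log(n)/\epsilon^2$; the standard JL argument, with this choice of constant tailored to a total failure probability at most $1/n$ over all $n^2$ pairs, guarantees that with probability $\ge 1-1/n$,
\[
(1-\epsilon)\,\|M(\chi_s-\chi_t)\|^2 \le \|QM(\chi_s-\chi_t)\|^2 \le (1+\epsilon)\,\|M(\chi_s-\chi_t)\|^2
\]
simultaneously for all $s,t$. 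Defining $Z := QM = Q W^{1/2} B L^+$ then yields exactly the claimed two-sided bound, and $Z$ has the stated dimensions.

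It remains to show $Z$ is computable in $\tO(m/\epsilon^2)$ time. Let $Y := Q W^{1/2} B$; its $i$-th row is $\sum_{e=(u,v)} Q_{ie}\sqrt{w_e}\,(\chi_u-\chi_v)^T$, a length-$n$ vector computable in $O(m)$ time, so forming all $k$ rows costs $O(km) = \tO(m/\epsilon^2)$. Each row of $Z$ is $Z_i^T = Y_i L^+$, which is the solution $z$ of the Laplacian system $L z = Y_i^T$; this system is consistent because every row of $B$ sums to zero, hence $Y_i \perp \mathbf 1$ (if $G$ is disconnected we simply work component by component). Running a near-linear-time Laplacian solver~\cite{spielman2014nearly} on each of the $k$ systems costs $k\cdot\tO(m) = \tO(m/\epsilon^2)$ in total, matching the claimed running time.

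The step I expect to be the main obstacle is the inexactness of the Laplacian solver: it returns only an approximate $\tilde z_i$ with $\|L^{1/2}(\tilde z_i - z_i)\|$ small, and one must verify that the computed matrix $\tilde Z$ (with rows $\tilde z_i^T$) still satisfies the two-sided guarantee after rescaling $\epsilon$ by a constant. The way to handle this is to run the solver to an accuracy $\delta$ that is polynomially small in $n$, $\epsilon$ and $w_{\max}/w_{\min}$ (which only costs logarithmic factors in the runtime), and then combine the crude a priori lower bound $R_{s,t} \ge 1/(n\,w_{\max})$ --- valid since the effective conductance between $s$ and $t$ is at most the weighted degree of $s$ --- with $\|\chi_s-\chi_t\| = \sqrt 2$ to convert the per-row $L$-norm error into an additive error on $\|\tilde Z(\chi_s-\chi_t)\|^2$ that is a negligible fraction of $R_{s,t}$, uniformly over all $n^2$ pairs. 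Folding this additive slack into the multiplicative factor completes the proof.
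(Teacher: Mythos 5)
The paper does not prove this theorem itself---it is imported verbatim from Spielman and Srivastava \cite{spielman2011graph}---and your reconstruction follows their original argument exactly: the identity $R_{s,t}=\|W^{1/2}BL^+(\chi_s-\chi_t)\|^2$, a Johnson--Lindenstrauss projection to $24\log(n)/\epsilon^2$ dimensions, and $\tO(1/\epsilon^2)$ calls to a near-linear-time Laplacian solver, with the solver's inexactness absorbed via a crude lower bound on $R_{s,t}$. Your proposal is correct and coincides with the standard proof.
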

\noindent
Hence the matrix $Z$ represents a data structure which allows to $\epsilon$-approximate $R_{s,t}$ for any pair $s,t$ by calculating the 2-norm distance between two columns, each of dimension $\tO(1/\epsilon^2)$.
 
\subsubsection{Spectral Sparsification with Edge Scores}
In the same paper, Spielman and Srivastava proved the following theorem, which shows that a spectral sparsifier can be constructed by independently keeping edges with weights roughly proportional to their effective resistances.
 
\begin{theorem}[\cite{spielman2011graph}] \label{thm:res-sparsification}
Let $2R_e \geq \tR_e \geq R_e/2$ for each edge $e \in E$, and $p_e = \min(1,C w_e \tR_e \log(n)/\epsilon^2)$ for some universal constant $C$.
Then keeping every edge $e$ independently with probability $p_e$, and rescaling its weight with $1/p_e$, yields with probability at least $1-1/n$ an $\epsilon$-spectral sparsifier of $G$ with $O(n\log(n)/\epsilon^2)$ edges.
\end{theorem}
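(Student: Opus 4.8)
The plan is to reduce the spectral guarantee to a single matrix concentration inequality, following Spielman--Srivastava. First I would pass to ``isotropic'' coordinates. Writing $L_G = \sum_{e\in E} w_e \chi_e\chi_e^T$ and letting $L_G^{+/2}$ denote the square root of the Moore--Penrose pseudoinverse, set $v_e = \sqrt{w_e}\, L_G^{+/2}\chi_e$. These satisfy $\sum_{e\in E} v_e v_e^T = \Pi$, the orthogonal projector onto the image of $L_G$ (the complement of the all-ones vector), and $\|v_e\|^2 = w_e\,\chi_e^T L_G^+\chi_e = w_e R_e$ is precisely the leverage score of $e$. Let $X_e\in\{0,1\}$ be the indicator that edge $e$ is kept, so the sampled Laplacian is $L_H = \sum_e \tfrac{X_e}{p_e}w_e\chi_e\chi_e^T$. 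Conjugating $(1-\epsilon)L_G \preceq L_H \preceq (1+\epsilon)L_G$ by $L_G^{+/2}$ shows it is equivalent to $\big\|\sum_e \tfrac{X_e}{p_e} v_e v_e^T - \Pi\big\|_{\mathrm{op}}\le \epsilon$ on the image of $L_G$, where I use that $\sum_e \Exp\!\big[\tfrac{X_e}{p_e} v_e v_e^T\big] = \sum_e v_e v_e^T = \Pi$. One should note that on the event that the spectral bound holds, $L_H$ and $L_G$ have the same kernel, so the conjugation loses nothing.

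Next I would apply a matrix Chernoff bound (Ahlswede--Winter / Tropp) to the independent positive semidefinite matrices $Y_e := \tfrac{X_e}{p_e} v_e v_e^T$, whose mean-sum is $\Pi$. Edges with $p_e = 1$ are deterministic and drop out of the deviation, so assume $p_e = C w_e \tR_e \log(n)/\epsilon^2$. Using only the lower estimate $\tR_e \ge R_e/2$,
\[
\lambda_{\max}(Y_e) \;\le\; \frac{\|v_e\|^2}{p_e} \;=\; \frac{w_e R_e}{p_e} \;\le\; \frac{2\epsilon^2}{C\log n},
\]
so $0\preceq Y_e \preceq \tfrac{2\epsilon^2}{C\log n}\,\Id$ for every $e$. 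Feeding $R = \tfrac{2\epsilon^2}{C\log n}$, $\mu_{\min}=\mu_{\max}=1$, deviation $\epsilon$, and ambient dimension $n$ into the two-sided matrix Chernoff inequality gives a failure probability at most $2n\cdot e^{-\Omega(C\log n)} = 2n^{1-\Omega(C)} \le 1/(2n)$ once the universal constant $C$ is taken large enough.

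It remains to count edges, where the upper estimate $\tR_e\le 2R_e$ enters. The expected number of kept edges is $\sum_e p_e \le \tfrac{C\log n}{\epsilon^2}\sum_e w_e\tR_e \le \tfrac{2C\log n}{\epsilon^2}\sum_e w_e R_e = \tfrac{2C\log n}{\epsilon^2}\sum_e \|v_e\|^2 = \tfrac{2C\log n}{\epsilon^2}\,\mathrm{tr}(\Pi) = \tfrac{2C\log n}{\epsilon^2}(n-1)$, which is $O(n\log(n)/\epsilon^2)$; a standard scalar Chernoff bound for the sum of the independent Bernoulli variables $X_e$ shows the realized count stays $O(n\log(n)/\epsilon^2)$ except with probability at most $1/(2n)$. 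A union bound over the spectral and the cardinality failure events yields the claimed success probability $1-1/n$.

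I expect the only real friction to be the bookkeeping around the kernel of $L_G$ — consistently restricting to the subspace orthogonal to the all-ones vector via $L_G^{+/2}$ — and pinning down the exact off-the-shelf matrix Chernoff statement with the right normalization; the probabilistic heart of the argument is a single invocation of that bound, with the per-edge operator-norm estimate $w_e R_e/p_e = O(\epsilon^2/\log n)$ as its only nontrivial input. It is worth emphasizing that the two-sided hypothesis $R_e/2 \le \tR_e \le 2R_e$ is used in two complementary ways: the lower bound makes the sampling probabilities $p_e$ large enough for concentration, and the upper bound keeps $\sum_e p_e$, and hence the sparsifier, small.
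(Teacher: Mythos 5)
Your proof is correct, but note that the paper does not prove this theorem at all: it is imported verbatim from Spielman--Srivastava, and the only accompanying text is the remark that $\sum_e w_e R_e = n-1$ gives the edge count and that the original reference actually analyzes a slightly different scheme (drawing $\tO(n/\epsilon^2)$ i.i.d.\ edge samples with replacement, with concentration via Rudelson--Vershynin), the equivalence with independent Bernoulli keeping being delegated to \cite{fung2019general}. What you have written is the now-standard direct proof of the Bernoulli variant via the isotropic reduction $v_e=\sqrt{w_e}\,L_G^{+/2}\chi_e$ and matrix concentration, which is arguably cleaner than routing through the i.i.d.\ scheme plus the equivalence remark; your use of the two-sided approximation $R_e/2\le\tR_e\le 2R_e$ (lower bound for concentration, upper bound for sparsity) is exactly right. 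The one place to be slightly more careful is the treatment of edges with $p_e=1$: the deterministic summands $v_ev_e^T$ can have operator norm as large as $1$, which violates the uniform bound $R=2\epsilon^2/(C\log n)$ required by the off-the-shelf two-sided matrix Chernoff inequality. The clean fix is either to apply matrix Bernstein to the centered sum $\sum_e(Y_e-\Exp Y_e)$ (where the deterministic terms genuinely vanish and the variance is controlled by $R\,\Pi$), or to invoke a Chernoff statement that accommodates a deterministic offset; your phrase ``drop out of the deviation'' is the right intuition but needs one of these formalizations. This is minor bookkeeping, not a gap.
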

\noindent
Note that $\sum_e p_e \gg 1$ is the expected number of edges of the sparsifier.
Since $\sum_e w_e R_e = n-1$ \cite[Theorem 25]{bollobas2013modern}, this yields the claimed number of edges.
Also note that, by slightly tweaking our estimate $\tilde R_e$, we can ensure that the sampling probabilities $p_e$ have an $O(\log n)$-bit description.
We will use this later on.
 
We note that, in fact, Spielman and Srivastava describe a slightly different scheme.
They propose to draw $\tO(n/\epsilon^2)$ independent and identically distributed edge samples from the edge set, with probability proportional to their effective resistance.
It is well-known that both schemes give the same performance bound - see e.g.~\cite[Remark 1]{fung2019general}.
 
\subsubsection{Quantum Sampling}
Assuming access to an approximate resistance oracle that gives approximations $\tR_e$ to $R_e$, we wish to implement the Spielman-Srivastava sparsification scheme.
While classically this requires time $\tO(m + \sum_e p_e)$, we can use quantum algorithms to do so more efficiently.
 
\begin{claim} \label{claim:quantum-sampling}
Assume we have query access to a list of probabilities $\{p_e\}_{e \in E}$, each of which is described with $\tO(1)$ bits of precision.
Then there is a quantum algorithm that samples a subset $S \subseteq E$, such that $S$ contains every $e$ independently with probability $p_e$, in expected time $\tO(\sqrt{m(\sum_e p_e)})$ and using a QRAM of $\tO(\sqrt{m(\sum_e p_e)})$ bits.
\end{claim}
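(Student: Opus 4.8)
The plan is to reduce the sampling task to repeated Grover search, just as in the final step of $\qspars$. The key observation is that the set $S$ we want to output has size concentrated around $\sum_e p_e$, so if we can \emph{recognize} membership in $S$ with a single query then Claim~\ref{claim:rep-grover} immediately gives us the claimed runtime $\tO(\sqrt{m|S|}) = \tO(\sqrt{m(\sum_e p_e)})$ and the claimed QRAM size. The only subtlety is that ``$e \in S$'' is a \emph{randomized} predicate — each edge is independently in $S$ with probability $p_e$ — whereas Grover search needs a fixed Boolean function $f$ on $[m]$.

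First I would fix the randomness up front. Assume we have access to a uniformly random string $r$ that assigns to each edge $e$ a real number $r_e \in [0,1]$ (in practice, a string of $\tO(1)$ random bits per edge, so that $r_e$ is a uniform $\tO(1)$-bit number, which suffices up to the polylog factors we are ignoring). Define the fixed function $f(e) = 1$ iff $r_e \leq p_e$. Then, over the choice of $r$, the set $S = \{e : f(e) = 1\}$ contains each $e$ independently with probability $p_e$, exactly as required; and given the string $r$ and query access to $\{p_e\}$, evaluating $f(e)$ costs $\tO(1)$. Now I would invoke Claim~\ref{claim:rep-grover} with this $f$ to produce all of $S$. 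The running time is $\tO(\sqrt{m|S|})$ and the QRAM usage is $\tO(|S|)$. Since $|S|$ is a sum of independent Bernoulli variables with mean $\mu = \sum_e p_e \geq 1$, a Chernoff bound gives $|S| = O(\mu + \log n) \subseteq \tO(\mu)$ with high probability, and $\Exp[|S|] = \mu$, so the \emph{expected} running time and QRAM size are $\tO(\sqrt{m\mu}) = \tO(\sqrt{m(\sum_e p_e)})$ as claimed. (One can truncate the Grover search after $\tO(\sqrt{m\mu})$ steps and restart the whole procedure with fresh $r$ if it overruns, which preserves the output distribution and keeps the expectation bounded.)

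Finally I would remove the assumption of a genuine uniformly random string $r$: by Fact~\ref{fact:q-wise} a quantum algorithm making $q$ queries cannot distinguish a truly random string from a $2q$-wise independent one, and by Corollary~\ref{cor:no-random-string} such a string can be simulated in $\tO(1)$ time per query using a data structure of size $\tO(q)$, where here $q \in \tO(\sqrt{m(\sum_e p_e)})$. This adds only polylog overhead to the time and an extra $\tO(\sqrt{m(\sum_e p_e)})$ bits of QRAM, which is absorbed into the stated bounds. The one place requiring a little care — the main obstacle, such as it is — is making the reduction to a \emph{fixed} Boolean predicate rigorous while still getting the independence structure exactly right and controlling the fluctuation of $|S|$ so that the bounds hold in expectation rather than just with high probability; the truncate-and-restart trick above handles the latter cleanly.
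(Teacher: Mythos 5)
Your proposal is correct and follows essentially the same route as the paper's own proof: fix the randomness via a random string defining a Boolean predicate $f(e)=1$ iff a per-edge random number is at most $p_e$, run repeated Grover search to collect the marked edges, bound the expected runtime by $\Exp[T]=\sum_e p_e$ (with Jensen/Chernoff for concentration), and finally replace the random string using Fact~\ref{fact:q-wise} and Corollary~\ref{cor:no-random-string}. The only difference is that you are somewhat more explicit about the concentration of $|S|$ and the truncate-and-restart step, which the paper leaves implicit.
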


\begin{proof}
Let $\ell \in \tO(1)$ denote the number of bits to describe each of the probabilities $p_e$.
We can assume access to a uniformly random $\ell m \in \tO(m)$-bit string $r$, since by Corollary~\ref{cor:no-random-string} this implies that there also exists a quantum algorithm without random string.
For all $e$, we can derive from this random string a random $\ell$-bit number $q_e \in [0,1]$ such that $q_e \leq p_e$ with probability exactly $p_e$.
We can implement the mapping $\ket{e} \ket{0} \mapsto \ket{e} \ket{q_e}$ in time $\tO(1)$.


We combine this with a query to the list of probabilities to implement the mapping $\ket{e}\ket{0}\ket{0} \mapsto \ket{e}\ket{p_e}\ket{q_e}$.
Now we use repeated Grover search to find all edges $e$ such that $q_e \leq p_e$ -- for every edge $e$, this set contains $e$ with probability exactly $p_e$.
By Claim \ref{claim:rep-grover} this routine takes time $\tO(\sqrt{m (\sum_e p_e)})$ in expectation, which proves the lemma.
\end{proof}
 
\subsection{Refined Quantum Sparsification}
Now we will combine the Spielman-Srivastava toolbox, the quantum sampling routine and our quantum sparsification algorithm from the last section to improve the runtime of the latter from $\tO(\sqrt{mn}/\epsilon^2)$ to $\tO(\sqrt{mn}/\epsilon)$.
 
\begin{algorithm}[H]
\caption{$H = \qspars(G,\epsilon)$}
\begin{algorithmic}[1]
\State
use quantum sparsification (Theorem~\ref{thm:quantum-sparsification-1}) to construct a $(1/100)$-spectral sparsifier $H$ of $G$
\State
create a $(1/100)$-approximate resistance oracle of $H$ using Theorem \ref{thm:res-sparsification}, yielding estimates $\{\tR_e\}$
\State
use quantum sampling (Claim \ref{claim:quantum-sampling}) to sample a subset of the edges, keeping every edge with probability $p_e = \min(1,C w_e \tR_e \log(n)/\epsilon^2)$
\end{algorithmic}
\end{algorithm}
 
\begin{theorem}[Quantum Spectral Sparsification] \label{thm:main}
$\qspars(G,\epsilon)$ returns with high probability an $\epsilon$-spectral sparsifier $H$ with $\tO(n/\epsilon^2)$ edges, and has runtime $\tO(\sqrt{mn}/\epsilon)$.
The algorithm uses $O(\log n)$ qubits and a QRAM of $\tO(\sqrt{mn}/\epsilon)$ bits.
\end{theorem}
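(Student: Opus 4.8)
The plan is to analyze the three-step algorithm and verify both the correctness (that the output is an $\epsilon$-spectral sparsifier with $\tO(n/\epsilon^2)$ edges) and the runtime $\tO(\sqrt{mn}/\epsilon)$. The key conceptual point is that a constant-error spectral sparsifier yields constant-factor approximations to all effective resistances, which is exactly the quality of estimate required by Theorem~\ref{thm:res-sparsification}. So the strategy is: (1) observe that Step~1 gives, by Theorem~\ref{thm:quantum-sparsification-1} with parameter $1/100$, a $(1/100)$-spectral sparsifier $H$ in time $\tO(\sqrt{mn})$; (2) argue that effective resistances in $H$ approximate those in $G$ up to a factor $(1\pm 1/100)$, so that the approximate resistance oracle built on $H$ via Theorem~\ref{thm:resistance-oracle} (with its own constant error, say $1/100$) gives estimates $\tR_e$ satisfying $2R_e \geq \tR_e \geq R_e/2$ where $R_e$ denotes the effective resistance in $G$; (3) feed these $\tR_e$ into the sampling scheme of Theorem~\ref{thm:res-sparsification} to conclude the output is an $\epsilon$-spectral sparsifier of $G$ with $O(n\log n/\epsilon^2)$ edges, using Claim~\ref{claim:quantum-sampling} to implement the sampling.

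For the effective-resistance transfer in step~(2), I would use the elementary fact that if $(1-\delta)L_G \preceq L_H \preceq (1+\delta)L_G$ then the same sandwiching holds for the pseudoinverses with constants $1/(1\pm\delta)$ (on the orthogonal complement of the kernel, which is the all-ones vector — both graphs are connected since a constant-error sparsifier preserves connectivity), hence $(\chi_s-\chi_t)^T L_H^+(\chi_s-\chi_t)$ is within a $(1\pm O(\delta))$ factor of $(\chi_s-\chi_t)^T L_G^+(\chi_s-\chi_t)$. Composing the $1/100$ error from Step~1 with the $1/100$ error from the oracle of Theorem~\ref{thm:resistance-oracle} gives an overall multiplicative error comfortably inside $[1/2,2]$, which is what Theorem~\ref{thm:res-sparsification} needs. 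One subtlety: the sampling probabilities $p_e = \min(1, Cw_e\tR_e\log n/\epsilon^2)$ involve the \emph{original} weights $w_e$ (queryable from $G$), and $\tR_e$ must be evaluated via $\|Z(\chi_s-\chi_t)\|^2$ for each queried edge $e=(s,t)$ in $\tO(1/\epsilon_0^2)=\tO(1)$ time since $\epsilon_0$ is a constant; I would spell out that the oracle is queried only for the $m$ graph edges, on demand, so no explicit list of all $p_e$ need be stored.

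For the runtime: Step~1 costs $\tO(\sqrt{mn}/\epsilon_0^2)=\tO(\sqrt{mn})$ since $\epsilon_0=1/100$ is constant. Step~2 builds the resistance oracle via Theorem~\ref{thm:resistance-oracle} applied to $H$, which has $\tO(n)$ edges, in classical time $\tO(n/\epsilon_0^2)=\tO(n)$; one should note this requires a (classical near-linear-time) Laplacian solver on $H$, which is fine since $H$ is sparse and the cost is $\tO(n)$. Step~3 uses Claim~\ref{claim:quantum-sampling} with $\sum_e p_e = \tO(n/\epsilon^2)$ (because $\sum_e w_e R_e = n-1$ and $\tR_e = \Theta(R_e)$, so $\sum_e w_e\tR_e = \tO(n)$ and $\sum_e p_e = \tO(n\log n/\epsilon^2)$), giving expected time $\tO(\sqrt{m \cdot n/\epsilon^2}) = \tO(\sqrt{mn}/\epsilon)$ and a QRAM of the same size. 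Adding the three contributions gives total runtime $\tO(\sqrt{mn}/\epsilon)$, and the $O(\log n)$-qubit bound follows since every subroutine (quantum sparsification, Grover-based sampling) uses only $O(\log n)$ qubits. The success probabilities — high probability from Theorem~\ref{thm:quantum-sparsification-1}, $1-1/n$ from Theorem~\ref{thm:resistance-oracle}, $1-1/n$ from Theorem~\ref{thm:res-sparsification} — combine by a union bound to ``with high probability.''

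The main obstacle I anticipate is not any single step but making the error-composition argument fully rigorous: I need to be careful that the ``$2R_e \geq \tR_e \geq R_e/2$'' hypothesis of Theorem~\ref{thm:res-sparsification} is stated with respect to the effective resistances $R_e$ of the \emph{original} graph $G$, whereas the oracle is built on $H$; the resolution is the pseudoinverse sandwiching above, but one must handle the kernel correctly and track that the product of the two constant distortions stays within the allowed window, with room to spare. A secondary point worth stating carefully is that Claim~\ref{claim:quantum-sampling} gives an \emph{expected} runtime; since $T$ (the number of sampled edges) concentrates around $\sum_e p_e = \tO(n/\epsilon^2)$, one gets the stated bound with high probability as well, e.g.\ by a Chernoff bound or by truncating and restarting. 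Everything else is bookkeeping.
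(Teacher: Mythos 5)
Your proposal matches the paper's own proof essentially step for step: the same error-composition argument transferring effective resistances from $G$ to $H$ to the oracle estimates $\tR_e$ (the paper writes it as $(1\pm 1/100)^2 R^G_{s,t}$ sandwiching $\tR^H_{s,t}$, comfortably within the factor-2 window of Theorem~\ref{thm:res-sparsification}), the same runtime accounting via $\sum_e p_e \in \tO(n/\epsilon^2)$ and $\sum_e w_e R_e = n-1$, and the same Chernoff-plus-abort device to convert the expected sampling time into a worst-case bound. Your elaborations on the pseudoinverse sandwiching and on querying the oracle only on demand are correct refinements of points the paper treats more briefly; there is nothing to add.
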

\begin{proof}
First we prove correctness.
Since $H$ is a spectral sparsifier of $G$, and effective resistances correspond to quadratic forms in the inverse of the Laplacian, we know that the effective resistances of $H$ approximate those of $G$: $(1-1/100) R^G_{s,t} \leq R^H_{s,t} \leq (1+1/100) R^G_{s,t}$ for all $s,t \in V$.
By Theorem~\ref{thm:resistance-oracle} we know that the approximate resistance oracle yields estimates $\{\tR^H_{s,t}\}$ such that $(1-1/100) R^H_{s,t} \leq \tR^H_{s,t} \leq (1+1/100) R^H_{s,t}$.
Combining these inequalities shows that
\[
(1-1/100)^2 R^G_{s,t}
\leq \tR^H_{s,t}
\leq (1+1/100)^2 R^G_{s,t}.
\]
If we now keep each edge with probability $p_e = \min(1,C w_e \tR^H_e \log(n)/\epsilon^2)$, then by Theorem~\ref{thm:res-sparsification} with probability $1-1/n$ we find an $\epsilon$-spectral sparsifier with $O(n \log(n)/\epsilon^2)$ edges.
Combining this success probability with those of the quantum sparsification algorithm and the construction of the resistance oracle, we find a total success probability of at least $(1-1/n)^3 = 1 - O(1/n)$.
 
The bound on the runtime follows from summing the $\tO(\sqrt{mn})$ runtime of the quantum sparsification algorithm, the $\tO(n)$ runtime for creating the resistance oracle of the sparsifier with $\tO(n)$ edges, and the $\tO(\sqrt{m(\sum_e p_e)})$ expected runtime of the quantum sampling routine (as remarked after Theorem \ref{thm:res-sparsification}, we can assume that the probabilities $p_e$ have an $O(\log n)$-bit description).
Since
\[
\sum_e p_e
\leq \frac{C \log(n)}{\epsilon^2} \sum_e w_e \tR^H_e
\leq \frac{(1+1/100)^2 C \log(n)}{\epsilon^2} \sum_e w_e R^G_e,
\]
and $\sum_e w_e R^G_e = n-1$~\cite[Theorem 25]{bollobas2013modern}, we have that $\sum_e p_e \in \tO(n/\epsilon^2)$ and so the expected runtime of the sampling routine is $\tO(\sqrt{m n}/\epsilon)$.
Moreover, by the Chernoff bound the runtime of the latter routine will indeed be $\tO(\sqrt{mn}/\epsilon)$ with probability at least $1-1/n$.
Hence we can abort the algorithm whenever the runtime exceeds this bound, and the algorithm will still succeed with high probability, while the total runtime becomes $\tO(\sqrt{mn}/\epsilon)$ in the worst case.
\end{proof}

\section{Quantum Algorithm for Building Spanners} \label{sec:quantum-spanner}
 
The Koutis-Xu sparsification algorithm identifies ``important'' edges by growing spanners inside the graph.
In this section we propose a quantum algorithm for growing spanners, speeding up the best classical algorithms.
 
Recall from Section~\ref{sec:prelim} that a $t$-spanner of a graph $G = (V,E,w)$ is a subgraph $H = (V,E_H \subseteq E,w)$ that preserves all pairwise distances between nodes up to a stretch factor $t$.
For every pair $u,v \in V$, it should hold that
\[
\delta_G(u,v)
\leq \delta_H(u,v) \leq t \delta_G(u,v),
\]
where we recall that $\delta_G(u,v) = \min_{u-v \;\mathrm{path}\; P} \sum_{e \in P} 1/w_e$.
We illustrate this in Figure \ref{fig:spanner}.
A spanner preserves the original weights on its edges.
This is in contrast to spectral sparsifiers which are necessarily reweighted.
A classic result by Alth\"ofer et al.~\cite{althofer1993sparse} shows that, for any parameter $k>0$, any $n$-node graph has a $(2k-1)$-spanner with $O(n^{1+1/k})$ edges.
We refer the interested reader to the classic book by Peleg~\cite{peleg2000distributed} or the very recent survey by Ahmed et al.~\cite{ahmed2019graph}.
 
\begin{figure}
\centering
\includegraphics[width=.75\textwidth]{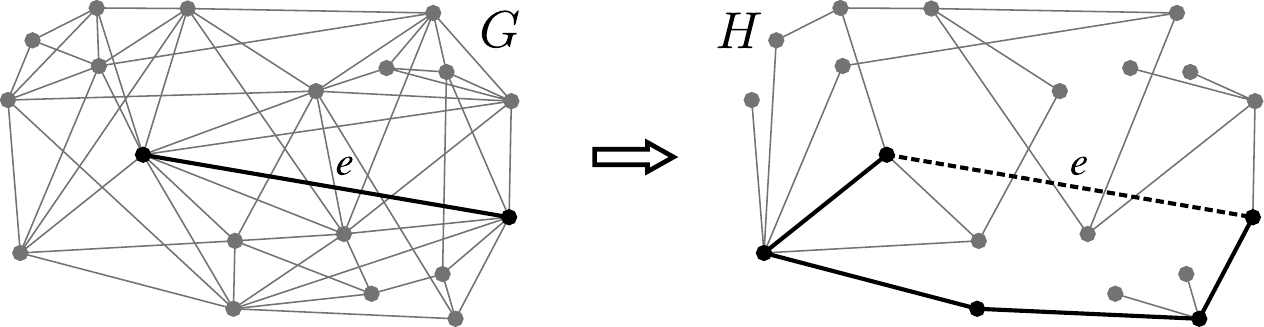}
\caption{A \emph{$t$-spanner} $H$ of a graph $G$ is a sparse subgraph that preserves all shortest path distances between pairs of vertices up to a \emph{stretch} factor $t$. Equivalently, for any edge $e = (x,y)$ in $G$ there exists a path between $x$ and $y$ of distance at most $t/w_e$ in $H$.}
\label{fig:spanner}
\end{figure}
 
There exists a range of classical algorithms for constructing spanners.
We will make use of one by Thorup and Zwick~\cite{thorup2005approximate}, which follows from their work on ``approximate distance oracles''.
The main bottleneck of their algorithm is the growth of shortest-path trees in subgraphs.
We speed up this bottleneck by using the quantum algorithm of D\"urr, Heiligman, H{\o}yer and Mhalla~\cite{durr2006quantum} for growing a shortest-path tree in time $\tO(\sqrt{mn})$.
 
\subsection{Thorup-Zwick Algorithm} \label{sec:thorup-zwick}
The spanner algorithm from~\cite{thorup2005approximate} makes use of \emph{shortest-path trees} (SPTs).
A shortest-path tree $T(v)$ from a node $v$ spanning a subset $C$ is defined as a tree, rooted at $v$ and spanning $C$, so that the distance in this tree from $v$ to any node in $C$ is the same as their distance in the original graph~$G$.
 
The Thorup-Zwick algorithm, presented in Algorithm~\ref{alg:spanner}, randomly partitions the node set into $k$ layers $\{A_i\}$, which are increasingly sparsified.
The nodes in these layers function as ``hubs'' for the nearby nodes.
Shortest-path trees are then grown that allow efficient routing along these hubs.
The resulting spanner consists of the union of these shortest-path trees.
In the algorithm below, we set $\delta(w,\emptyset) = \infty$ for any $w \in V$.
 
\begin{algorithm}[H]
\caption{$H = \spanner(G,k)$} \label{alg:spanner}
\begin{algorithmic}[1]
\State
let $A_0 = V$ and $A_k = \emptyset$
\For{$i = 1,2,\dots,k$}
\State
if $i<k$, let $A_i$ contain each element of $A_{i-1}$, independently, with probability $n^{-1/k}$
\For{$v \in A_{i-1} - A_i$}
\State
grow shortest-path tree $T(v)$ from $v$ spanning $C(v) = \{ w \in V \mid \delta(w,v) < \delta(w,A_i) \}$
\State
add $T(v)$ to $H$
\EndFor
\EndFor
\end{algorithmic}
\end{algorithm}

Figure \ref{fig:spanner-constr} illustrates the use of shortest-path trees inside the algorithm. 
Apart from the correctness of the algorithm, we will require some additional bounds on the size of the intermediate clusters $C(v)$.
We extract the following theorem from the analysis by Thorup and Zwick.
 
\begin{theorem}[\cite{thorup2005approximate}] \label{thm:thorup-zwick}
\leavevmode
\begin{itemize}
\item
The output graph $H$ of $\spanner(G,k)$ is a $(2k-1)$-spanner of $G$.
\item
The expected number of edges in $H$ is $O(\Exp(\sum_v |C(v)|)) \in O(k n^{1+1/k})$.
\item
The expected number of edges with at least one node in the clusters is
\[
\Exp(\sum_v |E(C(v))|) \in O(kmn^{1/k}).
\]
\end{itemize}
\end{theorem}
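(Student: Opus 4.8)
The plan is to recover the three bullets as consequences of the structural analysis of the Thorup–Zwick spanner. I would organize the argument around the random ``bunches'' and ``clusters'' that are implicit in Algorithm~\ref{alg:spanner}. For the stretch bound, I would argue that for any edge $(u,v)$, if $(u,v)\notin H$ then one can route from $u$ to $v$ through the hub structure with only a constant multiplicative blow-up per layer: starting from $u\in A_{i-1}-A_i$ for the appropriate $i$, the fact that $v\notin C(u)$ means $\delta(v,A_i)\le\delta(v,u)$, so there is a pivot $p\in A_i$ with $\delta(v,p)\le\delta(u,v)$; the edge $(v,p)$ (or rather the relevant tree edge) is in $H$, and one recurses at the next layer with the distance having grown by at most an additive $\delta(u,v)$. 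Summing over the $k$ layers gives total stretch $2k-1$; extending this to arbitrary pairs $u,v$ rather than edges is the standard induction over the edges of a shortest $u$--$v$ path in $G$.

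For the second bullet I would first observe that $H=\bigcup_v T(v)$ and each tree $T(v)$ spanning $C(v)$ has exactly $|C(v)|-1$ edges, so $|E_H|\le\sum_v|C(v)|$, giving the first inequality deterministically. The bound $\Exp(\sum_v|C(v)|)\in O(kn^{1+1/k})$ is then a calculation over the random partition: fix a node $w$ and sort all nodes by distance to $w$; the probability that the $j$-th closest node $u$ lies in some $C(v)$ with $v$ among the first $j$ requires $u\in A_{i-1}$ and none of the $j-1$ closer nodes (other than along the relevant chain) landing in $A_i$, which by independence of the sampling with rate $n^{-1/k}$ contributes a geometric-type tail. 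Summing the resulting $\sum_j (1-n^{-1/k})^{j-1}$ over the $k$ layers yields $O(k n^{1/k})$ clusters containing $w$ in expectation, hence $O(kn^{1+1/k})$ in total after summing over $w$.

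The third bullet is the analogue for \emph{edge} counts rather than vertex counts: instead of counting, for each $w$, the clusters $C(v)$ containing $w$, I would count the clusters $C(v)$ such that some edge incident to $w$ has both (or at least one) endpoint in $C(v)$. The key point is that $|E(C(v))|\le\sum_{w\in C(v)}\deg(w)$, so $\sum_v|E(C(v))|\le\sum_w\deg(w)\cdot|\{v: w\in C(v)\}|$. I would like to simply take expectations and reuse the second-bullet bound $\Exp|\{v:w\in C(v)\}|\in O(kn^{1/k})$ to get $\sum_w\deg(w)\cdot O(kn^{1/k})=O(kmn^{1/k})$ — and this works provided the event $\{w\in C(v)\}$ and $\deg(w)$ are treated correctly (the degree is deterministic, the cluster membership is random, so the factorization of the expectation is legitimate term by term). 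The main obstacle, and the step I would spend the most care on, is the distance-ordering / independence argument bounding $\Pr[w\in C(v)]$ summed over $v$: one has to be careful that conditioning on ``$w$ is in the cluster of its $j$-th closest sampled ancestor'' does not spoil the independence of the $A_i$-membership of the other nearby nodes, which is exactly the subtle point in the Thorup–Zwick analysis. Once that geometric bound is in hand, both the second and third bullets follow by the linearity-of-expectation bookkeeping sketched above.
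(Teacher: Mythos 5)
The paper does not prove this theorem itself — it is imported verbatim from Thorup–Zwick \cite{thorup2005approximate} ("we extract the following theorem from the analysis by Thorup and Zwick") — and your reconstruction is exactly the standard argument from that source: the geometric-tail bound $\Exp|\{v: w\in C(v)\}| \le k n^{1/k}$ via sorting $A_{i-1}$ by distance from $w$ and using the independence of the $A_i$-sampling coins, followed by the two linearity-of-expectation computations $\sum_v|C(v)|=\sum_w|\{v:w\in C(v)\}|$ and $\sum_v|E(C(v))|\le\sum_w\deg(w)\,|\{v:w\in C(v)\}|$ (the latter being legitimate since $\deg(w)$ is deterministic). Your stretch sketch is also the standard one, with the only point worth stating more carefully being that the recursion alternates which endpoint of the edge carries the current pivot $w_i\in A_i$, so that $\delta(\cdot,w_i)\le i\,\delta(u,v)$ grows by one additive $\delta(u,v)$ per level and the path through the final tree $T(w_i)$ has length at most $(2i+1)\delta(u,v)\le(2k-1)\delta(u,v)$.
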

\noindent
Setting $k = 1/2 + \log n$, as we will do later on, this yields a $2\log n$-spanner with an expected number of edges $O(n \log n)$.
 
\begin{figure}
\centering
\includegraphics[width=.45\textwidth]{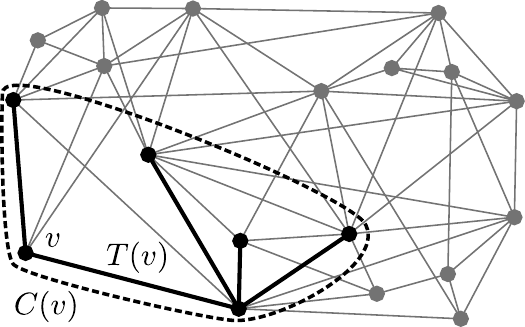}
\caption{The spanner algorithm by Thorup and Zwick \cite{thorup2005approximate}. In each iteration, the algorithm grows a shortest-path tree $T(v)$ from a vertex $v$ until it spans some subset $C(v)$.}
\label{fig:spanner-constr}
\end{figure}

\subsection{Quantum Spanner Algorithm} \label{sec:quantum-alg-spanner}
 
We can use a quantum algorithm from D\"urr, Heiligman, H{\o}yer and Mhalla~\cite{durr2006quantum} to speed up the construction of the shortest-path tree $T(v)$, spanning $C(v)$.
We slightly generalize their algorithm to deal with ``forbidden edges'', which are encoded by associating a weight $w_e = 0$ to them (which corresponds to an infinite resistance or cost).
Such edges will correspond to edges going outside of $C(v)$, as well as edges that have already been discarded by our sparsification algorithm.
 
In Appendix~\ref{app:SPT} we prove the following statement.
We define the connected component of a node $v_0$ as the smallest subset $C_{v_0} \subseteq V$ such that $v_0 \in C_{v_0}$ and either $E(C_{v_0},V \backslash C_{v_0}) = \emptyset$ or $\max\{w_e \mid e \in E(C_{v_0},V \backslash C_{v_0})\} = 0$.
This implies that there is no path of finite distance between $v_0$ and any node outside $C_{v_0}$.
\begin{proposition}
Assume adjacency-list access to a weighted and undirected graph $G = (V,E,w)$.
Let $v_0$ be a source node and $C_{v_0}$ its connected component.
Then there exists a quantum algorithm that outputs, with probability at least $1-\delta$, a shortest-path tree from $v_0$ that spans $C_{v_0}$.
It has a runtime $\tO(\sqrt{|C_{v_0}| |E(C_{v_0})|} \log(n/\delta))$ and requires $O(\log n)$ qubits and a QRAM of $\tO(|C_{v_0}|)$ bits.
\end{proposition}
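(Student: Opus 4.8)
The plan is to adapt the classical Dijkstra-based single-source shortest-path algorithm into a quantum "walk outward from $v_0$" procedure, replacing each linear scan over incident edges by a Grover-type minimum-finding search, and to restrict attention automatically to the connected component $C_{v_0}$ by treating weight-$0$ edges as absent. Concretely, I would maintain a growing set $S$ of "settled" nodes (initially $S=\{v_0\}$) together with their tentative distances $d(\cdot)$ and parent pointers, all stored in QRAM; at each step the algorithm must find the minimum-weight "frontier edge" $e=(u,v)$ with $u\in S$, $v\notin S$, $w_e>0$, that minimizes $d(u)+1/w_e$. The D\"urr--Heiligman--H\o yer--Mhalla framework gives exactly such a quantum routine for finding the minimum over all outgoing edges of a set of vertices: one query to the adjacency list reveals a neighbor and the associated weight, and Grover minimum-finding over the (at most $|E(C_{v_0})|$) candidate frontier edges, combined with the usual doubling/amplification trick over the unknown number of candidates, locates the minimizer in $\tO(\sqrt{|C_{v_0}|\,|E(C_{v_0})|})$ queries across all $|C_{v_0}|$ iterations of the outer loop, by the same edge-counting argument DHHM use for single-source shortest paths. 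Here the "forbidden edges" subtlety enters: when the minimum-weight frontier edge found has $w_e=0$ (equivalently is marked discarded), the routine simply treats it as non-existent, which is implemented by feeding the search a modified cost function that maps weight $0$ to $+\infty$; and if \emph{no} finite-weight frontier edge exists, the algorithm halts, having exactly spanned $C_{v_0}$ by definition of the connected component.

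The key steps, in order, are: (i) state the DHHM quantum minimum-finding primitive over edges leaving a vertex subset, with its $\tO(\sqrt{(\text{\#vertices})(\text{\#edges seen})})$ guarantee and its failure probability, which I can either cite or re-derive from Grover minimum-finding (Claim~\ref{claim:rep-grover} suffices as the underlying hammer); (ii) present the quantum Dijkstra loop that at each iteration calls this primitive on the current settled set $S$, adds the returned node to $S$, records $d(v)=d(u)+1/w_e$ and parent $u$, and repeats until either $|S|=|C_{v_0}|$ or no finite frontier edge remains; (iii) verify correctness — that upon termination $S=C_{v_0}$ and the parent pointers form a shortest-path tree rooted at $v_0$ spanning $C_{v_0}$ — which is just the standard Dijkstra correctness argument, valid because all costs $1/w_e$ are positive and the $w_e=0$ edges are correctly excluded; (iv) bound the runtime by summing the per-iteration costs — the crucial point being that each edge with an endpoint in $C_{v_0}$ is "examined" $\tO(1)$ times across the whole run (amortized via the doubling strategy), so the total is $\tO(\sqrt{|C_{v_0}|\,|E(C_{v_0})|})$ after an extra $\log(n/\delta)$ factor to boost each Grover call's success probability and union-bound over the $\le n$ iterations; (v) account for resources: $O(\log n)$ qubits for the Grover registers and a QRAM of $\tO(|C_{v_0}|)$ bits to store $S$, the distances $d(\cdot)$, and the parent pointers.

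I expect the main obstacle to be the careful amortized runtime analysis, i.e.\ arguing that the sum $\sum_{t} \sqrt{|S_t|\cdot(\text{edges incident to }S_t)}$ over the $|C_{v_0}|$ iterations telescopes to $\tO(\sqrt{|C_{v_0}|\,|E(C_{v_0})|})$ rather than to something larger. The standard DHHM trick is to run the minimum-finding with a geometrically increasing "budget" of Grover iterations, so that an edge is only re-inspected when the frontier set has (roughly) doubled, which caps the total edge-inspection count at $\tO(|E(C_{v_0})|)$; weaving this together with the Cauchy--Schwarz step that converts $\sum_t\sqrt{a_t}$ with $\sum_t a_t\le \tO(|E(C_{v_0})|)$ and $t$ ranging over $\le|C_{v_0}|$ values into $\sqrt{|C_{v_0}|\,|E(C_{v_0})|}$ is the part that needs care. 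A secondary subtlety is the probability bookkeeping: each of the $\le n$ minimum-finding calls must succeed, so each is amplified to failure probability $\le\delta/n$ at a cost of a $\log(n/\delta)$ multiplicative factor, matching the statement. None of these steps is conceptually hard, but getting the exponents and the component-restriction logic exactly right — especially handling the degenerate cases where $C_{v_0}$ is a single vertex or where the graph has no edges of positive weight leaving it — is where the proof demands attention, so I would relegate the full argument to Appendix~\ref{app:SPT} as the paper indicates.
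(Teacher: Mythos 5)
Your high-level plan — quantize Dijkstra by replacing the frontier scan with DHHM-style quantum minimum-finding, encode forbidden/out-of-component edges as cost $+\infty$, amplify each call to error $\delta/n$ — is the same route the paper takes. But the heart of the proof is exactly the part you defer to "the standard DHHM trick," and as you describe it the argument does not close. If at iteration $t$ you run minimum-finding over \emph{all} frontier edges of the settled set $S_t$, each call costs $\tO(\sqrt{|E(S_t)|})$, and summing over the $|C_{v_0}|$ iterations gives $\tO(|C_{v_0}|\sqrt{|E(C_{v_0})|})$ — there is no telescoping, and "geometrically increasing the Grover budget" does not help because each iteration genuinely searches a fresh set of up to $|E(C_{v_0})|$ candidates. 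The mechanism that actually achieves $\tO(\sqrt{|C_{v_0}|\,|E(C_{v_0})|})$ is a specific data structure: the tree vertices are partitioned into sets $P_1,\dots,P_L$ whose sizes are nonincreasing powers of two with the invariant $|P_k|>\sum_{i>k}|P_i|$ (a binary-counter merging scheme); when a set $P_k$ is formed, one call to $\minfind(|P_k|,f,g)$ over $E(P_k)$ computes and \emph{caches} a set $B_k$ of up to $|P_k|$ least-cost border edges with \emph{pairwise distinct endpoints} (this is the role of the type function $g$), and each subsequent iteration only takes a classical minimum over the cached $B_1\cup\dots\cup B_L$. The cost is then $\sum_{P}\sqrt{|E(P)|\,|P|}$ summed over the sets ever formed, and Cauchy--Schwarz over the disjoint sets of each size $2^r$, times the $\log n$ scales, gives the claimed bound. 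Your proposal contains neither the partition, nor the caching, nor the distinct-endpoints requirement, so the runtime claim in step (iv) is unsupported.

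The missing data structure is also load-bearing for correctness, not just for the runtime. A cached set $B_k$ was computed when $P_k$ was formed, so by the time it is consulted many of its endpoints may already have been absorbed into the tree; one must argue that $B_k$ still contains a least-cost border edge of the \emph{current} tree leaving $P_k$. This is exactly where the invariant $|P_k|>\sum_{i>k}|P_i|$ and the distinct-endpoints property are used: among $|P_k|$ distinct endpoints, at least one still lies outside $P_1\cup\dots\cup P_L$. Moreover, the paper points out that the original DHHM write-up used the wrong value function ($1/w(u,v)$ instead of $\dist(u)+1/w(u,v)$), so this correctness argument cannot simply be cited; it has to be reproved with the corrected cost, which is why the appendix carries the invariant-based argument in full. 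You use the correct cost function, which is good, but without the caching/partition structure and its invariants your step (iii) reduces to classical Dijkstra correctness, which does not address the actual issue.
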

 
From this we can speed up the spanner construction rather straightforwardly.
To see this, note that the runtime of the Thorup-Zwick algorithm is dominated by the task of growing the shortest-path trees $T(v)$, spanning the local clusters $C(v)$, for all nodes $v \in V$.
By setting $w_e = 0$ for any edge reaching out of $C(v)$, this task corresponds to a shortest-path tree on the connected component of $v$.
If we use the above quantum algorithm to accelerate this, the total runtime becomes
\[
\tO\bigg( \sum_v \sqrt{|C(v)| |E(C(v)|} \bigg)
\in \tO\Bigg( \sqrt{\sum_v |C(v)|} \sqrt{\sum_v |E(C(v)|} \Bigg),
\]
where the containment follows from the Cauchy-Schwarz inequality.
By Theorem~\ref{thm:thorup-zwick} we know that $\Exp(\sum_v |C(v)|) \in O(kn^{1+1/k})$ and $\Exp(\sum_v |E(C(v))|) \in O(kmn^{1/k})$.
By Markov's inequality this implies that with probability close to 1 the runtime is
\[
\tO\Big(\sqrt{k n^{1+1/k}} \sqrt{k m n^{1/k}}\Big)
\in \tO\big(k n^{1/k} \sqrt{mn}\big).
\]
What remains to be shown is how we (implicitly) set $w_e = 0$ for all edges reaching out of $C(v)$.
To that end we follow the idea of Thorup and Zwick of connecting a new source node $s$ to every node in~$A_i$, with edges of infinite weight, and construct an SPT from $s$ to $V$.
It is easy to see that this returns the shortest path from any node $w \notin A_i$ to $A_i$, allowing to calculate $\delta(w,A_i)$.
Using the standard quantum SPT algorithm of \cite{durr2006quantum} we can construct this SPT in time $\tO(\sqrt{mn})$, and we do this whenever we construct a new $A_i$.
Now assume that the quantum SPT algorithm at some point wishes to choose an edge $(w,w')$, with $w$ part of the SPT constructed so far, and $w'$ an adjacent node.
Then by design this must be a cheapest border edge of the SPT constructed so far, and $\delta(v,w') = \delta(v,w) + \delta(w,w')$.
Hence we know $\delta(v,w')$ and we can simply check whether $\delta(v,w') < \delta(w,A_i)$, setting the weight of the edge equal to zero if this is not the case.
This proves the following theorem.
\begin{theorem} \label{thm:quantum-spanner}
There is a quantum algorithm that outputs in time $\tO(k n^{1/k} \sqrt{mn})$ with high probability a $(2k-1)$-spanner of $G$ of size $O(k n^{1 + 1/k})$.
The algorithm uses $O(\log n)$ qubits and a QRAM of $\tO(k n^{1+1/k})$ bits.
\end{theorem}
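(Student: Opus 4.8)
The plan is to run the classical Thorup--Zwick procedure of Algorithm~\ref{alg:spanner} with $k = 1/2 + \log n$, but to replace every classical shortest-path-tree computation by a quantum one. Concretely, for each layer $i$ and each $v \in A_{i-1}\setminus A_i$ we must grow a shortest-path tree $T(v)$ spanning $C(v) = \{w : \delta(w,v) < \delta(w,A_i)\}$; by Theorem~\ref{thm:thorup-zwick} the union of these trees is a $(2k-1)$-spanner with $O(\sum_v |C(v)|) \in O(kn^{1+1/k})$ edges, so \emph{correctness of the output} is inherited verbatim from Thorup--Zwick (the stretch bound holds deterministically, regardless of the random choice of the $A_i$). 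The only thing to establish is the running time, and for this I would use the quantum SPT subroutine of the Proposition in Appendix~\ref{app:SPT}, which grows an SPT from a source $v_0$ spanning its connected component $C_{v_0}$ in time $\tO(\sqrt{|C_{v_0}|\,|E(C_{v_0})|}\,\log(n/\delta))$ with failure probability $\delta$. Taking $\delta = 1/\poly(n)$ and union-bounding over the $n$ calls makes all trees correct with high probability at only a polylog cost.

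To force the quantum SPT to produce exactly $T(v)$ we must ``forbid'' every edge leaving $C(v)$ by setting its weight to $0$; one then checks, using the ball property of $C(v)$ proved by Thorup and Zwick, that in this modified graph $C(v)$ is precisely the connected component of $v$ in the sense of the Proposition, so that the Proposition applies with $C_{v_0}=C(v)$ and $E(C_{v_0})=E(C(v))$. The obstacle is that we cannot afford to compute $C(v)$ explicitly beforehand. I would handle this as in Thorup--Zwick: whenever a new layer $A_i$ is formed, add an auxiliary super-source $s$ joined to every node of $A_i$ by edges of infinite weight (zero resistance) and run one \emph{unrestricted} quantum SPT from $s$ using the standard algorithm of D\"urr et al.~\cite{durr2006quantum}; this costs $\tO(\sqrt{mn})$ per layer, hence $\tO(k\sqrt{mn})$ over all $k$ layers, and records $\delta(w,A_i)=\delta(s,w)$ for every $w$. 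Then, when the bounded quantum SPT from $v$ is about to relax a cheapest border edge $(w,w')$ with $w$ already in the tree, the Dijkstra-like structure of the subroutine guarantees that $\delta(v,w') = \delta(v,w)+\delta(w,w')$ is known at that moment; we compare it against the stored $\delta(w',A_i)$ and, if $\delta(v,w')\ge\delta(w',A_i)$, treat the edge as having weight $0$. This is the decisive point: it shows that an ``edge-forbidden-on-relaxation'' access pattern suffices to reproduce the hypothesis of the Proposition, and it relies essentially on the subroutine adding border edges greedily in order of distance from the source.

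Granting this, the runtime is a routine aggregation. Summing the per-$v$ costs and applying Cauchy--Schwarz,
\[
\sum_v \sqrt{|C(v)|\,|E(C(v))|} \;\le\; \sqrt{\Big(\sum_v |C(v)|\Big)\Big(\sum_v |E(C(v))|\Big)} .
\]
By Theorem~\ref{thm:thorup-zwick}, $\Exp\big[\sum_v |C(v)|\big]\in O(kn^{1+1/k})$ and $\Exp\big[\sum_v |E(C(v))|\big]\in O(kmn^{1/k})$, so by Markov's inequality both sums are within a constant factor of their expectations with constant probability; the total cost of the bounded SPTs is then $\tO\big(\sqrt{kn^{1+1/k}\cdot kmn^{1/k}}\big) = \tO(kn^{1/k}\sqrt{mn})$, which dominates the $\tO(k\sqrt{mn})$ spent on the super-source trees. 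To upgrade ``constant probability'' to ``high probability'' I would restart the whole computation whenever either its running time or the size of the partial output exceeds the target bound; since each attempt succeeds with constant probability, $O(\log n)$ attempts suffice and the worst-case runtime remains $\tO(kn^{1/k}\sqrt{mn})$. For the space bound, the only memory beyond $O(\log n)$ qubits is QRAM: it stores the arrays $\delta(\cdot,A_i)$ ($\tO(n)$ per layer) and the edges of the growing spanner ($O(kn^{1+1/k})$), so $\tO(kn^{1+1/k})$ QRAM bits suffice, while all quantum subroutines (the SPT algorithms and their inner Grover searches) use only $O(\log n)$ qubits. Specializing to $k=1/2+\log n$ then gives the $2\log n$-spanner with $O(n\log n)$ edges in time $\tO(\sqrt{mn})$ used as a black box in Sections~\ref{sec:quantum-sparsification} and~\ref{sec:refined-quantum-sparsification}.
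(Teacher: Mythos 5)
Your proposal is correct and follows essentially the same route as the paper: quantum SPTs for each cluster $C(v)$ with out-of-cluster edges implicitly forbidden, a per-layer super-source SPT to precompute $\delta(\cdot,A_i)$ so that the membership test $\delta(v,w')<\delta(w',A_i)$ can be evaluated exactly when the Dijkstra-like subroutine is about to relax a border edge, and then Cauchy--Schwarz plus the Thorup--Zwick expectation bounds and Markov to get the stated runtime. Your explicit restart argument to boost the constant success probability from Markov to high probability is a small tidy addition the paper leaves implicit, but it does not change the approach.
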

\noindent
Setting $k = \log n + 1/2$, we find an $\tO(\sqrt{mn})$ quantum algorithm for constructing $2\log n$-spanners, as is required by our sparsification algorithm.

\section{Matching Lower Bound: A Hidden Sparsifier} \label{sec:lower-bound}
In this section we prove that the runtime of our quantum algorithm for spectral sparsification is optimal, up to polylog-factors.
In fact, we show that even constructing a weaker \emph{cut} sparsifier requires the same complexity.
The following is a rephrasing of Theorem \ref{thm:lower-bound-intro} from the introduction.
\begin{theorem} \label{thm:quantum-lower-bnd}
Fix $n$, $m$ and $\epsilon \geq \sqrt{n/m}$.
Consider the problem of outputting, with high probability, an explicit description of an $\epsilon$-cut sparsifier of a weighted, undirected graph $G$ with $n$ nodes and $m$ edges, given adjacency-list access to $G$.
The quantum query complexity of this problem is $\widetilde{\Omega}(\sqrt{mn}/\epsilon)$.
\end{theorem}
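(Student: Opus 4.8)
The plan is to reduce a hard search problem to cut sparsification, so that any algorithm solving the latter must implicitly solve the former. Concretely, I would build a family of input graphs $G$ on $n$ nodes and $m$ edges in which a dense ``core'' structure of $\widetilde\Theta(n/\epsilon^2)$ carefully chosen edges is hidden among the $m$ edges, with the property that \emph{every} $\epsilon$-cut sparsifier of $G$ must retain a constant fraction of those core edges. As the introduction's sketch suggests, the starting point is the construction of Andoni, Chen, Krauthgamer, Qin, Woodruff and Zhang~\cite{andoni2016sketching}: they exhibit graphs on $n$ nodes with only $\widetilde O(n/\epsilon^2)$ edges such that \emph{no} proper subset of a constant fraction of the edges can $\epsilon$-approximate all cuts. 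Such a graph is already essentially its own sparsifier. The job is then to ``pad'' it: embed a copy (or several disjoint copies) of such a graph into a larger $m$-edge host graph, where the host's extra edges are high-multiplicity/high-weight bundles that are trivial to sparsify (e.g.\ parallel edges or cliques whose cut contributions are easy to preserve with few representatives), in such a way that any $\epsilon$-cut sparsifier of the host is forced to contain $\widetilde\Omega(n/\epsilon^2)$ of the embedded core edges --- and crucially, the algorithm has no way to locate those core edges in the adjacency lists other than by querying.

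The second ingredient is to package this as a composition. I would phrase the task as: among $m$ adjacency-list slots, a $\widetilde\Theta(n/\epsilon^2)$-subset of ``special'' slots is placed in an adversarially chosen location, and the algorithm must output a constant fraction of them. This is naturally the composition of two primitives: an outer \emph{relational} problem --- find a constant fraction of the nonzero entries of a Boolean matrix (the ``which bundles contain the hidden sparsifier'' problem, with $\widetilde\Theta(n/\epsilon^2)$ ones out of $\widetilde\Theta(m/\text{(something)})$) --- composed with an inner $\mathrm{OR}$-type function that captures the cost of verifying/locating a single hidden edge inside its bundle of size $\widetilde\Theta(m\epsilon^2/n)$. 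The known query complexities are $\widetilde\Omega(\sqrt{N|S|})$ for finding $|S|$ ones among $N$ (from repeated-search lower bounds) and $\widetilde\Omega(\sqrt{b})$ for $\mathrm{OR}$ on $b$ bits; multiplying these in the composition should give $\widetilde\Omega\big(\sqrt{(n/\epsilon^2)\cdot m} \big)=\widetilde\Omega(\sqrt{mn}/\epsilon)$, as desired.

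To make the composition rigorous, I would invoke the adversary-bound composition theorem of Belovs and Lee~\cite{belovs2019relational}, which (as the introduction states) handles exactly the composition of a relational outer problem with a function, giving $\mathrm{Adv}(f \circ g) = \Omega(\mathrm{Adv}(f)\cdot \mathrm{Adv}(g))$ for the relevant notions. The remaining technical steps are then: (i) verify that the outer ``find a constant fraction of the ones'' relation has adversary value $\widetilde\Omega(\sqrt{N|S|})$ with the right parameters; (ii) verify the inner $\mathrm{OR}$ piece has value $\Omega(\sqrt{b})$ and that the hiding gadget makes the inner instances genuinely hard (the adversary can permute within each bundle); and (iii) check that the padding preserves $n$ and $m$ up to constants and that the weighting scheme makes the host's non-core structure cheap to sparsify so that any valid $\epsilon$-cut sparsifier is \emph{forced} to reveal the hidden edges --- i.e.\ there is no ``easy'' sparsifier that avoids them.

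The main obstacle, I expect, is step (iii) combined with correctly defining the composed problem: one must ensure that an $\epsilon$-cut sparsifier of the host graph \emph{provably} encodes a solution to the outer-relation-composed-with-$\mathrm{OR}$ problem, rather than merely being correlated with it. In particular, the Andoni et al.\ construction guarantees that you cannot throw away too many core edges, but I need the host graph designed so that the \emph{positions} of the surviving core edges in the output reveal which bundles were ``hot'' and where inside each bundle the edge sat --- and I need this to be robust to the sparsifier rescaling weights and to the algorithm outputting a slightly different but still valid sparsifier. Getting the parameters of the three layers (number of bundles, bundle size, number of hidden-sparsifier copies, weights) to simultaneously satisfy the edge-count constraint $m$, the node-count constraint $n$, the ``forced to keep $\widetilde\Omega(n/\epsilon^2)$ core edges'' constraint, and the clean composition structure is the delicate part; the pure query-complexity lower bounds for the pieces, given the composition theorem, should then follow more routinely.
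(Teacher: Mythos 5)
Your proposal follows essentially the same route as the paper: the Andoni et al.\ unsparsifiable graphs, disjoint copies hidden inside a larger host graph whose adjacency-list queries reduce to single input-bit queries, the outer relational ``find a constant fraction of the ones'' problem composed with inner $\mathrm{OR}$ gadgets of size $\Theta(\epsilon^2 m/n)$, and the Belovs--Lee composition theorem, with the same parameter accounting yielding $\widetilde\Omega(\sqrt{mn}/\epsilon)$. The only notable difference is that the paper establishes the outer lower bound via an information-theoretic (Holevo) argument rather than a repeated-search bound, precisely so that it holds even when restricted to the constant fraction of inputs on which the Andoni et al.\ guarantee applies, and it resolves your obstacle (iii) with an explicit edge-flipping embedding that keeps degrees input-independent.
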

\noindent
Note that sparsification is only meaningful under the constraint $\epsilon \geq \sqrt{n/m}$, since for $\epsilon \in O(\sqrt{n/m})$ the sparsifier would have at least as many edges as the original graph.
We prove this lower bound by ``hiding'' a sparsifier in a larger graph, and then proving a quantum lower bound for finding the sparsifier.
More specifically, we use a random graph construction by Andoni, Chen, Krauthgamer, Qin, Woodruff and Zhang~\cite{andoni2016sketching} such that any cut sparsifier must contain a constant fraction of the edges of the graph.
We then hide a number of copies of this random graph by embedding it in a larger, denser graph.
Finally we show that finding a constant fraction of the edges of the initial random graph requires $\widetilde{\Omega}(\sqrt{mn}/\epsilon)$ queries.
To prove this lower bound, we combine a quantum lower bound for the OR-function with an information-theoretic lower bound for the problem of finding nonzero bits in a Boolean matrix.
We can combine these separate lower bounds by using a composition theorem for adversary bounds, applicable to the composition of a relational problem with a function.
Prompted by our question, such a composition theorem was very recently proven by Belovs and Lee \cite{belovs2019relational}.
 
\subsection{Hiding a Sparsifier}
 
We use a random graph construction of Andoni et al.~\cite{andoni2016sketching} for which a sparsifier must output a constant fraction of its edges.
We then carefully hide a number of copies of this graph into a larger graph, which will later allow for the reduction of a query problem to the construction of a sparsifier.
 
\subsubsection{An Unsparsifiable Graph} \label{sec:andoni}
Andoni et al.~\cite{andoni2016sketching} construct a communication problem that is described by a family of random graphs on $2/\epsilon^2$ nodes with $1/(2\epsilon^4)$ edges.
They show that for a constant fraction of the inputs ($>3/5$), the communication problem requires to communicate $\Omega(1/\epsilon^4)$ bits.
On the other hand, for at least a $2/3$-fraction of the inputs, the communication problem can be solved by communicating an $\epsilon$-cut sparsifier.
This shows that for at least a $(3/5-1/3)>1/4$-fraction of the inputs, the description of an $\epsilon$-cut sparsifier requires $\Omega(1/\epsilon^4)$ bits.
Using a slightly more refined argument, they even show that the number of edges of the sparsifier for these instances must be $\Omega(1/\epsilon^4)$.
 
Fix any $\epsilon > 0$.
Let $B_\epsilon$ be any bipartite graph with $1/\epsilon^2$ nodes on each side, and every left node connected to a corresponding subset of half of the right nodes.
From Andoni et al.~\cite[Theorem 3.3]{andoni2016sketching} we can extract the following theorem.
Indeed, if the claim would not hold, then we could violate their communication lower bound by communicating an $\epsilon$-cut sparsifier.
 
\begin{theorem}[\cite{andoni2016sketching}] \label{thm:andoni-sketch}
For at least a $1/4$-fraction of all graphs $B_\epsilon$, any $\epsilon$-cut sparsifier must contain $\Omega(1/\epsilon^4)$ edges.
\end{theorem}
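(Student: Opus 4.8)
The plan is to read the statement off the communication-complexity argument of Andoni et al.~\cite{andoni2016sketching}, exactly as the remark preceding the theorem suggests. First I would recall their setup: in a one-way communication game, Alice receives a graph $B_\epsilon$ drawn uniformly at random from the family of all such bipartite graphs (equivalently, for each of the $1/\epsilon^2$ left nodes she gets an independent uniformly random half-subset of the $1/\epsilon^2$ right nodes), Bob holds a query, and Bob must answer in a way that allows one to recover a constant fraction of the $1/(2\epsilon^4)$ potential edges of $B_\epsilon$; crucially, every answer Bob needs is a function of a bounded number of cut values of $B_\epsilon$. Andoni et al.\ establish two facts about this game: \emph{(hardness)} on a $>3/5$-fraction of the inputs $B_\epsilon$, every protocol with the required success probability forces Alice to send $\Omega(1/\epsilon^4)$ bits; and \emph{(easiness via a sparsifier)} on a $\ge 2/3$-fraction of the inputs, the game is solved if Alice simply sends Bob (an encoding of) an $\epsilon$-cut sparsifier $H$ of $B_\epsilon$, since $H$ reproduces every cut value of $B_\epsilon$ up to a factor $1\pm\epsilon$ and hence lets Bob compute his answer.

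Combining the two facts is the substance of the proof. The set of inputs $B_\epsilon$ that are simultaneously ``hard'' and ``solvable by sending a sparsifier'' has measure at least $3/5-1/3>1/4$; the fractions have to be tracked a little carefully, but the picture is the one in the remark above. Fix any $B_\epsilon$ in this set and any $\epsilon$-cut sparsifier $H$ of it. On this input, the protocol ``send an efficient encoding of $H$'' is valid, so by the hardness property the encoding of $H$ must use $\Omega(1/\epsilon^4)$ bits. It remains to convert the bit bound into an edge bound: since $H$ is a reweighted subgraph of $B_\epsilon$, a sparsifier with $s$ edges is specified by a size-$s$ subset of the $1/(2\epsilon^4)$ edges of $B_\epsilon$ together with one weight per retained edge, so its description uses $\log\binom{1/(2\epsilon^4)}{s}$ bits plus the bits needed for the weights. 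Using that the weights can be taken from a $\poly(1/\epsilon)$-size set --- this is Andoni et al.'s ``slightly more refined argument'' --- this is $O(s\log(1/\epsilon))$ bits, and being $\Omega(1/\epsilon^4)$ it forces $s=\Omega(1/\epsilon^4)$. Since this holds for every $B_\epsilon$ in the $>1/4$-fraction, the theorem follows.

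The main obstacle is the easiness-via-sparsifier step: one has to verify that the quantity Bob must extract is genuinely a function of (boundedly many) cut values of $B_\epsilon$, so that an $\epsilon$-approximation of all cuts --- which is precisely what a cut sparsifier provides --- is enough, and that the multiplicative $\epsilon$-error does not destroy the constant-fraction edge recovery. This is where the specific structure of the hard instance enters (left nodes with random half-neighborhoods, so that a cut value reveals a Gap-Hamming-type intersection count). A secondary, quantitative point is shaving the logarithmic factor in the bit-to-edge conversion; taking Andoni et al.'s refined counting as a black box removes it, and in any case a $\widetilde{\Omega}(1/\epsilon^4)$ edge bound would already suffice for the downstream $\widetilde{\Omega}(\sqrt{mn}/\epsilon)$ query lower bound of Theorem~\ref{thm:quantum-lower-bnd}.
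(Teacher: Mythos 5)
Your proposal matches the paper's argument: the paper likewise extracts the theorem from the communication lower bound of Andoni et al.\ by intersecting the $>3/5$-fraction of hard inputs with the $\ge 2/3$-fraction on which sending an $\epsilon$-cut sparsifier solves the problem, yielding a $>1/4$-fraction where the sparsifier's description needs $\Omega(1/\epsilon^4)$ bits, and then invokes their refined counting to upgrade the bit bound to an edge bound. Your extra care about the bit-to-edge conversion and the logarithmic factor is a faithful elaboration of what the paper treats as a black box.
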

\noindent
It follows that at least a $1/4$-fraction of all graphs $B_\epsilon$ cannot be significantly sparsified.
Similarly to~\cite{andoni2016sketching}, we will also consider larger families of disjoint copies of $B_\epsilon$.
We can easily prove the following corollary using the Chernoff bound.
\begin{corollary} \label{cor:andoni}
Consider the disjoint union of $\ell$ distinct copies of $B_\epsilon$.
There exists a constant $\eta>0$, independent of $\ell$, such that for at least a $9/10$-fraction of all such graphs it holds that any $\epsilon$-cut sparsifier must contain at least $\eta \ell/\epsilon^4$ edges.
\end{corollary}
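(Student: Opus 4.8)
The plan is to apply a standard Chernoff-type concentration argument to the random variable counting the number of edges that a cut sparsifier of the disjoint union must keep. The key point is that the ``hard'' instances from Theorem~\ref{thm:andoni-sketch} occur with probability at least $1/4$ independently across the $\ell$ copies, so among $\ell$ copies roughly a $1/4$-fraction will individually require $\Omega(1/\epsilon^4)$ edges in any $\epsilon$-cut sparsifier, and hence the union requires $\Omega(\ell/\epsilon^4)$ edges.

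\begin{proof}
Sample the $\ell$ copies of $B_\epsilon$ independently and uniformly at random, and call the $i$-th copy ``hard'' if any $\epsilon$-cut sparsifier of it must contain at least $c/\epsilon^4$ edges, where $c>0$ is the constant hidden in the $\Omega(\cdot)$ of Theorem~\ref{thm:andoni-sketch}. By that theorem, each copy is hard independently with probability at least $1/4$. Let $X$ be the number of hard copies among the $\ell$; then $\Exp(X) \geq \ell/4$, and by the Chernoff bound $X \geq \ell/8$ with probability at least $1 - e^{-\Omega(\ell)}$. For $\ell$ larger than some absolute constant $\ell_0$ this is at least $9/10$; for $\ell \leq \ell_0$ the claim is trivial for a sufficiently small choice of the final constant $\eta$ (indeed, by Theorem~\ref{thm:andoni-sketch} with positive probability even a single copy is hard, so one can just take $\eta$ small enough that $\eta\ell_0/\epsilon^4 \leq c/\epsilon^4$). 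Now note that cuts in a disjoint union decompose: a subset $S$ of the vertex set of $\cup_i B_\epsilon^{(i)}$ restricts to a subset $S_i$ of the $i$-th copy, and $\val(S) = \sum_i \val(S_i)$, and likewise for any reweighted subgraph. Hence an $\epsilon$-cut sparsifier $H$ of the union restricts on each copy to a subgraph $H_i$ that is an $\epsilon$-cut sparsifier of $B_\epsilon^{(i)}$. Therefore, whenever the union is ``good'' (i.e.\ $X \geq \ell/8$, which happens with probability at least $9/10$), any $\epsilon$-cut sparsifier contains at least $X \cdot c/\epsilon^4 \geq (c/8)\cdot \ell/\epsilon^4$ edges. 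Setting $\eta = c/8$ (adjusted downward to also cover the finite cases $\ell \leq \ell_0$) completes the proof.
\end{proof}

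\noindent
The only mild subtlety to watch for is the edge case of small $\ell$, where the Chernoff bound does not by itself yield a $9/10$ success probability; this is handled by shrinking $\eta$. The crucial structural observation making everything routine is that cut values, and hence the cut-sparsifier property, decompose over connected components of a disjoint union, so the per-copy lower bound of Theorem~\ref{thm:andoni-sketch} tensorizes cleanly. I do not anticipate any real obstacle here -- this corollary is a bookkeeping step whose sole purpose is to amplify the $1/4$ probability of Theorem~\ref{thm:andoni-sketch} to a $9/10$ probability at the cost of losing only a constant factor in the edge count, so that later parts of the lower-bound argument can condition on a high-probability event.
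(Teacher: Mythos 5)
Your proof is essentially the paper's intended argument: the paper gives no proof of this corollary beyond the remark that it ``follows easily using the Chernoff bound,'' and your combination of (i) the tensorization of the cut-sparsifier property over connected components of a disjoint union and (ii) a Chernoff bound on the number of hard copies is exactly that argument. Both ingredients are correct as you state them, and for $\ell$ above a fixed constant the proof is complete.

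One caveat: your patch for the small-$\ell$ regime does not work. Shrinking $\eta$ lowers the edge-count threshold but cannot raise the \emph{fraction} of instances on which the bound holds: for $\ell\leq \ell_0$ all Theorem~\ref{thm:andoni-sketch} guarantees is that each copy is hard with probability at least $1/4$, so the probability that even one copy is hard may be as small as $1-(3/4)^{\ell}$, which for $\ell=1$ is only $1/4$, far below the claimed $9/10$. No choice of $\eta>0$ fixes this, since $\eta\ell/\epsilon^4$ remains a nontrivial bound of order $1/\epsilon^4$. This is really a gap between the corollary as literally stated and the cited theorem (the paper silently ignores it too, and in the application $\ell=\epsilon^2 n/2$ is taken large so that the Chernoff regime applies); the honest fix is to state the corollary for $\ell$ at least a sufficiently large constant rather than to pretend $\eta$ absorbs the issue.
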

 
\subsubsection{Embedding the Sparsifier} \label{sec:hiding-sparsifier}
 
Fix $n$, $m \leq n^2/4$ and $\epsilon \geq \sqrt{n/m}$.
Consider $\ell = \epsilon^2 n/2$ independent copies $B^{(k)}$ of $B_\epsilon$, yielding a graph with $n$ nodes.
We wish to ``hide'' this graph in a larger, denser graph.
To this end, we use an $m$-bit string $x$ to (redundantly) describe the resulting graph, which we denote by $B(x)$.
The description $x$ consists of $\ell = \epsilon^2 n/2$ matrices $x^{(k)}$ of dimension $1/\epsilon^2 \times 1/\epsilon^2$,
\[
x^{(k)}
= \{x^{(k)}_{i,j} \mid i,j \in [1/\epsilon^2]\}, \quad k \in [\epsilon^2 n/2].
\]
Every matrix $x^{(k)}$ is used to describe the bipartite adjacency matrix of the corresponding copy $B^{(k)}$.
Rather than bits, however, the entries $x^{(k)}_{i,j}$ correspond to smaller strings of $N = 2\epsilon^2 m/n$ bits each, with at most one nonzero bit per string.
We say that an input $x$ is \emph{valid} if it is of this form.
The presence of an edge between the $i$-th left node and the $j$-th right node in $B^{(k)}$ is determined by the presence of a nonzero bit in the string $x^{(k)}_{i,j}$, i.e., by $\ORf(x^{(k)}_{i,j})$.
The bipartite adjacency list of the $k$-th copy $B^{(k)}$ is hence described as
\[
\mathrm{adj}(B^{(k)})
= \ORf(x^{(k)})
= \ORf\left( \begin{bmatrix}
00010 & 00000 & 00000 & 10000 \\
00100 & 00000 & 00100 & 00000 \\
00000 & 01000 & 00001 & 00000 \\
00000 & 00000 & 00010 & 01000
\end{bmatrix} \right)
= \begin{bmatrix} 1 & 0 & 0 & 1 \\ 1 & 0 & 1 & 0 \\ 0 & 1 & 1 & 0 \\ 0 & 0 & 1 & 1 \end{bmatrix},
\]
where we give concrete values to the bits in $x^{(k)}$ for illustration.
In the next section we prove a lower bound on the identification of a constant fraction of the 1-bits in a valid input $x$.
In this section we show how to embed the corresponding graph $B(x)$ in a larger, denser graph $G(x)$, so that an $\epsilon$-cut sparsifier of $G(x)$ must identify a constant fraction of the edges of $B(x)$.
This identifies a constant fraction of the 1-bits of the input $x$, so that the aforementioned lower bound effectively yields a lower bound for the construction of a sparsifier.
We must take particular care in embedding $B(x)$ in $G(x)$ so as not to reveal additional information about $x$.
E.g., we must prevent that the degrees of $G(x)$ reveal anything about the location of 1-bits in the input.
We do this essentially by ensuring that a query to the adjacency list of $G(x)$ can be performed using a single query to $x$.
 
Initially, the oblivious (input-independent) mother graph $G = G_1 \cup G_2$ is the disjoint union of two bipartite graphs $G_1 = (L_1 \cup R_1,E_1)$ and $G_2 = (L_2 \cup R_2,E_2)$.
The first graph $G_1$ is a multigraph consisting of $\ell = \epsilon^2 n/2$ disjoint copies $B^{(k)}$ of the \emph{complete} bipartite graph on $1/\epsilon^2$ nodes, containing $N = 2\epsilon^2 m/n$ parallel copies of each edge (we will get rid of these multi-edges later).
As such, $G_1$ has exactly $n$ nodes and $m$ (sometimes parallel) edges.
We match every edge of $G_1$ to a unique input bit by matching the parallel copies of edge $(i,j)$ in $B^{(k)}$ to the input bits in the string $x^{(k)}_{i,j}$.
The second bipartite graph $G_2$ has at most $n$ nodes and exactly $m$ edges, but no multi-edges (we do not need to further specify this graph at this point).
We formally match every single copy of an edge in $G_1$ to a unique edge in $G_2$, so that every input bit now corresponds to a unique edge in $G_1$ and a unique edge in $G_2$.
While doing so, we ensure that all edges leaving a left or right node of $G_1$ are matched to edges in $G_2$ whose left (resp.~right) ends are distinct.
We will later clarify why this is important, and defer a proof that this is possible for some choice of $G_2$ to Appendix \ref{app:existence}.
At this point, all edges in $G$ have zero weight.
 
Next we take the input $x$ into account, turning $G$ into $G(x)$.
To this end, we ``flip'' edge pairs conditioned on the input bit.
Specifically, consider a bit $x(i)$ that corresponds to edges $(l,r)$ in $G_1$ and $(l',r')$ in $G_2$.
If $x(i) = 1$, then we keep these edges as they are, except that we give $(l,r)$ a unit weight.
If $x(i) = 0$, then we ``flip'' the edges: we replace $(l,r)$ and $(l',r')$ by edges $(l,l')$ and $(r,r')$.
This is illustrated in Figure \ref{fig:matching}.
We see now that if two outgoing edges from $l$ were matched to edges with the same left end $l'$ in $L_2$, then this could create the edge $(l,l')$ twice.
Similarly if two incoming edges from $r$ were matched to the same right ends $r'$, this could create the edge $(r,r')$ twice.
Our matching ensures that this can never happen.
 
\begin{figure}
\centering
\includegraphics[width=.6\textwidth]{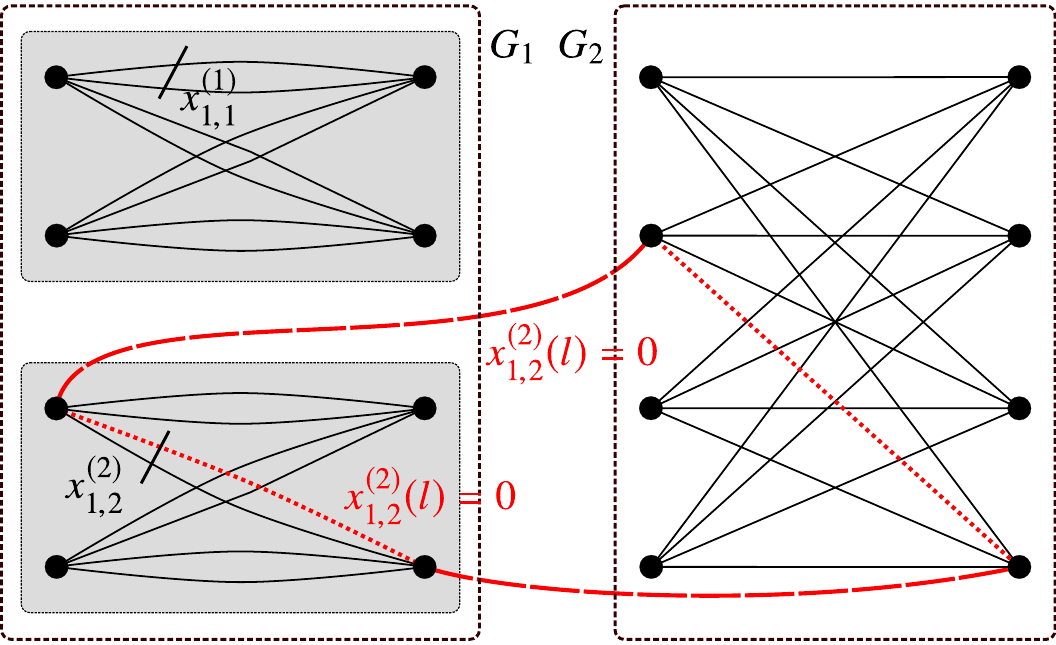}
\caption{Matching input bits to edges in $G_1$ and $G_2$ for $n = 8$, $m = 16$ and $\epsilon = 1/\sqrt{2}$ (we set $G_2$ to be the complete bipartite graph for illustration only). The dotted red edges depict a pair of matched edges, which correspond to input bit $x^{(2)}_{1,2}(l)$.
If $x^{(2)}_{1,2}(l) = 1$, these edges are kept in $G(x)$; if $x^{(2)}_{1,2}(l) = 0$, they are ``flipped'' with the dashed edges.}
\label{fig:matching}
\end{figure}
 
Now consider a pair $l \in L_1$ and $r \in R_1$, corresponding to the edge $(i,j)$ in $B^{(k)}$.
Then $G(x)$ will contain a unique, unit-weight edge between $l$ and $r$ if and only if the string $x^{(k)}_{i,j}$ has a unique nonzero bit.
As a consequence, $G(x)$ restricted to the node set $L_1 \cup R_1$ exactly describes $B(x)$.
Moreover, we can perform a single query to the adjacency list of $G(x)$ using a single query to $x$, as we prove in the lemma below.
 
\begin{lemma} \label{lem:embedded-sparsifier}
Fix $n$, $m \leq n^2/4$ and $\epsilon \geq \sqrt{n/m}$, and consider a valid input $x \in \{0,1\}^m$.
Then $G(x)$ has at most $2n$ nodes and exactly $2m$ edges, and any query to its adjacency list can be simulated using a single query to $x$.
$G(x)$ has $B(x)$ as a subgraph with unit edge weights, and all remaining edges have zero weight.
\end{lemma}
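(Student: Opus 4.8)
The plan is to verify each of the four claims in Lemma~\ref{lem:embedded-sparsifier} directly from the construction of $G(x)$ described above, since all the combinatorial work has been front-loaded into the definition of the mother graph $G = G_1 \cup G_2$ and the edge-pair flipping.

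First I would handle the node and edge counts. The mother graph $G_1$ has exactly $n$ nodes (being $\ell = \epsilon^2 n/2$ disjoint copies of the complete bipartite graph on $1/\epsilon^2 + 1/\epsilon^2 = 2/\epsilon^2$ nodes, and $(\epsilon^2 n/2)(2/\epsilon^2) = n$) and exactly $m$ edges ($N = 2\epsilon^2 m/n$ parallel copies of each of the $(\epsilon^2 n/2)(1/\epsilon^4) = n/(2\epsilon^2)$ underlying edges, giving $(n/(2\epsilon^2))(2\epsilon^2 m/n) = m$), and $G_2$ has at most $n$ nodes and exactly $m$ edges by its specification; hence $G$ has at most $2n$ nodes and exactly $2m$ edges. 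The flipping operation in the definition of $G(x)$ replaces each matched pair of edges $\{(l,r),(l',r')\}$ by another pair $\{(l,l'),(r,r')\}$ when the corresponding input bit is $0$, and leaves the pair in place (reweighting one edge) when the bit is $1$; either way the number of edges is preserved, so $G(x)$ still has at most $2n$ nodes and exactly $2m$ edges. One should note the constraint $\epsilon \geq \sqrt{n/m}$ guarantees $N = 2\epsilon^2 m/n \geq 2 \geq 1$, so the bit-strings $x^{(k)}_{i,j}$ are nonempty and the matching makes sense; and $m \leq n^2/4$ ensures $1/\epsilon^2 \leq \sqrt{m} \leq n/2$ so the copies of $B_\epsilon$ fit.

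Next I would argue that $G(x)$ restricted to $L_1 \cup R_1$ equals $B(x)$ with unit weights. Fix $l \in L_1$, $r \in R_1$ corresponding to entry $(i,j)$ of copy $B^{(k)}$. The $N$ parallel edges between $l$ and $r$ in $G_1$ are exactly the ones matched to the $N$ bits of $x^{(k)}_{i,j}$. An edge $(l,r)$ survives in $G(x)$ (with unit weight) iff its matched bit is $1$; since a valid input has at most one nonzero bit per string, there is at most one surviving $(l,r)$ edge, and it is present iff $\ORf(x^{(k)}_{i,j}) = 1$ — which is precisely the adjacency rule for $B(x)$. The flipped edges (from $0$-bits) have one endpoint in $L_2$ or $R_2$, hence lie outside $L_1 \cup R_1$ and do not interfere. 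All other edges of $G(x)$ — the flipped edges and any $1$-bit edge of $G_2$ — have zero weight by construction, so $B(x)$ is the unit-weight part of $G(x)$. Here the "distinctness" condition on the matching is exactly what rules out accidental parallel flipped edges $(l,l')$ or $(r,r')$, so that $G(x)$ is a genuine simple graph on the relevant parts.

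Finally, the query-simulation claim: an adjacency-list query to $G(x)$ asks for the degree of a vertex, the $p$-th neighbor of a vertex, or the weight of an edge. Each edge slot of $G(x)$ is determined by a single bit $x(i)$ together with the fixed, input-independent matching data (which edge of $G_1$ and which edge of $G_2$ bit $i$ controls, and the endpoint labels); so to answer where a particular incident edge of a vertex goes, or what an edge's weight is, one reads the relevant bit $x(i)$ and applies the deterministic flipping rule. The only subtlety is that degrees must not leak information — but because flipping swaps one $G_1$-incident edge-end for one $G_2$-incident edge-end (and vice versa), every vertex of $G(x)$ has the same degree for every valid $x$, equal to its degree in the oblivious graph $G$. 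So the neighbor-ordering can be fixed obliviously and the $p$-th neighbor of any vertex is computed from exactly one input bit. The main obstacle, and the part that genuinely needs the deferred Appendix~\ref{app:existence}, is establishing that a matching between $G_1$-edges and $G_2$-edges satisfying the distinctness condition (all edges leaving a left node of $G_1$ matched to edges of $G_2$ with distinct left ends, and symmetrically on the right) actually exists for a suitable choice of the bipartite graph $G_2$ with at most $n$ nodes and exactly $m$ edges; granting that, everything above is routine bookkeeping.
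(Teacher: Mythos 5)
Your proposal is correct and follows essentially the same route as the paper: the paper's proof likewise dismisses the node/edge counts and the identification of $B(x)$ as immediate from the construction, and devotes its argument to the query simulation via the fixed bit-to-edge-pair correspondence and the observation that degrees are input-independent. You simply spell out the bookkeeping the paper leaves implicit, and correctly flag that the existence of the distinctness-respecting matching is the one ingredient deferred to Appendix~\ref{app:existence}.
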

\begin{proof}
We only prove the claim about the query access, as all other claims follow easily from the construction.
A degree query is trivially simulated, since $G(x)$ has the same degrees as $G$.
To simulate a neighbor query, say that we wish to query the $k$-th entry of the adjacency list of node $l \in L_1$ in $G_1$, for some $k \in [2m/n]$.
By construction, this corresponds to a known edge $(l,r)$ in $G_1$, a corresponding edge $(l',r')$ in $G_2$ and a unique input bit $x(i)$.
We query the input bit $x(i)$.
If it equals 1, then $G(x)$ has the unit-weight edge $(l,r)$ and hence we return $r \in R_1$, and weight 1.
If it equals 0, then we flipped the edges and $G(x)$ has the edge $(l,l')$, and so we return the node $l' \in L_1$ and weight 0.
The same reasoning applies to queries from all other nodes of $G(x)$.
\end{proof}
 
We will also use the claim below, which follows easily from the fact that only the subgraph $B(x)$ of $G(x)$ has unit-weight edges, and all remaining edges have weight zero.
\begin{claim}
Any $\epsilon$-cut sparsifier of $G(x)$ contains an $\epsilon$-cut sparsifier of $B(x)$.
\end{claim}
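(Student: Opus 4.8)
The plan is to exploit the structural fact recorded in Lemma~\ref{lem:embedded-sparsifier}: the only positive-weight edges of $G(x)$ are the unit-weight edges of $B(x)$, all of which lie inside the vertex set $W := L_1 \cup R_1$ of $B(x)$, while every other edge of $G(x)$ has weight $0$. The immediate consequence I would record first is that for \emph{every} subset $T \subseteq V$ we have $\val_{G(x)}(T) = \val_{B(x)}(T \cap W)$, since the zero-weight edges (in particular all those touching $L_2 \cup R_2$) contribute nothing to any cut of $G(x)$.

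Next, let $H$ be an $\epsilon$-cut sparsifier of $G(x)$. The key step is to show that $H$ cannot place positive weight on any edge incident to a vertex outside $W$. Indeed, for $v \in L_2 \cup R_2$ every edge of $G(x)$ at $v$ has weight $0$, so $\val_{G(x)}(\{v\}) = 0$; the cut-sparsifier inequality~\eqref{eq:cut-spars} applied to the singleton cut $\{v\}$ forces $\val_H(\{v\}) \le (1+\epsilon)\cdot 0 = 0$, and since edge weights are nonnegative this means every $H$-edge at $v$ has weight $0$. Hence all positive-weight edges of $H$ have both endpoints in $W$. Let $H_B$ denote this positive-weight subgraph; then $H_B \subseteq H$, it is supported on $W$, and $\val_H(T) = \val_{H_B}(T \cap W)$ for all $T \subseteq V$.

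The remaining step is purely a matter of combining these observations. Fix any $S \subseteq W$ and apply~\eqref{eq:cut-spars} to the cut $S$ (viewed as a subset of $V$): using $\val_{G(x)}(S) = \val_{B(x)}(S)$ and $\val_H(S) = \val_{H_B}(S)$ from the previous two steps, we obtain $(1-\epsilon)\val_{B(x)}(S) \le \val_{H_B}(S) \le (1+\epsilon)\val_{B(x)}(S)$. Since $S \subseteq W$ was arbitrary, $H_B$ is an $\epsilon$-cut sparsifier of $B(x)$ and is contained in $H$, which is exactly the claim.

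I do not anticipate a genuine obstacle here — the argument is essentially bookkeeping. The one point that requires care is preventing a cut sparsifier from ``cheating'' by assigning positive weight to the zero-weight edges of $G(x)$ (especially those incident to $L_2 \cup R_2$, which would otherwise spoil the identification of $\val_H$ with $\val_{H_B}$); this is precisely what the singleton-cut argument in the second step rules out.
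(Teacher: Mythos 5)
Your proof is correct and follows essentially the same route as the paper's one-line argument: discard from $H$ everything outside $B(x)$ and observe that nothing is lost. The one thing you add is the singleton-cut argument showing that $H$ cannot assign positive weight to any edge incident to $L_2\cup R_2$; the paper simply asserts this in passing (``and necessarily have zero weight''), so your version is a welcome, slightly more careful rendering of the same idea.
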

\begin{proof}
Let $H$ be any $\epsilon$-cut sparsifier of $G(x)$.
Then we can simply remove all edges from $H$ that are not in $B(x)$ (and necessarily have zero weight) to yield an $\epsilon$-cut sparsifier for $B(x)$.
\end{proof}
 
Combining this claim with Corollary~\ref{cor:andoni}, we can deduce that for at least $9/10$-th of all valid inputs~$x$, any $\epsilon$-cut sparsifier of $G(x)$ must identify a constant fraction (specifically, at least $\eta n/\epsilon^2$) of the edges of $B(x)$.
Since the presence of an edge in $B(x)$, say $(l_i,r_j)$ in copy $B^{(k)}$, reveals that string $x^{(k)}_{i,j}$ has a nonzero entry, we find the following corollary.
 
\begin{corollary} \label{cor:reduction}
For at least a $9/10$-fraction of all valid inputs $x$, it holds that any $\epsilon$-cut sparsifier of $G(x)$ must identify a constant $2\eta$-fraction of the nonzero strings.
\end{corollary}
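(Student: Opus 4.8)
The plan is to obtain this corollary as a direct consequence of three facts already in place: Lemma~\ref{lem:embedded-sparsifier}, which exhibits $B(x)$ as exactly the unit-weight subgraph of $G(x)$ with all other edges at weight zero; the claim just above, that the edges of any $\epsilon$-cut sparsifier of $G(x)$ that lie in $B(x)$ themselves form an $\epsilon$-cut sparsifier of $B(x)$; and Corollary~\ref{cor:andoni}, which says that for at least a $9/10$-fraction of disjoint unions of $\ell=\epsilon^2 n/2$ copies of $B_\epsilon$, every $\epsilon$-cut sparsifier keeps at least $\eta\ell/\epsilon^4$ edges. The only real work is to move the ``$9/10$-fraction'' from the space of such graphs --- where Corollary~\ref{cor:andoni} is stated --- to the space of valid inputs $x$, and then to count nonzero strings.

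For the transfer I would first fix the meaning of ``valid'': $x$ is a list of $\ell/\epsilon^4$ strings $x^{(k)}_{i,j}\in\{0,1\}^N$ each containing at most one $1$, and such that each $\ORf(x^{(k)})$ is the bipartite adjacency matrix of a copy of $B_\epsilon$ (so every left node has degree exactly $1/(2\epsilon^2)$), which is precisely the reading under which $B(x)$ is a disjoint union of $\ell$ copies of $B_\epsilon$. I would then check that the map $x\mapsto B(x)$ is balanced over its image: given such a graph $B$, a valid $x$ maps to $B$ exactly when every non-edge of $B$ carries the all-zero string (one choice) and every edge of $B$ carries one of the $N$ weight-one strings ($N$ choices), so the number of preimages is $N^{|E(B)|}$, which does not depend on $B$ because every disjoint union of $\ell$ copies of $B_\epsilon$ has the same number of edges. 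Therefore the uniform distribution on valid inputs pushes forward to the uniform distribution on disjoint unions of $\ell$ copies of $B_\epsilon$, and the ``bad'' set of at most a $1/10$-fraction of graphs in Corollary~\ref{cor:andoni} pulls back to at most a $1/10$-fraction of valid inputs.

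For the counting, I would take a valid $x$ outside the bad set and an arbitrary $\epsilon$-cut sparsifier $H$ of $G(x)$. By the claim just above, the edges of $H$ inside $B(x)$ form an $\epsilon$-cut sparsifier of $B(x)$, and since $B(x)$ lies in the good set, Corollary~\ref{cor:andoni} forces this to have at least $\eta\ell/\epsilon^4$ edges. Each such edge $(l_i,r_j)$ in some copy $B^{(k)}$ certifies that $x^{(k)}_{i,j}$ is a nonzero string, and distinct edges give distinct strings; since the total number of nonzero strings of $x$ equals the number of edges of $B(x)$, which is $\ell/(2\epsilon^4)$, the fraction of nonzero strings identified by $H$ is at least $(\eta\ell/\epsilon^4)/(\ell/(2\epsilon^4))=2\eta$, a universal positive constant. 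This is the claim.

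The step I expect to be the main obstacle is the balancedness argument together with nailing down ``valid'': the pushforward identity needs every graph in the image of $x\mapsto B(x)$ to have the same edge count (so the fibers are equinumerous), and one must keep in mind throughout that identifying an edge of $B(x)$ reveals only \emph{that} a string $x^{(k)}_{i,j}$ is nonzero, not which of its $N$ bits is set --- which is exactly why both this corollary and the query lower bound built on top of it are phrased in terms of nonzero strings rather than nonzero bits. The rest is routine bookkeeping with the parameters $\ell=\epsilon^2n/2$, $N=2\epsilon^2 m/n$, and the edge count $1/(2\epsilon^4)$ of $B_\epsilon$.
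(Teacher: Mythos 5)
Your proposal is correct and follows essentially the same route as the paper, which likewise obtains the corollary by combining the preceding Claim (a sparsifier of $G(x)$ contains a sparsifier of $B(x)$) with Corollary~\ref{cor:andoni} and then counting: $\eta\ell/\epsilon^4$ identified edges out of $\ell/(2\epsilon^4)$ nonzero strings gives the $2\eta$-fraction. The one point where you go beyond the paper is the explicit pushforward argument (equinumerous fibers $N^{|E(B)|}$) transferring the $9/10$-fraction from graphs to valid inputs $x$; the paper asserts this transfer without comment, so your extra care is a welcome, and correct, filling-in of a detail rather than a different approach.
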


\subsection{Finding the Sparsifier}
 
Now we wish to prove a lower bound on identifying nonzero strings, thereby proving a lower bound on the complexity of sparsifying $G(x)$.
To this end, we first formalize the new problem.
 
\begin{definition}
The problem $\findb_{r,c}$ takes as input an $r \times c$ Boolean matrix, with each row containing exactly $c/2$ nonzero bits.
A correct output consists of the indices of a $2\eta$-fraction of the nonzero bits.
\end{definition}
 
\noindent
Note that this is a \emph{relational} problem: for every input there are many different possible correct outputs. 
Formally, a relational problem $f$ corresponds to a set $f\subseteq S\times T$, where $S\subseteq \{0,1\}^M$ is the set of possible inputs and $T$ denotes the set of possible outputs. Output $t$ is deemed a correct output for $f$ on input $x\in S$, if and only if $(x,t)\in f$.
For our relational problem $\findb_{r,c}$, $S$ is the allowed set of $r \times c$ Boolean input matrices, and each $t\in T$ corresponds to a set of $\eta rc$ indices of that matrix. Output $t$ is correct for input matrix $x$, if all indices in $t$ correspond to 1-bits in~$x$. 

Using information theory, we can prove a lower bound on the bounded-error quantum query complexity of this problem, i.e., on the number of queries to~$x$ required by a quantum algorithm that returns a correct output with probability at least $2/3$.

\begin{claim} \label{claim:bnd-findb}
The bounded-error quantum query complexity of $\findb_{r,c}$ on any constant fraction of its valid inputs is $\widetilde{\Omega}(rc)$.
\end{claim}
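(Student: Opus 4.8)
The plan is to prove the lower bound via an information-theoretic argument: a bounded-error quantum algorithm making $q$ queries learns at most $O(q)$ bits about the input (formally, one can extract $O(q)$ bits of ``certificate'' information), yet a correct output for $\findb_{r,c}$ reveals $\Omega(rc)$ bits of information about $x$, forcing $q = \widetilde{\Omega}(rc)$. First I would set up the input distribution: take $x$ to be a uniformly random $r \times c$ Boolean matrix subject to each row having exactly $c/2$ ones (independent across rows). Such a matrix carries $\Theta(rc)$ bits of entropy, since $\log \binom{c}{c/2} = \Theta(c)$ per row. A correct output $t$ for $\findb_{r,c}$ specifies $\eta r c$ positions that are guaranteed to be $1$-bits; conditioning on $t$ being a valid output removes $\Omega(rc)$ bits of entropy from $x$ — intuitively, $t$ certifies that a constant fraction of the entries are $1$, which is very unlikely a priori and hence highly informative.

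Next I would invoke the standard ``quantum algorithms can't learn too much from few queries'' principle. The cleanest route is the information-theoretic bound of the form: if an algorithm makes $q$ quantum queries to $x$ and produces output $t$, then $I(x : t) = O(q \log M)$ (up to polylog factors this is $\widetilde{O}(q)$), where $M$ is the input length $rc$. This follows from a hybrid/progress-measure argument — each query can increase the mutual information between the algorithm's register and the input by only $O(\log M)$ — and is exactly the kind of statement that ``using information theory, we can prove a lower bound'' signals. Combining the two bounds: since a correct output must have $I(x:t) = \Omega(rc)$ while $q$ queries give $I(x:t) = \widetilde{O}(q)$, we conclude $q = \widetilde{\Omega}(rc)$. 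To handle bounded error (success probability $2/3$ rather than $1$) and the ``any constant fraction of valid inputs'' clause, I would run the argument on the subensemble of inputs where the algorithm succeeds — this is still a constant fraction, and by an averaging/Fano-type argument the mutual-information lower bound degrades only by a constant factor, so the $\widetilde{\Omega}(rc)$ bound survives.

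The main obstacle I anticipate is making the step ``a correct output reveals $\Omega(rc)$ bits'' fully rigorous in the presence of bounded error. The subtlety is that the algorithm is allowed to err on up to a third of the inputs, and on those inputs its output may be arbitrary and hence conceivably low-information; one must argue that on the good inputs the output genuinely forces the entropy drop, and that this is not undone by averaging over the bad inputs. The clean way to do this is to condition the entire analysis on the event ``output is correct'' (which has constant probability), relate the conditional mutual information to the unconditional one up to a constant and an additive $H(\text{correct})=O(1)$ term, and then lower-bound the conditional version directly by a counting argument: among all matrices consistent with the $\eta r c$ revealed $1$-positions, the number is smaller by a factor $2^{-\Omega(rc)}$ than the total, since forcing a constant fraction of a random balanced row to be $1$ at prescribed locations costs $\Omega(c)$ bits per row (this is where the ``exactly $c/2$ nonzero bits per row'' hypothesis is used — it makes a random entry equal to $1$ with probability roughly $1/2$, so pinning $\Omega(c)$ of them down is genuinely expensive). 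A secondary, more technical obstacle is pinning down the correct form of the per-query information-increase lemma for quantum (as opposed to classical) queries; I would cite or adapt the quantum information-theoretic argument underlying hybrid-method lower bounds rather than reprove it, and absorb the $\log$ factors into the $\widetilde{\Omega}$.
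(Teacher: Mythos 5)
Your proposal is correct and follows essentially the same route as the paper: lower-bound $H(A)$ by counting balanced row configurations, upper-bound $H(A|B)$ by conditioning on a correctness indicator and counting matrices consistent with the revealed $1$-positions, and conclude via the standard fact that $T$ quantum queries yield mutual information at most $O(T\log(rc))$. The only cosmetic difference is that you justify the per-query information bound by a hybrid/progress argument, whereas the paper derives it from Holevo's theorem via a query-to-communication simulation; both give the same statement.
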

\begin{proof}
This lower bound follows from combining an information-theoretic lower bound on how much information the algorithm extracts about its input, with Holevo's quantum information bound.
More precisely, we examine the mutual information $I(A;B)$ between a uniformly random input matrix~$A$, and an output $B$ for $\findb_{r,c}$ that is correct with probability at least $2/3$.
We refer the reader to the textbook of Cover and Thomas \cite{cover2012elements} for precise definitions.
By Holevo's theorem~\cite{holevo1973bounds}, this mutual information lower bounds the quantum query complexity times $O(\log(rc))$.\footnote{This is a fairly standard argument. We can consider a communication protocol where the $T$-query quantum algorithm implements each quantum query using one round of $O(\log(rc))$ qubits of communication to and from another party that holds $A$ (by sending the query and answer registers back and forth), thus learning $I(A;B)$ bits of information about $A$ while only communicating $O(T\log(rc))$ qubits. It now follows from \cite{holevo1973bounds} that $T \in \Omega(I(A;B)/\log(rc))$.}
 
We lower bound $I(A;B)=H(A)-H(A|B)$ by lower bounding $H(A)$ and upper bounding $H(A|B)$.
To lower bound $H(A)$, let $S$ denote the set of all inputs, consisting of all $r \times c$ Boolean matrices with all row sums equal to $c/2$, and let $S' \subseteq S$ denote any constant fraction subset of $S$ (say $|S'| = \beta |S|$ for some $\beta>0$).
Then $A$ corresponds to a uniformly random element of $S'$, and we have
\begin{align*}
H(A)
= \log |S'|
&= \log (\beta |S|) \\
&= \log\bigg( \beta \prod_{j=1}^r \binom{c}{c/2} \bigg) \\
&= r \log \binom{c}{c/2} + \log\beta
\geq r(c - O(\log c)).
\end{align*}
To upper bound $H(A|B)$, we use that with probability at least $2/3$ the output $B$ is correct, revealing a $2\eta$-fraction of the nonzero entries of $A$ and hence significantly reducing the entropy of $A$.
Let $E$ denote an indicator bit which equals 1 if $B$ is a correct output, and 0 otherwise.
Then we can bound
\begin{align*}
H(A|B)
&\leq H(A|B,E) + H(E)\\
&= \Pr(E=1) \, H(A|B,E=1) + \Pr(E=0) \, H(A|B,E=0) + H(E) \\
&\leq \frac{2}{3} H(A|B,E=1) + \frac{1}{3}rc + 1.
\end{align*}
We now want to upper bound $H(A|B,E=1)$.
Fix integers $\{d(j)\}_{j\in[r]}$ such that $\sum_j d(j) = \eta rc$, and condition on the event $D_{\{d(j)\}}$ that $B$ reveals $d(j)$ entries of the $j$-th row of $A$.
Let $S_{\{d(j)\}} \subseteq S'$ denote the set of inputs which are compatible with $\{d(j)\}$.
We can bound
\begin{align*}
H(A|B,E=1,D_{\{d(j)\}})
& \leq \log|S_{\{d(j)\}}|
\leq \log\bigg( \prod_{j=1}^r \binom{c-d(j)}{c/2-d(j)} \bigg)\\
& = \sum_{j=1}^r \log \binom{c-d(j)}{c/2-d(j)}
\leq \sum_{j=1}^r (c - d(j))
= r(1-\eta)c.
\end{align*}
This upper bound holds irrespective of the specific setting of $\{d(j)\}$.
Hence, taking the expectation over all possible settings of $\{d(j)\}$, we obtain
\[
H(A|B,E=1) = \Exp_{\{d(j)\}} \left[ H(A|B,E=1,D_{\{d(j)\}}) \right] \leq r(1-\eta)c.
\]
Our lower bound on $H(A)$ and upper bound on $H(A|B)$ together imply
\[
I(A;B)
\geq \frac{2}{3}r(\eta c - O(\log c)) - 1
\in \Omega(rc),
\]
which finalizes our proof.
\end{proof}
 
Next we will compose the relational problem $\findb_{r,c}$ with $r \times c$ copies of the $\ORf_N$-function on $N$ bits each, so that every single input bit of $\findb_{r,c}$ is now described by an $\ORf_N$.
We denote this composed problem as $\findb_{r,c} \circ \ORf^{rc}_N$.
To ensure that the input of the composed problem is a valid input for $\findb_{r,c}$, we must restrict it to $r \times c$ matrices of strings, each carrying $N$ bits, so that exactly $c/2$ strings per row have one nonzero entry, and the remaining strings only have zeros.
 
The composed problem $\findb_{r,c} \circ \ORf^{rc}_N$ corresponds to the composition of a relational problem (multiple outputs are correct for $\findb_{r,c}$) and a function.
Bounds on the quantum query complexity of the composition of two functions are well understood \cite{hoyer2007negative}.
When the outer problem is a relational problem, however, no such result seemed to be known.
Fortunately, prompted by our question, Belovs and Lee~\cite{belovs2019relational} very recently proved the following theorem.

\begin{theorem}[Corollary~27 of~\cite{belovs2019relational}] \label{thm:adv-comp}
Let $f\subseteq S\times T$, with $S\subseteq\{0,1\}^M$, be a relational problem with bounded-error quantum query complexity~$L$.
Assume that $f$ is efficiently verifiable, in the sense that there is a bounded-error quantum algorithm that, given some $t \in T$ and oracle access to $x \in S$,
verifies whether $(x,t) \in f$ using $o(L)$ queries to~$x$.
Let $g:D\to\{0,1\}$, with $D\subseteq\{0,1\}^N$, be a Boolean function whose bounded-error quantum query complexity is $Q$.
Then the bounded-error quantum query complexity of the relational problem $f \circ g^M$, restricted to inputs $x\in\{0,1\}^{NM}$ such that $g^M(x) \in S$, is $\Theta(LQ)$.
\end{theorem}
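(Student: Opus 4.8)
The plan is to prove the two inequalities $Q(f\circ g^M)=O(LQ)$ and $Q(f\circ g^M)=\Omega(LQ)$ separately, with the \emph{generalized (negative-weight) adversary bound} $\mathrm{ADV}^{\pm}$, extended from functions to relational problems, as the common tool. For a relational problem $h\subseteq S\times T$ I would define $\mathrm{ADV}^{\pm}(h)$ by the usual adversary SDP, but declare a pair of inputs $(x,y)$ to be \emph{hard} when their valid-output sets $\{t:(x,t)\in h\}$ and $\{t:(y,t)\in h\}$ are disjoint, and require every adversary matrix to be supported on hard pairs. Three facts then drive everything: (i) $\mathrm{ADV}^{\pm}(h)$ lower bounds the bounded-error quantum query complexity of $h$ even when $h$ is relational --- the standard potential-function argument goes through, since on a hard pair a correct algorithm must output differently in the two runs with constant probability and hence end in noticeably different states; (ii) for the \emph{Boolean function} $g$ one has $\mathrm{ADV}^{\pm}(g)=\Theta(Q(g))=\Theta(Q)$ by Reichardt's theorem that the adversary bound equals quantum query complexity; and (iii) $\mathrm{ADV}^{\pm}$ behaves multiplicatively under composition with a function.

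For the lower bound I would first prove the composition estimate $\mathrm{ADV}^{\pm}(f\circ g^M)\ge \mathrm{ADV}^{\pm}(f)\cdot\mathrm{ADV}^{\pm}(g)$ by tensoring an optimal outer solution $\Gamma_f$ with optimal inner solutions $\Gamma_g$, one per coordinate. The point is that a hard pair $(x,y)$ for $f\circ g^M$ induces a hard pair $(g^M(x),g^M(y))$ for $f$ together with, on each coordinate $i$ where $g^M(x)_i\neq g^M(y)_i$, a hard pair for $g$ on the $i$-th block; the tensored matrix $\Gamma$ is therefore supported exactly on the hard pairs of $f\circ g^M$, its norm equals $\|\Gamma_f\|\cdot\|\Gamma_g\|$, and the coordinate-wise norms factor, $\|\Gamma\circ\Delta_{i,j}\|\le \|\Gamma_f\circ\Delta_i\|\cdot\|\Gamma_g\circ\Delta_j\|$. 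Doing this norm bookkeeping so that the relational hard-pair constraint is respected throughout is the technical core, and it parallels the function-composition argument of H{\o}yer, Lee and \v{S}palek. Combining the estimate with (i) and (ii) gives $Q(f\circ g^M)=\Omega(\mathrm{ADV}^{\pm}(f)\cdot Q)$, so it remains to show $\mathrm{ADV}^{\pm}(f)=\Omega(L)$. This is exactly where efficient verifiability enters: from an optimal \emph{dual} adversary solution for $f$ one builds, via the span-program algorithm, a bounded-error procedure using $O(\mathrm{ADV}^{\pm}(f))$ queries; for a function this already outputs the answer, but for a relational $f$ one appends the $o(L)$-query verifier to certify and extract a valid $t\in T$, yielding an algorithm for $f$ using $O(\mathrm{ADV}^{\pm}(f))+o(L)$ queries. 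Since by definition no algorithm solves $f$ with $o(L)$ queries, this forces $\mathrm{ADV}^{\pm}(f)=\Omega(L)$, hence $Q(f\circ g^M)=\Omega(LQ)$.

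For the upper bound I would substitute an optimal $O(Q)$-query algorithm $\mathcal{A}_g$ for $g$ into an optimal $L$-query algorithm $\mathcal{A}_f$ for $f$, one invocation per oracle call, on the appropriate $N$-bit block. Using robustness of quantum query algorithms against bounded-error subroutines this costs only $O(LQ)$; alternatively, naive amplification of $\mathcal{A}_g$ costs $O(LQ\log L)$, and the logarithm is shaved by the dual route $Q(f\circ g^M)=O(\mathrm{ADV}^{\pm}(f\circ g^M))=O(\mathrm{ADV}^{\pm}(f)\,\mathrm{ADV}^{\pm}(g))=O(LQ)$, using the matching $\le$ direction of composition and the trivial bound $\mathrm{ADV}^{\pm}(f)\le O(Q(f))=O(L)$. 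The last step again needs efficient verifiability, now of the composed problem $f\circ g^M$ --- but this follows from that of $f$: run the $o(L)$-query $f$-verifier and implement each of its queries to $g^M(x)$ by $\mathcal{A}_g$, for a total of $o(L)\cdot\tO(Q)$ queries, which the $o(L)$-verifiability hypothesis is precisely designed to keep below $O(LQ)$ (absorbing the polylogarithmic overhead of inner error reduction).

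The main obstacle is twofold, and both parts stem from $f$ being relational rather than a function: (A) establishing the composition inequality $\mathrm{ADV}^{\pm}(f\circ g^M)=\Theta(\mathrm{ADV}^{\pm}(f)\,\mathrm{ADV}^{\pm}(g))$ --- the matrix-norm estimates in the adversary composition proof must be redone while the outer solution ranges over the disjoint-valid-output hard pairs instead of over level sets of a function; and (B) the converse relation $\mathrm{ADV}^{\pm}(f)=\Omega(L)$, which demands turning a dual adversary solution into an actual algorithm for $f$ even though the span program singles out no canonical correct output, so that the $o(L)$-query verifier is indispensable. Everything else --- the adversary lower bound for relational problems, $\mathrm{ADV}^{\pm}(g)=\Theta(Q)$, and the substitution-based upper bound --- is standard or a routine adaptation of the Boolean-function case.
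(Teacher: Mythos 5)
This theorem is not proved in the paper at all --- it is imported verbatim as Corollary~27 of Belovs and Lee \cite{belovs2019relational}, who established it in response to the authors' question, so there is no in-paper proof to compare against. Your reconstruction follows the same strategy as that external proof (a relational adversary bound supported on pairs with disjoint valid-output sets, its multiplicative composition with an inner function \`a la H{\o}yer--Lee--\v{S}palek, and the use of the $o(L)$-query verifier to turn a dual adversary solution into an algorithm for $f$ and thereby force $\mathrm{ADV}^{\pm}(f)=\Omega(L)$), and it correctly isolates the two genuinely technical steps, though it sketches rather than executes them.
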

 
We instantiate this by setting $f$ to the relational problem $\findb_{r,c}$, with $S$ a $9/10$-fraction of its valid inputs.
By Claim~\ref{claim:bnd-findb} its bounded-error quantum query complexity is  $L\in\widetilde{\Omega}(rc)$.
Using Grover's algorithm, we can efficiently verify an output, using $O(\sqrt{rc})\in o(L)$ queries to $f$'s $M$-bit input.
We let $g$ be the function $\ORf_N$ restricted to the set $D$ of all $N$-bit inputs of Hamming weight 0 or 1; its bounded-error quantum query complexity is $\Theta(\sqrt{N})$.
Plugging this into Theorem~\ref{thm:adv-comp} yields the following corollary.

\begin{corollary} \label{cor:lower-bound-bit}
Solving the problem $\findb_{r,c} \circ \ORf^{rc}_N$ has bounded-error quantum query complexity $\widetilde{\Omega}(rc\sqrt{N})$.
This holds even when the inputs are restricted to a constant fraction of the valid inputs.
\end{corollary}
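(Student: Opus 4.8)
The plan is to derive the corollary directly from the Belovs--Lee composition theorem (Theorem~\ref{thm:adv-comp}), with $\findb_{r,c}$ as the outer relational problem and $\ORf_N$ (on the Hamming-weight-$\le 1$ promise) as the inner function. Concretely I would take $M=rc$, let $S$ be the $9/10$-fraction of valid $\findb_{r,c}$-inputs singled out in the previous subsection (so that Corollary~\ref{cor:reduction} and the hidden-sparsifier reduction of Section~\ref{sec:hiding-sparsifier} apply to exactly this set), and set $f=\findb_{r,c}\subseteq S\times T$. Claim~\ref{claim:bnd-findb} applies to \emph{any} constant-fraction subset of the valid inputs, in particular to this $S$, so $f$ has bounded-error quantum query complexity $L\in\widetilde{\Omega}(rc)$.

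Next I would verify the two remaining hypotheses of Theorem~\ref{thm:adv-comp}. For efficient verifiability: a candidate output $t\in T$ is a list of index positions, so an algorithm first rejects using no queries if $t$ is malformed (wrong cardinality, repeated or out-of-range indices), and otherwise runs Grover search over the $O(rc)$ positions listed in $t$ for one whose bit in $x$ equals $0$, accepting iff none is found. This costs $O(\sqrt{rc})$ queries, which is $o(L)$ since $L=\widetilde{\Omega}(rc)$ and $\sqrt{rc}$ is polynomially smaller than $rc$. For the inner function: take $g=\ORf_N$ restricted to $D=\{x\in\{0,1\}^N:|x|\le 1\}$; its bounded-error quantum query complexity is $Q=\Theta(\sqrt{N})$, with the upper bound from Grover search and the lower bound from the standard unstructured-search lower bound, which already holds under the weight-$\le 1$ promise (distinguishing the all-zeros string from a weight-one string needs $\Omega(\sqrt{N})$ quantum queries).

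Plugging $L=\widetilde{\Omega}(rc)$ and $Q=\Theta(\sqrt{N})$ into Theorem~\ref{thm:adv-comp} then gives that $f\circ g^M$, restricted to inputs $x\in\{0,1\}^{NM}$ with $g^M(x)\in S$, has bounded-error quantum query complexity $\Theta(LQ)=\widetilde{\Theta}(rc\sqrt{N})$, hence $\widetilde{\Omega}(rc\sqrt{N})$. Since this restriction is precisely to a constant ($9/10$) fraction of the valid inputs of the composed problem, the same bound establishes the ``even when restricted to a constant fraction'' clause.

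The only real care needed is bookkeeping rather than mathematics: ensuring the constant-fraction set $S$ used here is identical to the one feeding the hidden-sparsifier reduction, so that the eventual reduction from sparsification inherits a lower bound on a constant fraction of inputs; and confirming the $O(\sqrt{rc})$ verification cost is genuinely $o(L)$ despite the polylog slack in $L$. Neither is a genuine obstacle, and once both are in place the corollary is immediate from Theorem~\ref{thm:adv-comp}.
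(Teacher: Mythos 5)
Your proposal is correct and follows essentially the same route as the paper: instantiate the Belovs--Lee composition theorem with $f=\findb_{r,c}$ on the $9/10$-fraction of valid inputs (lower bound $L\in\widetilde{\Omega}(rc)$ from Claim~\ref{claim:bnd-findb}, Grover verification in $O(\sqrt{rc})\in o(L)$ queries) and $g=\ORf_N$ on the Hamming-weight-$\leq 1$ promise with $Q=\Theta(\sqrt{N})$. The extra bookkeeping you flag (consistency of the constant-fraction set and the $o(L)$ check) is handled implicitly in the paper and poses no issue.
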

 
Finally, using the graph embedding of the previous section, we show in the claim below how to solve this composed relational problem by constructing a cut sparsifier of an associated graph.
Combining this with the lower bound from Corollary~\ref{cor:lower-bound-bit} then yields our lower bound for finding a cut sparsifier (Theorem~\ref{thm:quantum-lower-bnd}).
 
\begin{claim}
Fix $n$, $m$, $\epsilon \geq \sqrt{n/m}$, and set $r = n/2$, $c = 1/\epsilon^2$ (number of potential neighbors) and $N = 2\epsilon^2 m/n$.
For at least a $9/10$-fraction of all valid inputs $x$, we can reduce the problem $(\findb_{r,c} \circ \ORf^{rc}_N)(x)$ to finding an $\epsilon$-cut sparsifier of $G(n,m,\epsilon,x)$.
Specifically, any $T$-query quantum algorithm that sparsifies $G(n,m,\epsilon,x)$ with success probability at least $2/3$ can solve $(\findb_{r,c} \circ \ORf^{rc}_N)(x)$ with success probability at least $2/3$ on at least a $9/10$-fraction of its valid inputs using $O(T)$ queries.
\end{claim}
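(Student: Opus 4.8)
The plan is to chain the embedding of Section~\ref{sec:hiding-sparsifier} with the unsparsifiability corollary. First I would observe that the input of the composed problem $\findb_{r,c} \circ \ORf^{rc}_N$, with the stated parameters $r = n/2$, $c = 1/\epsilon^2$, $N = 2\epsilon^2 m/n$, is exactly an $m$-bit string $x$ (since $rcN = m$), and that the valid inputs of this composed problem are in bijection with the valid inputs $x$ of the embedding: each of the $r = n/2$ rows has exactly $c/2 = 1/(2\epsilon^2)$ nonzero strings, matching the fact that each left node of a copy $B^{(k)}$ of $B_\epsilon$ connects to exactly half of the $1/\epsilon^2$ right nodes, and $rc = n/(2\epsilon^2)$ equals the total number of potential edges across the $\ell = \epsilon^2 n/2$ copies. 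Given a $T$-query quantum algorithm $\mathcal{A}$ that outputs an explicit $\epsilon$-cut sparsifier $H$ of $G(n,m,\epsilon,x) = G(x)$ with probability at least $2/3$, I would use Lemma~\ref{lem:embedded-sparsifier} to convert it into a $T$-query algorithm $\mathcal{A}'$ with oracle access to $x$: each adjacency-list query of $\mathcal{A}$ to $G(x)$ is simulated coherently by a single query to $x$ plus $\polylog$ auxiliary gates. Hence $\mathcal{A}'$ makes $O(T)$ queries to $x$ and produces the same sparsifier $H$ with probability at least $2/3$.

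Next I would describe how to read a correct answer for the composed problem off of $H$ without any further queries. The partition of the mother graph $G = G_1 \cup G_2$ into node sets $L_1, R_1, L_2, R_2$ is input-independent and known, so we can pick out from $H$ exactly those edges with one endpoint in $L_1$ and one in $R_1$. By the flip construction of $G(x)$, the only edges of $G(x)$ of this type are the retained (non-flipped) $G_1$-edges, which are precisely the unit-weight edges of the subgraph $B(x)$; a flipped edge always connects $L_1$ to $L_2$ or $R_1$ to $R_2$. An edge between the $i$-th left and $j$-th right node of copy $B^{(k)}$ therefore certifies that the string $x^{(k)}_{i,j}$ is nonzero, i.e., that the corresponding entry of the $\findb_{r,c}$-matrix equals $1$. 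We output the list of indices $(k,i,j)$ of the strings certified in this way.

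For correctness I would invoke the claim preceding Corollary~\ref{cor:reduction} together with Corollary~\ref{cor:reduction}: removing the zero-weight edges from $H$ leaves an $\epsilon$-cut sparsifier of $B(x)$, and for at least a $9/10$-fraction of all valid $x$ this sparsifier must contain at least $\eta n/\epsilon^2$ edges of $B(x)$, i.e., it identifies a $2\eta$-fraction of the nonzero strings. Thus, whenever $\mathcal{A}$ succeeds and $x$ belongs to that $9/10$-fraction, the list output by $\mathcal{A}'$ is a correct answer for $(\findb_{r,c} \circ \ORf^{rc}_N)(x)$. Consequently $\mathcal{A}'$ solves the composed problem with success probability at least $2/3$ on at least a $9/10$-fraction of its valid inputs, using $O(T)$ queries.

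I do not expect a real obstacle: the heavy lifting is already contained in the construction of the embedding $G(x)$ and in the unsparsifiability bound extracted from Andoni et al. The two points that merit care are (i) that the adjacency-list simulation of Lemma~\ref{lem:embedded-sparsifier} is reversible and hence valid as a quantum oracle — it is, being just ``compute the address $i$, query $x(i)$, branch on the returned bit to produce the neighbor and weight, uncompute $i$'' — and (ii) the bookkeeping of the bijection and parameters above, which I would state explicitly so that ``valid input of the composed problem'' and ``valid input $x$ of the embedding'' can be used interchangeably throughout.
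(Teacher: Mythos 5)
Your proposal is correct and follows the same route as the paper's proof: simulate each adjacency-list query to $G(x)$ by one query to $x$ via Lemma~\ref{lem:embedded-sparsifier}, and use the claim preceding Corollary~\ref{cor:reduction} together with Corollary~\ref{cor:reduction} to read a valid $\findb_{r,c}\circ\ORf^{rc}_N$ answer off the unit-weight $L_1$--$R_1$ edges of the output sparsifier. The paper's version is terser; your parameter bookkeeping ($rcN=m$, the bijection between valid inputs) and the explicit extraction of the answer are exactly the details it leaves implicit.
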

\begin{proof}
Consider a valid input $x \in \{0,1\}^m$ and the associated graph $G(x) = G(n,m,\epsilon,x)$ as defined in Section \ref{sec:hiding-sparsifier}.
For a $9/10$-fraction of these inputs, we know by Corollary \ref{cor:reduction} that we can solve $(\findb_{r,c} \circ \ORf^{rc}_N)(x)$ by constructing an $\epsilon$-cut sparsifier of $G(x)$.
Moreover, by Lemma \ref{lem:embedded-sparsifier} we can query the adjacency list of $G(x)$ using a single query to $x$.
Hence, if we output with probability at least $2/3$ an $\epsilon$-cut sparsifier of $G(x)$ using $T$ queries to its adjacency list, then we also solve $(\findb_{r,c} \circ \ORf^{rc}_N)(x)$ with probability at least $2/3$ using $T$ queries to $x$.
\end{proof}

\section{Applications} \label{sec:applications}
 
In this section we non-exhaustively list some applications of our quantum sparsification algorithm, proving speedups in cut approximation and Laplacian solving.
Given the broad applicability of sparsification in classical algorithms, we expect that more applications will follow.
 
\subsection{Cut Approximation}
A range of cut approximation algorithms have a near-linear runtime in the number of edges in the graph.
In the following we discuss a number of these, and show how our quantum algorithm for cut sparsification allows to speed them up.
 
\subsubsection{\maxcut{}}
The \maxcut{} problem for a weighted graph $G = (V,E,w)$ asks for a cut $(S,S^c)$ that maximizes the total weight $\val_G(S)$ of the cut.
Its decision version is one of the 21 problems famously shown to be NP-complete by Karp~\cite{karp1972reducibility}.
The best current approximation factor of this problem is roughly $.8785$, as was famously proven by Goemans and Williamson~\cite{goemans1995improved} using an SDP relaxation.
Khot et al.~\cite{khot2007optimal} showed that this approximation factor is optimal under the unique games conjecture.
 
In later work, Arora and Kale~\cite{arora2016combinatorial} showed how to solve the Goemans-Williamson SDP in time $\tO(m)$, using amongst others the cut sparsification algorithm by Bencz\'ur and Karger~\cite{benczur1996approximating}.
We find a quantum speedup by replacing the Bencz\'ur-Karger algorithm by our quantum sparsification algorithm.
Specifically we construct an $\epsilon$-spectral sparsifier $H$, for some small but constant $\epsilon>0$, in time $\tO(\sqrt{mn}/\epsilon)$.
Since all cuts are preserved up to a multiplicative error $\epsilon$, we can apply the Arora-Kale algorithm on $H$ to retrieve a \maxcut{} approximation factor of at least $.8785 (1-\epsilon)$.
Choosing $\epsilon$ sufficiently small we find the following claim.
 
\begin{claim}
There exists a quantum algorithm that outputs a $.878$-approximate max cut of a weighted graph in time $\tO(\sqrt{mn})$.
\end{claim}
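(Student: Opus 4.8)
The plan is to reduce, via sparsification, to running a classical near‑linear‑time max‑cut algorithm on a small graph. \emph{Step 1 (sparsify).} Fix a small constant $\epsilon_0>0$ to be pinned down at the end, and invoke Theorem~\ref{thm:main} to construct, with high probability and in time $\tO(\sqrt{mn}/\epsilon_0)=\tO(\sqrt{mn})$, an $\epsilon_0$‑spectral sparsifier $H$ of $G$, output explicitly as a weighted edge list with $\tO(n/\epsilon_0^2)=\tO(n)$ edges. Since every spectral sparsifier is a cut sparsifier, $(1-\epsilon_0)\val_G(S)\le\val_H(S)\le(1+\epsilon_0)\val_G(S)$ for all $S\subseteq V$; in particular $H$ has the same vertex set as $G$, so any cut of $H$ is a cut of $G$, and the maximum cut values satisfy $\mathrm{maxcut}(H)\ge(1-\epsilon_0)\,\mathrm{maxcut}(G)$.

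\emph{Step 2 (solve max cut on $H$).} Run the Arora--Kale algorithm~\cite{arora2016combinatorial}, which approximately solves the Goemans--Williamson SDP and applies hyperplane rounding, on the explicit graph $H$. As $H$ has only $\tO(n)$ edges this is a purely classical computation on the stored sparsifier costing $\tO(n)$ time, negligible next to Step~1. A single run returns a cut $(S,S^c)$ with $\Exp[\val_H(S)]\ge\alpha\cdot\mathrm{maxcut}(H)$ for a constant $\alpha>.8785$ (solving the SDP to within any constant factor costs only a further constant in the ratio). To promote this to a high‑probability statement, I would use reverse Markov: since $\val_H(S)\le\mathrm{maxcut}(H)$ always, a single run has $\val_H(S)\ge(\alpha-\delta)\,\mathrm{maxcut}(H)$ with probability $\Omega(\delta)$ for any constant $\delta>0$. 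Repeating the rounding $O(\log n)$ times and outputting the candidate cut of largest $\val_H$‑value (each such evaluation again costs $\tO(n)$ since $H$ is sparse) then yields, with high probability, a cut $S$ with $\val_H(S)\ge(\alpha-\delta)\,\mathrm{maxcut}(H)$.

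\emph{Step 3 (transfer back to $G$ and total cost).} Combining the two estimates,
\[
\val_G(S)\;\ge\;\frac{\val_H(S)}{1+\epsilon_0}\;\ge\;\frac{(\alpha-\delta)(1-\epsilon_0)}{1+\epsilon_0}\,\mathrm{maxcut}(G).
\]
Since $\alpha>.8785$, choosing $\delta$ and $\epsilon_0$ to be sufficiently small constants makes the prefactor at least $.878$, so the returned $S$ is a $.878$‑approximate max cut of $G$. The total running time is $\tO(\sqrt{mn})+\tO(n)=\tO(\sqrt{mn})$, and the failure events of the sparsifier, of the $O(\log n)$ max‑cut runs, and of the approximation bound each have probability $O(1/n)$, so the algorithm succeeds with high probability.

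The construction is modular, so there is no deep obstacle; the two points needing care are (i) the bookkeeping that a cut near‑optimal \emph{for} $H$ stays a $.878$‑approximation \emph{for} $G$, which must absorb both the loss $\mathrm{maxcut}(H)\ge(1-\epsilon_0)\mathrm{maxcut}(G)$ and the loss $\val_G(S)\ge\val_H(S)/(1+\epsilon_0)$ into the gap between $.8785$ and $.878$; and (ii) upgrading the in‑expectation Goemans--Williamson guarantee to one holding with high probability, which I handle above via a reverse‑Markov argument together with the observation that candidate cuts can be scored cheaply using the \emph{sparse} proxy $\val_H$ rather than the $\Omega(m)$‑time quantity $\val_G$.
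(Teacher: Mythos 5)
Your proposal is correct and follows essentially the same route as the paper: construct a constant-$\epsilon$ spectral (hence cut) sparsifier quantumly in time $\tO(\sqrt{mn})$, run the Arora--Kale algorithm classically on the $\tO(n)$-edge sparsifier, and absorb the two multiplicative $\epsilon$-losses into the gap between $.8785$ and $.878$. Your additional care in boosting the in-expectation Goemans--Williamson guarantee to a high-probability one (via reverse Markov, repetition, and scoring candidate cuts on the sparse graph) is a reasonable refinement of a detail the paper leaves implicit.
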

 
We note that for \emph{unweighted} graphs, \maxcut{} can be approximately solved classically in time $\tO(n)$.
If we wish to output the \maxcut{} bipartition, then this is trivially optimal both for classical and quantum algorithms, and hence no quantum speedup is possible.
The $\tO(n)$ classical algorithm follows from the fact that for unweighted graphs a trivial sparsification procedure suffices to approximately preserve the \maxcut{} value (pick $\tO(n)$ edges uniformly at random---see for instance~\cite[Section 2]{trevisan2012max}).
In the adjacency list model, this can be done classically in time $\tO(n)$.
For weighted graphs this approach no longer works, as in this case the edges have to be sampled with probability proportional to their edge weights.
This cannot be done classically in time $o(m)$ since we could for instance hide a single heavy edge among $m$ light edges.
From all the cut problems we consider, this is the only one for which a classical sublinear algorithm with multiplicative error exists - albeit only for the unweighted case.
 
\subsubsection{\mincut{}}
Given a weighted graph $G = (V,E,w)$, the \mincut{} problem asks for a cut $(S,S^c)$ with minimum weight $\val_G(S)$.
Up to polylog-factors in the runtime, the current best algorithm is by Karger~\cite{karger2000minimum}.
It builds on cut sparsification to output an exact min cut of the graph with high probability in time $\tO(m)$.
Running this algorithm on our $\epsilon$-cut sparsifier with $\tO(n/\epsilon^2)$ edges thus requires time $\tO(n/\epsilon^2)$, and returns an $\epsilon$-approximate min cut.
 
\begin{claim}
There exists a quantum algorithm that outputs an $\epsilon$-approximate min cut of a weighted graph in time $\tO(\sqrt{mn}/\epsilon)$.
\end{claim}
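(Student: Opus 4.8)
The plan is to compose our quantum sparsification algorithm with Karger's classical near-linear-time \mincut{} algorithm~\cite{karger2000minimum} run on the sparsifier. First I would invoke Theorem~\ref{thm:main} to construct, in time $\tO(\sqrt{mn}/\epsilon)$, an $\epsilon'$-spectral sparsifier $H$ of $G$ with $\tO(n/\epsilon'^2)$ edges, where $\epsilon'$ is a suitable constant fraction of $\epsilon$ (say $\epsilon' = \epsilon/3$); recall that an $\epsilon'$-spectral sparsifier is in particular an $\epsilon'$-cut sparsifier. Then I would run Karger's algorithm on the explicit classical description of $H$, which takes time near-linear in the number of edges of $H$, i.e.~$\tO(n/\epsilon^2)$, and with high probability outputs an exact minimum cut $(S_H,S_H^c)$ of $H$.

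Next I would verify that $(S_H,S_H^c)$ is an $\epsilon$-approximate min cut of $G$. Letting $(S^*,(S^*)^c)$ be a minimum cut of $G$, the cut-sparsifier property~\eqref{eq:cut-spars} together with the optimality of $S_H$ in $H$ gives
\[
\val_G(S_H)
\leq \frac{\val_H(S_H)}{1-\epsilon'}
\leq \frac{\val_H(S^*)}{1-\epsilon'}
\leq \frac{1+\epsilon'}{1-\epsilon'}\,\val_G(S^*).
\]
For $\epsilon'=\epsilon/3$ and $\epsilon$ at most a small constant, $\tfrac{1+\epsilon'}{1-\epsilon'}\leq 1+\epsilon$, so the returned cut has weight within a $(1+\epsilon)$ factor of optimal; the regime of larger $\epsilon$ only makes the statement easier (and can be reduced to the small-$\epsilon$ case in any event).

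Finally I would bound the total running time as the sum of the $\tO(\sqrt{mn}/\epsilon)$ sparsification cost and the $\tO(n/\epsilon^2)$ cost of Karger's algorithm on $H$. The assumption $\epsilon \geq \sqrt{n/m}$ is exactly what gives $n/\epsilon \leq \sqrt{mn}$, hence $n/\epsilon^2 \leq \sqrt{mn}/\epsilon$, so the total is $\tO(\sqrt{mn}/\epsilon)$ as claimed. I do not expect a genuine obstacle here: all the difficulty is encapsulated in Theorem~\ref{thm:main}, and this claim merely records that cut sparsification composes cleanly with a black-box classical min-cut solver. The only two points needing a moment's care are the constant-factor loss in the approximation ratio, handled by rescaling $\epsilon$, and checking that the classical post-processing does not dominate, handled by the $\epsilon \geq \sqrt{n/m}$ hypothesis.
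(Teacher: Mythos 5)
Your proposal is correct and matches the paper's own argument: construct a (spectral, hence cut) sparsifier via Theorem~\ref{thm:main}, run Karger's exact near-linear-time min-cut algorithm on its $\tO(n/\epsilon^2)$ edges, and observe that the resulting cut is $\epsilon$-approximately optimal in $G$, with the post-processing cost $\tO(n/\epsilon^2)$ absorbed into $\tO(\sqrt{mn}/\epsilon)$ thanks to $\epsilon \geq \sqrt{n/m}$. The only difference is that you spell out the $(1+\epsilon')/(1-\epsilon')$ sandwiching and the rescaling of $\epsilon$, which the paper leaves implicit.
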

\noindent
We leave it as an open question whether this algorithm can be improved to also output an \emph{exact} min cut, similar to the algorithm in \cite{karger2000minimum}.
 
\subsubsection{\minstcut{}}
Given a weighted graph $G = (V,E,w)$, the \minstcut{} problem requires to output a cut $C = (S,S^c)$ with minimum value $\val_G(S)$, such that $s \in S$ and $t \notin S$.
The current best algorithms for approximate \minstcut{} build on the max-flow min-cut theorem, which states that the value of the min $st$-cut equals the value of a maximum $st$-flow.
Combining classical cut sparsification with the recent $\tO(m/\epsilon^3)$ solver for $\epsilon$-approximate max flows by Peng~\cite{peng2016approximate}, this yields an $\tO(m + n/\epsilon^5)$ time algorithm for \minstcut{}.
We obtain the following claim by running this algorithm on our cut sparsifier with $\tO(n/\epsilon^2)$ edges.
 
\begin{claim}
There exists a quantum algorithm that outputs an $\epsilon$-approximate minimum $st$-cut of a weighted graph in time $\tO(\sqrt{mn}/\epsilon + n/\epsilon^5)$.
\end{claim}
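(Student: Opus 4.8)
The plan is a black-box composition: construct a cut sparsifier with our quantum algorithm, then run the classical $\epsilon$-approximate max-flow (hence min $st$-cut) algorithm of Peng~\cite{peng2016approximate} on the much sparser graph. First I would apply Theorem~\ref{thm:main} with error parameter $\epsilon' = c\epsilon$ for a small constant $c>0$, obtaining in time $\tO(\sqrt{mn}/\epsilon)$ a graph $H$ on the same vertex set $V$ as $G$, with $m' \in \tO(n/\epsilon^2)$ edges, that is an $\epsilon'$-spectral sparsifier of $G$. Since every spectral sparsifier is a cut sparsifier, $H$ preserves the value of \emph{every} cut of $G$ up to a multiplicative factor $1\pm\epsilon'$; in particular, for any partition $(S,S^c)$ with $s\in S$ and $t\notin S$ we have $(1-\epsilon')\val_G(S) \le \val_H(S) \le (1+\epsilon')\val_G(S)$.

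Next I would run Peng's algorithm on $H$ with approximation parameter $\Theta(\epsilon)$, which by the max-flow/min-cut theorem returns (via the standard extraction of an approximate min cut from an approximate max flow) a valid $st$-cut $\tilde S$ of $H$ whose value is within a $(1+\Theta(\epsilon))$ factor of the minimum $st$-cut value of $H$. Because $H$ has only $m' \in \tO(n/\epsilon^2)$ edges, this step takes time $\tO(m'/\epsilon^3) = \tO(n/\epsilon^5)$, so the total running time is $\tO(\sqrt{mn}/\epsilon) + \tO(n/\epsilon^5) = \tO(\sqrt{mn}/\epsilon + n/\epsilon^5)$, as claimed; both subroutines succeed with high probability, and a union bound handles the combined failure probability.

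Finally I would transfer the guarantee back to $G$. Let $S^*$ be an optimal $st$-cut of $G$ and write $\mathrm{OPT} = \val_G(S^*)$. The cut-sparsifier bound gives $\min_S \val_H(S) \le \val_H(S^*) \le (1+\epsilon')\mathrm{OPT}$, Peng's guarantee gives $\val_H(\tilde S) \le (1+\Theta(\epsilon))\min_S\val_H(S)$, and the lower inequality of the cut-sparsifier bound gives $\val_G(\tilde S) \le \val_H(\tilde S)/(1-\epsilon')$; chaining these and choosing $c$ (and the constant in Peng's parameter) small enough yields $\val_G(\tilde S) \le (1+\epsilon)\mathrm{OPT}$, i.e.\ $\tilde S$ is an $\epsilon$-approximate min $st$-cut of $G$. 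I do not expect a real obstacle here: the only nontrivial ingredient is the sparsification step, which is exactly Theorem~\ref{thm:main}. The one point deserving an explicit word is that Peng's algorithm (or a standard wrapper around it) returns the cut itself and not merely the flow value, and that a cut computed in $H$ is automatically a cut of $G$ since $V(H) = V(G)$.
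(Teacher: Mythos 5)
Your proposal is correct and follows exactly the paper's approach: construct a cut sparsifier with $\tO(n/\epsilon^2)$ edges via the quantum sparsification algorithm in time $\tO(\sqrt{mn}/\epsilon)$, then run Peng's $\tO(m/\epsilon^3)$ approximate max-flow/min-cut algorithm on the sparsifier in time $\tO(n/\epsilon^5)$. Your explicit chaining of the multiplicative errors and the remark about extracting the cut from the flow are details the paper leaves implicit, but the argument is the same.
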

 
\subsubsection{\spcut{} and \balsep{}}
Given a weighted graph $G = (V,E,w)$, the \spcut{} problem asks for a cut $C = (S,S^c)$ which minimizes the ratio $\val_G(S)/(|S|\,|S^c|)$.
The \balsep{} problem asks in addition that the cut is ``balanced'', i.e., $\mu \leq |S|/|V| \leq 1/2$ for some constant $\mu>0$.
Exactly solving either of these problems is NP-hard, and optimally trying to approximate them has led to an interesting line of research~\cite{leighton1999multicommodity,arora2009expander}.
Currently the best polynomial-time classical algorithm for both problems achieves an $O(\sqrt{\log n})$-approximation factor in time $\tO(m + n^{1+\delta})$, for an arbitrary positive constant $\delta$.
This is achieved by combining cut sparsification, work by Sherman~\cite{sherman2009breaking} which shows that an $O(\sqrt{\log n})$-approximation can be calculated by solving $\tO(n^\delta)$ max flows, and the $\tO(m/\epsilon^3)$ max flow algorithm by Peng~\cite{peng2016approximate} for constant $\epsilon>0$.
Applying these algorithms to our $\epsilon$-cut sparsifier for constant $\epsilon>0$, we get the claim below.
 
\begin{claim}
There is a quantum algorithm that outputs an $O(\sqrt{\log n})$-approximate sparsest cut or balanced separator of a weighted graph in time $\tO(\sqrt{mn} + n^{1+\delta})$, for an arbitrary constant $\delta > 0$.
\end{claim}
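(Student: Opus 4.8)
The plan is to mirror the treatment of \mincut{} and \minstcut{} above: produce a cut sparsifier quantumly, then run the classical $O(\sqrt{\log n})$-approximation pipeline of Sherman and Peng on the (much sparser) sparsifier. Concretely, I would first apply our quantum spectral sparsification algorithm (Theorem~\ref{thm:main}) with a fixed constant accuracy $\epsilon_0$, say $\epsilon_0 = 1/10$. In time $\tO(\sqrt{mn}/\epsilon_0) = \tO(\sqrt{mn})$ this yields the explicit description of an $\epsilon_0$-spectral sparsifier $H$ of $G$ with $\tO(n/\epsilon_0^2) = \tO(n)$ edges; since every spectral sparsifier is a cut sparsifier, $(1-\epsilon_0)\val_G(S) \le \val_H(S) \le (1+\epsilon_0)\val_G(S)$ for all $S \subseteq V$.

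Next I would run the classical algorithm on $H$ rather than on $G$. Sherman~\cite{sherman2009breaking} shows that an $O(\sqrt{\log n})$-approximate sparsest cut (and likewise balanced separator) can be obtained by solving $\tO(n^\delta)$ constant-accuracy approximate max-flow instances; using Peng's~\cite{peng2016approximate} $\tO(m/\epsilon^3)$ max-flow solver on $H$, each instance costs $\tO(|E(H)|) = \tO(n)$, so the whole classical phase runs in time $\tO(n \cdot n^\delta) = \tO(n^{1+\delta})$. Adding the sparsification cost gives the claimed bound $\tO(\sqrt{mn} + n^{1+\delta})$, and the algorithm returns the cut produced on $H$ unchanged.

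For correctness one checks that a near-optimal cut for $H$ remains near-optimal for $G$. In the \spcut{} objective $\val(S)/(|S|\,|S^c|)$ the denominator depends only on the vertex set, so it is identical in $G$ and $H$, while the numerator is preserved up to the factor $1\pm\epsilon_0$; hence the objective itself is preserved up to $1\pm\epsilon_0$. For \balsep{} the balance constraint $\mu \le |S|/|V| \le 1/2$ is a purely combinatorial condition on $S$ that is untouched by reweighting edges. Consequently the output cut is $\frac{1+\epsilon_0}{1-\epsilon_0}\cdot O(\sqrt{\log n}) = O(\sqrt{\log n})$-approximately optimal for $G$. The only thing that requires care — and it is bookkeeping rather than mathematics — is verifying that Peng's and Sherman's routines accept the \emph{reweighted}, possibly disconnected sparsifier as a legitimate weighted-graph input without inflating either the number of flow computations or the per-flow cost; since both are stated for general weighted graphs this is immediate, and the remaining polylogarithmic and constant-$\epsilon$ overheads disappear into the $\tO$ notation.
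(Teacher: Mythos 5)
Your proposal is correct and follows essentially the same route as the paper: build a constant-accuracy cut sparsifier quantumly in time $\tO(\sqrt{mn})$, then run the Sherman--Peng pipeline ($\tO(n^\delta)$ approximate max-flows at $\tO(n)$ cost each on the $\tO(n)$-edge sparsifier) to get the $\tO(n^{1+\delta})$ classical phase. Your added correctness check is fine; the only superfluous worry is disconnection, since a constant-$\epsilon$ cut sparsifier of a connected graph preserves every cut value up to $1-\epsilon>0$ and hence remains connected.
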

 
\subsection{Quantum Laplacian Solver}
The complexity of the best known linear system solver is $\tO(n^\omega)$, with $\omega < 2.373$ the matrix multiplication coefficient.
Building on a long line of work, Spielman and Teng \cite{spielman2004nearly} famously showed that the special case of Laplacian systems can be solved in time $\tO(m) \in \tO(n^2)$, with $m$ the number of nonzero entries of the Laplacian.
To this end they exploit the connection of Laplacians with graphs, allowing for a combinatorial interpretation and treatment of the linear system.
 
More specifically, a Laplacian solver aims to solve a linear system $L_G x = b$, where $L_G$ is the Laplacian associated to some graph $G$.
We denote the solution by $x = L_G^+ b$.
The original motivation for spectral sparsification was in fact to create better Laplacian solvers.
Indeed, as follows from the lemma below, we can solve the Laplacian system in the sparsifier and retrieve an approximation of the original system.
Here we use the $A$-induced norm $\|v\|_A = \sqrt{v^\dag A v} = \|A^{1/2} v\|$ for a positive semi-definite matrix $A$, with $v^\dag$ the Hermitian transpose of vector $v$.
 
\begin{claim} \label{claim:approx-Laplacian}
Consider a linear system $L_G x = b$, where $L_G$ is the Laplacian of a weighted, undirected graph $G$.
If $H$ is an $\epsilon$-spectral sparsifier of $G$, with Laplacian $L_H$, then solving $L_H x = b$ yields an approximate solution to the original system:
\[
\| L_H^+ b - x \|_{L_G}
\leq 2\epsilon \|x\|_{L_G}.
\]
\end{claim}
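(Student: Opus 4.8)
The plan is to reduce the claim to a short operator-norm computation using the Loewner ordering. Write $\tilde{x} = L_H^+ b$ for the solution produced from the sparsifier and recall $x = L_G^+ b$. A Laplacian system is only solvable when $b$ lies in the image of $L_G$, and (taking $G$ connected) the images of $L_G$ and $L_H$ both equal $\mathbf{1}^\perp$; hence $x,\tilde{x}\in\mathbf{1}^\perp$, and $L_G^+L_G$, $L_H^+L_H$ both act as the identity $\Pi$ (the orthogonal projector onto $\mathbf{1}^\perp$) on this subspace. Using $b = L_G x$ together with $L_H^+L_H x = x$, the first step is to rewrite the error as
\[
\tilde{x}-x \;=\; L_H^+ L_G x - L_H^+ L_H x \;=\; L_H^+ (L_G - L_H)\, x .
\]

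Next I would pass to the $L_G$-norm. Setting $\Delta = L_H - L_G$, the hypothesis $(1-\epsilon) L_G \preceq L_H \preceq (1+\epsilon) L_G$ is equivalent to $\big\|(L_G^+)^{1/2}\Delta (L_G^+)^{1/2}\big\|\le\epsilon$ on $\mathbf{1}^\perp$, where $(L_G^+)^{1/2}$ denotes the pseudoinverse of $L_G^{1/2}$. Factoring $\Delta$ through $L_G^{1/2}$ on both sides and using $\Pi x = x$ gives
\[
\|\tilde{x}-x\|_{L_G} \;=\; \big\|\, \big(L_G^{1/2} L_H^+ L_G^{1/2}\big)\big((L_G^+)^{1/2}\Delta (L_G^+)^{1/2}\big) L_G^{1/2} x \,\big\| \;\le\; \big\|L_G^{1/2} L_H^+ L_G^{1/2}\big\|\cdot\epsilon\cdot\|x\|_{L_G}.
\]
Then I would bound the leftover operator norm: from $(1-\epsilon)L_G\preceq L_H$, inverting on the common image $\mathbf{1}^\perp$ (inversion reverses the Loewner order) gives $L_H^+\preceq\tfrac{1}{1-\epsilon}L_G^+$, hence $L_G^{1/2} L_H^+ L_G^{1/2}\preceq\tfrac{1}{1-\epsilon}\Pi$ and $\big\|L_G^{1/2} L_H^+ L_G^{1/2}\big\|\le\tfrac{1}{1-\epsilon}$. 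Combining, $\|\tilde{x}-x\|_{L_G}\le\tfrac{\epsilon}{1-\epsilon}\|x\|_{L_G}$, and since this is only ever applied with $\epsilon\le 1/2$ we have $\tfrac{\epsilon}{1-\epsilon}\le 2\epsilon$, giving the claimed bound.

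The main thing to be careful about is the bookkeeping around the kernel: every pseudoinverse identity ($L_H^+L_H x = x$, $L_G^{1/2}(L_G^+)^{1/2}=\Pi$, and the order-reversal under inversion) must be invoked only on the common image $\mathbf{1}^\perp$, and one should note that $\Delta$ preserves $\mathbf{1}^\perp$ since both Laplacians annihilate $\mathbf{1}$. Modulo this, the argument is a single application of submultiplicativity of the operator norm. An essentially equivalent alternative is to observe that $\|v\|_{L_H}^2$ and $\|v\|_{L_G}^2$ agree up to a factor $1\pm\epsilon$ for all $v$ and run the standard energy comparison, but the factorization above seems cleanest.
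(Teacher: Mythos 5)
Your proof is correct and is essentially the paper's argument in lightly different clothing: your factored operator $\bigl(L_G^{1/2} L_H^+ L_G^{1/2}\bigr)\bigl((L_G^+)^{1/2}(L_G-L_H)(L_G^+)^{1/2}\bigr)$ is identically equal to the paper's $L_G^{1/2} L_H^+ L_G^{1/2} - I$ on the common image, and both routes reduce to bounding its norm by $\epsilon/(1-\epsilon)\leq 2\epsilon$ for $\epsilon\leq 1/2$ (the paper via the eigenvalues of $L_G^{1/2}L_H^+L_G^{1/2}$ lying in $[1/(1+\epsilon),1/(1-\epsilon)]$, you via submultiplicativity of the two factors). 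The kernel bookkeeping you flag is handled correctly, and is no more of an issue in your write-up than in the paper's.
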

\begin{proof}
Since $H$ is an $\epsilon$-spectral sparsifier of $G$, we have that $(1-\epsilon) L_G \preceq L_H \preceq (1+\epsilon) L_G$.
This implies that the nonzero eigenvalues of $L_G L_H^+$ (and hence also of $L_G^{1/2} L_H^+ L_G^{1/2}$) lie between $1/(1+\epsilon)$ and $1/(1-\epsilon)$.
With $I$ the identity matrix restricted to the image of $L_G$ and $L_H$, this implies that $\| L_G^{1/2} L_H^+ L_G^{1/2} - I \| \leq \epsilon/(1-\epsilon) \leq 2\epsilon$ for $\epsilon \leq 1/2$.
From this we can bound
\begin{align*}
\| L_H^+ b - x \|_{L_G}
&= \| L_G^{1/2} L_H^+ b - L_G^{1/2} x \| \\
&= \| L_G^{1/2} L_H^+ L_G x - L_G^{1/2} x \| \\
&= \| (L_G^{1/2} L_H^+ L_G^{1/2} - I) L_G^{1/2} x \|
\leq \| L_G^{1/2} L_H^+ L_G^{1/2} - I \| \; \| L_G^{1/2} x \|.
\end{align*}
Since $\| L_G^{1/2} L_H^+ L_G^{1/2} - I \| \leq 2\epsilon$ and $\| L_G^{1/2} x \| = \|x\|_{L_G}$, this proves the lemma.
\end{proof}
 
This observation allowed Spielman and Teng to reduce the task of solving a potentially dense Laplacian system to solving a very sparse one in the Laplacian of the sparsifier, having $\tO(n/\epsilon^2)$ edges.
They then invoke other methods to efficiently solve the sparse Laplacian system in additional time $\tO(n/\epsilon^2)$.
 
An immediate consequence is that we can use our quantum sparsification algorithm to speed up Laplacian solving. We can create a sparsifier $H$ with $\tO(n/\epsilon^2)$ edges in time $\tO(\sqrt{mn}/\epsilon)$, and then use a classical Laplacian solver to solve the system $L_H x = b$ in an additional time $\tO(n/\epsilon^2)$.
This proves the proposition below.
 
\begin{proposition}[Quantum Laplacian Solver]
There exists a quantum algorithm that, given adjacency-list access to a weighted and undirected graph $G$, outputs with high probability an approximate solution $\tilde{x}$ to the linear system $L_G x = b$ such that $\| \tilde{x} - x \|_{L_G} \leq \epsilon \|x\|_{L_G}$ in time $\tO(\sqrt{mn}/\epsilon)$.
\end{proposition}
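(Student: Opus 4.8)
The plan is to reduce to a sparse instance: first build an $\epsilon'$-spectral sparsifier $H$ of $G$ quantumly, then solve the much smaller system $L_H x = b$ with a classical near-linear-time Laplacian solver, and finally use Claim~\ref{claim:approx-Laplacian} to argue that the solution of the sparse system approximates the solution of the original one. Concretely I would invoke $\qspars(G,\epsilon')$ from Theorem~\ref{thm:main} with $\epsilon' = \epsilon/4$ to obtain, with high probability, an explicit $\epsilon'$-spectral sparsifier $H$ with $\tO(n/\epsilon^2)$ edges in time $\tO(\sqrt{mn}/\epsilon)$, and then run the near-linear-time classical Laplacian solver~\cite{spielman2014nearly} on $L_H x = b$.

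For the runtime, the classical solver applied to $L_H$, which has only $\tO(n/\epsilon^2)$ nonzero entries, returns a vector $\tilde x$ with $\|\tilde x - L_H^+ b\|_{L_H} \leq \delta\,\|L_H^+ b\|_{L_H}$ in time $\tO\big((n/\epsilon^2)\log(1/\delta)\big)$. I would take $\delta$ to be a sufficiently small constant multiple of $\epsilon$; since the standing assumption $\epsilon \geq \sqrt{n/m}$ forces $1/\epsilon \leq \sqrt{m/n} \leq \sqrt{n}$, we have $\log(1/\delta) \in O(\log n)$, so this phase costs $\tO(n/\epsilon^2)$. The same inequality $\epsilon \geq \sqrt{n/m}$ gives $n/\epsilon^2 \leq \sqrt{mn}/\epsilon$, so the total runtime is $\tO(\sqrt{mn}/\epsilon) + \tO(n/\epsilon^2) = \tO(\sqrt{mn}/\epsilon)$, and a union bound over the (constantly many) randomized subroutines keeps the overall success probability high.

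The error analysis is the only step requiring care, because the classical solver's guarantee is phrased in the $L_H$-norm whereas the target bound is in the $L_G$-norm. By the triangle inequality, $\|\tilde x - x\|_{L_G} \leq \|\tilde x - L_H^+ b\|_{L_G} + \|L_H^+ b - x\|_{L_G}$, and Claim~\ref{claim:approx-Laplacian} bounds the second term by $2\epsilon'\|x\|_{L_G}$. For the first term I would use $(1-\epsilon')L_G \preceq L_H \preceq (1+\epsilon')L_G$ to pass between norms on the common image of $L_G$ and $L_H$: this yields $\|v\|_{L_G} \leq (1-\epsilon')^{-1/2}\|v\|_{L_H}$ and $\|v\|_{L_H} \leq (1+\epsilon')^{1/2}\|v\|_{L_G}$ for every such $v$, hence $\|\tilde x - L_H^+ b\|_{L_G} \leq \frac{\delta\sqrt{1+\epsilon'}}{\sqrt{1-\epsilon'}}\,\|L_H^+ b\|_{L_G} \leq \frac{\delta\sqrt{1+\epsilon'}}{\sqrt{1-\epsilon'}}(1+2\epsilon')\|x\|_{L_G}$, where the last step again uses Claim~\ref{claim:approx-Laplacian}. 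For $\epsilon' \leq 1/2$ the prefactor of $\delta$ is an absolute constant, so choosing $\epsilon' = \epsilon/4$ and $\delta$ a small enough constant multiple of $\epsilon$ makes the sum at most $\epsilon\|x\|_{L_G}$, as required. The one bookkeeping point is the kernel: when $G$ is connected, $L_G$ and $L_H$ share the one-dimensional kernel spanned by the all-ones vector, so the pseudoinverses and induced norms are taken on the same subspace and $b$ is assumed to lie in it; a disconnected $G$ is handled component by component. The main (and only real) obstacle is thus this norm conversion and the accompanying constant-chasing; everything else is a direct invocation of earlier results.
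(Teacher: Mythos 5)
Your proposal follows exactly the paper's route: construct a constant-factor-of-$\epsilon$ spectral sparsifier $H$ via Theorem~\ref{thm:main}, run the classical near-linear-time Laplacian solver on $L_H x = b$, and invoke Claim~\ref{claim:approx-Laplacian}, with the runtime absorbed using $n/\epsilon^2 \leq \sqrt{mn}/\epsilon$. The only difference is that you additionally carry out the $L_H$-to-$L_G$ norm conversion needed because the classical solver is itself inexact --- a detail the paper elides (Claim~\ref{claim:approx-Laplacian} is stated for the exact solution $L_H^+ b$) --- so your write-up is, if anything, slightly more complete.
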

 
The use of the $L_G$-induced norm $\|\cdot\|_{L_G}$ is common in the study of Laplacian solvers.
If, however, an $\epsilon$-approximate solution in the regular 2-norm is desired, then one can use that $\| \tilde{x} - x \|_{L_G} \leq \epsilon \|x\|_{L_G}$ implies that $\|\tilde{x}-x\| \leq \epsilon \sqrt{\kappa} \|x\|$, with $\kappa = \lambda_{\max}/\lambda_{\min}$ the condition number of $L_G$.
Setting $\epsilon = \delta/\sqrt{\kappa}$ hence yields a $\delta$-approximation in the 2-norm, and the runtime of quantum Laplacian solver becomes $\tO(\sqrt{mn \kappa}/\delta)$.
This factor-$\sqrt{\kappa}$ overhead would typically be too large, apart from the case where the Laplacian is well-conditioned.
A resolution to this issue follows from improving the error-dependence of our quantum Laplacian solver down to polylogarithmic, which is one of the open questions that we mentioned in the introduction.
 
\subsubsection{Solving SDD Systems}
Laplacian systems are special cases of the more general class of \textit{real, symmetric, weakly-diagonally dominant} (SDD) systems $Ax = b$, where $A$ is such that
\[
A = A^T \qquad\text{ and }\qquad A_{ii} \geq \sum_{j \neq i} |A_{ij}|, \quad \forall i \in [n].
\]
A Laplacian system $Lx = b$ is the special case of an SDD system where the off-diagonals of $L$ are nonpositive, and its row sums are equal to zero.
It is well known \cite{gremban1996combinatorial} that classically solving SDD systems can be reduced in time $O(m)$ to solving a Laplacian system.
This implies that the near-linear time Laplacian solver in \cite{spielman2014nearly} yields a near-linear time solver for the more general case of SDD systems as well.
 
Since we are aiming for a sublinear runtime in the quantum case, this classical reduction is too costly.
However, taking inspiration from the classical reduction, we show in Appendix \ref{app:SDD} that our quantum sparsification algorithm can approximately reduce an SDD system to a much sparser SDD system, in time $\tO(\sqrt{mn}/\epsilon)$.
Using a classical near-linear time SDD solver \cite{spielman2014nearly} on this sparser SDD system then yields the following proposition.
As is customary in quantum linear algebra routines \cite{gilyen2019quantum}, we assume \textit{sparse access} to the matrix $A$: given a query $(i,k) \in [n]^2$, this returns the index and value $(j,A_{ij}) \in [n] \times \real$ of the $k$-th nonzero element of the $i$-th row $A_{i\cdot}$ (and an error symbol if there are less than $k$ nonzero elements).
 
\begin{proposition}[Quantum SDD Solver] \label{prop:quantum-SDD}
There exists a quantum algorithm that, given sparse access to an SDD matrix $A$, outputs with high probability an approximate solution $\tilde{x}$ to the linear system $A x = b$ such that $\| \tilde{x} - x \|_A \leq \epsilon \|x\|_A$ in time $\tO(\sqrt{mn}/\epsilon)$.
\end{proposition}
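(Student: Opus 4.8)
The plan is to mimic the classical reduction of Gremban~\cite{gremban1996combinatorial} from an SDD system to a Laplacian system, but to implement it in a query-efficient and quantum-friendly way so that we never have to write down the full $n \times n$ intermediate matrix. First I would recall the classical reduction: given an SDD matrix $A = A^T$ with $A_{ii} \geq \sum_{j \neq i} |A_{ij}|$, one splits $A$ into its positive-off-diagonal part and negative-off-diagonal part and builds a Laplacian $\tilde L$ on $2n$ nodes (roughly, two "copies" of the index set, with negative off-diagonals of $A$ giving edges inside a copy and positive off-diagonals giving edges across copies, plus diagonal slack absorbed into self-loops/extra edges to a grounded vertex). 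One then shows that solving $\tilde L \tilde y = \tilde b$ for an appropriately lifted right-hand side $\tilde b$ yields, after an affine read-off, a solution $x$ to $Ax = b$, and that approximate solutions transfer with only a constant-factor loss in the relevant induced norm. The key point I would stress is that this $2n$-node graph $\tilde G$ has exactly one edge per nonzero off-diagonal entry of $A$ (plus $O(n)$ bookkeeping edges), so it has $O(m)$ edges, and crucially its adjacency list at any vertex can be computed from a single sparse-access query to the corresponding row of $A$. Thus we have adjacency-list access to $\tilde G$ at no asymptotic overhead.

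Next I would run the quantum sparsification algorithm of Theorem~\ref{thm:main} on $\tilde G$ with error parameter $\Theta(\epsilon)$, producing in time $\tO(\sqrt{mn}/\epsilon)$ an explicit $\Theta(\epsilon)$-spectral sparsifier $\tilde H$ of $\tilde L$ with $\tO(n/\epsilon^2)$ edges. By Claim~\ref{claim:approx-Laplacian}, solving $\tilde L_{\tilde H}\, \tilde y = \tilde b$ gives $\tilde y$ with $\|\tilde y - \tilde L^{+}\tilde b\|_{\tilde L} \leq O(\epsilon)\|\tilde L^{+}\tilde b\|_{\tilde L}$. Then I would apply a classical near-linear-time Laplacian solver~\cite{spielman2014nearly} to the sparse system $\tilde L_{\tilde H}\,\tilde y = \tilde b$, costing an additional $\tO(n/\epsilon^2) \subseteq \tO(\sqrt{mn}/\epsilon)$ time (using $\epsilon \geq \sqrt{n/m}$), and finally translate $\tilde y$ back to a candidate $\tilde x \in \real^n$ via Gremban's affine map. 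The error analysis then composes the spectral-sparsifier error, the classical solver's error, and the norm-equivalence between $\|\cdot\|_A$ on the original space and $\|\cdot\|_{\tilde L}$ on the lifted space; keeping all three at the $\Theta(\epsilon)$ scale yields the claimed $\|\tilde x - x\|_A \leq \epsilon\|x\|_A$ after rescaling the constant.

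The main obstacle is the last gluing step: verifying that Gremban's reduction genuinely preserves approximate solutions in the $A$-induced norm, not just exact ones. The reduction guarantees $\tilde L^{+}\tilde b$ decomposes as $\tilde L^{+}\tilde b = \tfrac{1}{2}\big(x \oplus (-x)\big) + c\mathbf{1}$ (modulo the kernel of $\tilde L$), and one must check that a perturbation $\tilde y$ that is $O(\epsilon)$-close in $\|\cdot\|_{\tilde L}$ projects down to an $x$-perturbation that is $O(\epsilon)$-close in $\|\cdot\|_A$, with only constant-factor degradation. This requires relating $\|L^{1/2}_{\tilde L}(\cdot)\|$ restricted to the "antisymmetric" subspace $\{v \oplus (-v)\}$ to $\|A^{1/2}(\cdot)\|$; here the diagonal slack and the grounding edges must be controlled so that the SDD condition $A_{ii} \geq \sum_{j\neq i}|A_{ij}|$ translates into a bounded spectral relation. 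I expect this to be a routine but somewhat delicate linear-algebra argument, which is why I would defer the full details to Appendix~\ref{app:SDD} and only sketch the pipeline in the main text. Everything else — the quantum sparsification call, the adjacency-list simulation via sparse access, and the classical solver invocation — is black-box and contributes no new difficulty beyond bookkeeping the $\epsilon$ factors and confirming the $\tO(\sqrt{mn}/\epsilon)$ total runtime.
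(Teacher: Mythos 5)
Your pipeline is the same as the paper's — Gremban's reduction, simulation of adjacency-list access from sparse access at unit query cost, a call to the quantum sparsifier of Theorem~\ref{thm:main}, a classical near-linear solve on the sparse system, and a read-off back to $\real^n$ — so the proposal is correct in its architecture. The one place you genuinely diverge is in how the diagonal slack is treated. You propose to absorb it into a grounded auxiliary vertex so that the lifted system becomes a pure Laplacian on $O(n)$ nodes, and then you must track the ``antisymmetric subspace'' and the grounding through the sparsification and the norm transfer, which is the step you yourself flag as delicate. The paper (Appendix~\ref{app:SDD}) instead stops at the SDDM matrix $\hat A = \hat L + \hat D$ (Laplacian plus nonnegative diagonal excess), sparsifies only the graph of $\hat L$ to get $\tilde L \approx_\epsilon \hat L$, and sets $\tilde A = \tilde L + \mathrm{diag}(\hat A) - \mathrm{diag}(\tilde L)$; a three-line Loewner-order argument then gives $\tilde A \approx_{2\epsilon} \hat A$, and the sparse SDDM system is handed to the classical solver of \cite{spielman2014nearly} directly. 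This buys two simplifications over your route: there is no grounded vertex whose high degree and kernel-shift bookkeeping must be controlled, and the norm transfer from the lifted system back to $Ax=b$ is exactly the statement Gremban proved for approximate solutions \cite{gremban1996combinatorial}, so the ``main obstacle'' you defer is in fact a citation rather than a new lemma. Your grounded-vertex variant can be made to work (principal submatrices inherit the two-sided Loewner bounds when restricted to vectors vanishing on the ground node), but as written it leaves the hardest step as a promissory note, whereas the SDDM split discharges it with an elementary diagonal-correction lemma.
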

 
\subsubsection{Effective Resistances and Commute Times}
Electrical networks, consisting of nodes $\{v \in V\}$ and resistors $\{r_e \,|\, e \in E\}$, are conveniently described by a weighted graph $G = (V,E,\{w_e = 1/r_e\})$~\cite{bollobas2013modern}.
Certain quantities of the electrical network can then be expressed using the Laplacian of $G$.
One example is the effective resistance $R_{s,t}$ between a pair of nodes $s$ and $t$, which we already encountered in Section~\ref{sec:refined-quantum-sparsification}.
This can be expressed as a quadratic form in the inverse of the Laplacian:
\[
R_{s,t}
= (\chi_s - \chi_t)^T L^+ (\chi_s - \chi_t).
\]
This effectively measures the \emph{dissipated power} $\mathcal{E}(j_{s,t})$ of the electric flow that results from injecting a unit current in $s$, and extracting it from $t$, as is described by the demand vector $j_{s,t} = \chi_s - \chi_t$.
For a more general demand vector $j \in \real^n$, with $1^T j = \sum_{v \in V} j(v) = 0$, the dissipated power is defined analogously:
\[
\mathcal{E}(j)
= j^T L^+ j.
\]
Closely related to the effective resistance is the \emph{commute time} of a random walk.
The random walk commute time $C_{s,t}$ between nodes $s$ and $t$ is defined as the expected number of steps the walk must take from $s$ to reach $t$, and then return to $s$.
By a result of Chandra, Raghavan, Ruzzo, Smolensky and Tiwari~\cite{chandra1996electrical} we know that $C_{s,t} = 2WR_{s,t}$, with $W = \sum_e w_e$ the total edge weight in the graph.
 
Since the effective resistance and the dissipated power correspond to quadratic forms in the inverse of the Laplacian, they can be $\epsilon$-approximated by calculating the corresponding quantity in an $\epsilon$-spectral sparsifier.
In addition, the total edge weight $W_H$ of an $\epsilon$-cut sparsifier $\epsilon$-approximates the original edge weight $W$, so that $(1-\epsilon)^2 C_{s,t}^G \leq C_{s,t}^H \leq (1+\epsilon)^2 C_{s,t}^G$ and hence also the commute times are approximated.
Using our quantum sparsification algorithm, together with a classical Laplacian solver, we can approximate any of these quantities in time $\tO(\sqrt{mn}/\epsilon)$.
\begin{claim}
Let $j \in \real^n$ be a current demand vector, with $1^T j = 0$.
Then there exists a quantum algorithm that outputs an $\epsilon$-approximation to the dissipated power $\mathcal{E}(j)$ in time $\tO(\sqrt{mn}/\epsilon)$.
In particular, if $j = \chi_s - \chi_t$ then this yields an $\epsilon$-approximation of the effective resistance $R_{s,t}$ and the commute time $C_{s,t}$.
\end{claim}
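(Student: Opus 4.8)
The plan is to combine the quantum spectral sparsification algorithm of Theorem~\ref{thm:main} with a classical near-linear-time Laplacian solver applied to the (small) sparsifier. First I would invoke Theorem~\ref{thm:main} with error parameter $\epsilon' = \Theta(\epsilon)$ to construct, in time $\tO(\sqrt{mn}/\epsilon)$, an $\epsilon'$-spectral sparsifier $H$ of $G$ with $\tO(n/\epsilon^2)$ edges and Laplacian $L_H$ satisfying $(1-\epsilon')L_G \preceq L_H \preceq (1+\epsilon')L_G$. Assuming $G$ connected (otherwise restrict to the component containing the relevant nodes), we have $\ker L_H = \ker L_G = \mathrm{span}(\mathbf{1})$: any $x$ with $x^T L_G x>0$ also has $x^T L_H x>0$, and $\mathbf{1}$ lies in both kernels. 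On the common image the Loewner order is reversed under pseudoinversion, so $\tfrac{1}{1+\epsilon'} L_G^+ \preceq L_H^+ \preceq \tfrac{1}{1-\epsilon'} L_G^+$. Since $1^T j = 0$ the vector $j$ lies in the image of $L_G$, and hence
\[
\frac{1}{1+\epsilon'}\, \mathcal{E}(j) \;\leq\; j^T L_H^+ j \;\leq\; \frac{1}{1-\epsilon'}\, \mathcal{E}(j),
\]
so it suffices to approximate $j^T L_H^+ j$ up to a $(1\pm\Theta(\epsilon))$ factor.

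To do this I would read in the vector $j$ (cost $O(n)$), run a classical Laplacian solver~\cite{spielman2014nearly} on the sparse system $L_H x = j$ to obtain $\tilde x$ with $\|\tilde x - L_H^+ j\|_{L_H} \leq \delta \|L_H^+ j\|_{L_H}$ for a small $\delta = \Theta(\epsilon)$, and output $j^T \tilde x$. Writing $j = L_H y$ with $y = L_H^+ j$ and applying Cauchy--Schwarz in the $L_H$-inner product gives $|j^T \tilde x - j^T L_H^+ j| = |\langle y,\, \tilde x - L_H^+ j\rangle_{L_H}| \leq \|y\|_{L_H}\,\|\tilde x - L_H^+ j\|_{L_H} \leq \delta\, j^T L_H^+ j$, so $j^T \tilde x$ is a $(1\pm\delta)$-approximation of $j^T L_H^+ j$. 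As $H$ has $\tO(n/\epsilon^2)$ edges this step costs $\tO(n/\epsilon^2)$, and the standing assumption $\epsilon \geq \sqrt{n/m}$ gives $n/\epsilon^2 \leq \sqrt{mn}/\epsilon$, so the total runtime remains $\tO(\sqrt{mn}/\epsilon)$. Composing the multiplicative errors with $\epsilon'$ and $\delta$ chosen as small enough constant multiples of $\epsilon$ yields the claimed $\epsilon$-approximation of $\mathcal{E}(j)$.

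For the special case $j = \chi_s - \chi_t$, this directly gives an $\epsilon$-approximation of $R_{s,t} = (\chi_s-\chi_t)^T L_G^+ (\chi_s-\chi_t)$. For the commute time $C_{s,t} = 2W R_{s,t}$ with $W = \sum_e w_e$, I would additionally use that $H$ is also an $\epsilon'$-cut sparsifier: applying the cut guarantee to the singleton cuts $S=\{v\}$ shows the weighted degrees, and hence $2W = \sum_v \val_G(\{v\})$, are preserved up to a factor $(1\pm\epsilon')$, so $W_H \in (1\pm\epsilon')W$. Since $W_H$ is read off $H$ in time $\tO(n/\epsilon^2)$, multiplying the two approximations gives $C_{s,t}^H \in (1\pm\Theta(\epsilon))C_{s,t}^G$ without changing the runtime.

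As for the main obstacle: there is essentially no deep difficulty — the statement is a corollary of Theorem~\ref{thm:main}. The only points requiring care are the pseudoinverse step (ensuring the two Laplacians share a kernel so the Loewner inversion is legitimate, which uses connectedness of $G$) and the bookkeeping showing that the $\tO(n/\epsilon^2)$ cost of the classical solve is absorbed into $\tO(\sqrt{mn}/\epsilon)$ precisely under the hypothesis $\epsilon \geq \sqrt{n/m}$.
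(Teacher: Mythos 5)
Your proposal is correct and follows exactly the route the paper takes (the paper's "proof" is just the preceding paragraph: build an $\epsilon$-spectral sparsifier, run a classical Laplacian solver on it to evaluate the quadratic form, and use the cut-sparsifier property to approximate the total weight $W$ for commute times). You have simply filled in the standard details — the kernel-matching and Loewner reversal under pseudoinversion, the Cauchy--Schwarz control of the solver error, and the runtime bookkeeping — all of which are sound.
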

 
By a similar argument, we can create an $\epsilon$-spectral sparsifier, and construct the approximate resistance oracle of Spielman and Srivastava (see Section~\ref{sec:resistance-oracle}) on this sparsifier.
By Theorem~\ref{thm:resistance-oracle} we can construct this oracle in time $\tO(n/\epsilon^4)$ (which corresponds to $\tO(m/\epsilon^2)$ when we input a sparsifier with $m \in \tO(n/\epsilon^2)$ edges).
This proves the following claim.
\begin{claim}
There exists an $\tO(\sqrt{mn}/\epsilon + n/\epsilon^4)$-time quantum algorithm that outputs a $(24\log(n)/\epsilon^2) \times n$ matrix $Z$ such that with high probability, for any $s,t \in V$ it holds that
\[
(1-\epsilon) R_{s,t}
\leq \| Z (\chi_s - \chi_t) \|^2
\leq (1+\epsilon) R_{s,t}.
\]
\end{claim}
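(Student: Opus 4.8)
The plan is to compose the refined quantum sparsification algorithm (Theorem~\ref{thm:main}) with the classical resistance-oracle construction of Spielman and Srivastava (Theorem~\ref{thm:resistance-oracle}), exactly as sketched in the paragraph preceding the claim, while being careful about how the two layers of approximation multiply together. First I would run $\qspars(G,\epsilon/3)$ to obtain, with high probability and in time $\tO(\sqrt{mn}/\epsilon)$, an $(\epsilon/3)$-spectral sparsifier $H$ of $G$ with $m_H \in \tO(n/\epsilon^2)$ edges, so that $(1-\epsilon/3) L_G \preceq L_H \preceq (1+\epsilon/3) L_G$. Then I would apply the algorithm of Theorem~\ref{thm:resistance-oracle} to $H$ with accuracy parameter $\epsilon/3$: since $H$ has only $m_H \in \tO(n/\epsilon^2)$ edges, this runs in time $\tO(m_H/(\epsilon/3)^2) \in \tO(n/\epsilon^4)$ and outputs, with probability at least $1-1/n$, a matrix $Z$ with $\tO(1/\epsilon^2)$ rows such that $(1-\epsilon/3) R^H_{s,t} \leq \|Z(\chi_s-\chi_t)\|^2 \leq (1+\epsilon/3) R^H_{s,t}$ for every pair $s,t$.

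Second, I would observe that the effective resistances of $H$ already approximate those of $G$. Since $R_{s,t} = (\chi_s-\chi_t)^T L^+ (\chi_s-\chi_t)$, and $L_G$ and $L_H$ have the same kernel, the spectral containment above inverts on their common image to give $\frac{1}{1+\epsilon/3} L_G^+ \preceq L_H^+ \preceq \frac{1}{1-\epsilon/3} L_G^+$, hence $(1-\epsilon/3) R^G_{s,t} \leq R^H_{s,t} \leq (1+\epsilon/3) R^G_{s,t}$ (this is the same step used in the proof of Theorem~\ref{thm:main}; if $G$ is disconnected one applies it within each connected component). Chaining with the oracle guarantee yields $(1-\epsilon/3)^2 R^G_{s,t} \leq \|Z(\chi_s-\chi_t)\|^2 \leq (1+\epsilon/3)^2 R^G_{s,t}$, and since $(1-\epsilon/3)^2 \geq 1-\epsilon$ and $(1+\epsilon/3)^2 \leq 1+\epsilon$ for $\epsilon \leq 1$, this is exactly the claimed bound (the row count $24\log(n)/\epsilon^2$ should be read as the $\tO(1/\epsilon^2)$ rows of Theorem~\ref{thm:resistance-oracle}, up to the constant factor lost by running it at accuracy $\epsilon/3$, which can be removed by rescaling $\epsilon$ at the end). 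A union bound over the failure probabilities of the sparsification step and the oracle step gives overall success with high probability, and the runtime is $\tO(\sqrt{mn}/\epsilon) + \tO(n/\epsilon^4) = \tO(\sqrt{mn}/\epsilon + n/\epsilon^4)$.

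I do not expect a genuine obstacle here: every ingredient is already established in the excerpt, so the remaining work is essentially bookkeeping. The three points needing care are (i) choosing the internal accuracy parameters (I used $\epsilon/3$ throughout; any small constant factor works) so that the composed multiplicative error is at most $\epsilon$; (ii) tracking that Theorem~\ref{thm:resistance-oracle} has a $1/\epsilon^2$ dependence in its runtime, so that applied to the $\tO(n/\epsilon^2)$-edge sparsifier it costs $\tO(n/\epsilon^4)$ rather than $\tO(n/\epsilon^2)$; and (iii) applying the pseudoinverse inequality on the correct subspace, i.e. per connected component. If one insists on the literal dimension $(24\log(n)/\epsilon^2)\times n$, one can instead run the inner oracle at accuracy $\epsilon$ on $H$, accept a slightly larger but still $O(\epsilon)$ final error, and relabel the error parameter.
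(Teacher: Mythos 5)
Your proposal is correct and follows essentially the same route as the paper: construct an $\epsilon$-spectral sparsifier quantumly in time $\tO(\sqrt{mn}/\epsilon)$, then build the Spielman--Srivastava resistance oracle on the $\tO(n/\epsilon^2)$-edge sparsifier in time $\tO(n/\epsilon^4)$, and chain the two multiplicative approximations (sparsifier resistances versus true resistances, and oracle estimates versus sparsifier resistances). The paper treats this as an immediate corollary of Theorems~\ref{thm:resistance-oracle} and~\ref{thm:main} and does not spell out the constant rescaling; your bookkeeping with $\epsilon/3$ and the per-component pseudoinverse remark are the same details it implicitly relies on.
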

\noindent
After creating the matrix $Z$, we can hence $\epsilon$-approximate \emph{any} effective resistance $R_{s,t}$ in time $\tO(1/\epsilon^2)$, simply by calculating the norm of the difference between two $\tO(1/\epsilon^2)$-length columns of $Z$.
Apart from its use for spectral sparsification, as we demonstrated in Section~\ref{sec:refined-quantum-sparsification}, such an oracle can also be used to obtain geometric embeddings of the graph, see for instance~\cite{von2007tutorial}.
 
\subsubsection{Cover Time}
The cover time $\tcov(G)$ of a weighted graph $G$ denotes the expected number of steps before a random walk has visited all nodes, starting from the worst initial node.
A classic bound on the cover time called \emph{Matthew's bound}~\cite{matthews1988covering} states that
\[
\max_{s,t} H_{s,t}
\leq \tcov(G)
\leq (1 + \log(n)) \max_{s,t} H_{s,t},
\]
with $H_{s,t}$ the hitting time from node $s$ to node $t$.
Tighter characterizations were later proven by Kahn, Kim, Lov\'asz and Vu~\cite{kahn2000cover} and Ding, Lee and Peres~\cite{ding2011cover}.
We extract the following claims from the latter.
\begin{theorem}{{\cite[Theorems 1.6 and 4.14]{ding2011cover}}}
\leavevmode
\begin{itemize}
\item
If $H$ is an $O(1)$-spectral sparsifier of $G$, then $\tcov(H) \in \Theta(\tcov(G))$.
\item
There is an algorithm that outputs in time $\tO(m)$ with high probability an $O(1)$-approximation of $\tcov(G)$.
\end{itemize}
\end{theorem}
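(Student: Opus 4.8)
\noindent These two facts are established in~\cite{ding2011cover}; the plan is to recover them from their characterization of the cover time through the (discrete) \emph{Gaussian free field} (GFF). Fix a root $v_0\in V$ and let $\{\eta_v\}_{v\in V}$ be the centered Gaussian process with $\eta_{v_0}=0$ whose increments have variance equal to the effective resistance,
\[
\Exp\!\left[(\eta_u-\eta_v)^2\right]=R_{u,v}=(\chi_u-\chi_v)^T L_G^+(\chi_u-\chi_v)\qquad\text{for all }u,v\in V .
\]
The first step is to invoke the Ding--Lee--Peres identity: up to universal constants,
\[
\tcov(G)\;\asymp\;W_G\cdot\bigl(\Exp[\max_{v\in V}\eta_v]\bigr)^{2},
\]
where $W_G=\sum_e w_e$ is the total edge weight. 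This identity is the genuinely hard ingredient---it rests on Talagrand's majorizing-measures theorem applied to the effective-resistance metric---so I would take it as a black box; conditional on it, both bullets become short, the only delicate points being a Gaussian-supremum comparison for the first and control of a linear-solver error for the second.

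\noindent For the first bullet, let $H$ be a constant-factor spectral sparsifier, say $(1-\epsilon)L_G\preceq L_H\preceq(1+\epsilon)L_G$ for a constant $\epsilon<1$. Taking pseudoinverses on the common image sandwiches $L_H^+$ between $\tfrac1{1+\epsilon}L_G^+$ and $\tfrac1{1-\epsilon}L_G^+$, so every effective resistance is preserved up to the factor $1\pm\epsilon$, i.e.\ $R_{u,v}^{H}=\Theta(R_{u,v}^{G})$, and likewise $W_H=\tfrac12\operatorname{tr}(L_H)=\Theta(W_G)$. Let $\eta^G$ and $\eta^H$ be the GFFs of $G$ and $H$ with the same root. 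Since scaling a centered Gaussian process by a constant scales its increment standard deviations and its expected supremum by the same constant, the two-sided bound $R_{u,v}^H\in[(1-\epsilon)R_{u,v}^G,(1+\epsilon)R_{u,v}^G]$ together with the Sudakov--Fernique comparison inequality---applied once in each direction, comparing $\eta^H$ with $\sqrt{1\pm\epsilon}\,\eta^G$---gives $\Exp[\max_v\eta^H_v]=\Theta(\Exp[\max_v\eta^G_v])$. Feeding this and $W_H=\Theta(W_G)$ into the identity yields $\tcov(H)=\Theta(\tcov(G))$. No majorizing measures are needed here: one only uses that the canonical metric of the GFF is $\sqrt{R_{u,v}}$ and that this metric is preserved up to constant factors.

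\noindent For the second bullet, by the same identity it suffices to compute $W_G$ (immediate, in time $\tO(m)$) and to estimate $\mu:=\Exp[\max_v\eta_v]$ to within a constant factor by Monte Carlo. A single GFF sample can be drawn in time $\tO(m)$: pick i.i.d.\ standard Gaussians $\{g_e\}_{e\in E}$, form $y=\sum_{e\in E}\sqrt{w_e}\,g_e\,\chi_e$ (which has covariance $L_G=\sum_e w_e\chi_e\chi_e^T$), and solve $L_G x=y$ with the near-linear-time Laplacian solver of~\cite{spielman2014nearly}; the resulting $x$ has covariance $L_G^+$, so $\max_v x_v$ is an unbiased estimator of $\mu$ whose standard deviation is $O(\mu)$ (Gaussian concentration, together with the standard lower bound $\mu\gtrsim\max_v\sqrt{\operatorname{Var}(\eta_v)}$ for a pinned GFF on at least two vertices). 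Hence a median-of-means estimator over $\tO(1)$ such samples pins down $\mu$, and with it $\tcov(G)$, up to a constant factor with high probability, in total time $\tO(m)$. The one technical nuisance is that the solver returns only $\tilde x$ with $\|\tilde x-x\|_{L_G}\le\delta\|x\|_{L_G}$; taking $\delta$ inverse-polynomially small costs merely an extra $O(\log n)$ factor (assuming polynomially bounded edge weights) and leaves $\max_v\tilde x_v$ essentially unchanged.

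\noindent Finally, the way this theorem gets used: the quantum sparsification algorithm of Theorem~\ref{thm:main} produces a constant-factor spectral sparsifier $H$ of $G$ with $\tO(n)$ edges in time $\tO(\sqrt{mn})$, and running the classical $O(1)$-approximation algorithm above on $H$ (which has only $\tO(n)$ edges, hence runs in time $\tO(n)$) then outputs, by the first bullet, an $O(1)$-approximation of $\tcov(G)$ in quantum time $\tO(\sqrt{mn})$.
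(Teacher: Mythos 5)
The paper does not prove this theorem at all: it is quoted as an external result extracted from \cite[Theorems 1.6 and 4.14]{ding2011cover}, so there is no internal proof to compare against. Your reconstruction --- taking the Ding--Lee--Peres identity $\tcov(G)\asymp W_G\,(\Exp[\max_v\eta_v])^2$ as the black box, deriving the first bullet from preservation of effective resistances plus Sudakov--Fernique, and the second from GFF sampling via a near-linear-time Laplacian solver --- is faithful to how the cited work establishes these facts and to the extraction the paper implicitly performs; the only nitpick is that the sparsifier condition actually gives $R^H_{u,v}\in\bigl[\tfrac{1}{1+\epsilon}R^G_{u,v},\tfrac{1}{1-\epsilon}R^G_{u,v}\bigr]$ rather than $[(1-\epsilon)R^G_{u,v},(1+\epsilon)R^G_{u,v}]$, which is immaterial for constant $\epsilon<1$.
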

\noindent
We can easily derive the following claim by using our quantum sparsification algorithm to construct an $O(1)$-spectral sparsifier, and then approximating the cover time on this sparsifier using the algorithm from~\cite{ding2011cover}.
\begin{claim}
Let $G$ be an unweighted, undirected graph.
There exists a quantum algorithm that outputs a constant-factor approximation to the cover time $\tcov(G)$ in time $\tO(\sqrt{mn})$.
\end{claim}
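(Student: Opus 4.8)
The plan is to compose the quantum sparsification algorithm of Theorem~\ref{thm:main} with the classical cover-time estimator of Ding, Lee and Peres, exploiting the stated fact that cover time is preserved up to constant factors under $O(1)$-spectral sparsification. Concretely, I would first run the quantum spectral sparsification algorithm with a fixed small constant $\epsilon$ (any constant, e.g.\ $\epsilon = 1/2$, will do). By Theorem~\ref{thm:main} this produces, in time $\tO(\sqrt{mn}/\epsilon) = \tO(\sqrt{mn})$ and with high probability, an explicit list of the $\tO(n)$ edges and weights of an $O(1)$-spectral sparsifier $H$ of $G$. I would then invoke the first item of the cited Ding--Lee--Peres theorem: since $H$ is an $O(1)$-spectral sparsifier of $G$, we have $\tcov(H) \in \Theta(\tcov(G))$, so any constant-factor approximation of $\tcov(H)$ is automatically a constant-factor approximation of $\tcov(G)$. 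Finally I would run the classical $\tO(m)$-time cover-time estimator from the second item of that theorem on the graph $H$; because $H$ has only $\tO(n)$ edges, this costs $\tO(n)$ time and returns, with high probability, an $O(1)$-approximation of $\tcov(H)$, hence of $\tcov(G)$. Summing the two phases gives total runtime $\tO(\sqrt{mn}) + \tO(n) = \tO(\sqrt{mn})$, using $m \geq n-1$ (a connected graph, else the cover time is infinite and the problem degenerate), and the overall success probability remains high by a union bound over the two subroutines.

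The one point requiring care — the only real obstacle in an otherwise routine composition of black boxes — is that a spectral sparsifier is necessarily a \emph{reweighted} subgraph, so even though the input $G$ is unweighted, the graph $H$ fed to the classical cover-time estimator is weighted. One therefore has to confirm that both ingredients borrowed from Ding, Lee and Peres hold in the weighted regime: that their cover-time estimator runs in time $\tO(m)$ on weighted graphs (with the cover time taken with respect to the natural weighted random walk, transition probabilities proportional to edge weights), and that their statement on stability of cover time under $O(1)$-spectral sparsification is the weighted one — which it must be, since spectral sparsifiers are inherently weighted objects. A secondary, purely formal check is that Theorem~\ref{thm:main} applies verbatim to an unweighted input, which it does, as it assumes nothing about $G$ beyond adjacency-list access. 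Beyond these bookkeeping matters, no new ideas are needed.
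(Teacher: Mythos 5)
Your proposal matches the paper's proof exactly: construct an $O(1)$-spectral sparsifier via the quantum sparsification algorithm in time $\tO(\sqrt{mn})$, invoke the Ding--Lee--Peres stability of cover time under $O(1)$-spectral sparsification, and run their classical $\tO(m)$-time estimator on the $\tO(n)$-edge sparsifier. Your added caveat about the sparsifier being weighted is a sensible bookkeeping check (the cited results do apply to weighted graphs), but it does not change the argument.
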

 
\subsubsection{Eigenvalues and Spectral Clustering}
The bottom eigenvalues and eigenvectors of a graph Laplacian provide useful information about the graph, as is witnessed by e.g.~Cheeger's inequality and spectral clustering~\cite{ng2002spectral}, the PageRank algorithm~\cite{brin1998anatomy}, and in fact the entire field of spectral graph theory~\cite{cvetkovic1980spectra,chung1997spectral}.
Laplacian solvers allow to efficiently approximate these eigenvalues and eigenvectors.
Spielman and Teng~\cite{spielman2014nearly} (with a later refinement by Koutis, Levin and Peng~\cite{koutis2016faster}) showed that in time $\tO(m + kn/\epsilon^2)$ it is possible to compute (i) an $\epsilon$-approximation to the $k$ smallest eigenvalues $\lambda_1,\dots,\lambda_k$ of a Laplacian, and (ii) a set of $k$ orthogonal unit vectors $v_1,\dots,v_k$ such that
\begin{equation} \label{eq:appr-eigvectors}
v_\ell^T L v_\ell
\leq (1+\epsilon) \lambda_\ell,\quad 1 \leq \ell \leq k.
\end{equation}
This set of vectors approximates the subspace spanned by the $k$ bottom eigenvectors of the Laplacians.
Already for constant $\epsilon>0$, such a set can be used for spectral clustering.
For the case of two clusters, this is explicitly discussed in~\cite{spielman2007spectral} and~\cite[Section 7]{spielman2014nearly} for RatioCut.
For $k$ clusters, the most common approach is spectral $k$-means clustering~\cite{ng2002spectral,von2007tutorial}.
Using the same analysis as in~\cite{peng2015partitioning}, one can show that a set obeying \eqref{eq:appr-eigvectors} can be used to obtain the same performance~\cite{zanetti2019}.
 
Using our quantum sparsification algorithm, we find a direct speedup for this task.
To that end, note that it suffices to calculate the $k$ smallest eigenvalues and approximate eigenvectors of an $\epsilon'$-spectral sparsifier, for say $\epsilon' = \epsilon/10$.
This will yield $\epsilon$-approximate eigenvalues and eigenvectors of the original graph - see e.g.~\cite[Proposition 7.3]{spielman2014nearly}.
We can hence construct an $(\epsilon/10)$-spectral sparsifier with $\tO(n/\epsilon^2)$ edges in time $\tO(\sqrt{mn}/\epsilon)$, and then use a classical algorithm to solve the problem in the sparsifier, taking additional time $\tO(k n/\epsilon^2)$.
This proves the following claim.
\begin{claim}
There exists an $\tO(\sqrt{mn}/\epsilon + kn/\epsilon^2)$-time quantum algorithm that outputs with high probability an $\epsilon$-approximation of each of the $k$ smallest eigenvalues and a set of orthogonal unit vectors $v_1,\dots,v_k$ such that $v_\ell^T L v_\ell \leq (1+\epsilon) \lambda_\ell$ for all $1 \leq \ell \leq k$.
\end{claim}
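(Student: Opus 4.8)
The plan is to combine the quantum spectral sparsification algorithm of Theorem~\ref{thm:main} with the known classical routine for approximating bottom eigenvalues and eigenvectors of a graph Laplacian, applied to the sparsifier instead of to $G$ itself. First I would invoke $\qspars(G,\epsilon/10)$ to build, with high probability and in time $\tO(\sqrt{mn}/\epsilon)$, an $(\epsilon/10)$-spectral sparsifier $H$ of $G$ with $m_H \in \tO(n/\epsilon^2)$ edges, whose explicit edge list and weights are stored in QRAM.

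Second, I would run the classical algorithm of Spielman and Teng~\cite{spielman2014nearly}, with the refinement of Koutis, Levin and Peng~\cite{koutis2016faster}, on $H$. Since $m_H \in \tO(n/\epsilon^2)$, this classical routine runs in time $\tO(m_H + k n/\epsilon^2) = \tO(k n/\epsilon^2)$ and outputs, with high probability, an $(\epsilon/10)$-approximation to the $k$ smallest eigenvalues of $L_H$ together with orthogonal unit vectors $v_1,\dots,v_k$ satisfying $v_\ell^T L_H v_\ell \leq (1+\epsilon/10)\,\lambda_\ell(H)$ for all $\ell$.

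Third, I would transfer these guarantees from $H$ back to $G$. Because $(1-\epsilon/10) L_G \preceq L_H \preceq (1+\epsilon/10) L_G$, the Courant--Fischer min-max characterization gives $|\lambda_\ell(H) - \lambda_\ell(G)| \leq (\epsilon/10)\lambda_\ell(G)$ for every $\ell$, and $v_\ell^T L_G v_\ell \leq (1-\epsilon/10)^{-1} v_\ell^T L_H v_\ell \leq \frac{(1+\epsilon/10)^2}{1-\epsilon/10}\lambda_\ell(G) \leq (1+\epsilon)\lambda_\ell(G)$ once $\epsilon$ is small enough; this is essentially the content of \cite[Proposition~7.3]{spielman2014nearly}. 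Composing the sparsifier error with the classical-algorithm error and adjusting the constant in $\epsilon/10$ then yields the desired $\epsilon$-approximation of each $\lambda_\ell$ and the stated inequality for the $v_\ell$. Summing the two runtimes gives $\tO(\sqrt{mn}/\epsilon + k n/\epsilon^2)$, and a union bound over the $O(1)$ randomized subroutines preserves high overall success probability.

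The only mildly delicate point is verifying that a constant-factor shrinkage of $\epsilon$ at the sparsification stage really does absorb the classical routine's error: one must check that the two multiplicative $(1\pm\Theta(\epsilon))$ distortions compose into a clean $(1+\epsilon)$ bound without inflating the eigenvalue estimates past the target and without spoiling the orthogonality of the $v_\ell$ (which is unaffected, being a property of the vectors themselves). This is routine linear algebra with the Loewner-order inequalities and the min-max theorem, but it is exactly why we start from $\epsilon/10$ rather than from $\epsilon$.
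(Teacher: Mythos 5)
Your proposal matches the paper's proof: construct an $(\epsilon/10)$-spectral sparsifier quantumly in time $\tO(\sqrt{mn}/\epsilon)$, run the classical Spielman--Teng/Koutis--Levin--Peng routine on the sparsifier in time $\tO(kn/\epsilon^2)$, and transfer the eigenvalue and quadratic-form guarantees back to $G$ via the Loewner inequalities and Courant--Fischer (the content of \cite[Proposition~7.3]{spielman2014nearly}). The argument and runtime accounting are correct and essentially identical to the paper's.
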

 
This directly yields a speedup for the aforementioned spectral clustering algorithms, provided that we are given adjacency-list access to some similarity graph of the data \cite{von2007tutorial}.
Consider for instance the spectral $k$-means clustering algorithm in~\cite{ng2002spectral}.
Given the similarity graph, its classical time complexity is dominated by (i) the time to construct a set of $k$ vectors obeying \eqref{eq:appr-eigvectors} for constant $\epsilon>0$, which is $\tO(m + nk)$, and (ii) the time to perform $k$-means clustering on these vectors, which is $\tO(n \poly(k))$.
This yields a total classical complexity $\tO(m + n \poly(k))$.
From the above claim, we immediately find the following corollary.
\begin{corollary}
There exists a quantum algorithm that, given adjacency-list access to the similarity graph of a data set, performs spectral $k$-means clustering on this data set in time $\tO(\sqrt{mn} + n \poly(k))$.
\end{corollary}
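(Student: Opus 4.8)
The plan is to derive the corollary by composing our quantum sparsification algorithm with the \emph{classical} spectral clustering pipeline, treating the final $k$-means step as a black box that operates only on the low-dimensional spectral embedding. First I would run the quantum sparsification algorithm of Theorem~\ref{thm:main} on the input similarity graph $G$ (which has $n$ nodes and $m$ edges) with a \emph{constant} accuracy parameter $\epsilon' = \Theta(1)$. By that theorem this produces, with high probability and in time $\tO(\sqrt{mn})$, an $\epsilon'$-spectral sparsifier $H$ of $G$ with only $\tilde{m} \in \tO(n)$ edges, together with an explicit classical description of its edge list and weights; the quantum part of the algorithm ends here.

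Next I would work entirely classically on the sparse graph $H$. Running the routine of~\cite{spielman2014nearly,koutis2016faster} on $L_H$ computes the $k$ smallest eigenvalues of $L_H$ together with $k$ orthonormal vectors $v_1,\dots,v_k$ satisfying a one-sided Rayleigh bound $v_\ell^T L_H v_\ell \le (1+\epsilon')\lambda_\ell(L_H)$; since $H$ has only $\tO(n)$ edges this costs $\tO(\tilde{m} + kn) = \tO(kn)$. Because $H$ is an $\epsilon'$-spectral sparsifier with $\epsilon'$ a small constant, these vectors also obey a Rayleigh bound for the original Laplacian $L_G$ of the form~\eqref{eq:appr-eigvectors} after rescaling the constant (this is exactly the transfer statement used earlier, cf.\ Proposition~7.3 of~\cite{spielman2014nearly}), so the spectral embedding we hold is as good as one computed directly from $G$. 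I would then feed the $n$ embedded $k$-dimensional points into the classical $k$-means clustering subroutine of~\cite{ng2002spectral}, which runs in time $\tO(n\poly(k))$ and outputs the clustering. For correctness of the clustering guarantee I would invoke the analysis of~\cite{peng2015partitioning}, extended in~\cite{zanetti2019}, which shows that a set of vectors merely satisfying~\eqref{eq:appr-eigvectors}---rather than exact bottom eigenvectors---already yields the same $k$-means approximation guarantee. Summing the $\tO(\sqrt{mn})$ sparsification time, the $\tO(kn)$ eigenvector-computation time, and the $\tO(n\poly(k))$ clustering time gives the claimed $\tO(\sqrt{mn} + n\poly(k))$ runtime, and a union bound over the $O(1)$ randomized subroutines preserves the ``with high probability'' guarantee.

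I do not expect a genuine obstacle here; the only subtlety---as in the preceding subsection---is making the two approximation layers compose cleanly: the sparsifier only preserves the quadratic form of $L_G$ up to a constant multiplicative factor, and the eigenvector routine only returns a one-sided Rayleigh bound rather than true eigenvectors. Both facts are already available in the cited literature, so the remaining work is simply to fix the constants ($\epsilon'$, the rescaling in~\eqref{eq:appr-eigvectors}, and the clustering constant) so that the final clustering guarantee matches the classical one, and to confirm that no step other than sparsification exceeds $\tO(n\poly(k))$ time on the sparse graph~$H$.
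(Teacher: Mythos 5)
Your proposal is correct and follows essentially the same route as the paper: construct a constant-accuracy spectral sparsifier quantumly in $\tO(\sqrt{mn})$ time, classically compute the approximate bottom eigenvectors on the sparse graph in $\tO(kn)$ time (transferring the Rayleigh-quotient guarantee back to $L_G$ via the sparsifier property), and then invoke the analysis of~\cite{peng2015partitioning,zanetti2019} to justify running classical $k$-means on the resulting embedding in $\tO(n\poly(k))$ time. This is exactly how the paper derives the corollary from its preceding claim on approximating the $k$ bottom eigenvalues and eigenvectors.
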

 
\section*{Acknowledgements}
We are very grateful to Aleksandrs Belovs and Troy Lee for proving a composition property of the adversary method for relational problems which we required for the proof of our lower bound, and to Joran van Apeldoorn and Tijn de Vos for spotting an error in the quantum shortest-path tree algorithm from \cite{durr2006quantum}, which we have fixed here.
This work also benefited from discussions with Andr\'e Chailloux, Shantanav Chakraborty, Andr\'as Gily\'en, Michael Kapralov, Robin Kothari, Anthony Leverrier, Christian Majenz, Christian Schaffner, Luca Trevisan and Luca Zanetti.
We also thank the anonymous FOCS referees for helpful comments.

\bibliographystyle{alpha}
\bibliography{biblio}

\newcommand{\etalchar}[1]{$^{#1}$}
\newcommand{\lName}{1}\newcommand{\arxiv}[1]{arXiv:
  \href{https://arxiv.org/abs/#1}{\ttfamily{#1}}\?}\def\?#1{\if.#1{}\else#1\fi}\newcommand{\doi}[1]{doi:
  \href{https://doi.org/#1}{\ttfamily{\nolinkurl{#1}}}\?}\newcommand{\skp}[3]{#2}\newcommand{\focs
  }[1]{\if\lName1\skp{ }{Proceedings of the #1 {IEEE} Symposium on Foundations
  of Computer Science ({FOCS})}{ }\else{FOCS}\fi}\newcommand{\stoc
  }[1]{\if\lName1\skp{ }{Proceedings of the #1 {ACM} Symposium on Theory of
  Computing ({STOC})}{ }\else{STOC}\fi}\newcommand{\soda }[1]{\if\lName1\skp{
  }{Proceedings of the #1 {ACM-SIAM} Symposium on Discrete Algorithms
  ({SODA})}{ }\else{SODA}\fi}\newcommand{\stacs }[1]{\if\lName1\skp{
  }{Proceedings of the #1 Symposium on Theoretical Aspects of Computer Science
  ({STACS})}{ }\else{STACS}\fi}\newcommand{\itcs }[1]{\if\lName1\skp{
  }{Proceedings of the #1 Innovations in Theoretical Computer Science
  Conference (ITCS)}{ }\else{ITCS}\fi}\newcommand{\fsttcs }[1]{\if\lName1\skp{
  }{Proceedings of the #1 International Conference on Foundations of Software
  Technology and Theoretical Computer Science (FSTTCS)}{
  }\else{FSTTCS}\fi}\newcommand{\ccc }[1]{\if\lName1\skp{ }{Proceedings of the
  #1 {IEEE} Conference on Computational Complexity ({CCC})}{
  }\else{CCC}\fi}\newcommand{\colt }[1]{\if\lName1\skp{ }{Proceedings of the #1
  Conference On Learning Theory (COLT)}{ }\else{COLT}\fi}\newcommand{\nips
  }[1]{\if\lName1\skp{ }{Advances in Neural Information Processing Systems #1
  ({NIPS})}{ }\else{NIPS}\fi}\newcommand{\aistats }[1]{\if\lName1\skp{
  }{Proceedings of the #1 International Conference on Artificial Intelligence
  and Statistics ({AISTATS})}{ }\else{AISTATS}\fi}\newcommand{\icalp
  }[1]{\if\lName1\skp{ }{Proceedings of the #1 International Colloquium on
  Automata, Languages, and Programming (ICALP)}{
  }\else{ICALP}\fi}\newcommand{\icml }[1]{\if\lName1\skp{ }{Proceedings of the
  #1 International Conference on Machine Learning (ICML)}{
  }\else{ICML}\fi}\newcommand{\esa }[1]{\if\lName1\skp{ }{Proceedings of the #1
  European Symposium on Algorithms (ESA)}{ }\else{ESA}\fi}\newcommand{\jacm
  }{\if\lName1\skp{ }{Journal of the ACM}{ }\else{J. ACM}\fi}\newcommand{\jams
  }{\if\lName1\skp{ }{Journal of the AMS}{ }\else{J. AMS}\fi}\newcommand{\pams
  }{\if\lName1\skp{ }{Proceedings of the AMS}{ }\else{Proc.
  AMS}\fi}\newcommand{\linalgappl }{\if\lName1\skp{ }{Linear Algebra and its
  Applications}{ }\else{Lin. Alg. \& App.}\fi}\newcommand{\jalgo
  }{\if\lName1\skp{ }{Journal of Algorithms}{ }\else{J.
  Alg.}\fi}\newcommand{\jcss }{\if\lName1\skp{ }{Journal of Computer and System
  Sciences}{}\else{J. Comp. Sys. Sci.}\fi}\newcommand{\cc }{\if\lName1\skp{
  }{Computational Complexity}{ }\else{Comp. Comp.}\fi}\newcommand{\algor
  }{\if\lName1\skp{ }{Algorithmica}{ }\else{Alg.}\fi}\newcommand{\comba
  }{\if\lName1\skp{ }{Combinatorica}{ }\else{Comb.}\fi}\newcommand{\cacm
  }{\if\lName1\skp{ }{Communications of the ACM}{ }\else{Comm.
  ACM}\fi}\newcommand{\sigart }{\if\lName1\skp{ }{SIGART Bulletin}{
  }\else{SIGART Bull.}\fi}\newcommand{\sigactn }{\if\lName1\skp{ }{SIGACT
  News}{ }\else{SIGACT News}\fi}\newcommand{\eatcsbul }{\if\lName1\skp{
  }{Bulletin of the {EATCS}}{ }\else{Bull. {EATCS}}\fi}\newcommand{\siamrev
  }{\if\lName1\skp{ }{SIAM Review}{ }\else{SIAM Rev.}\fi}\newcommand{\siamjc
  }{\if\lName1\skp{ }{SIAM Journal on Computing}{ }\else{SIAM J.
  Comp.}\fi}\newcommand{\siamjo }{\if\lName1\skp{ }{SIAM Journal on
  Optimization}{ }\else{SIAM J. Opt.}\fi}\newcommand{\siamjdm }{\if\lName1\skp{
  }{SIAM Journal on Discrete Mathematics}{ }\else{SIAM J. Disc.
  Math.}\fi}\newcommand{\siamjnum }{\if\lName1\skp{ }{SIAM Journal on Numerical
  Analysis}{ }\else{SIAM J. Num. Anal.}\fi}\newcommand{\siamjmathanal
  }{\if\lName1\skp{ }{SIAM Journal on Mathematical Analysis}{ }\else{SIAM J.
  Math. Anal.}\fi}\newcommand{\discmath }{\if\lName1\skp{ }{Discrete
  Mathematics}{ }\else{Disc. Math.}\fi}\newcommand{\das }{\if\lName1\skp{
  }{Discrete Applied Mathematics}{ }\else{Disc. App.
  Math.}\fi}\newcommand{\amatstat }{\if\lName1\skp{ }{Annals of Mathematical
  Statistics}{ }\else{Ann. Math. Stat.}\fi}\newcommand{\rms }{\if\lName1\skp{
  }{Russian Mathematical Surveys}{ }\else{Russ. Math.
  Surv.}\fi}\newcommand{\invmath }{\if\lName1\skp{ }{Inventiones Mathematicae}{
  }\else{Inv. Math.}\fi}\newcommand{\jnumber }{\if\lName1\skp{ }{Journal of
  Number Theory}{ }\else{J. Num. Th.}\fi}\newcommand{\toc }{\if\lName1\skp{
  }{Theory of Computing}{ }\else{Th. Comp.}\fi}\newcommand{\quantum
  }{\if\lName1\skp{ }{{Quantum}}{ }\else{Quant.}\fi}\newcommand{\cmp
  }{\if\lName1\skp{ }{Communications in Mathematical Physics}{ }\else{Comm.
  Math. Phys.}\fi}\newcommand{\rspa }{\if\lName1\skp{ }{Proceedings of the
  Royal Society A}{ }\else{Proc. Roy. Soc. A}\fi}\newcommand{\qic
  }{\if\lName1\skp{ }{Quantum Information and Computation}{ }\else{Quant. Inf.
  \& Comp.}\fi}\newcommand{\physrev }{\if\lName1\skp{ }{Physical Review}{
  }\else{Phys. Rev.}\fi}\newcommand{\pra }{\if\lName1\skp{ }{Physical Review
  A}{ }\else{Phys. Rev. A}\fi}\newcommand{\prb }{\if\lName1\skp{ }{Physical
  Review B}{ }\else{Phys. Rev. B}\fi}\newcommand{\pre }{\if\lName1\skp{
  }{Physical Review E}{ }\else{Phys. Rev. E}\fi}\newcommand{\prx
  }{\if\lName1\skp{ }{Physical Review X}{ }\else{Phys. Rev.
  X}\fi}\newcommand{\prl }{\if\lName1\skp{ }{Physical Review Letters}{
  }\else{Phys. Rev. Lett.}\fi}\newcommand{\physrep }{\if\lName1\skp{ }{Physics
  Reports}{ }\else{Phys. Rep.}\fi}\newcommand{\rmp }{\if\lName1\skp{ }{Reviews
  of Modern Physics}{ }\else{Rev. Mod. Phys. }\fi}\newcommand{\phystoday
  }{\if\lName1\skp{ }{Physics Today}{ }\else{Phys.
  Today}\fi}\newcommand{\physics }{\if\lName1\skp{ }{Physics}{
  }\else{Phys.}\fi}\newcommand{\nature }{\if\lName1\skp{ }{Nature}{
  }\else{Nat.}\fi}\newcommand{\natcomm }{\if\lName1\skp{ }{Nature
  Communications}{ }\else{Nat. Comm.}\fi}\newcommand{\natphys }{\if\lName1\skp{
  }{Nature Physics}{ }\else{Nat. Phys.}\fi}\newcommand{\npjqi }{\if\lName1\skp{
  }{npj Quantum Information}{ }\else{npj Quant. Inf.}\fi}\newcommand{\scirep
  }{\if\lName1\skp{ }{Scientific Reports}{ }\else{Sci.
  Rep.}\fi}\newcommand{\science }{\if\lName1\skp{ }{Science}{
  }\else{Sci.}\fi}\newcommand{\jpa }{\if\lName1\skp{ }{Journal of Physics A:
  Mathematical and Theoretical}{ }\else{J. Phys. A}\fi}\newcommand{\ijtp
  }{\if\lName1\skp{ }{International Journal of Theoretical Physics}{
  }\else{Int. J. Th. Phys.}\fi}\newcommand{\jmo }{\if\lName1\skp{ }{Journal of
  Modern Optics}{ }\else{J. Mod. Opt.}\fi}\newcommand{\jstatph
  }{\if\lName1\skp{ }{Journal of Statistical Physics}{ }\else{J. Stat.
  Phys.}\fi}\newcommand{\lncs }{\if\lName1\skp{ }{Lecture Notes in Computer
  Science}{ }\else{L. Notes Comp. Sci.}\fi}\newcommand{\lnai }{\if\lName1\skp{
  }{Lecture Notes in Artificial Intelligence}{ }\else{L. Notes Art.
  Int.}\fi}\newcommand{\lnm }{\if\lName1\skp{ }{Lecture Notes in Mathematics}{
  }\else{L. Notes Math.}\fi}\newcommand{\tams }{\if\lName1\skp{ }{Transactions
  of the American Mathematical Society}{ }\else{Trans.
  AMS}\fi}\newcommand{\ieeeit }{\if\lName1\skp{ }{{IEEE} Transactions on
  Information Theory}{ }\else{{IEEE} Trans. Inf. Th.}\fi}\newcommand{\iscs
  }{\if\lName1\skp{ }{International Series in Computer Science}{ }\else{Int.
  Ser. Comp. Sci.}\fi}\newcommand{\tocl }{\if\lName1\skp{ }{Theory of Computing
  Library}{ }\else{Th. Comp. Lib.}\fi}
\begin{thebibliography}{vAGL{\etalchar{+}}21}

\bibitem[AAL21]{apers2021sublinear}
Simon Apers, Arinta Auza, and Troy Lee.
\newblock A sublinear time quantum algorithm for st minimum cut on dense simple
  graphs.
\newblock \arxiv{2110.15587}, 2021.

\bibitem[ABG07]{aimeur2007quantum}
Esma A{\"\i}meur, Gilles Brassard, and S{\'e}bastien Gambs.
\newblock Quantum clustering algorithms.
\newblock In {\em \icml{24th}}, pages 1--8. ACM, 2007.
\newblock \doi{10.1145/1273496.1273497}.

\bibitem[ABS{\etalchar{+}}19]{ahmed2019graph}
Reyan Ahmed, Greg Bodwin, Faryad~D. Sahneh, Keaton Hamm, Mohammad~J.L. Jebelli,
  Stephen Kobourov, and Richard Spence.
\newblock Graph spanners: A tutorial review.
\newblock \arxiv{1909.03152}, 2019.

\bibitem[ACK{\etalchar{+}}16]{andoni2016sketching}
Alexandr Andoni, Jiecao Chen, Robert Krauthgamer, Bo~Qin, David~P. Woodruff,
  and Qin Zhang.
\newblock On sketching quadratic forms.
\newblock In {\em \itcs{2016}}, pages 311--319. ACM, 2016.
\newblock \arxiv{1511.06099}.

\bibitem[ADD{\etalchar{+}}93]{althofer1993sparse}
Ingo Alth{\"o}fer, Gautam Das, David Dobkin, Deborah Joseph, and Jos{\'e}
  Soares.
\newblock On sparse spanners of weighted graphs.
\newblock {\em Discrete \& Computational Geometry}, 9(1):81--100, 1993.
\newblock \doi{10.1007/BF02189308}.

\bibitem[AK16]{arora2016combinatorial}
Sanjeev Arora and Satyen Kale.
\newblock A combinatorial, primal-dual approach to semidefinite programs.
\newblock {\em \jacm}, 63(2), 2016.
\newblock \doi{10.1145/2837020}.

\bibitem[AKP19]{andoni2018solving}
Alexandr Andoni, Robert Krauthgamer, and Yosef Pogrow.
\newblock On solving linear systems in sublinear time.
\newblock In {\em \itcs{10th}}, pages 3:1--3:19. ACM, 2019.
\newblock \arxiv{1809.02995}.

\bibitem[AL21]{apers2020edgeconn}
Simon Apers and Troy Lee.
\newblock Quantum complexity of minimum cut.
\newblock In {\em \ccc{36th}}. Leibniz International Proceedings in Informatics
  (LIPIcs), 2021.
\newblock \arxiv{2011.09823}.

\bibitem[ALS20]{axelrod2020near}
Brian Axelrod, Yang~P. Liu, and Aaron Sidford.
\newblock Near-optimal approximate discrete and continuous submodular function
  minimization.
\newblock In {\em \soda{14th}}, pages 837--853. SIAM, 2020.
\newblock \arxiv{1909.00171}.

\bibitem[Amb12]{ambainis2012variable}
Andris Ambainis.
\newblock Variable time amplitude amplification and quantum algorithms for
  linear algebra problems.
\newblock In {\em \stacs{29th}}, volume~14, pages 636--647. Leibniz
  International Proceedings in Informatics (LIPIcs), 2012.
\newblock \arxiv{1010.4458}.

\bibitem[ARV09]{arora2009expander}
Sanjeev Arora, Satish Rao, and Umesh Vazirani.
\newblock Expander flows, geometric embeddings and graph partitioning.
\newblock {\em \jacm}, 56(2):5, 2009.
\newblock \doi{10.1145/1502793.1502794}.

\bibitem[AZ19]{aharonov2019hamiltonian}
Dorit Aharonov and Leo Zhou.
\newblock Hamiltonian sparsification and gap-simulation.
\newblock In {\em \itcs{10th}}, pages 2:1--2:21. ACM, 2019.
\newblock \arxiv{1804.11084}.

\bibitem[BBC{\etalchar{+}}01]{beals2001quantum}
Robert Beals, Harry Buhrman, Richard Cleve, Michele Mosca, and Ronald de~Wolf.
\newblock Quantum lower bounds by polynomials.
\newblock {\em \jacm}, 48(4):778--797, 2001.
\newblock \arxiv{quant-ph/9802049}.

\bibitem[BFNR08]{buhrman2008quantum}
Harry Buhrman, Lance Fortnow, Ilan Newman, and Hein R{\"o}hrig.
\newblock Quantum property testing.
\newblock {\em SIAM Journal on Computing}, 37(5):1387--1400, 2008.

\bibitem[BK96]{benczur1996approximating}
Andr{\'a}s~A. Bencz{\'u}r and David~R. Karger.
\newblock Approximating $s-t$ minimum cuts in $\widetilde{O}(n^2)$ time.
\newblock In {\em \stoc{28th}}, volume~96, pages 47--55. ACM, 1996.
\newblock \doi{10.1145/237814.237827}.

\bibitem[BKSF19]{brandao2019faster}
Fernando G. S.~L. Brand{\~a}o, Richard Kueng, and Daniel Stilck~Fran{\c c}a.
\newblock Faster quantum and classical {SDP} approximations for quadratic
  binary optimization.
\newblock \arxiv{1909.04613}, 2019.

\bibitem[BL20]{belovs2019relational}
Aleksandrs Belovs and Troy Lee.
\newblock The quantum query complexity of composition with a relation.
\newblock \arxiv{2004.06439}, 2020.

\bibitem[Bol13]{bollobas2013modern}
B{\'e}la Bollob{\'a}s.
\newblock {\em Modern graph theory}, volume 184.
\newblock Springer Science \& Business Media, 2013.
\newblock \doi{10.1007/978-1-4612-0619-4}.

\bibitem[BP98]{brin1998anatomy}
Sergey Brin and Lawrence Page.
\newblock The anatomy of a large-scale hypertextual web search engine.
\newblock {\em Computer networks and ISDN systems}, 30(1-7):107--117, 1998.
\newblock \doi{10.1016/S0169-7552(98)00110-X}.

\bibitem[BSS12]{batson2012twice}
Joshua Batson, Daniel~A. Spielman, and Nikhil Srivastava.
\newblock Twice-{R}amanujan sparsifiers.
\newblock {\em \siamjc}, 41(6):1704--1721, 2012.
\newblock \arxiv{0808.0163}.

\bibitem[BSST13]{batson2013spectral}
Joshua Batson, Daniel~A. Spielman, Nikhil Srivastava, and Shang-Hua Teng.
\newblock Spectral sparsification of graphs: theory and algorithms.
\newblock {\em Communications of the ACM}, 56(8):87--94, 2013.
\newblock \doi{10.1145/2492007.2492029}.

\bibitem[BST19]{bansal2019new}
Nikhil Bansal, Ola Svensson, and Luca Trevisan.
\newblock New notions and constructions of sparsification for graphs and
  hypergraphs.
\newblock In {\em \focs{60th}}, pages 910--928. IEEE, 2019.
\newblock \arxiv{1905.01495}.

\bibitem[CDS80]{cvetkovic1980spectra}
Drago{\v{s}}~M. Cvetkovic, Michael Doob, and Horst Sachs.
\newblock {\em Spectra of graphs}, volume~10.
\newblock Academic Press, New York, 1980.
\newblock \doi{10.1002/zamm.19960760305}.

\bibitem[CdW23]{chen2021quantum}
Yanlin Chen and Ronald de~Wolf.
\newblock Quantum algorithms and lower bounds for linear regression with norm
  constraints.
\newblock In {\em \icalp{50th}}, Leibniz International Proceedings in
  Informatics (LIPIcs). Schloss Dagstuhl -- Leibniz-Zentrum f{\"u}r Informatik,
  2023.
\newblock \arxiv{2110.13086}.

\bibitem[CGJ19]{chakraborty2018power}
Shantanav Chakraborty, Andr{\'a}s Gily{\'e}n, and Stacey Jeffery.
\newblock The power of block-encoded matrix powers: improved regression
  techniques via faster {H}amiltonian simulation.
\newblock In {\em \icalp{46th}}, pages 33:1--33:14. Schloss
  Dagstuhl--Leibniz-Zentrum fuer Informatik, 2019.
\newblock \arxiv{1804.01973}.

\bibitem[Chu97]{chung1997spectral}
Fan~R.K. Chung.
\newblock {\em Spectral graph theory}.
\newblock Number~92. American Mathematical Society, 1997.
\newblock \doi{10.1090/cbms/092}.

\bibitem[CKL{\etalchar{+}}22]{chen2022maximum}
Li~Chen, Rasmus Kyng, Yang~P Liu, Richard Peng, Maximilian Probst~Gutenberg,
  and Sushant Sachdeva.
\newblock Maximum flow and minimum-cost flow in almost-linear time.
\newblock In {\em \focs{63rd}}. IEEE, 2022.
\newblock \arxiv{2203.00671}.

\bibitem[CKM{\etalchar{+}}11]{christiano2011electrical}
Paul Christiano, Jonathan~A. Kelner, Aleksander Madry, Daniel~A. Spielman, and
  Shang-Hua Teng.
\newblock Electrical flows, {L}aplacian systems, and faster approximation of
  maximum flow in undirected graphs.
\newblock In {\em \stoc{43rd}}, pages 273--282. ACM, 2011.
\newblock \arxiv{1010.2921}.

\bibitem[CKP{\etalchar{+}}16]{cohen2016faster}
Michael~B. Cohen, Jonathan~A. Kelner, John Peebles, Richard Peng, Aaron
  Sidford, and Adrian Vladu.
\newblock Faster algorithms for computing the stationary distribution,
  simulating random walks, and more.
\newblock In {\em \focs{57th}}, pages 583--592. IEEE, 2016.
\newblock \arxiv{1608.03270}.

\bibitem[CKS17]{childs2017quantum}
Andrew~M. Childs, Robin Kothari, and Rolando~D. Somma.
\newblock Quantum algorithm for systems of linear equations with exponentially
  improved dependence on precision.
\newblock {\em SIAM Journal on Computing}, 46(6):1920--1950, 2017.

\bibitem[CLM{\etalchar{+}}15]{cohen2015uniform}
Michael~B. Cohen, Yin~Tat Lee, Cameron~M. Musco, Christopher~P. Musco, Richard
  Peng, and Aaron Sidford.
\newblock Uniform sampling for matrix approximation.
\newblock In {\em \itcs{6th}}, pages 181--190. ACM, 2015.
\newblock \arxiv{1408.5099}.

\bibitem[CLN21]{cade2021quantum}
Chris Cade, Farrokh Labib, and Ido Niesen.
\newblock Quantum motif clustering.
\newblock \arxiv{2111.13222}, 2021.

\bibitem[CMTV17]{cohen2017matrix}
Michael~B Cohen, Aleksander Madry, Dimitris Tsipras, and Adrian Vladu.
\newblock Matrix scaling and balancing via box constrained {Newton}'s method
  and interior point methods.
\newblock In {\em \focs{58th}}, pages 902--913. IEEE, 2017.
\newblock \arxiv{1704.02310}.

\bibitem[CPT15]{christiani2015independence}
Tobias Christiani, Rasmus Pagh, and Mikkel Thorup.
\newblock From independence to expansion and back again.
\newblock In {\em \stoc{47th}}, pages 813--820. ACM, 2015.
\newblock \arxiv{1506.03676}.

\bibitem[CRR{\etalchar{+}}96]{chandra1996electrical}
Ashok~K. Chandra, Prabhakar Raghavan, Walter~L. Ruzzo, Roman Smolensky, and
  Prasoon Tiwari.
\newblock The electrical resistance of a graph captures its commute and cover
  times.
\newblock {\em Computational Complexity}, 6(4):312--340, 1996.
\newblock \doi{10.1007/BF01270385}.

\bibitem[CT12]{cover2012elements}
Thomas~M. Cover and Joy~A. Thomas.
\newblock {\em Elements of information theory}.
\newblock John Wiley \& Sons, 2012.

\bibitem[CW79]{carter1979universal}
Lawrence~J. Carter and Mark~N. Wegman.
\newblock Universal classes of hash functions.
\newblock {\em Journal of computer and system sciences}, 18(2):143--154, 1979.
\newblock \doi{10.1016/0022-0000(79)90044-8}.

\bibitem[Das20]{daskin2017quantum}
Ammar Daskin.
\newblock Quantum spectral clustering through a biased phase estimation
  algorithm.
\newblock {\em TWMS Journal of Applied and Engineering Mathematics},
  1(10):24--33, 2020.
\newblock \arxiv{1703.05568}.

\bibitem[DHHM06]{durr2006quantum}
Christoph D{\"u}rr, Mark Heiligman, Peter H{\o}yer, and Mehdi Mhalla.
\newblock Quantum query complexity of some graph problems.
\newblock {\em \siamjc}, 35(6):1310--1328, 2006.
\newblock \arxiv{quant-ph/0401091}.

\bibitem[DLP11]{ding2011cover}
Jian Ding, James~R. Lee, and Yuval Peres.
\newblock Cover times, blanket times, and majorizing measures.
\newblock In {\em Proceedings of the 43rd annual ACM symposium on Theory of
  computing}, pages 61--70. ACM, 2011.
\newblock \doi{10.4007/annals.2012.175.3.8}.

\bibitem[DMM06]{drineas2006sampling}
Petros Drineas, Michael~W. Mahoney, and Shan Muthukrishnan.
\newblock Sampling algorithms for {$l_2$} regression and applications.
\newblock In {\em \soda{17th}}, pages 1127--1136. SIAM, 2006.

\bibitem[ER18]{eden&rosenbaum:LB}
Talya Eden and Will Rosenbaum.
\newblock Lower bounds for approximating graph parameters via communication
  complexity.
\newblock In {\em Proceedings of APPROX-RANDOM}, pages 11:1--11:18, 2018.
\newblock \arxiv{1709.04262}.

\bibitem[FHHP19]{fung2019general}
Wai-Shing Fung, Ramesh Hariharan, Nicholas J.~A. Harvey, and Debmalya
  Panigrahi.
\newblock A general framework for graph sparsification.
\newblock {\em \siamjc}, 48(4):1196--1223, 2019.
\newblock \doi{10.1137/16M1091666}.

\bibitem[GN21]{gribling2021improved}
Sander Gribling and Harold Nieuwboer.
\newblock Improved quantum lower and upper bounds for matrix scaling.
\newblock In {\em \stacs{39th}}, 2021.
\newblock \arxiv{2109.15282}.

\bibitem[Gol10]{goldreich2010introduction}
Oded Goldreich.
\newblock Introduction to testing graph properties.
\newblock In {\em Property testing}, pages 105--141. Springer, 2010.

\bibitem[GR02]{goldreich2002property}
Oded Goldreich and Dana Ron.
\newblock Property testing in bounded degree graphs.
\newblock {\em \algor}, 32(2):302--343, 2002.
\newblock \doi{10.1007/s00453-001-0078-7}.

\bibitem[Gre96]{gremban1996combinatorial}
Keith~D. Gremban.
\newblock {\em Combinatorial preconditioners for sparse, symmetric, diagonally
  dominant linear systems}.
\newblock PhD thesis, Carnegie Mellon University, 1996.

\bibitem[Gro96]{grover1996fast}
Lov~K. Grover.
\newblock A fast quantum mechanical algorithm for database search.
\newblock In {\em \stoc{28th}}, pages 212--219. ACM, 1996.
\newblock \arxiv{quant-ph/9605043}.

\bibitem[GSLW19]{gilyen2019quantum}
Andr{\'a}s Gily{\'e}n, Yuan Su, Guang~Hao Low, and Nathan Wiebe.
\newblock Quantum singular value transformation and beyond: exponential
  improvements for quantum matrix arithmetics.
\newblock In {\em \stoc{51st}}, pages 193--204. ACM, 2019.
\newblock \arxiv{1806.01838}.

\bibitem[GW95]{goemans1995improved}
Michel~X. Goemans and David~P. Williamson.
\newblock Improved approximation algorithms for maximum cut and satisfiability
  problems using semidefinite programming.
\newblock {\em \jacm}, 42(6):1115--1145, 1995.
\newblock \doi{10.1145/227683.227684}.

\bibitem[HHL09]{harrow2009quantum}
Aram~W. Harrow, Avinatan Hassidim, and Seth Lloyd.
\newblock Quantum algorithm for linear systems of equations.
\newblock {\em Physical Review Letters}, 103(15):150502, 2009.
\newblock \arxiv{0811.3171}.

\bibitem[HLS07]{hoyer2007negative}
Peter H{\o}yer, Troy Lee, and Robert Spalek.
\newblock Negative weights make adversaries stronger.
\newblock In {\em \stoc{39th}}, pages 526--535. ACM, 2007.
\newblock \arxiv{quant-ph/0611054}.

\bibitem[Hol73]{holevo1973bounds}
Alexander~S. Holevo.
\newblock Bounds for the quantity of information transmitted by a quantum
  communication channel.
\newblock {\em Problemy Peredachi Informatsii}, 9(3):3--11, 1973.
\newblock English translation in {\it Problems of Information Transmission},
  9:177--183, 1973.

\bibitem[HRRS19]{hamoudi2019quantum}
Yassine Hamoudi, Patrick Rebentrost, Ansis Rosmanis, and Miklos Santha.
\newblock Quantum and classical algorithms for approximate submodular function
  minimization.
\newblock {\em \qic}, 19(15-16):1325--1349, 2019.
\newblock \arxiv{1907.05378}.

\bibitem[HS19]{herbert2019spectral}
Steven Herbert and Sathyawageeswar Subramanian.
\newblock Spectral sparsification of matrix inputs as a preprocessing step for
  quantum algorithms.
\newblock \arxiv{1910.02861}, 2019.

\bibitem[IJ19]{ito2019approximate}
Tsuyoshi Ito and Stacey Jeffery.
\newblock Approximate span programs.
\newblock {\em \algor}, 81(6):2158--2195, 2019.
\newblock \arxiv{1507.00432}.

\bibitem[JJKP18]{jarret2018quantum}
Michael Jarret, Stacey Jeffery, Shelby Kimmel, and Alvaro Piedrafita.
\newblock Quantum algorithms for connectivity and related problems.
\newblock In {\em \esa{26th}}, pages 49:1--49:13. Springer, 2018.
\newblock \arxiv{1804.10591}.

\bibitem[JS21]{jambulapati2021ultrasparse}
Arun Jambulapati and Aaron Sidford.
\newblock Ultrasparse ultrasparsifiers and faster {L}aplacian system solvers.
\newblock In {\em \soda{32nd}}, pages 540--559. SIAM, 2021.
\newblock \arxiv{2011.08806}.

\bibitem[Kar72]{karp1972reducibility}
Richard~M. Karp.
\newblock Reducibility among combinatorial problems.
\newblock In {\em Complexity of computer computations}, pages 85--103.
  Springer, 1972.
\newblock \doi{10.1007/978-1-4684-2001-2_9}.

\bibitem[Kar94]{karger1994using}
David~R. Karger.
\newblock Using randomized sparsification to approximate minimum cuts.
\newblock In {\em \soda{5th}}, volume~94, pages 424--432. SIAM, 1994.

\bibitem[Kar00]{karger2000minimum}
David~R. Karger.
\newblock Minimum cuts in near-linear time.
\newblock {\em \jacm}, 47(1):46--76, 2000.
\newblock \arxiv{cs/9812007}.

\bibitem[KKLV00]{kahn2000cover}
Jeff Kahn, Jeong~Han Kim, Laszlo Lovasz, and Van~H. Vu.
\newblock The cover time, the blanket time, and the {M}atthews bound.
\newblock In {\em \focs{41st}}, pages 467--475. IEEE, 2000.
\newblock \arxiv{math/0005121}.

\bibitem[KKMO07]{khot2007optimal}
Subhash Khot, Guy Kindler, Elchanan Mossel, and Ryan O'Donnell.
\newblock Optimal inapproximability results for {MAX-CUT} and other 2-variable
  {CSP}s?
\newblock {\em \siamjc}, 37(1):319--357, 2007.
\newblock \doi{10.1137/S0097539705447372}.

\bibitem[KL13]{kelner2013spectral}
Jonathan~A. Kelner and Alex Levin.
\newblock Spectral sparsification in the semi-streaming setting.
\newblock {\em Theory of Computing Systems}, 53(2):243--262, 2013.
\newblock \doi{10.1007/s00224-012-9396-1}.

\bibitem[KL20]{kerenidis2020quantum}
Iordanis Kerenidis and Jonas Landman.
\newblock Quantum spectral clustering.
\newblock 2020.
\newblock \arxiv{2007.00280}.

\bibitem[KLLP19]{kerenidis2019q}
Iordanis Kerenidis, Jonas Landman, Alessandro Luongo, and Anupam Prakash.
\newblock q-means: A quantum algorithm for unsupervised machine learning.
\newblock In {\em Advances in Neural Information Processing Systems}, pages
  4136--4146, 2019.
\newblock \arxiv{1812.03584}.

\bibitem[KLM{\etalchar{+}}17]{kapralov2017single}
Michael Kapralov, Yin~Tat Lee, Cameron~M. Musco, Christopher~P. Musco, and
  Aaron Sidford.
\newblock Single pass spectral sparsification in dynamic streams.
\newblock {\em \siamjc}, 46(1):456--477, 2017.
\newblock \arxiv{1407.1289}.

\bibitem[KLOS14]{kelner2014almost}
Jonathan~A. Kelner, Yin~Tat Lee, Lorenzo Orecchia, and Aaron Sidford.
\newblock An almost-linear-time algorithm for approximate max flow in
  undirected graphs, and its multicommodity generalizations.
\newblock In {\em \soda{25th}}, pages 217--226. SIAM, 2014.
\newblock \arxiv{1304.2338}.

\bibitem[KLP16]{koutis2016faster}
Ioannis Koutis, Alex Levin, and Richard Peng.
\newblock Faster spectral sparsification and numerical algorithms for {SDD}
  matrices.
\newblock {\em ACM Transactions on Algorithms (TALG)}, 12(2):17, 2016.
\newblock \arxiv{1209.5821}.

\bibitem[KMP14]{koutis2014approaching}
Ioannis Koutis, Gary~L. Miller, and Richard Peng.
\newblock Approaching optimality for solving {SDD} linear systems.
\newblock {\em SIAM Journal on Computing}, 43(1):337--354, 2014.
\newblock \arxiv{1003.2958}.

\bibitem[KMT11]{koutis2011combinatorial}
Ioannis Koutis, Gary~L. Miller, and David Tolliver.
\newblock Combinatorial preconditioners and multilevel solvers for problems in
  computer vision and image processing.
\newblock {\em Computer Vision and Image Understanding}, 115(12):1638--1646,
  2011.
\newblock \doi{10.1007/978-3-642-10331-5_99}.

\bibitem[KP12]{kapralov2012spectral}
Michael Kapralov and Rina Panigrahy.
\newblock Spectral sparsification via random spanners.
\newblock In {\em \itcs{3rd}}, pages 393--398. ACM, 2012.
\newblock \doi{10.1145/2090236.2090267}.

\bibitem[KX16]{koutis2016simple}
Ioannis Koutis and Shen~Chen Xu.
\newblock Simple parallel and distributed algorithms for spectral graph
  sparsification.
\newblock {\em ACM Transactions on Parallel Computing (TOPC)}, 3(2):1--14,
  2016.
\newblock \arxiv{1402.3851}.

\bibitem[Lee13]{lee2013probabilistic}
Yin~Tat Lee.
\newblock Probabilistic spectral sparsification in sublinear time.
\newblock \arxiv{1401.0085}, 2013.

\bibitem[LMP13]{li2013iterative}
Mu~Li, Gary~L. Miller, and Richard Peng.
\newblock Iterative row sampling.
\newblock In {\em \focs{54th}}, pages 127--136. IEEE, 2013.
\newblock \arxiv{1211.2713}.

\bibitem[LMR13]{lloyd2013quantum}
Seth Lloyd, Masoud Mohseni, and Patrick Rebentrost.
\newblock Quantum algorithms for supervised and unsupervised machine learning.
\newblock \arxiv{1307.0411}, 2013.

\bibitem[LR99]{leighton1999multicommodity}
Tom Leighton and Satish Rao.
\newblock Multicommodity max-flow min-cut theorems and their use in designing
  approximation algorithms.
\newblock {\em \jacm}, 46(6):787--832, 1999.
\newblock \doi{10.1145/331524.331526}.

\bibitem[Mat88]{matthews1988covering}
Peter Matthews.
\newblock Covering problems for {M}arkov chains.
\newblock {\em The Annals of Probability}, 16(3):1215--1228, 1988.
\newblock \doi{10.1214/aop/1176991686}.

\bibitem[MSS15]{marcus2015interlacing}
Adam~W. Marcus, Daniel~A. Spielman, and Nikhil Srivastava.
\newblock Interlacing families {II}: Mixed characteristic polynomials and the
  {K}adison-{S}inger problem.
\newblock {\em Annals of Mathematics}, pages 327--350, 2015.
\newblock \arxiv{1306.3969}.

\bibitem[NC02]{nielsen2002quantum}
Michael~A. Nielsen and Isaac Chuang.
\newblock {\em Quantum computation and quantum information}.
\newblock Cambridge University Press, 2002.

\bibitem[NJW02]{ng2002spectral}
Andrew~Y. Ng, Michael~I. Jordan, and Yair Weiss.
\newblock On spectral clustering: Analysis and an algorithm.
\newblock In {\em Advances in neural information processing systems}, pages
  849--856, 2002.

\bibitem[OSV12]{orecchia2012approximating}
Lorenzo Orecchia, Sushant Sachdeva, and Nisheeth~K. Vishnoi.
\newblock Approximating the exponential, the {Lanczos} method and an
  $\widetilde{O}(m)$-time spectral algorithm for balanced separator.
\newblock In {\em \stoc{44th}}, pages 1141--1160. ACM, 2012.
\newblock \arxiv{1111.1491}.

\bibitem[Pel00]{peleg2000distributed}
David Peleg.
\newblock Distributed computing.
\newblock {\em SIAM Monographs on discrete mathematics and applications}, 5,
  2000.
\newblock \doi{10.1137/1.9780898719772}.

\bibitem[Pen16]{peng2016approximate}
Richard Peng.
\newblock Approximate undirected maximum flows in {$O(m \,
  \mathrm{polylog}(n))$} time.
\newblock In {\em \soda{27th}}, pages 1862--1867. SIAM, 2016.
\newblock \arxiv{1411.7631}.

\bibitem[Pid19]{piddock2019quantum}
Stephen Piddock.
\newblock Quantum walk search algorithms and effective resistance.
\newblock 2019.
\newblock \arxiv{1912.04196}.

\bibitem[PQ82]{picard1982selected}
Jean-Claude Picard and Maurice Queyranne.
\newblock Selected applications of minimum cuts in networks.
\newblock {\em INFOR: Information Systems and Operational Research},
  20(4):394--422, 1982.
\newblock \doi{10.1080/03155986.1982.11731876}.

\bibitem[PSZ15]{peng2015partitioning}
Richard Peng, He~Sun, and Luca Zanetti.
\newblock Partitioning well-clustered graphs: Spectral clustering works!
\newblock In {\em Conference on Learning Theory}, pages 1423--1455, 2015.
\newblock \arxiv{1411.2021}.

\bibitem[She09]{sherman2009breaking}
Jonah Sherman.
\newblock Breaking the multicommodity flow barrier for ${O}(\sqrt{\log
  n})$-approximations to sparsest cut.
\newblock In {\em \focs{50th}}, pages 363--372. IEEE, 2009.
\newblock \arxiv{0908.1379}.

\bibitem[She13]{sherman2013nearly}
Jonah Sherman.
\newblock Nearly maximum flows in nearly linear time.
\newblock In {\em \focs{54th}}, pages 263--269. IEEE, 2013.
\newblock \arxiv{1304.2077}.

\bibitem[Shm97]{shmoys1997cut}
David~B. Shmoys.
\newblock Approximation algorithms for np-hard problems.
\newblock chapter Cut Problems and Their Application to Divide-and-conquer,
  pages 192--235. PWS Publishing Co., 1997.

\bibitem[SHS16]{silva2016sparse}
Marcel~K. Silva, Nicholas J.~A. Harvey, and Cristiane~M. Sato.
\newblock Sparse sums of positive semidefinite matrices.
\newblock {\em ACM Transactions on Algorithms (TALG)}, 12(1):9, 2016.
\newblock \arxiv{1107.0088}.

\bibitem[SS11]{spielman2011graph}
Daniel~A. Spielman and Nikhil Srivastava.
\newblock Graph sparsification by effective resistances.
\newblock {\em \siamjc}, 40(6):1913--1926, 2011.
\newblock \arxiv{0803.0929}.

\bibitem[ST04]{spielman2004nearly}
Daniel~A. Spielman and Shang-Hua Teng.
\newblock Nearly-linear time algorithms for graph partitioning, graph
  sparsification, and solving linear systems.
\newblock In {\em \stoc{36th}}, pages 81--90. ACM, 2004.
\newblock \arxiv{cs/0310051}.

\bibitem[ST07]{spielman2007spectral}
Daniel~A. Spielman and Shang-Hua Teng.
\newblock Spectral partitioning works: Planar graphs and finite element meshes.
\newblock {\em Linear Algebra and its Applications}, 421(2-3):284--305, 2007.
\newblock \doi{10.1016/j.laa.2006.07.020}.

\bibitem[ST11]{spielman2011spectral}
Daniel~A. Spielman and Shang-Hua Teng.
\newblock Spectral sparsification of graphs.
\newblock {\em SIAM Journal on Computing}, 40(4):981--1025, 2011.
\newblock \arxiv{0808.4134}.

\bibitem[ST14]{spielman2014nearly}
Daniel~A. Spielman and Shang-Hua Teng.
\newblock Nearly linear time algorithms for preconditioning and solving
  symmetric, diagonally dominant linear systems.
\newblock {\em SIAM Journal on Matrix Analysis and Applications},
  35(3):835--885, 2014.
\newblock \arxiv{cs/0607105}.

\bibitem[SY19]{soma2019spectral}
Tasuku Soma and Yuichi Yoshida.
\newblock Spectral sparsification of hypergraphs.
\newblock In {\em \soda{30th}}, pages 2570--2581. SIAM, 2019.
\newblock \arxiv{1807.04974}.

\bibitem[Ten10]{teng2010laplacian}
Shang-Hua Teng.
\newblock The {L}aplacian paradigm: Emerging algorithms for massive graphs.
\newblock In {\em International Conference on Theory and Applications of Models
  of Computation}, pages 2--14. Springer, 2010.
\newblock \doi{10.1007/978-3-642-13562-0_2}.

\bibitem[Tre12]{trevisan2012max}
Luca Trevisan.
\newblock Max cut and the smallest eigenvalue.
\newblock {\em SIAM Journal on Computing}, 41(6):1769--1786, 2012.
\newblock \arxiv{0806.1978}.

\bibitem[TZ05]{thorup2005approximate}
Mikkel Thorup and Uri Zwick.
\newblock Approximate distance oracles.
\newblock {\em \jacm}, 52(1):1--24, 2005.
\newblock \doi{10.1145/1044731.1044732}.

\bibitem[vAGL{\etalchar{+}}21]{vanapeldoorn2021scaling}
Joran van Apeldoorn, Sander Gribling, Yinan Li, Harold Nieuwboer, Michael
  Walter, and Ronald de~Wolf.
\newblock {Quantum Algorithms for Matrix Scaling and Matrix Balancing}.
\newblock In {\em \icalp{48th}}, volume 198 of {\em Leibniz International
  Proceedings in Informatics (LIPIcs)}, pages 110:1--110:17. Schloss Dagstuhl
  -- Leibniz-Zentrum f{\"u}r Informatik, 2021.
\newblock \arxiv{2011.12823}.

\bibitem[Vis13]{vishnoi2013lx}
Nisheeth~K. Vishnoi.
\newblock Lx= b.
\newblock {\em Foundations and Trends in Theoretical Computer Science},
  8(1--2):1--141, 2013.
\newblock \doi{10.1561/0400000054}.

\bibitem[vL07]{von2007tutorial}
Ulrike {v}on Luxburg.
\newblock A tutorial on spectral clustering.
\newblock {\em Statistics and computing}, 17(4):395--416, 2007.
\newblock \arxiv{0711.0189}.

\bibitem[Wan17]{wang2017efficient}
Guoming Wang.
\newblock Efficient quantum algorithms for analyzing large sparse electrical
  networks.
\newblock {\em \qic}, 17(11-12):987--1026, 2017.
\newblock \arxiv{1311.1851}.

\bibitem[WKS15]{wiebe2015quantum}
Nathan Wiebe, Ashish Kapoor, and Krysta Svore.
\newblock Quantum algorithms for nearest-neighbor methods for supervised and
  unsupervised learning.
\newblock {\em \qic}, 15(3\&4):316--356, 2015.
\newblock \arxiv{1401.2142}.

\bibitem[Zan19]{zanetti2019}
Luca Zanetti.
\newblock {Private communication}, 2019.

\bibitem[ZGL03]{zhu2003semi}
Xiaojin Zhu, Zoubin Ghahramani, and John~D. Lafferty.
\newblock Semi-supervised learning using {G}aussian fields and harmonic
  functions.
\newblock In {\em \icml{20th}}, pages 912--919, 2003.

\bibitem[Zha15]{zhandry2015secure}
Mark Zhandry.
\newblock Secure identity-based encryption in the quantum random oracle model.
\newblock {\em International Journal of Quantum Information}, 13(04):1550014,
  2015.
\newblock \doi{10.1007/978-3-642-32009-5_44}.

\bibitem[ZHS05]{zhou2005learning}
Dengyong Zhou, Jiayuan Huang, and Bernhard Sch{\"o}lkopf.
\newblock Learning from labeled and unlabeled data on a directed graph.
\newblock In {\em \icml{22nd}}, pages 1036--1043. ACM, 2005.
\newblock \doi{10.1145/1102351.1102482}.

\end{thebibliography}

\appendix

\section{Quantum Algorithm for Shortest-Path Trees} \label{app:SPT}
In this section we describe a quantum algorithm for constructing a single-source shortest-path tree (SPT), based\footnote{It was brought to our attention by J.~van Apeldoorn and T.~de Vos that there is an error in the original algorithm from \cite{durr2006quantum}. Namely, in the proof of \cite[Theorem 7.1]{durr2006quantum} the cost function $f((u,v)) = 1/w(u,v)$ is used, while this should have been $f((u,v)) = \delta_G(v_0,u) + 1/w(u,v)$. We here describe the corrected algorithm, which requires a more elaborate proof of correctness.} on the algorithm by D\"urr, Heiligman, H{\o}yer and Mhalla~\cite{durr2006quantum}.
We slightly generalize the algorithm to the case where there is a set of ``forbidden edges'', which we encode by associating a weight $w_e = 0$ to these edges.
The algorithm should then return an SPT over the connected component of the source.
We recall that the cost of traversing an edge is described by its resistance $1/w_e$, and the distance $\delta_G(u,v)$ between nodes $u$ and $v$ is
\[
\delta_G(u,v)
= \min_{u-v \;\mathrm{path}\; P} \sum_{e \in P} \frac{1}{w_e}.
\]
 
The key routine that is used is a quantum algorithm for minimum-finding.
It assumes oracle access to a ``value function'' $f:[N] \to \real \cup \{\infty\}$ and a ``type function'' $g:[N] \to \natural$, with a total number of types
\[
M
= | \mathrm{Im}(g) |
= |\{g(j) \,:\, j\in[N]\}|.
\]
Given an integer $d \in \natural$ with $d \leq N/2$, the problem $\minfind(d,f,g)$ requires to output a subset $I \subseteq [N]$ of size $|I| = \min\{d,M\}$, such that
\begin{itemize}
\item
$g(i) \neq g(j)$ for all distinct $i,j \in I$, and
\item
for all $j \in [N] \backslash I$ and $i \in I$, if $f(j) < f(i)$, then $f(i') \leq f(j)$ for some $i' \in I$ with $g(i') = g(j)$.
\end{itemize}
\begin{proposition}[{\cite[Theorem 3.4]{durr2006quantum}}] \label{prop:min-val}
There is a quantum algorithm to solve $\minfind(d,f,g)$ in time $\tO(\sqrt{Nd} \log(1/\delta))$ with success probability at least $1-\delta$.
The algorithm requires $O(\log N)$ qubits and a QRAM of $\tO(d)$ bits.
\end{proposition}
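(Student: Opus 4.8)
The plan is to generalise the Dürr--Høyer quantum minimum-finding algorithm: maintain a candidate solution $I$ in QRAM and repeatedly use Grover search to find an element that ``improves'' it, until none is left.

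Concretely, I would store $I\subseteq[N]$ (initially empty, always of size at most $d$, always with pairwise distinct types) in QRAM in two redundant forms: an array sorted by $f$-value, from which $\mu:=\max_{i\in I}f(i)$ is read off directly, and a dictionary keyed by $g$-value, so that for a given type one can look up in $\tO(1)$ time whether and by which element it is represented in $I$. Call $j\in[N]$ \emph{improving} if inserting $j$ into $I$, then keeping for each type only a minimum-$f$-value element of that type, and then keeping only the $\min\{d,\#\text{types present}\}$ elements of smallest value, produces a set different from $I$; this predicate can be evaluated with $O(1)$ queries to $f$ and $g$ and $\tO(1)$ QRAM reads. The algorithm then loops: run (amplified) Grover search over $[N]$ with the improving-predicate as oracle; if it returns an improving $j$, update $I$ as above and repeat; when the search reports that no improving element exists, output $I$. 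Since this uses $O(\log N)$ qubits for the search register and $\tO(d)$ QRAM bits for $I$, the stated space bounds are immediate.

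For correctness one checks that an $I$ admitting no improving element is exactly a valid output of $\minfind(d,f,g)$: type-distinctness is an invariant of the update rule; $|I|=\min\{d,M\}$ because whenever $|I|<d$ and some type is unrepresented, an element of that type is improving; and the second defining condition fails precisely when some $j\notin I$ with $f(j)<f(i)$ (for some $i\in I$) would, on insertion, displace the current representative of its type or the current maximum --- i.e.\ exactly when $j$ is improving. Elements with $f$-value $\infty$ need no special treatment, as they are never improving once all types forced into $I$ are represented. To get success probability $1-\delta$, I would amplify each Grover search (in particular the final ``no improving element'' certification, which costs $\tO(\sqrt N)$) so that it errs with probability at most $\delta$ divided by the number of rounds, at an $O(\log(\#\text{rounds}/\delta))\subseteq\tO(\log(1/\delta))$ cost per search, and union-bound over all rounds.

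The crux, and the step I expect to be the real obstacle, is the running-time bound. A round in which $t\ge1$ elements are improving costs $\tO(\sqrt{N/t})$ (and $\tO(\sqrt N)$ to certify $t=0$), so I must show $\sum_k\sqrt{N/t_k}\in\tO(\sqrt{Nd})$, where $t_k$ is the number of improving elements after round $k$. As in Dürr--Høyer this relies on Grover returning a near-uniformly random improving element, so that conditioned on the current state the newly inserted element is uniform among the candidates it can replace; this drives the relevant ``rank'' quantities down by a constant factor in expectation, and a telescoping sum over the (at most $d$) slots of $I$ --- each descending geometrically from $O(N)$ to $O(1)$ in $\tO(1)$ steps --- totals $\tO(\sqrt{Nd})$ rather than the $\tO(d\sqrt N)$ one would get from $d$ independent minimum-searches. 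The subtlety is that $t_k$ is \emph{not} monotone: an element of a type that will ultimately be excluded from $I$ can temporarily be improving and cause churn, so I would track a potential (for instance a sum over the $d$ slots of a per-slot rank that is monotone non-increasing along the run), argue that each accepted improving element shrinks it by a constant factor in expectation, and conclude that the run uses $\tO(d)$ rounds with high probability. Charging the Grover costs against this potential, rather than against the raw count of improving elements, is what should make the analysis close; handling the ``churn'' case cleanly is where I expect the bulk of the technical work to lie.
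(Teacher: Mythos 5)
First, a point of reference: the paper does not prove this proposition at all --- it is imported verbatim as Theorem~3.4 of \cite{durr2006quantum} and used as a black box in Appendix~A --- so the comparison is really with the D\"urr--Heiligman--H{\o}yer--Mhalla proof. Your algorithm is exactly their construction: iterate a Grover search (of the exponential/BBHT type, which returns a uniformly random marked element in expected time $O(\sqrt{N/t})$) for an ``improving'' element, with the candidate set $I$ held in QRAM. Your correctness argument is sound: type-distinctness is an invariant of the update, an unrepresented type forces an improving element while $|I|<\min\{d,M\}$, and the second defining condition of $\minfind$ fails exactly when some $j\notin I$ is improving. The space bounds and the $\log(1/\delta)$ amplification are also handled correctly. (One small loose end: to guarantee termination you should break ties in favour of the incumbent, so that every accepted update strictly decreases, say, the sorted multiset of values of $I$.)

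The gap is exactly where you say it is, and it is a genuine one: the bound $\sum_k\sqrt{N/t_k}\in\tO(\sqrt{Nd})$ is the entire quantitative content of the proposition, and you have not proved it --- you have only named a strategy. The naive estimate ($\tO(d)$ rounds, each costing up to $\tO(\sqrt N)$) gives $\tO(d\sqrt N)$, a factor $\sqrt d$ too large, and that factor matters downstream: Appendix~A sums $\sqrt{|E(P_L)|\,|P_L|}$ over the parts $P_L$ and closes with Cauchy--Schwarz, which would fail with a $|P_L|\sqrt{|E(P_L)|}$ bound, degrading the whole spanner and sparsification runtime. Your proposed repair (a per-slot rank potential that decreases by a constant factor in expectation per accepted element) is plausible and is in the spirit of the D\"urr--H{\o}yer analysis, but the potential is never defined, the constant-factor decrease is never verified, and the ``churn'' problem --- an element entering $I$ and later being evicted, so that $t_k$ is not monotone and a round may cost $\Theta(\sqrt N)$ late in the execution --- is precisely the case your sketch defers. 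Until that charging argument is carried out (or the corresponding lemma of \cite{durr2006quantum} is invoked explicitly), the claimed $\tO(\sqrt{Nd}\log(1/\delta))$ runtime is not established.
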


The quantum algorithm for constructing a shortest-path tree effectively implements Dijkstra's algorithm, which we recall below for completeness.
The algorithm assigns a ``cost'' to any (ordered) pair of nodes $u,v$ by setting
\[
\cost(u,v) = \delta_G(v_0,u) + w(u,v).
\]
We define the \emph{component} of a node $v_0$ as the smallest subset $S \subseteq V$ such that $v_0 \in S$ and either $|E(S,V \backslash S)| = 0$ or $\min\{w(u,v) \mid (u,v) \in E(S,V \backslash S)\} = \infty$.
\begin{algorithm}[H]
\caption{Dijkstra's algorithm}
\label{alg:Dijkstra}
\begin{algorithmic}[1]
\State
let $T = (V_T = \{v_0\}, E_T = \emptyset)$
\While{$|V_T| < n-1$}
\State
find an edge $(u,v) \in E(V_T, V \backslash V_T)$ with minimal $\cost(u,v)$
\If{$w(u,v) = \infty$}
\State
abort and output $T$
\Else
\State
add $v$ to $V_T$ and $(u,v)$ to $E_T$
\EndIf
\EndWhile
\end{algorithmic}
\end{algorithm}

Starting from a node $v_0$, Dijkstra's algorithm (Algorithm \ref{alg:Dijkstra}) returns a shortest-path tree from $v_0$ that spans the component of $v_0$.
A key feature of the algorithm is that, at any time, the tree $T$ is a shortest-path tree for the nodes in $T$ (i.e., $\delta_T(v_0,u) = \delta_G(v_0,u)$ for all $u \in T$).
It follows that for any node $u \in T$ we can efficiently evaluate $\delta_G(v_0,u)$ and hence $\cost(u,v) = \delta_G(v_0,u) + 1/w(u,v)$.

Now we describe a quantum SPT algorithm that is based on Dijkstra's algorithm.
It will have the same feature that at any time, the tree $T$ is a shortest-path tree for the nodes in $T$.
 
\begin{algorithm}[H]
\caption{$T = \spt(G,v_0)$} \label{alg:SPT}
\begin{algorithmic}[1]
\State
let $T = (V_T = \{v_0\},E_T=\emptyset)$, $L = 1$ and $P_1 = \{v_0\}$
\State
set $\dist(v_0) = 0$ and $\dist(u) = \infty$ for all $u \neq v_0$
\While{$|V_T| < n-1$}
\State
let $B_L$ be the output of $\minfind(|P_L|,f,g)$ on $E(P_L)$ where
\begin{itemize}
\item[(i)] type function $g$ returns the end node $g((u,v)) = v$ of an edge $(u,v)$,
\item[(ii)] value function $f((u,v)) = \cost(u,v) = \dist(u) + 1/w(u,v)$ if $u \in P_L, v \notin T$, and $f((u,v)) = \infty$ otherwise
\end{itemize}
\State
let $(u,v) \in B_1 \cup \dots \cup B_L$ have minimal $\cost(u,v)$ with $v \notin P_1 \cup \dots \cup P_L$
\If{$w(u,v) = 0$}
\State
abort and output $T$
\Else
\State
add $v$ to $V_T$ and $(u,v)$ to $E_T$, set $P_{L+1} = \{v\}$ and $L = L+1$
\State
set $\dist(v) = \dist(u) + 1/w(u,v)$
\EndIf
\State
as long as $L \geq 2$ and $|P_L| = |P_{L-1}|$, merge $P_L$ into $P_{L-1}$ (removing $P_L$) and set $L = L-1$
\EndWhile
\end{algorithmic}
\end{algorithm}

First we prove correctness of the algorithm. 
\begin{proposition}
Let $G = (V,E,w)$ be an undirected graph with weights $w: E \to \real_{\geq 0}$, and $C_{v_0}$ the connected component of $v_0 \in V$.
Then Algorithm $\spt(G,v_0)$ outputs a shortest-path tree from $v_0$ that spans $C_{v_0}$.
\end{proposition}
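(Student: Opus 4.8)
The plan is to prove that Algorithm~\ref{alg:SPT} faithfully simulates Dijkstra's algorithm (Algorithm~\ref{alg:Dijkstra}) and then invoke the standard correctness of the latter. Concretely, I would prove by induction on the number of completed while-iterations the following loop invariant: (i) $T$ is a shortest-path tree for its own vertex set, i.e.\ $\dist(u) = \delta_T(v_0,u) = \delta_G(v_0,u)$ for every $u \in V_T$; (ii) the packets $P_1,\dots,P_L$ partition $V_T$, have strictly decreasing sizes that are powers of two, and sum to $|V_T|$ (the ``binary counter'' structure produced by the merge in step~12); and (iii) for every current packet $\Pi$, the set $B_\Pi$ was last recomputed at an iteration where the tree equalled some $T_\Pi \subseteq V_T$ with $|V_T| - |T_\Pi| < |\Pi|$ (``$B_\Pi$ is less than $|\Pi|$ additions stale''), and there $B_\Pi = \minfind(|\Pi|,f,g)$ correctly returns, for the $|\Pi|$ vertices $v \notin T_\Pi$ of smallest $m_{T_\Pi}(v) := \min_{u\in\Pi,\,(u,v)\in E}\cost(u,v)$, a witnessing minimum-cost edge into each.

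The heart of the argument is the following claim, proven alongside the invariant: in step~5 the chosen edge $(u,v)$ attains $\cost(u,v) = \min\{\cost(u',v') : (u',v')\in E(V_T,V\setminus V_T)\}$, the minimum over all current border edges (with $1/0 := \infty$). To see this, let $(u^\ast,v^\ast)$ be a globally cheapest border edge and let $\Pi$ be the packet containing $u^\ast$. Since $(u^\ast,v^\ast)$ is cheapest, the standard Dijkstra lemma gives $\cost(u^\ast,v^\ast) = \delta_G(v_0,v^\ast)$, and since $u^\ast\in\Pi$ and all costs from $\Pi$ are at least the global minimum we get $m_{T_\Pi}(v^\ast) = \cost(u^\ast,v^\ast)$. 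Any $v\notin T_\Pi$ with $m_{T_\Pi}(v) < m_{T_\Pi}(v^\ast)$ satisfies $\delta_G(v_0,v)\le m_{T_\Pi}(v) < \delta_G(v_0,v^\ast)$ by the triangle inequality, and by the inductive Dijkstra simulation such ``closer'' vertices have already been added, so they all lie in $V_T\setminus T_\Pi$, a set of size $<|\Pi|$ by~(iii). Hence $v^\ast$ is among the top-$|\Pi|$ end-vertices recorded in $B_\Pi$ (the tie-breaking clause of the $\minfind$ specification handles $m_{T_\Pi}(v)=m_{T_\Pi}(v^\ast)$), so $B_\Pi$ contains an edge into $v^\ast$ of cost $m_{T_\Pi}(v^\ast)=\cost(u^\ast,v^\ast)$; as $v^\ast\notin V_T$ this edge is eligible in step~5, so the minimum taken there is at most $\cost(u^\ast,v^\ast)$, hence equal to it.

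Granting the claim, the remaining steps are routine. The merge in step~12 re-establishes the counter structure~(ii), and the next iteration's step~4 recomputes $B$ for the freshly merged active packet, so a size-$2^b$ packet is at most $2^b-1 < 2^b$ additions stale during its life, giving~(iii). When $(u,v)$ is added in steps~10--11 with $w(u,v)>0$, then $\dist(v) := \dist(u) + 1/w(u,v) = \delta_G(v_0,u) + 1/w(u,v) = \cost(u,v) = \delta_G(v_0,v)$ by the claim, and $T$ remains a shortest-path tree for its vertex set, giving~(i). If instead $w(u,v)=0$ in step~6, the cheapest border edge has infinite cost, so every border edge has weight $0$, meaning $V_T$ is exactly the connected component $C_{v_0}$, and the algorithm correctly aborts with $T$ spanning $C_{v_0}$. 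Otherwise the loop keeps adding vertices until $V_T$ can grow no further, which—since every vertex of $C_{v_0}$ lies at finite distance from $v_0$ and is therefore reached before any infinite-cost border edge is ever selected—happens only once $V_T \supseteq C_{v_0}$. In all cases the output $T$ is a shortest-path tree spanning $C_{v_0}$.

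The main obstacle is the bookkeeping behind invariant~(iii) and the staleness bound $|V_T| - |T_\Pi| < |\Pi|$: one must track the life cycle of each packet under the merge rule—a packet of size $2^b$ is created by a carry out of positions $0,\dots,b-1$ and destroyed exactly $2^b$ additions later by a carry out of position $b$—and verify that $B$ for it is recomputed at the first iteration after its creation, when it is the smallest packet. This is precisely the subtlety introduced by the corrected cost function $\cost(u,v)=\delta_G(v_0,u)+1/w(u,v)$, and it is where I expect the proof to require the most care; everything else is a standard induction mirroring the correctness proof of Dijkstra's algorithm.
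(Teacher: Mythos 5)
Your proof is correct and takes essentially the same route as the paper's: both reduce correctness to showing that each iteration selects a globally least-cost border edge of $T$, using the invariant that the packet sizes form a strictly decreasing sequence of powers of two, so that fewer than $|P_k|$ vertices have been added since $B_k$ was computed and $B_k$ therefore still records a least-cost border edge leaving $T$. The only cosmetic difference is that you count the vertices that could ``displace'' $v^\ast$ from $B_\Pi$, whereas the paper counts directly that at least one of the $|P_k|$ recorded edges still points outside $T$; both hinge on the same inequality $|P_{k+1}\cup\dots\cup P_L| < |P_k|$.
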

\begin{proof}
We choose the success probability of each call to $\minfind$ to be $1-\delta/n$, so that the total success probability is at least $(1-\delta/n)^n \geq 1-\delta$.
Assuming this, correctness of the algorithm follows from Dijkstra's algorithm provided that we can show that each iteration adds a least-cost border edge to $T$, if it exists.
To this end, notice that in every iteration we have the following invariants directly after step 3:
\begin{enumerate}
\item
The nodes of $T$ are partitioned into subsets $P_1,\dots,P_L$ whose sizes are nonincreasing powers of 2, and such that $|P_k| > \sum_{i=k+1}^L |P_i|$.
\item
For every $1 \leq k \leq L$, the set $B_k$ contains up to $|P_k|$ least-cost border edges from $P_k$ to distinct nodes outside of $P_k$.
\end{enumerate}
Now let $T$ be the SPT constructed so far, let $(u,v)$ be a least-cost border edge of $T$, and let $P_1,\dots,P_L$ denote the node subsets directly after step 3.
Without loss of generality, assume that $u \in P_k$.
Since the edge $(u,v)$ is a least-cost border edge of $T$, it must also be a least-cost border edge of $P_k$ from the subset
\[
E_{\neq k}
= \{(u',v') \mid u' \in P_k, v' \notin V(T) = P_1 \cup \dots \cup P_L\}.
\]
Indeed, if there were another edge $(\hat u,\hat v) \in E_{\neq k}$ with $\cost(\hat u,\hat v) < \cost(u,v)$, then $(\hat u,\hat v)$ would be a border edge of $T$ with lower cost than $(u,v)$, which is a contradiction.

Now it suffices to show that there exists an edge $(u',v') \in B_k$ with $v' \notin T$ and $\cost(u',v') = \cost(u,v)$ (potentially $(u',v') = (u,v)$), because this would imply that step 5 indeed adds a least-cost border edge to $T$.
Recall that $B_k$ was constructed as the set of up to $|P_k|$ least-cost border edges with distinct endpoints from the larger subset\footnote{Note that $B_k$ was constructed at the iteration where $P_k$ was ``formed'' (potentially by merging other subsets), and that the subsets $P_1,\dots,P_k$ do not change between this iteration and the iteration that we are examining.}
\[
E_{> k}
= \{(u',v') \mid u' \in P_k, v' \notin P_1 \cup \dots \cup P_k\}.
\]
By the invariant property we have that $|P_{k+1} \cup \dots \cup P_L| < |P_k|$, and so the set $B_k$ must contain at least one edge $(u',v')$ with $v' \notin P_{k+1} \cup \dots \cup P_L$ (and so $v' \notin T$).
This must be a least-cost edge from $E_{\neq k}$, and so $\cost(u',v') = \cost(u,v)$.
\end{proof}

It remains to prove that there is an efficient quantum algorithm for implementing $\spt$.
\begin{proposition}
There is a quantum algorithm that implements $\spt(G,v_0)$ with success probability $1-\delta$ in time $\tO(\sqrt{|C_{v_0}| |E(C_{v_0})|} \log(n/\delta))$ and by making $\tO(\sqrt{|C_{v_0}| |E(C_{v_0})|} \log(n/\delta))$ quantum queries to the adjacency list of $G$.
The algorithm requires $O(\log n)$ qubits and a QRAM of $\tilde O(|C_{v_0}|)$ bits.
\end{proposition}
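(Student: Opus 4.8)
The plan is to bound the running time of Algorithm~\ref{alg:SPT} iteration by iteration, using Proposition~\ref{prop:min-val} for the cost of each call to $\minfind$, and then sum the per-iteration costs by a Cauchy--Schwarz argument that exploits the binary-counter structure of the partition $P_1,\dots,P_L$. I would also set the failure probability of each individual $\minfind$ call to $\delta/n$, so that a union bound over the at most $|C_{v_0}|\le n$ calls gives overall success probability $\ge 1-\delta$.

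Write $\nu=|C_{v_0}|$ and $\mu=|E(C_{v_0})|$. Since each execution of the \textbf{while}-loop either adds a new vertex to $T$ or aborts, and (by the preceding correctness proof) the output tree spans exactly $C_{v_0}$, the loop runs at most $\nu$ times. Let $P_{L(t)}$ be the subset on which $\minfind$ is invoked in step~4 of iteration~$t$. By Proposition~\ref{prop:min-val} this call costs $\tO\bigl(\sqrt{|E(P_{L(t)})|\,|P_{L(t)}|}\,\log(n/\delta)\bigr)$ time, and the same number of adjacency-list queries: each query to $\minfind$'s value/type oracle on the edge $(u,v)$ is simulated by $O(1)$ adjacency-list queries (to fetch the edge and $w(u,v)$) plus one QRAM read of the stored value $\dist(u)$ and a membership lookup ``$v\in V_T$?''. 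The only other per-iteration work is step~5, which I would implement with a priority queue over $B_1\cup\dots\cup B_L$ keyed by $\cost$, with lazy deletion of entries invalidated when a vertex joins $T$ or a set $B_k$ is recomputed; the total number of queue operations is $O\bigl(\sum_t|P_{L(t)}|\bigr)$, which the next paragraph shows is $\tO(\nu)$, hence dominated.

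The combinatorial core is to bound $\sum_t\sqrt{|E(P_{L(t)})|\,|P_{L(t)}|}$. The merging in step~12 makes the sizes of $P_1,\dots,P_L$ evolve exactly like incrementing a binary counter: at step~4 of iteration~$t$ the set $V_T$ holds $t$ vertices partitioned according to the binary representation of $t$, so the last (smallest) subset $P_{L(t)}$ consists of the most recently added vertices and has size exactly $2^{\nu_2(t)}$, where $\nu_2(t)$ is the $2$-adic valuation of $t$; in particular only the last subset's boundary is ever recomputed. Hence $\sum_{t\le\nu}|P_{L(t)}|=\sum_{j\ge 0}2^j\cdot\bigl|\{t\le\nu:\nu_2(t)=j\}\bigr|=\sum_{j\ge 0}2^j\cdot\Theta(\nu/2^{j+1})=\tO(\nu)$. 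Moreover, for each scale $j$ there is at most one iteration $t$ with $\nu_2(t)=j$ for which a fixed vertex $v$ lies in $P_{L(t)}$: the iterations with $\nu_2(t)=j$ are spaced $2^{j+1}$ apart, while $v$ is among the last $2^j$ added vertices for an interval of only $2^j$ iterations. So each vertex lies in a ``last subset'' for at most $O(\log\nu)$ iterations, hence each edge is counted at most $O(\log\nu)$ times in $\sum_t|E(P_{L(t)})|$, giving $\sum_t|E(P_{L(t)})|=\tO(\mu)$. Cauchy--Schwarz then yields $\sum_t\sqrt{|E(P_{L(t)})|\,|P_{L(t)}|}\le\sqrt{\sum_t|E(P_{L(t)})|}\cdot\sqrt{\sum_t|P_{L(t)}|}=\tO(\sqrt{\mu\nu})$, so the total time and query count are $\tO\bigl(\sqrt{|C_{v_0}|\,|E(C_{v_0})|}\,\log(n/\delta)\bigr)$. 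For the space bounds, the quantum part is confined to $\minfind$, which uses $O(\log N)=O(\log n)$ qubits; everything else ($V_T$, $E_T$, $\dist$, the partition, the sets $B_1,\dots,B_L$ and the queue) is classical and stored in QRAM of total size $\tO\bigl(\sum_k|B_k|\bigr)=\tO\bigl(\sum_k|P_k|\bigr)=\tO(|V_T|)\le\tO(|C_{v_0}|)$ bits.

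The main obstacle is the amortized counting in the third paragraph: one must argue carefully that the partition really does behave like a binary counter, so that the recomputed subset has a predictable power-of-two size consisting of the latest vertices, and then combine the two separate ``budgets'' — total vertex-size $\tO(\nu)$ and total edge-incidence $\tO(\mu)$ — via Cauchy--Schwarz, rather than crudely bounding $|E(P_{L(t)})|\le\mu$, which would only give the weaker bound $\tO(\nu\sqrt{\mu})$.
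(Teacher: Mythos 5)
Your proposal is correct and follows essentially the same route as the paper's proof: bound each iteration by the $\minfind$ cost from Proposition~\ref{prop:min-val}, observe that the merging rule makes each last subset a power-of-two block of the most recent vertices with at most one block per scale containing any given vertex, and combine the $\tO(|C_{v_0}|)$ vertex budget with the $\tO(|E(C_{v_0})|)$ edge budget via Cauchy--Schwarz. The only cosmetic difference is that the paper applies Cauchy--Schwarz within each scale $2^r$ and sums over the $\log n$ scales, whereas you apply it globally after the amortized counts; the two computations are equivalent.
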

\begin{proof}
The bound on the runtime follows from bounding the runtime of the calls to $\minfind$, which are dominant.
It follows from Proposition~\ref{prop:min-val} that a call of $\minfind(|P_L|,f,g)$ requires time $O(\sqrt{|E(P_L)| |P_L|} \log(n/\delta))$.
The total runtime is hence given by
\[
O\bigg( \sum_{P_L} \sqrt{|E(P_L)| |P_L|} \log(n/\delta) \bigg),
\]
where the sum runs over the sets $P_L$ in step 3 of each iteration.
We can bound this by noting that the merging procedure in step 9 ensures that in every iteration $|P_L| = 2^{r_L}$ for some integer $r_L \leq \log n$, and any two $P_L$ of the same size are necessarily disjoint.
Since necessarily $P_L \subseteq C_{v_0}$, there are at most $|C_{v_0}|/2^r$ (disjoint) sets of size $2^r$, and we can bound
\begin{align*}
\sum_{P_L:|P_L| = 2^r} \sqrt{|E(P_L)| |P_L|}
&= 2^{r/2} \sum_{P_L:|P_L| = 2^r} \sqrt{|E(P_L)|} \\
&\leq 2^{r/2} \sqrt{2^{-r} |C_{v_0}| \sum_{P_L:|P_L| = 2^r} |E(P_L)|}
\leq \sqrt{2|C_{v_0}| |E(C_{v_0})|},
\end{align*}
where the second inequality follows from Cauchy-Schwarz and the third inequality from the fact that $\sum_{P_L:|P_L| = 2^r} |E(P_L)| \leq 2|E(C_{v_0})|$.
Summing this over all $r \leq \log n$, we find the claimed runtime.

Finally we bound the space complexity.
The only quantum routine is running $\minfind(|P_L|,f,g)$ on the set of edges, with $|P_L| \leq n$.
By Proposition \ref{prop:min-val} this requires $O(\log n)$ qubits and a QRAM of $\tO(n)$ bits.
All remaining routines are classical operations on the $\tO(n)$ bits describing (i) the nodes and edges of the tree $T$, (ii) the sets $B_1,\dots,B_L$ and (iii) the string $\dist(\cdot)$.
\end{proof}
 
\section{Existence of Disjoint Matching} \label{app:existence}
Fix $n$, $m \leq n^2/4$ and $\epsilon \geq \sqrt{n/m}$.
Let $G_1 = (L_1 \cup R_1,E_1)$ consist of $\epsilon^2 n/2$ disjoint copies $B^{(k)}$ of the complete bipartite graph on $1/\epsilon^2$ left and right nodes, containing $2\epsilon^2 m/n$ parallel copies of every edge.
In this way, $G_1$ has $n$ nodes, $m$ edges and is $2m/n$-regular.
We index the $i$-th left and $j$-th right node of $B^{(k)}$ as $l^{(k)}_i$ resp.~$r^{(k)}_j$.
Let $G_2 = (L_2 \cup R_2,E_2)$ be the complete bipartite graph on $2m/n$ left nodes and $n/2$ right nodes.
In this appendix we prove that every edge in $G_1$ can be matched to a unique edge in $G_2$ such that
\begin{enumerate}
\item
all edges leaving a left node $l \in L_1$ are matched to edges in $G_2$ with distinct left ends,
\item
all edges leaving a right node $r \in R_1$ are matched to edges in $G_2$ with distinct right ends.
\end{enumerate}
We do this by considering maximum bipartite matchings in $G_2$ of the form
\[
M_j
= \{(i,i+j) \mid i \in [2m/n]\} \subset E_2,\qquad
0 \leq j < n/2,
\]
where the sum is modulo $n/2$.
These matchings form a partition of the edges set $E_2$.
We interpret every $M_j$ as a set of $2m/n$ ordered edges.
Now for every node $l^{(k)}_i \in L_1$ we will match the $2m/n$ (lexicographically ordered) outgoing edges $E(l^{(k)}_i)$ from that node to the (lexicographically ordered) edges in some $M_j$.
This ensures that all edges leaving $l^{(k)}_i$ are matched to edges in $G_2$ whose left ends are distinct, so that condition 1.~is satisfied.
In order to satisfy condition 2., we specify the matching as follows:
\[
E(l^{(k)}_i)
\Leftrightarrow M_{k - 1 + (i-1)\epsilon^2n/2},\qquad
\forall k \in [\epsilon^2 n/2],\; i \in [1/\epsilon^2].
\]
Indeed, one can check that the $2m/n$ edges matched to for instance the incoming edges of node $r^{(1)}_1$ are described by
\[
(\alpha,\beta),\quad \text{with }
\alpha \in [1,2\epsilon^2m/n] \text{ and } \beta = \ell\frac{\epsilon^2 n}{2} + \alpha,\; 0 \leq \ell < 1/\epsilon^2.
\]
Since we assumed that $m \leq n^2/4$, and hence $2m/n \leq n/2$, the right ends of these edges are indeed disjoint.
The same reasoning applies to all other nodes $r^{(k)}_j$.
We illustrate this matching in Figure \ref{fig:disjoint-matching} below.
 
\begin{figure}[htb]
\centering
\includegraphics[width=.6\textwidth]{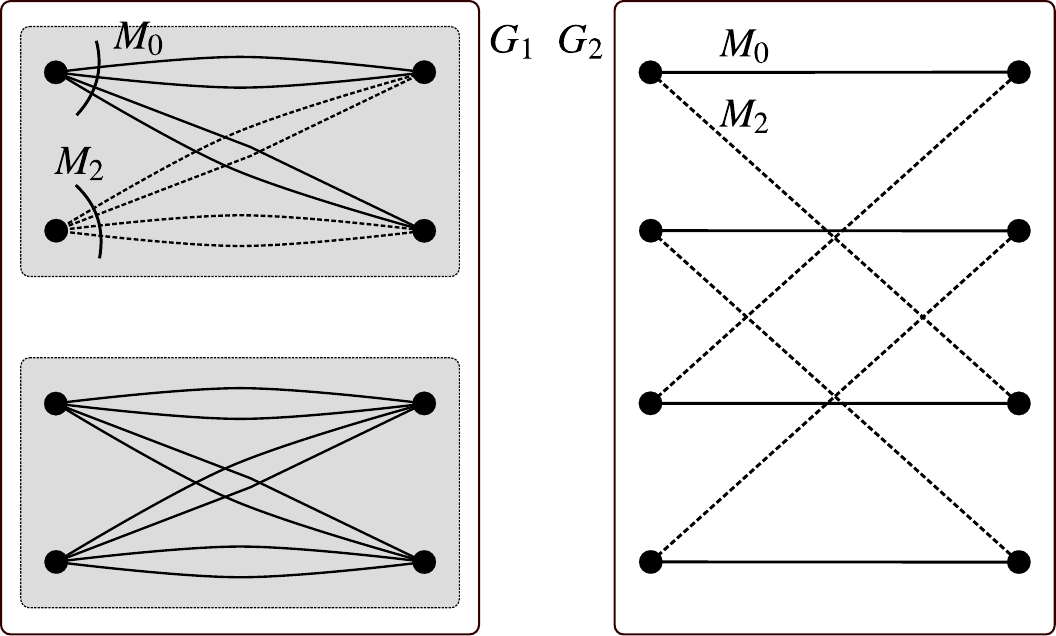}
\caption{Matching edges between $G_1$ and $G_2$ for $n = 8$, $m = 16$ and $\epsilon = 1/\sqrt{2}$. The left ends of matchings $M_0$ are distinct, ensuring condition 1.~for node $l^{(1)}_1$. Similarly, the right ends of all edges matched to $r^{(1)}_1$ are distinct, ensuring condition 2.~for this node.
We note that in general $G_2$ is a bipartite graph between $2m/n$ left nodes and $n/2$ right nodes, with $2m/n \leq n/2$.}
\label{fig:disjoint-matching}
\end{figure}
 
\section{Quantum SDD Solver} \label{app:SDD}
 In this appendix we describe a quantum algorithm for approximately solving an SDD linear system $Ax = b$.
The blueprint of the algorithm is given below.
An \textit{SDDM matrix} is an SDD matrix whose off-diagonals are nonpositive.
\begin{algorithm}[H]
\caption{$\mathbf{qSDD}(A,x,b)$} \label{alg:quantum-SDD}
\begin{algorithmic}[1]
\State
reduce the SDD system $Ax = b$ to an SDDM system $\hat A \hat x = \hat b$ using a classical trick \cite{gremban1996combinatorial} (see Section \ref{sec:SDD-to-SDDM})
\State
construct an $\epsilon$-spectral sparsifier $\tilde A$ of $\hat A$ using our quantum sparsification algorithm (see Section \ref{sec:spars-SDDM})
\State
solve the sparse SDDM linear system $\tilde A \hat x = \hat b$ using a \textit{classical} SDD solver \cite{spielman2014nearly}
\end{algorithmic}
\end{algorithm}

By the same reasoning as in Claim \ref{claim:approx-Laplacian}, it follows that an $\epsilon$-approximate solution of the SDDM system $\tilde A \hat x = \hat b$ yields a $2\epsilon$-approximate solution to the system $\hat A \hat x = \hat b$.
This proves correctness of the algorithm.
The runtime is dominated by the second step, which takes time $\tO(\sqrt{mn}/\epsilon)$ (see Section \ref{sec:spars-SDDM}).
Hence this describes an $\epsilon$-approximate quantum SDD solver with runtime $\tO(\sqrt{mn}/\epsilon)$, which proves Proposition \ref{prop:quantum-SDD}.
 
\subsection{Classical SDD to SDDM Reduction} \label{sec:SDD-to-SDDM}
Gremban \cite{gremban1996combinatorial} showed how to reduce an SDD system $Ax = b$ to an SDDM system $\hat{A} \hat{x} = \hat{b}$.
Let $A = D + P + N$ be an SDD matrix, with the diagonal matrix $D$ containing the diagonal elements, and $N$ and $P$ containing the negative and positive off-diagonal elements, respectively.
To this matrix we can associate an SDDM matrix $\hat{A} \in \real^{2n \times 2n}$ and a vector $\hat{b} \in \real^{2n \times 1}$, which we define as
\[
\hat{A}
= \begin{bmatrix} D + N & -P \\ -P & D + N \end{bmatrix}, \qquad
\hat{b}
= \begin{bmatrix} b \\ -b \end{bmatrix}.
\]
Now Gremban showed that if the $2n$-dimensional vector $\hat{z} = [ \hat{z}_1 \hat{z}_2]^T$ is an $\epsilon$-approximate solution of the linear system $\hat{A}\hat{x} = \hat{b}$, then the $n$-dimensional vector $z = (\hat{z}_1 - \hat{z}_2)/2$ is an $\epsilon$-approximate solution of the original system $Ax = b$.
More precisely:
\[
\| \hat{z} - \hat{A}^+ \hat{b} \|_{\hat A} \leq \epsilon \| \hat{A}^+ \hat{b} \|_{\hat A}
\qquad \Rightarrow \qquad
\| z - A^+ b \|_A \leq \epsilon \| A^+ b \|_A.
\]
This implies that it suffices to solve the SDDM system $\hat{A} \hat{x} = \hat{b}$.
 
Given sparse access to $A$, we now show how to simulate sparse access to $\hat{A}$.
The latter takes as input a pair $(i,k) \in [2n]^2$.
If $i \leq n$, let $(j,A_{ij})$ denote the output of an $(i,k)$-query to the original matrix $A$.
If this yields an error symbol, output the error symbol.
Otherwise, do the following:
\[
(i,k)
\mapsto
\begin{cases}
(j,A_{ij}) &\text{if } j = i \text{ or } A_{ij}<0 \\
(j+n,-A_{ij}) &\text{otherwise.}
\end{cases}
\]
If $i \geq n+1$, let $(j,A_{ij})$ denote the output of an  $(i-n,k)$-query to $A$.
If this yields an error symbol, return the error symbol, otherwise do the following:
\[
(i,k)
\mapsto
\begin{cases}
(j+n,A_{ij}) &\text{if } j = i-n \text{ or } A_{ij}<0 \\
(j,-A_{ij}) &\text{otherwise.}
\end{cases}
\]
This proves that we can simulate sparse access to $\hat{A}$ using a single query to $A$.
Note that the relative ordering between elements in the induced adjacency list for $\hat{A}$ may be different from the ordering in the adjacency list for $A$; that is not a problem because we are not assuming the adjacency lists to be ordered in any particular way.
 
\subsection{Sparsifying SDDM Matrices} \label{sec:spars-SDDM}
By the previous section, it suffices to solve an SDDM system $\hat A \hat x = \hat b$ to which we have sparse access.
We will do so by using our quantum algorithm to construct a much sparser SDDM matrix $\tilde A$ with respect to which we can solve linear systems classically.
If we can ensure that\footnote{We use the shorthand $B \approx_\epsilon C$ to mean that $(1-\epsilon) B \preceq C \preceq (1+\epsilon) \hat B$.}
\[
\tilde A \approx_\epsilon \hat A,
\]
then an $\epsilon$-approximate solution of $\tilde A \hat x = \hat b$ will be an $O(\epsilon)$-approximate solution to $\hat A \hat x = \hat b$.
 
Since $\hat A$ is SDDM, we can rewrite it as $\hat A = \hat L + \hat D$, with $\hat L$ a Laplacian and $\hat D = \mathrm{diag}(\hat A) - \mathrm{diag}(\hat L)$ a nonnegative, diagonal ``excess''-matrix.
Given sparse access to $\hat A$, we can easily simulate adjacency list access to the graph associated to $\hat L$.
Using our sparsification algorithm, this allows to explicitly output a sparsified Laplacian $\tilde L$ with $\tO(n/\epsilon^2)$ nonzero entries such that
\[
\tilde L
\approx_\epsilon \hat L.
\]
Now we define the matrix
\[
\tilde A
= \tilde L + \mathrm{diag}(\hat A) - \mathrm{diag}(\tilde L).
\]
This matrix has $\tO(n/\epsilon^2)$ nonzero entries, of which we have an explicit description.
We can prove that $\tilde A$ closely approximates $\hat A$, as follows.
\begin{lemma}
Let $\tilde A$ be as described above.
Then $\tilde A \approx_{2\epsilon} \hat A$.
\end{lemma}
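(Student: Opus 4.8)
The plan is to reduce the claim to the spectral guarantee $\tilde L\approx_\epsilon\hat L$ of our sparsification algorithm (which we use in the form $(1-\epsilon)\hat L\preceq\tilde L\preceq(1+\epsilon)\hat L$), by estimating the quadratic form $x^\top(\tilde A-\hat A)x$ for an arbitrary $x$. First I would unwind the construction: since $\hat A=\hat L+\hat D$ with $\hat D=\mathrm{diag}(\hat A)-\mathrm{diag}(\hat L)$ diagonal and PSD, the definition $\tilde A=\tilde L+\mathrm{diag}(\hat A)-\mathrm{diag}(\tilde L)$ rearranges as $\tilde A=\big(\tilde L+\mathrm{diag}(\hat L)-\mathrm{diag}(\tilde L)\big)+\hat D$, so that
\[
\tilde A-\hat A=(\tilde L-\hat L)+\Delta,\qquad \Delta:=\mathrm{diag}(\hat L)-\mathrm{diag}(\tilde L),
\]
a sum of a ``Laplacian error'' $\tilde L-\hat L$ and a purely diagonal ``degree error'' $\Delta$. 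I would then bound the two pieces separately.

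For the Laplacian error, $\tilde L\approx_\epsilon\hat L$ gives $-\epsilon\,\hat L\preceq\tilde L-\hat L\preceq\epsilon\,\hat L$, and since $\hat D\succeq0$ we have $\hat L\preceq\hat A$, hence $|x^\top(\tilde L-\hat L)x|\le\epsilon\,x^\top\hat A x$. For the degree error I would test the spectral inequality on standard basis vectors: since $\hat L_{ii}=e_i^\top\hat L e_i$ and $\tilde L_{ii}=e_i^\top\tilde L e_i$, the relation $\tilde L\approx_\epsilon\hat L$ forces $|\Delta_{ii}|=|\hat L_{ii}-\tilde L_{ii}|\le\epsilon\,\hat L_{ii}$, so $|x^\top\Delta x|\le\epsilon\sum_i\hat L_{ii}x_i^2=\epsilon\,x^\top\mathrm{diag}(\hat L)\,x$. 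Combining, $|x^\top(\tilde A-\hat A)x|\le\epsilon\,x^\top\hat A x+\epsilon\,x^\top\mathrm{diag}(\hat L)\,x$, so the whole argument hinges on absorbing $x^\top\mathrm{diag}(\hat L)\,x$ into $x^\top\hat A x$; once we have $\mathrm{diag}(\hat L)\preceq\hat A$, the two $\epsilon$-terms sum to $2\epsilon\,x^\top\hat A x$, which is exactly $\tilde A\approx_{2\epsilon}\hat A$, and the $2\epsilon$ (rather than $\epsilon$) slack is precisely the room needed to carry the degree error on top of the Laplacian error.

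The step I expect to be the main obstacle is this last comparison $\mathrm{diag}(\hat L)\preceq\hat A$, since it is where the SDDM structure of $\hat A$ (as opposed to a bare graph Laplacian) must be used; the cleanest organization I would attempt is to pass to the ``grounded'' graphs, appending a ground vertex $g$ joined to $i$ with weight $\hat D_{ii}$ (resp.\ with the weight $\hat A_{ii}-\sum_j\tilde W_{ij}$ induced by $\tilde A$) and extending vectors by $x_g=0$, so that $x^\top\hat A x$ and $x^\top\tilde A x$ become Laplacian quadratic forms of these $(n{+}1)$-vertex graphs. The two grounded graphs then differ only by a spectral-sparsification error on the $V$-edges (controlled by $\tilde L\approx_\epsilon\hat L$) plus a perturbation of the ground edges of magnitude at most $\epsilon\,\hat L_{ii}=\epsilon(\hat A_{ii}-\hat D_{ii})$ at vertex $i$; handling these as a sum of edge rank-one terms, one gets $\tilde L_g\approx_{2\epsilon}\hat L_g$ on $\mathbb{R}^{n+1}$, and restricting to vectors with $x_g=0$ yields $\tilde A\approx_{2\epsilon}\hat A$. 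With the lemma in hand, the correctness of Algorithm~\ref{alg:quantum-SDD} is immediate exactly as in Claim~\ref{claim:approx-Laplacian}: an $\epsilon$-accurate solution of $\tilde A\hat x=\hat b$ is an $O(\epsilon)$-accurate solution of $\hat A\hat x=\hat b$.
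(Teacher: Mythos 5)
Your decomposition $\tilde A-\hat A=(\tilde L-\hat L)+\Delta$ with $\Delta=\mathrm{diag}(\hat L)-\mathrm{diag}(\tilde L)$, and the two bounds $|x^\top(\tilde L-\hat L)x|\le\epsilon\,x^\top\hat L x\le\epsilon\,x^\top\hat A x$ and $|x^\top\Delta x|\le\epsilon\,x^\top\mathrm{diag}(\hat L)\,x$ (the latter by testing $\tilde L\approx_\epsilon\hat L$ on basis vectors), are correct, and you have isolated the crux accurately. But the step you flag as the main obstacle, $\mathrm{diag}(\hat L)\preceq\hat A$, is genuinely false. It is equivalent to $\hat D\succeq\hat W$, where $\hat W=\mathrm{diag}(\hat L)-\hat L$ is the (entrywise nonnegative) weighted adjacency matrix; this asks the excess diagonal to dominate the adjacency matrix in the Loewner order, roughly $\hat A_{ii}\ge 2\sum_{j\ne i}|\hat A_{ij}|$, which SDD does not provide. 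Concretely, whenever $\hat D=0$ (e.g.\ if the original $A$ is itself a Laplacian, so that $P=0$ in the Gremban reduction and $\hat A=\hat L$), one has $\mathbf{1}^\top\mathrm{diag}(\hat L)\,\mathbf{1}=\mathrm{tr}(\hat L)>0=\mathbf{1}^\top\hat A\,\mathbf{1}$.

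The grounded-graph workaround does not repair this. The ground edge at vertex $i$ in the graph of $\tilde A$ has weight $\hat A_{ii}-\tilde L_{ii}=\hat D_{ii}+\Delta_{ii}$ with $|\Delta_{ii}|\le\epsilon\hat L_{ii}$: this is an \emph{additive} perturbation of size up to $\epsilon\hat L_{ii}$ of a ground edge whose weight $\hat D_{ii}$ may be zero or far smaller than $\epsilon\hat L_{ii}$ (it can even make the perturbed weight negative, so the ``graph'' of $\tilde A$ need not be a graph). An edge-by-edge comparison of the two grounded Laplacians therefore cannot yield a multiplicative $(1\pm2\epsilon)$ relation, and unwinding it returns you to exactly the inequality $\epsilon\,x^\top\mathrm{diag}(\hat L)\,x\le\epsilon\,x^\top(\hat L+2\hat D)\,x$ that you could not establish. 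For comparison, the paper's proof instead chains $\tilde A\approx_\epsilon\tilde L+\hat D\approx_\epsilon\hat A$, substituting $\mathrm{diag}(\tilde L)$ for $\mathrm{diag}(\hat L)$ via the rule ``adding a PSD matrix preserves $\approx_\epsilon$''; note that the matrix added at that step, $\tilde L+\hat D-\mathrm{diag}(\tilde L)=\hat D-\tilde W$, is PSD only under essentially the same dominance condition, so the difficulty you ran into is the real content of the lemma and any complete argument must bound $\Delta$ against $\epsilon(\tilde L+\hat D)$ (or strengthen the hypotheses) rather than against $\epsilon\,\mathrm{diag}(\hat L)$.
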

\begin{proof}
First we use the fact that if $A$, $B$ and $C$ are PSD matrices and $A \approx_\epsilon B$, then $A + C \approx_\epsilon B + C$.
Since $\hat L \approx_\epsilon \tilde L$, this shows that $\hat A = \hat L + \hat D \approx_\epsilon \tilde L + \hat D$.
As a special case, this implies that $\mathrm{diag}(\hat A) = \mathrm{diag}(\hat L) + \hat D \approx_\epsilon \mathrm{diag}(\tilde L) + \hat D$.
Now we can simply rewrite
\[
\tilde A
= \tilde L + \hat D + \mathrm{diag}(\hat L) - \mathrm{diag}(\tilde L)
\approx_\epsilon \tilde L + \hat D + \mathrm{diag}(\tilde L) - \mathrm{diag}(\tilde L)
= \tilde L + \hat D
\approx_\epsilon \hat A.
\]
Now it suffices to note that if $\tilde A \approx_\epsilon B \approx_\epsilon \hat A$ then $\tilde A \approx_{2\epsilon} \hat A$.
\end{proof}
 
We can hence solve the sparse linear system $\tilde A \hat x = \hat b$ to yield an approximation of the original system.
Since $\tilde A$ has $\tO(n/\epsilon^2)$ nonzero entries, we can do this efficiently using a classical near-linear time SDD solver \cite{spielman2014nearly}.

\end{document}